 \renewcommand{\cref}[1]{\lcnamecref{#1}~\labelcref{#1}}
      \def\lcfirstnamecrefs#1,#2\@nil{\lcnamecrefs{#1}}
      \newcommand{\lcfirstnamecref}[1]{\lcfirstnamecrefs #1,\@nil}
\newcommand{\noz}[1]{{#1}^{\setminus 0}}
\newcommand{\abs}[1]{\left| {#1} \right|}
\newcommand{\exps}{\mathcal{E}}
\newcommand{\vals}{\mathcal{V}}
\newcommand{\sort}{\mathtt{S}}
\newcommand{\con}[3]{#1 \xrightarrow{#2} #3}
\newcommand{\psatl}[1]{\,\widetilde{\vDash}^{#1}\,}
\newcommand{\psat}{\psatl{ }}
\newcommand{\ut}{\hat{\tau}}
\newcommand{\lint}{\tau}
\newcommand{\rd}[1]{\overline{#1}}
\newcommand{\unrd}[1]{\lvert #1 \rvert}
\newcommand{\maxdens}[2]{\max {#1}/{#2}}
\newcommand{\id}{\text{Id}}
\newcommand{\fsub}[2]{\{#2/#1\}}
\newcommand{\ssub}[3]{{#1}|_{#2}^{#3}}
\newcommand{\sub}[2]{\ssub{S}{#1}{#2}}
\newcommand{\cn}{c}
\newcommand{\ilist}[1]{}
\newcommand{\tleft}{\mathtt{l}}
\newcommand{\tright}{\mathtt{r}}
\newcommand{\caketerm}{\mathtt{cake}}
\newcommand{\pieces}{\mathbb{P}}
\newcommand{\agents}{\mathbb{A}}
\newcommand{\eval}{\Downarrow}
\newcommand{\pnt}{\#\mathsf{Pt}}
\newcommand{\kw}[1]{\mathsf{#1}}
\newcommand{\vset}{\overline{V}}
\newcommand{\uset}{\overline{U}}
    \newcommand{\rname}[1]{[\textsc{#1}]}
 \DeclareMathOperator{\dom}{\text{dom}}
 \DeclareMathOperator{\interp}{\mathcal{A}}
  \DeclareMathOperator{\assn}{\mu}
  \newcommand{\assnd}[1]{\{ {\#s} \mapsto D\}}
\newcommand{\sem}[1]{\llbracket {#1} \rrbracket}
\DeclareMathOperator{\val}{\mathcal{V}}
\newcommand{\simpl}{\mathtt{R}}
\renewcommand{\val}{\mathtt{V}}
\renewcommand{\v}{v}
\newcommand{\lv}{\mathtt{v}}
    \newcommand{\SYSTEM}{\textsc{Slice}\xspace}
    \newcommand{\kwtrue}{\kw{true}}
    \newcommand{\kwfalse}{\kw{false}}
    \newcommand{\kwlet}{\kw{let}}
    \newcommand{\kwin}{\kw{in}}
    \newcommand{\kwif}{\kw{if}}
    \newcommand{\kwthen}{\kw{then}}
    \newcommand{\kwelse}{\kw{else}}
    \newcommand{\kwsplit}{\kw{split}}
    \newcommand{\kwdiv}{\kw{divide}}
    \newcommand{\kwmark}{\kw{mark}}
    \newcommand{\kweval}{\kw{eval}}
    \newcommand{\kwcake}{\kw{cake}}
    \newcommand{\kwinit}{\kwcake}
    \newcommand{\kwpiece}{\kw{piece}}
    \newcommand{\kwassert}{\kw{assert}}
    \newcommand{\fv}{\mathtt{FV}}
    \newcommand{\var}{\mathcal{X}}
    \newcommand{\lvar}{\mathcal{W}}
    \newcommand{\yctx}{\Upsilon}
    \newcommand{\pctx}{\mathcal{E}}
    \newcommand{\pt}[2]{\kwassert\ {#1}\ \kwin\ {#2}}
    \newcommand{\pf}[2]{\kwassert\ \neg{#1}\ \kwin\ {#2}}
    \newcommand{\ediv}[2]{\kwdiv({#1}, {#2})}
    \newcommand{\emark}[3]{\kwmark_{{#1}}({#2}, {#3})}
    \newcommand{\eeval}[2]{\kweval_{{#1}}({#2})}
    \newcommand{\epiece}[1]{\kwpiece({#1})}
\newcommand{\conj}{\psi}
\newif\iflong
\newcommand{\append}{\iflong
the appendix\else
the full paper~\citep{fullversion}\fi
}
\begin{document}
\title{Verifying Cake-Cutting, Faster}
%
%\titlerunning{Abbreviated paper title}
% If the paper title is too long for the running head, you can set
% an abbreviated paper title here
%
\author{Noah Bertram, Tean Lai, Justin Hsu }%\inst{1}\orcidID{0000-1111-2222-3333} \and
%Second Author\inst{2,3}\orcidID{1111-2222-3333-4444} \and
%Third Author\inst{3}\orcidID{2222--3333-4444-5555}}
%
%\authorrunning{F. Author et al.}
% First names are abbreviated in the running head.
% If there are more than two authors, 'et al.' is used.
%
\institute{Cornell University}
%\email{lncs@springer.com}\\
%\url{http://www.springer.com/gp/computer-science/lncs} \and
%ABC Institute, Rupert-Karls-University Heidelberg, Heidelberg, Germany\\
%\email{\{abc,lncs\}@uni-heidelberg.de}}
%
\maketitle              % typeset the header of the contribution
\begin{abstract}
    Envy-free cake-cutting protocols procedurally divide an infinitely divisible
    good among a set of agents so that no agent prefers another's allocation to
    their own. These protocols are highly complex and difficult to prove
    correct.  Recently, Bertram, Levinson, and Hsu introduced a
    language called Slice for describing and verifying cake-cutting protocols.
    Slice programs can be translated to formulas encoding envy-freeness, which
    are solved by SMT. While Slice works well on smaller protocols, it has
    difficulty scaling to more complex cake-cutting protocols.

    We improve Slice in two ways. First, we show any protocol execution
    in Slice can be replicated using piecewise uniform valuations. We then
    reduce Slice's constraint formulas to formulas within the theory of linear
    real arithmetic, showing that verifying envy-freeness is efficiently decidable. Second, we design and implement a linear type system
    which enforces that no two agents receive the same part of the
    good. We implement our methods and verify a range of challenging examples,
    including the first nontrivial four-agent protocol.

\keywords{Fair division \and Automated verification \and Type system}
\end{abstract}
\section{Introduction}
\label{sec:intro}
How would you divide a piece of cake between two children? Classic wisdom would say to have one child cut the piece into two, and have the other take their preferred slice. Procedures that divide an infinitely divisible good amongst a set of agents are called \emph{cake-cutting protocols}. If the protocol ensures no agent prefers what another received, it is called \emph{envy-free}. While classic wisdom gives an envy-free protocol for two agents, a three-agent envy-free protocol was not discovered until 1960, and a four-agent envy-free protocol that does not dispose any cake was only proposed in 2015 by \citet{aziz4agents}. Modern cake-cutting protocols are highly complex, and proving envy-freeness requires checking an enourmous number of cases.

\paragraph{Verifying envy-freeness.}
To make cake-cutting protocols easier to verify, \citet{slice} introduced a
language called Slice that can describe cake-cutting protocols, and encode
envy-freeness as a logical formula that can be dispatched to an SMT solver.
While Slice can verify envy-freeness fully automatically, it has some drawbacks.
First, it is not able to verify that agents receive non-overlapping pieces. This
basic property, known as \emph{disjointness}, is crucial for correctness.
% Conceivably, the constraint approach could be applied to verify disjointness, though it is quite a heavy-handed solution to verify such a basic property.

Another drawback of Slice is the SMT instances encoding envy-freeness for complicated protocols are difficult to solve, and only scale to some three-agent protocols---non-trivial algorithms, but relatively simple compared to modern cake-cutting protocols.
One reason the instances are difficult is they are \emph{higher order}: they quantify over \emph{valuations}, which are functions that describe the agents' preferences.
%These valuations appear throughout Slice's constraint formulas, and
%It is not clear whether the verification problem is decidable.

\paragraph{Our work: verifying disjointness and envy-freeness, faster.}
We address these weaknesses in Slice. To verify disjointness, we develop an affine type system for Slice which restricts usage of the cake and then prove that well-typed programs are disjoint.
%Our type soundness theorem ensures that well-typed programs are disjoint.
Typechecking is straightforward and syntax-directed, requiring no use of SMT.

%To verify envy-freeness faster, we first show that any Slice protocol execution can be replicated using \emph{piecewise uniform valuations}, a result inspired by \citet{Kurokawa_Lai_Procaccia_2013}. For these valuations, we can represent the value of a piece as a term in linear real arithmetic. This enables us to reduce Slice constraint formulas to linear real arithmetic formulas and avoid quantifying over valuations when verifying envy-freeness. As a side benefit, our work shows that verifying envy-freeness of Slice protocols is decidable.

To verify envy-freeness more efficiently, we reduce Slice constraints into
linear real arithmetic formulas, removing the need to quantify over valuations.
This reduction leverages a key observation:
%inspired by \citet{Kurokawa_Lai_Procaccia_2013},
the behavior of a protocol on any valuation can be replicated by a
piecewise uniform valuation, which enables envy-freeness to be encoded as a
first-order formula in linear real arithmetic. As a side benefit, our work shows that verifying envy-freeness of Slice protocols is decidable.

%We develop a reduction of Slice constraint formulas into the theory of linear real arithmetic for piecewise uniform valuations and prove that it preserves validity. Since any protocol execution can be replicated with piecewise uniform valuations, the reduction enables us to verify envy-freeness without quantifying over valuations. As a side benefit, our work shows that verifying envy-freeness of Slice protocols is decidable.

Finally, we implement both our affine type system and formula reduction
procedure on top of the Slice implementation and transcribe two significantly
more complicated protocols into Slice, including the first nontrivial four agent
cake-cutting protocol~\citep{wmh}. For all Slice protocols, our type system
establishes disjointness and our constraints encoding envy-freeness can be
verified in substantially less time than in the previous version of Slice.

\paragraph{Outline.}
After
describing the cake-cutting model (\Cref{sec:model}), we present the Slice language and our new linear type system for verifying disjointness (\Cref{sec:lang}). We then review Slice's constraints (\Cref{sec:constraints}) and describe our
%piecewise uniform
new
constraint translation (\Cref{sec:pu}). We discuss our implementation and evaluation (\Cref{sec:eval}), and then conclude with related work and future directions (\Cref{sec:rw}).

\section{Cake-Cutting Preliminaries}
\label{sec:model}

In this section, we introduce the basics of cake-cutting protocols; the reader
can consult a standard text for more background~\citep{Procaccia_2016}.

We begin by fixing a finite set of agents $\agents$.
The cake or good is modeled by the unit interval $[0,1]$. A \emph{piece} $P$ is
a finite union of intervals from the cake:  $P = [r_1, r'_1]\cup \cdots \cup
[r_{n},r'_{n}]$ where $r_{1} \leq r'_{1} < r_2 \leq r'_{2} < \cdots < r_{n} \leq
r'_{n}$; the points $r_i$ and $r_i'$ are \emph{boundary points} of $P$, and we
write $\partial P$ for the set of all boundary points.  Two pieces
$P_1$ and $P_2$ are \emph{disjoint} if $(P_1\setminus \partial P_1) \cap (P_2
\setminus \partial P_2) = \emptyset$, that is, $P_1$ and $P_2$ only share possibly their boundary points.

Cake-cutting protocols produce an \emph{allocation} of pieces to agents, i.e.,
an $\agents$-tuple of pieces $(P_{a}\mid P_{a} \in \mathbb{P},a\in \agents)$.
%This allocation will depend on what agents prefer, and it is standard to model an agent's preferences by a \emph{valuation} function $V : \pieces \to [0, 1]$.
Protocols produce allocations based on agent preferences, which are typically modelled by functions $V : \pieces \to [0,1]$  called \emph{valuations}.
We assume that valuations satisfy five standard assumptions: (1) Additivity: $V(P\cup P') =
V(P) + V(P')$ provided $P$ and $P'$ are disjoint; (2) Non-negativity: $V(P) \geq
0$; (3) Continuity: $V([r,r'])$ is continuous in both $r$ and $r'$; (4)
Monotonicity: $V(P)\geq V(P')$ if $P' \subseteq P$; and (5) Normalization:
$V([0,1]) = 1$. We will often write $V[r,r']$ for $V([r,r'])$.
% This set of properties is not minimal. It can be shown that the first three properties imply monotonicity.
A \emph{valuation set} $\vset$ is an $\agents$-tuple of valuations $(V_{a} \mid a \in \agents)$.
We write $\vset_{a}$ for agent $a$'s valuation.

Cake-cutting protocols aim to produce fair allocations where no agent prefers
another agent's piece. More precisely, if $A$ is an allocation and $\vset$ is a valuation set, we say $A$ is
\emph{envy-free} (with respect to $\vset$) if $\vset_{a}(A_{a}) \geq
\vset_{a}(A_{a'})$ for all $a, a'\in \agents$.

% Stromquist \cite{strom1} shows in general, that envy-free (connected) allocations always exist under assumptions on the valuations more mild than ours.
Protocols are assumed to have indirect access to agent valuations through specific kinds of \emph{agent queries}.
Slice implements the Robertson-Webb (RW) query model \citep{robweb}, which is
the typical query model in the cake-cutting literature and captures most protocols.
In the RW model, there are two kinds of queries.
An \emph{eval query} takes as input an agent and a piece
and reports the agent's value of that piece:
\[
    \kweval_{a}(P)\ \text{reports}\ \vset_{a}(P).
\]
A \emph{mark query}, when supplied an interval and a value, reports how much of the interval is needed to attain that value:
\[\kwmark_{a}([\ell,r],v)\ \text{reports}\ r'\ \text{where}\ \vset_{a}[\ell,r']
= v,\ \text{provided that}\ v \leq \vset _{a}[\ell,r].\] This query enables us
to find intervals within the cake which have a specified value for a certain
agent.  For example, $\emark{a}{[0,1]}{1/2}$ will output a point $r'$ such that
$\vset_{a}[0,r'] = 1/2 = \vset_{a}[r', 1]$.  The assumption $v \leq
\vset_{a}[\ell,r]$ is required since $\vset_{a}$ is monotone: if $v >
\vset_{a}[\ell,r]$, no such point exists. Note that if multiple points $r'$ are
a valid answer to a mark query, then mark can report any of them.

\section{Language and Type System}
\label{sec:lang}

We review the language~\citep{slice} before describing our novel affine type system. Full details for this section can be found in \append.

\subsection{Syntax of Base Slice}
The set of all basic Slice expressions $\exps$ is given by the grammar shown in \Cref{fig:grammar}.
%We walk through select syntax.
  The expression $\v$ is a value and $\var$
  is an infinite set of variables. We can form tuples and, through the
  $\kwsplit$ expression, extract their components. We have standard if-then-else
  expression, and a set $\mathcal{O}$ consisting of primitive operations like
  $+$,   $\geq$, etc.

  The remaining expressions are cake-cutting specific. The expression $\kwcake$ represents the whole cake, $\kwdiv$ takes an interval and a point, splitting the interval into two at the point, and $\kwpiece$ takes in a list of intervals and forms a piece out of them.
  The expression $\kweval_{a}$ implements the eval query by taking in an
  interval or piece, and producing its value according to agent $a$. The
  expression $\kwmark_{a}$ implements the mark query by taking in an interval
  and the target value, returning any point satisfying the query.

\begin{figure}
    \begin{subfigure}{\textwidth}

  \begin{bnf*}
    \bnftd{e} \bnfpo
    \bnfmore{%
        \v
      \bnfor x \in \var
      \bnfor (e_1, \ldots , e_{n})
      \bnfor \kwlet\ x_1,\ldots, x_{n} =\ \kwsplit\ e_1\ \kwin\ e_2
    }\\
    \bnfmore{%
      \bnfor \bnfts{$\kwif$}\ e_1\ \bnfts{$\kwthen$}\ e_2\ \bnfts{$\kwelse$}\ e_3
      \bnfor o(e_1, \dots, e_n) \qquad (o \in \mathcal{O})
    }\\
    \bnfmore{%
      \bnfor \bnfts{$\kwinit$}
      \bnfor \bnfts{$\ediv{e_1}{e_2}$}
      \bnfor \bnfts{$\epiece{e_1,\ldots ,e_{n}}$}
    }\\
    \bnfmore{%
      \bnfor \bnfts{$\emark{a}{e_1}{e_2}$}
      \bnfor \bnfts{$\eeval{a}{e}$} \qquad (a \in \agents)
    }
  \end{bnf*}
    \end{subfigure}
    \begin{subfigure}{\textwidth}
\begin{bnf*}
    \v \bnfpo
    \bnfmore{%
        \kwtrue
        \bnfor \kwfalse
        \bnfor r\pnt
        \bnfor \left[ r, r' \right]_{} \quad (r \leq r')
    } \\
    \bnfmore{
        \bnfor (\v_1, \ldots ,\v_{n})
        \bnfor P \left[ r_1, r_1' \right]_{} ,\ldots , \left[ r_n, r_n' \right]_{}
        \quad (r_i \leq r_{i}')
    } \\
    \bnfmore{
     \bnfor r_{1} \cdot V_{a_{1}}(P_{1}) + \cdots + r_{n} \cdot V_{a_{n}}(P_{n})
    }
\end{bnf*}
    \end{subfigure}
  \caption{The grammar for Slice expressions (top) and values (bottom).}
  \label{fig:grammar}
\end{figure}
  The set of all values is denoted by $\vals$\hypertarget{values}{}.
We have boolean constants, \emph{points} $r\pnt$, and \emph{intervals} $[r, r']$.
Points represent positions within the cake. Intervals describe contiguous pieces
of the cake. Tuple values enable us to describe allocations.

Values of the form $P [r_1, r'_{1}] ,\ldots ,[r_{n}, r'_{n}]$ and $r_{1} \cdot
V_{a_{1}}(P_{1}) + \cdots + r_{n} \cdot V_{a_{n}}(P_{n})$ are referred to as
\emph{pieces} and \emph{valuations}, respectively. Note that for piece values, we do not
assume that $[r_{i},r'_{i}]$ is disjoint from $[r_{j}, r'_{j}]$ if $i \neq j$.
We sometimes write piece values as $P_{i = 1}^{n} [r_{i},r'_{i}]$, or
$P_{i}[r_{i}, r'_{i}]$, where $i$ ranges over a finite set.  Within the
valuation value, $P_1,\ldots ,P_{n}$ are interval or piece values, $a_1,\ldots
,a_{n}$ are agents, and $r_1,\ldots ,r_{n}$ are real numbers.  We sometimes
write $\sum_{i = 1}^{n}r_{i} \cdot V_{a_{i}}(P_{i})$, or $\sum_{i}r_{i}
\cdot V_{a_{i}}(P_{i})$ for short.

\Cref{proto:cc} shows the two agent protocol described in \Cref{sec:intro}
implemented in Slice; for now, we can ignore the bars over variables.
This protocol uses the eval and mark queries to divide the cake into two pieces equally preferred by agent 1, and then uses eval queries for agent 2's comparison.

\begin{figure}
  \[
    \begin{array}{l}
        \kwlet\ p = \kwsplit\ \kwcake\ \kwin\\
        \kwlet\ p_1, p_2 = \kwsplit\ \ediv{p}{\emark{1}{\rd{p}}{1/2\cdot \eeval{1}{\rd{p}}}}\ \kwin\\
        \kwif\ \eeval{2}{\rd{p_1}} \geq \eeval{2}{\rd{p_2}}\ \kwthen \\
      \quad (\kwpiece (p_2), \kwpiece (p_1))    \\
      \kwelse           \\
      \quad (\kwpiece (p_1),\kwpiece (p_2))
    \end{array}
  \]
  \caption{Cut-choose in \SYSTEM.}
  \label{proto:cc}
\end{figure}

\subsection{A Linear Type System for Slice}
\label{sec:types}
In this section, we develop a new, affine type system for
Slice. At a high level, our type system ensures that no two
agents receive overlapping pieces in the allocation.
In order to accomplish this, it suffices to ensure that any duplicated variable bound to an interval or piece cannot be used either make further cuts or form more pieces.
After all, you can't have your cake
and eat it too!

\paragraph*{Types.}
\hypertarget{types}{}
Slice types include affine types $\lint$ and non-affine types $\ut$:
\begin{bnf*}
    \bnftd{$\ut$} \bnfpo
    \bnfmore{%
        \mathsf{Bool}
        \bnfor \mathsf{Point}
        \bnfor \mathsf{Vltn}
        \bnfor \rd{\mathsf{Intvl}}
        \bnfor \rd{\mathsf{Piece}}
    }\\
    \bnftd{$\lint$} \bnfpo
    \bnfmore{%
        \mathsf{Intvl}
        \bnfor \mathsf{Piece}
        \bnfor \ut_1 \times \cdots \times \ut_{n} \times \lint_1 \times \cdots \times \lint_{n}
    }
\end{bnf*}
Any non-linear type can be viewed as a linear type (i.e., as a unary product).

We treat $\mathsf{Intvl}$ and $\mathsf{Piece}$ as affine types to prevent their values from being duplicated. However, restricting interval and piece
types poses a problem: protocols often query an agent before using the same
interval or piece for division or allocation. For example, in the second line in \Cref{proto:cc}, $p$ needs to be used to mark itself appropriately before being divided. To address this issue, we include two
new base non-affine types, $\rd{\mathsf{Intvl}}$ and $\rd{\mathsf{Piece}}$, called
``read only'' types. Since these types are non-affine, variables of these types can be
freely used in queries. However, dividing or forming pieces from read-only types is not allowed. This restriction ensures we can only create disjoint pieces.

\paragraph*{Values and expressions.}
We extend Slice with values of read-only type:
\begin{bnf*}
    \v \bnfpo
    \bnfmore{%
        \cdots
        \bnfor \rd{[r, r']}
        \bnfor \rd{P \left[ r_1, r_1' \right]_{} ,\ldots , \left[ r_n, r_n' \right]}
    }
\end{bnf*}
The overline syntax is also extended to notation on other values and types, e.g. $\rd{r} = r$, $\rd{(v_1,v_2)} = (\rd{v_1},\rd{v_2})$, $\rd{\mathsf{Vltn}} = \mathsf{Vltn}$, and $\rd{\tau_1\times \tau_2} = \rd{\tau_1} \times \rd{\tau_2}$.
Next, we extend Slice expressions with two new classes of variables. Affine
variables are drawn from $\lvar$, while read-only variables are drawn from $\rd{\lvar}$.
Finally, we extend the syntax of the split expression to bind these variables:
\[
    \kwlet\ x_1,\ldots ,x_{n}, w_1,\ldots ,w_{n'} = \kwsplit\ e_1\ \kwin\ e_2
\]
This expression implicitly binds read-only variables $\rd{w_1} ,\ldots
,\rd{w_{n'}}$ corresponding to the affine variables $w_1 ,\ldots ,w_{n}$. For example, in
\Cref{proto:cc}, $\rd{p}$ is bound in the first line and both $\rd{p_1}$ and
$\rd{p_2}$ are bound in the second line.

\paragraph*{Affine typing rules.}
Our typing judgements are of the form $\Gamma; \Delta \vdash e : \tau$, for
$\Gamma$ a partial map from $\var\cup\rd{\lvar}$ to non-affine types, and
$\Delta$ a partial map from $\lvar$ to linear types. We present a selection of
rules in \Cref{fig:trules}.
\begin{figure}
\begin{mathpar} \scriptstyle
    \mprset {sep = 1em}
    \inferrule*[right=T-Var]
    { }
    {
        \Gamma, x : \ut; \Delta \vdash x : \ut
    }
    \and
\inferrule*[right=T-AffVar]
    { }
    {
        \Gamma; w : \lint \vdash w : \lint
    }
    \and
    \inferrule*[right=T-Piece]
    {%
        \Gamma;\Delta_1 \vdash e_1 : \mathsf{Intvl} \\
    \cdots \\
    \Gamma;\Delta_{n} \vdash e_n : \mathsf{Intvl}
    }
    {
        \Gamma ; \Delta_1,\ldots ,\Delta_{n}\vdash \kwpiece (e_1, \ldots ,e_n) : \mathsf{Piece}
    }
    \and
 \inferrule*[right=T-Div]
    {%
      \Gamma;\Delta_1 \vdash e_1 : \mathsf{Intvl} \\
      \Gamma;\Delta_2 \vdash e_2 : \mathsf{Point} \\
    }
    {\Gamma;\Delta_1,\Delta_2 \vdash \ediv{e_1}{e_2} : \mathsf{Intvl} \times \mathsf{Intvl}}
    \and
    \inferrule*[right=T-Mark]
    {%
        \Gamma; \Delta_1 \vdash e_1 : \rd{\mathsf{Intvl}} \\
        \Gamma; \Delta_2 \vdash e_2 : \mathsf{Vltn} \\
    }
    {%
    \Gamma; \Delta_1, \Delta_2 \vdash \emark{a}{e_1}{e_2} : \mathsf{Point}
    }
    \and
    \inferrule*[right=T-EvalPc]
    {%
        \Gamma; \Delta \vdash e : \rd{\mathsf{Piece}}
    }
    {\Gamma; \Delta \vdash \eeval{a}{e} : \mathsf{Vltn}}
    \and
 \inferrule*[right=T-Split]
    {%
      \Gamma; \Delta_1  \vdash e_1 : \ut_1\times \cdots \times \ut_{n} \times \lint_{1} \times \cdots \times \lint_{n'} \\
      \Gamma, x_1 : \ut_1,\ldots ,x_{n} : \ut_{n}, \rd{w_1} : \rd{\lint_1} ,\ldots ,\rd{w_{n'}} : \rd{\lint_{n'}};\Delta_2,w_1 : \lint_1 ,\ldots ,w_{n'} : \lint_{n'} \vdash e_2 : \tau
    }
    {\Gamma;\Delta_1 , \Delta_2 \vdash \kwlet\ x_1,\ldots ,x_{n}, w_1,\ldots ,w_{n'} = \kwsplit\ e_1\ \kwin\ e_2 : \tau}
\end{mathpar}
\caption{Select typing rules for Slice expressions.}
\label{fig:trules}
\end{figure}
For affine type contexts $\Delta_1$ through $\Delta_{n}$, the concatenation
$\Delta_1,\ldots ,\Delta_{n}$ denotes the union of \emph{disjoint} contexts: $\dom(\Delta_{i})\cap \dom(\Delta_{j}) = \emptyset$ if $i \neq j$.

The variable rules \rname{T-Var} and \rname{T-AffVar} type the given variable based on its context.
The rule \rname{T-Piece} shows the role of affine variables. The premise has
expressions $e_{i}$ under linear type contexts $\Delta_{i}$, while the
conclusion has the combined affine type context $\Delta_1,\ldots ,\Delta_{n}$.
Since the $\Delta_i$ must have disjoint domain, the expressions $e_i$ cannot
share affine variables. In particular, it is not possible for the same interval variable to
appear more than once in a piece.

The rules \rname{T-Piece}, \rname{T-Div}, \rname{T-Mark}, and \rname{T-EvalPc}
highlight the difference between affine types and their read-only variants.
\rname{T-Piece} requires its subexpressions have type $\mathsf{Intvl}$ as it is
forming a piece. \rname{T-Div} requires the first argument have type
$\mathsf{Intvl}$ since it is forming new pieces. In contrast, \rname{T-Mark} requires
the first argument to have type $\rd{\mathsf{Intvl}}$ as it is querying a
valuation, not forming a piece, and similarly for \rname{T-EvalPc}.

The most complicated rule is \rname{T-Split}, which binds multiple variables at once by pattern matching on tuples.
\subsection{Semantics}
\label{sec:sem}
We present a big-step style semantics, defined by a relation ${\eval}_{\vset}
\subseteq \exps \times \vals$ indexed by a valuation set $\vset$, so
our judgments are of the form $e \eval_{\vset} \v$. We omit $\vset$ when clear from context. Our big-step rules are straightforward.
We present a few rules in
\Cref{fig:srules} and discuss them here.

\begin{figure}
\begin{mathpar} \scriptstyle
    \inferrule*[right=E-Tup]
    {%
    e_1 \eval \v_1 \\
    \cdots \\
    e_n \eval \v_n
    }
    {(e_1, \ldots ,e_{n}) \eval (\v_1,\ldots ,\v_{n})}

    \inferrule*[right=E-Mark]
    {%
        e_1 \eval \rd{[r_1,r'_1]} \\
        e_2 \eval {\textstyle\sum_{i}r_{i}}V_{a_{i}}(P_{i}) \\
        V_a^{}([r_1,r]) = {\textstyle \sum_{i}r_{i}\cdot V_{a_{i}}(P_{i})}
    }
    {%
        \emark{a}{e_1}{e_2} \eval r
    }
    \and
    \inferrule*[right=E-EvalPc]
    {%
        e \eval \rd{P\ [r_1, r_1'] , \ldots, [r_n, r_n']}
    }
    {\eeval{a}{e} \eval V_a(P [r_1, r_{1}'],\ldots ,[r_n, r_{n}'])}
    \and
     \inferrule*[right=E-Div]
    {%
      e_1 \eval [r_1, r_1'] \\
      e_2 \eval r_2 \\
      r_1 \leq r_2 \leq r_1'\\
    }
    {\ediv{e_1}{e_2} \eval ([r_1, r_2], [r_2, r_1'])}
    \and
    \inferrule*[right=E-Split]
    {%
        e_1 \eval (\v_1,\ldots ,\v_{n + n'}) \\
        e_2\{x_{i} \mapsto \v_i \mid 1\leq i \leq n\}\{\rd{w_{i}} \mapsto \rd{\v_{i + n}}, w_{i} \mapsto \v_{i + n}\mid 1\leq i \leq n'\} \eval \v
    }
    {\kwlet\ x_1,\ldots ,x_{n}, w_{1},\ldots ,w_{n'} = \kwsplit\ e_1\ \kwin\ e_2 \eval \v}
\end{mathpar}
    \caption{Select evaluation rules for Slice expressions.}
    \label{fig:srules}
\end{figure}
The rule \rname{E-Tup} forms a tuple out of a collection of values.
\rname{E-Mark} implements the mark query; since the
equation $\vset_{a}[r_1,r] = \sum_{i}r_{i}\vset_{a_{i}}(P_{i})$ can be
satisfied by more than one $r$, the big-step semantics is non-deterministic.
\rname{E-EvalPc} implements the eval query. \rname{E-Div} splits an interval
into two, requiring the interval to contain the split point. Both this condition
and the condition for \rname{E-Mark} mean that some well-typed protocols may
become stuck: they may not
evaluate to values. Lastly,
\rname{E-Split} binds the variables $x_1,\ldots ,x_{n}, w_{1},\ldots ,w_{n'}$ to
the values $\v_1,\ldots ,\v_{n},\ldots ,\v_{n + n'}$, and
binds $\rd{w_1},\ldots ,\rd{w_{n}}$ to read-only versions
$\rd{\v_{n + 1}},\ldots ,\rd{\v_{n + n'}}$ of the affine values.
It is straightforward to show that evaluation preserves types:

% \iffalse
\begin{restatable}[Type soundness]{proposition}{typesoundness}
\label{prop:typesoundness}
If $\cdot \vdash e : \tau$ and $e \eval \v$, then $\cdot \vdash \v : \tau$.
\end{restatable}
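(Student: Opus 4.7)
The plan is to prove the proposition by structural induction on the evaluation derivation $e \eval_{\vset} \v$, inverting the typing judgment $\cdot; \cdot \vdash e : \tau$ in each case and applying the induction hypothesis to each recursive evaluation of a subexpression. Most cases are routine, so the writeup focuses on the few nontrivial ingredients.

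First, I would state and prove a \emph{substitution lemma} for each flavor of variable: if $\Gamma, x : \ut ; \Delta \vdash e : \tau$ and $\cdot; \cdot \vdash \v : \ut$, then $\Gamma; \Delta \vdash e\{x \mapsto \v\} : \tau$, together with the analogous statement for an affine variable $w : \lint$ moved out of $\Delta$ and a value $\v$ of type $\lint$. Both go by induction on the typing derivation of $e$. Care is needed to thread the disjoint splitting of $\Delta$ through the rules that partition the affine context, so the affine resource discipline survives substitution. Second, I would establish a \emph{read-only promotion lemma}: if $\cdot; \cdot \vdash \v : \lint$ then $\cdot; \cdot \vdash \rd{\v} : \rd{\lint}$. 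This follows by case analysis on $\v$, using the extension of the overline operation to values and the fact that overlining on types commutes with products.

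With these in hand, the main case of the induction is \rname{E-Split}. By the IH on $e_1$, the tuple $(\v_1, \ldots, \v_{n + n'})$ has the product type given in the premise of \rname{T-Split}. Inversion on tuple values gives typings $\cdot \vdash \v_i : \ut_i$ for $1 \le i \le n$ and $\cdot \vdash \v_{n + j} : \lint_j$ for $1 \le j \le n'$. The promotion lemma then yields $\cdot \vdash \rd{\v_{n + j}} : \rd{\lint_j}$, and iterated application of the substitution lemma gives a typing of the substituted body $e_2\{\cdots\}$ under empty contexts. A final application of the IH to the body's evaluation produces the desired typing of $\v$. The remaining cases are more direct: \rname{E-Div} produces a pair that is typeable at $\mathsf{Intvl} \times \mathsf{Intvl}$ by the value typing rules, \rname{E-Mark} produces a point, \rname{E-EvalPc} produces a valuation value of type $\mathsf{Vltn}$, and \rname{E-Tup} follows immediately from the IH applied to each component.

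The main obstacle I expect is the bookkeeping for the simultaneous binding in \rname{T-Split}: a single split introduces $x_i$ and $w_j$ into separate contexts and additionally installs $\rd{w_j}$ as a read-only companion in the non-affine context. The substitution lemma therefore has to be stated (or repeatedly applied) in a form that supports simultaneous substitution of several values of mixed affine and non-affine types, while also installing the read-only copies. Once this general form is in place, all other cases follow by straightforward inversion and application of the IH.
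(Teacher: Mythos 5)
Your proof is correct, but it takes a genuinely different route from the paper's. The paper generalises the statement to open expressions closed by a pair of simultaneous closing substitutions $\gamma \vDash \Gamma$ and $\delta \vDash \Delta$ (over the non-affine and affine contexts respectively), defines $R_\tau^\vdash$ as the set of closed expressions of type $\tau$ evaluating only to values in $\sem{\tau}$, and then proves $\gamma;\delta(e) \in R_\tau^\vdash$ by induction on the \emph{typing} derivation; type soundness falls out as the special case of empty contexts. In the $\rname{T-Split}$ case the paper simply extends $\gamma$ and $\delta$ with the freshly bound values (and their $\rd{\cdot}$-copies) and reapplies the induction hypothesis, so no stand-alone substitution lemma is ever stated — the substitution reasoning is absorbed into the closing-substitution formalism. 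You instead induct on the \emph{evaluation} derivation and factor the substitution reasoning out into explicit lemmas (one per kind of variable), plus a read-only promotion lemma $\cdot \vdash \v : \lint \Rightarrow \cdot \vdash \rd{\v} : \rd{\lint}$, in the style of classic preservation proofs. The tradeoff: your decomposition isolates the subtle affine/mixed-context substitution once and for all, at the cost of carefully stating a simultaneous multi-binder substitution lemma that installs the $\rd{w_j}$ companions; the paper avoids that lemma but pays by carrying the $\gamma;\delta$ machinery through every case of the induction. Both approaches correctly handle the affine and read-only bookkeeping — you via the promotion lemma, the paper via the definition of $\sem{\rd{\lint}}$ and the inclusion of $\rd{\v_{n+i}}$ in the extended $\gamma'$.
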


% With that said, we now use our type system to verify disjointness.
% \fi

\subsection{Disjointness}

Our affine type system is designed to ensure that well-typed programs
produce only disjoint allocations, i.e., tuples of pieces that do not overlap. To
prove this claim, we generalize and define disjointness for values and general
expressions, and then show that a well-typed disjoint program can only evaluate
to a disjoint value.

Informally, an expression is disjoint if all interval values within it, excluding read-only versions, are disjoint from each other. Disjointness ignores read-only values since well-typed programs are allowed to duplicate them; this does not affect disjointness verification since we are only concerned with programs that return allocations, i.e., values of type $\mathsf{Piece}^{\agents}$.

Since our type system prevents multiple uses of variables with type
$\mathsf{Intvl}$ and $\mathsf{Piece}$, they cannot be duplicated so programs
cannot construct pieces and intervals with overlapping components. This
invariant enables us to show that disjoint expressions only evaluate to disjoint values.
\begin{restatable}{proposition}{disjointness}
    If $\cdot \vdash e : \tau$ and $e$ is disjoint, then $e \eval \v$ implies $\v$ is disjoint.
    \label{prop:disjointness}
\end{restatable}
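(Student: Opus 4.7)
The plan is to proceed by induction on the evaluation derivation $e \eval \v$, supported by an auxiliary substitution lemma tailored to the affine discipline. The invariant maintained along the derivation is that every interval appearing in $\v$ is a sub-interval of some interval that appeared in $e$, and no new overlaps are introduced. Combined with the assumption that $I(e)$ is disjoint, this yields disjointness of $\v$.

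The key substitution lemma would read roughly: if $\Gamma, x : \ut; \Delta, w : \lint \vdash e : \tau$ with $e$ disjoint, and $\v_x$, $\v_w$ are closed disjoint values of the appropriate types whose combined interval lists remain pairwise disjoint with those in $e$, then $e\{x \mapsto \v_x, w \mapsto \v_w, \rd{w} \mapsto \rd{\v_w}\}$ is still well-typed and disjoint. Affinity is the crucial ingredient: each affine variable $w$ occurs at most once syntactically in $e$ (proven by structural induction using the disjoint-union side condition on $\Delta_1,\dots,\Delta_n$ in rules like \rname{T-Piece}), so substitution cannot duplicate the intervals of $\v_w$. Read-only substitutions may proliferate, but read-only values contribute nothing to interval lists and are therefore harmless.

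With this lemma in hand, the inductive cases are largely routine. For \rname{E-Div}, the output $([r_1, r_2], [r_2, r_1'])$ consists of two intervals sharing only the boundary point $r_2$, hence disjoint, and both are sub-intervals of the original $[r_1, r_1']$. For \rname{E-Tup} and \rname{T-Piece}-style rules, the typing rules force the affine contexts $\Delta_i$ of sub-expressions $e_i$ to have disjoint domains; combined with the disjointness of $I(e)$, each sublist $I(e_i)$ is disjoint and the sublists are mutually disjoint, so by induction each $e_i$ evaluates to a disjoint $\v_i$ and no overlap arises in the tuple or piece formed. For \rname{E-Mark}, \rname{E-EvalPc}, and the branches of \rname{E-If}, the resulting values are either points, valuations, or the output of an already-handled sub-computation, so they introduce no new intervals. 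The \rname{E-Split} case is discharged by the substitution lemma applied to $e_2$ with the values produced by evaluating $e_1$.

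The main obstacle will be setting up the substitution lemma cleanly, since we need to track two invariants simultaneously: well-typedness of the substituted expression (standard affine substitution) and the refined geometric invariant about interval lists. A secondary subtlety is the precise formulation of $I(e)$ for open expressions containing subterms like $\kwdiv$ and $\kwmark$ whose outputs are not yet resolved; the definition must be set up so that evaluation steps either leave $I$ unchanged or replace an interval by two of its sub-intervals, never introducing fresh, unconstrained ones. Once these bookkeeping details are in place, the remaining cases follow mechanically from the induction hypothesis and \Cref{prop:typesoundness}.
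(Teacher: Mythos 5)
Your proposal is correct in its essential ideas but decomposes the argument differently from the paper. You propose inducting on the \emph{evaluation} derivation, delegating the hard work of the \rname{E-Split} case to a standalone syntactic substitution lemma (``substituting values with interval lists disjoint from each other and from the body preserves disjointness''). The paper instead inducts on the \emph{typing} derivation, generalizing the statement to open expressions closed under substitutions $\gamma;\delta$, and tracks the geometric invariant with a parameterized semantic relation $\delta \vDash^{P} \Delta$ in which $P$ is the ``budget'' of cake that the affine variables in $\delta$ must stay within; the \rname{T-Split} case then simply shrinks $P$ to $P \setminus \cup I(e_1)$ for the continuation. Both routes rely on the same three ingredients, which you correctly identify: (i) an interval-containment lemma ($e \eval \v$ implies $\cup I(\v) \subseteq \cup I(e)$, the paper's separate Lemma), (ii) the observation that non-affine and read-only values have empty interval lists so duplicating them is safe, and (iii) the fact that disjoint-domain linear contexts $\Delta_1,\dots,\Delta_n$ prevent an affine variable from appearing more than once. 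What the paper's budget-carrying relation buys is that the ``values remain pairwise disjoint from the body'' obligation your substitution lemma carries as a side hypothesis becomes an invariant of the relation $\vDash^{P}$, so it never has to be re-stated or re-threaded in each case; what your approach buys is that you never need to formulate the slightly unusual $\vDash^{P}$ relation and can keep the statement of the substitution lemma closer to a standard preservation lemma. Be careful when you formalize your substitution lemma to quantify over all the simultaneously-bound affine variables of a single split (not just one $w$) and to state that the substituted values are mutually disjoint in addition to being disjoint from $I(e_2)$; both facts are discharged by the inductive hypothesis on $e_1$ together with interval containment and the disjointness of $I(e)$.
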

Checking that a well-typed protocol is disjoint is easily done syntactically, and in the protocols we are concerned with, amounts to ensuring $\kwcake$ is only used once.
% As an illustration, the following protocol evidently would not evaluate to a disjoint value. This happens because $p$ is duplicated, which means this protocol is not well-typed:
%   \[
%     \begin{array}{l}
%         \kwlet\ p = \kwsplit\ \kwcake\ \kwin \\
%         (\kwpiece (p), \kwpiece (p))
%     \end{array}
%   \]

\begin{example}
    We illustrate our type system with the two-agent \emph{Surplus} protocol. In
    brief, both agents are asked to mark the cake at half the value of the whole
    cake. The agent that marked furthest to the left is given all the cake to
    the left of their own mark. Symmetrically, the other agent is given
    everything to the right of their own mark, leaving the cake lying between
    the marks un-allocated. The Slice programs shown in \Cref{fig:surplus} both
    correctly implement the Surplus protocol, however, the left program is not
    well-typed, while the right one is. The left program does not type check
    because it divides the whole cake twice (highlighted in {\color{red}{red}}), leaving either $p_1$ and $p_4$ or $p_2$ and $p_3$ to overlap. Disjointness cannot be verified in this instance since there are intermediate expressions that will not be in its evaluation. The right program avoids this issue by only dividing the cake once it is known where the marks lie in relation to each other.
    \qed
\end{example}
\begin{figure}
    \centering
    \begin{subfigure}[t]{.45\textwidth}
        \centering
  \[
    \begin{array}{l}
        \kwlet\ p = \kwsplit\ \kwcake\ \kwin\\
        \kwlet\ m_1 = \emark{1}{\rd{p}}{1/2\cdot \eeval{1}{\rd{p}}}\ \kwin \\
        \kwlet\ m_2 = \emark{2}{\rd{p}}{1/2\cdot \eeval{2}{\rd{p}}}\ \kwin \\
        \kwlet\ p_1,p_2 = \kwsplit\ \ediv{\color{red}p}{m_1}\ \kwin \\
        \kwlet\ p_3, p_4 = \kwsplit\ \ediv{\color{red}p}{m_2}\ \kwin \\
        \kwif\ m_2 \geq m_1\ \kwthen \\
        \quad (\kwpiece(p_1), \kwpiece(p_4)) \\
        \kwelse\\
        \quad (\kwpiece(p_2), \kwpiece(p_3)) \\ \\ \\
    \end{array}
  \]
  \caption{Not well-typed.}
    \end{subfigure}
    \begin{subfigure}[t]{.45\textwidth}
        \centering
  \[
    \begin{array}{l}
        \kwlet\ p = \kwsplit\ \kwcake\ \kwin\\
        \kwlet\ m_1 = \emark{1}{\rd{p}}{1/2\cdot \eeval{1}{\rd{p}}}\ \kwin \\
        \kwlet\ m_2 = \emark{2}{\rd{p}}{1/2\cdot \eeval{2}{\rd{p}}}\ \kwin \\
        \kwif\ m_2 \geq m_1\ \kwthen \\
        \quad \kwlet\ p_1, p_2 = \kwsplit\ \ediv{p}{m_1}\ \kwin\\
        \quad \kwlet\ p_2, p_3 = \kwsplit\ \ediv{p_2}{m_2} \kwin \\
        \quad (\kwpiece(p_1), \kwpiece(p_3)) \\
        \kwelse\\
        \quad \kwlet\ p_1, p_2 = \kwsplit\ \ediv{p}{m_2}\ \kwin\\
        \quad \kwlet\ p_2, p_3 = \kwsplit\ \ediv{p_2}{m_1} \kwin \\
        \quad (\kwpiece(p_3), \kwpiece(p_1))
    \end{array}
  \]
  \caption{Well-typed.}
    \end{subfigure}
    \caption{The Surplus protocol written in two ways.}
    \label{fig:surplus}
\end{figure}
\section{Constraints}
\label{sec:constraints}
Now that we've seen the Slice language, we review the original Slice constraint translation. For full details see \append.
\paragraph{Paths.}\hypertarget{paths}{}
\label{sec:paths}
As is standard, we consider each path through a program separately. Paths $b$
are Slice expressions (\Cref{sec:lang}) with an assert expression $\kwassert\
b_1\ \kwin\ b_2$ in place of if-then-else.

\paragraph{Logical syntax} \hypertarget{lsyntax}{}
Protocol paths are translated into a multi-sorted first order logic. The logic is standard, so most details areomitted, though we make note of select function symbols:
\begin{itemize}
    \item $[\_, \_]$, $\cup$ for forming intervals and pieces respectively
    \item $\tleft$, $\tright$ for obtaining the left and right endpoints of an interval respectively
    \item $\val_{a}$ for each $a \in \agents$ for representing agent valuations
    \item $\pi_{i}$ for the $i$th component of a tuple
    \item $\mathtt{O}$ which contains logical counterparts to the primitive operations $O$ (e.g. $+$, $\geq$)
\end{itemize}
Through the function symbols and constants, any program value $\v$ can be encoded as a logical term $\lv$. Throughout, the typewriter font designates logical counterparts to program objects.  We also include a special set of variables $\mathcal{Y}$, disjoint from $\mathcal{X}$, which will only be used to represent points in our formulas.

With our logic, we can express envy-freeness, where $x$ represents allocation:
\begin{equation}
    E(x) \triangleq \bigwedge_{a,a' \in \agents} \val_{a}(\pi_{a} x) \geq \val_{a}(\pi_{a'}x).
    \label{eq:envyfree}
\end{equation}

\paragraph*{Logical semantics.}
Formula semantics are given by an interpretation $\interp$ and variable assignment $\assn$. An interpretation associates sorts with sets and
function symbols with functions on these sets. A variable assignment is a map
from variables to elements of these sets. For our purposes, we fix a base
interpretation that interprets everything but the symbols $\val_{a}$ for all
$a$, and all full interpretations agree with the base. The base interprets
objects as one would expect, e.g. $\sem{\tleft} ([r,r']) = r$. For full
interpretations, the symbols $\val_{a}$ are interpreted over all possible
valuations. Thus, full interpretations are uniquely determined by the choice of
valuation set, and we write $\interp_{\vset}$ for the interpretation such that $\sem{\val_{a}}_{\interp_{\vset}} = \vset_{a}$.

For a logical term $t$, we let $\sem{t}_{\interp}^{\assn}$ denote the interpretation
of $t$ according to $\interp$,  with variable values determined by $\assn$,
defined in the usual way (e.g., $\sem{\val_{a}([y,y'])}_{\interp_{\vset}}^{\assn}
=\vset_{a}[\assn(y),\assn(y')]$). Likewise, for a formula $\varphi$, we write $\interp,\assn \vDash \varphi$ if $\varphi$ is true when interpreted through $\interp$ with variable values determined by $\assn$, also defined in the usual way. We write $\interp \vDash \varphi$ if for all assignments $\assn$ we have $\interp, \assn \vDash \varphi$. If $t$ is a term containing no $\val_{a}$ symbols, then for a fixed assignment $\assn$, the
term $t$ is always interpreted the same way and we write just $\sem{t}^{\assn}$.

If $\lv$ is an allocation, $\interp_{\vset}, \assn \vDash E(\lv)$ states that $\sem{\lv}_{\interp_{\vset}}^{\assn}$ is an envy-free allocation. If $e$ is a expression, we say that $e$ \emph{satisfies} $E(x)$ and write $e \vDash E(x)$ if for all valuation sets $\vset$, $e \eval_{\vset} \v$ implies $\interp_{\vset}\vDash E(\lv)$. Thus $e\vDash E(x)$ means $e$ is envy-free.

In order to verify envy-freeness, Slice translates programs $e$ to logical
formulas ensuring $e \vDash E(x)$. We review this constraint translation next,
before describing our improved translation.

\paragraph*{Constraints.}
\hypertarget{constraints}{}

To translate protocols to formulas, we translate each path in a protocol to a
formula consisting of a logical term $\rho(b)$ and a formula $\cn(b)$.
Intuitively, $\rho(b)$ is the logical term representation of the value that $b$
evaluates to assuming that the formula $\cn(b)$ holds. We give some cases of the definition in \Cref{fig:selectrho}.
\begin{figure}
      \begin{mathpar} \displaystyle
          \rho(\rd{\v}) \triangleq \lv
          \and
          \rho_{}(\v)  \triangleq{\lv}
          \and
          \rho_{}(x)  \triangleq{x}
          \and
          \rho_{}(w) \triangleq{w}
          \and
          \rho_{}(\rd{w})  \triangleq \rd{w}
          \and
\rho_{ }(\ediv{b_1}{b_{2}})_{\ilist{i}} \triangleq
    \left(\left[\tleft(\rho_{ }(b_1)_{\ilist{1:i}}), \rho_{ }(b_2)_{\ilist{2:i}}  \right]_{},
    \left[\rho_{ }(b_2)_{\ilist{2:i}}, \tright(\rho_{ }(b_1)_{\ilist{1:i}})  \right]_{}\right)
    \and
    {\rho_{ }}(\emark{a}{b_1}{b_2})_{\ilist{i}}  \triangleq y \in \mathcal{Y}
    \and
    {\rho_{ }}(\eeval{a}{b})_{\ilist{i}} \triangleq \val_{a}(\rho_{ }(b)_{\ilist{1:i}})
    \and
      \rho_{ } (\pt{b_1}{b_2})_{\ilist{i}}  \triangleq  \rho(b_2)_{\ilist{2:i}}
      \end{mathpar}
\begin{mathpar} \displaystyle
\cn(\ediv{b_1}{b_2})_{\ilist{i}}
    \triangleq \cn(b_1)_{\ilist{1:i}}\wedge \cn(b_2)_{\ilist{2:i}} \wedge \tleft(\rho(b_1)_{\ilist{1:i}})\leq \rho(b_2)_{\ilist{ 2 : i}}\leq \tright(\rho(b_1)_{\ilist{ 1 : i}})
    \and
    \cn(\eeval{a}{b})_{\ilist{i}} \triangleq \cn(b)
    \and
    \cn(\emark{a}{b_1}{b_2}) \triangleq \cn(b_1)\wedge \cn(b_2)\wedge (\val_{a}([\tleft(\rho(b_1)), \rho(\emark{a}{b_1}{b_2})])= \rho(b_2))
    \and
    \cn(\pt{b_1}{b_2}) \triangleq (\rho(b_1) = \mathsf{true})\wedge \cn(b_1)\wedge \cn(b_2)
\end{mathpar}
\caption{$\rho(b)$ and $\cn(b)$ for select path expressions $b$.}
    \label{fig:selectrho}
\end{figure}

      It is informative to compare these definitions to the big-step semantics shown in \Cref{sec:lang}. For instance, $\rho(\ediv{b_1}{b_2})$ is a logical encoding of the original interval being split into two, $\rho(\emark{a}{b_1}{b_2})$ is a variable that represents the mark, $\rho(\eeval{a}{b})$ is the value of the interval or piece provided.

The formula $\cn(b)$ is referred to as the \emph{constraint of $b$}.
Roughly, the formula corresponds to the side conditions shown in the big-step semantics.
The most interesting cases of the definition are in \Cref{fig:selectrho}. All constraints conjoin the conditions from their subexpressions.
 The constraint for $\kwdiv$ encodes that the point dividing the interval must be within.
The constraint of $\kweval$ has no additional conditions to satisfy, so it is just the constraint of its subexpression.
The constraint of $\kwmark$ ensures that the new point has the required property and the constraint of $\kwassert$ asserts that the guard must hold.

 \begin{example}\hypertarget{ex:ccpath}{}
     \label{ex:cnstr}
     Consider the following path, denoted $b$, from Cut-Choose (\Cref{proto:cc}):
  \[
    \begin{array}{l}
        \kwlet\ p = \kwsplit\ \kwcake\ \kwin\\
        \kwlet\ p_1, p_2 = \kwsplit\ \ediv{p}{\emark{1}{\rd{p}}{1/2\cdot \eeval{1}{\rd{p}}}}\ \kwin\\
        \pt{\eeval{2}{\rd{p_1}} \geq \eeval{2}{\rd{p_2}}}{(\kwpiece (p_2), \kwpiece (p_1))}
    \end{array}
  \]
     The path $b$ gives the following (simplified) constraint:
\begin{equation*}
  \begin{array}{rl}
      \cn(b) &= (\val_{1}([0,y]) = 1/2 \cdot \val_1([0,1])) \wedge ( \val_{2}([0,y]) \geq \val_{2}([y,1])) \\
      \rho(b) &= (\cup [y,1], \cup [0,y])
  \end{array}
\end{equation*}
The first conjunct in $\cn(b)$ is from the expression $\emark{1}{\overline{p}}{1/2\cdot \eeval{1}{\overline{p}}}$, while the second is from $\eeval{2}{\overline{p_1}}\geq \eeval{2}{\overline{p_2}}$. The term $\rho(b)$ is a logical encoding of $b$'s evaluation.
\qed
\end{example}

The following result, akin to Corollary 4.8 for Slice~\cite{slice}, characterizes paths in terms of their constraints.
\begin{restatable}[]{theorem}{ccharacterization}
    Suppose $\cdot \vdash b : \tau$. Then $b \eval_{\vset} \v$  if and only if there is a variable assingment $\assn$ such that $\interp_{\vset},\assn \vDash \cn(b)$ and $\sem{\rho(b)}_{\interp_{\vset}}^{\assn} = \unrd{\v}$.
\end{restatable}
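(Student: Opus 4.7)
The plan is to proceed by structural induction on the path $b$, comparing the big-step rules of \Cref{fig:srules} with the definitions of $\rho$ and $\cn$ in \Cref{fig:selectrho}. Most cases (boolean values, points and intervals, variables, tuples, primitive operations, $\kwcake$, $\kwpiece$, $\kwunion$) are immediate: the evaluation rule and the corresponding clauses of $\rho$ and $\cn$ are in direct correspondence, and the induction hypothesis applied to each subexpression gives the required assignment (forward) or evaluation (backward). The typing assumption $\cdot \vdash b : \tau$, together with \Cref{prop:typesoundness}, is used to rule out subvalues of the wrong shape, e.g. to ensure that the first argument of $\ediv{\cdot}{\cdot}$ evaluates to an interval so $\tleft$ and $\tright$ are meaningful.

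For the forward direction, given $b \eval_{\vset} \v$, the assignment $\assn$ is built bottom-up from the evaluation tree, combining the assignments produced by the induction hypothesis on subexpressions. The key case is $\emark{a}{b_1}{b_2}$: here $\rho(\emark{a}{b_1}{b_2}) = y$ for a fresh $y \in \mathcal{Y}$, and rule \rname{E-Mark} produces some point $r$ with $\vset_a([r_1, r]) = \sum_i r_i \vset_{a_i}(P_i)$; we extend $\assn$ by setting $\assn(y) = r$, which makes the mark equation $\val_a([\tleft(\rho(b_1)), y]) = \rho(b_2)$ in $\cn(\emark{a}{b_1}{b_2})$ hold under $\interp_{\vset}$. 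For $\ediv{b_1}{b_2}$, the side condition $r_1 \leq r_2 \leq r_1'$ of \rname{E-Div} exactly matches the in-interval conjunct of $\cn$.

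For the backward direction, given $\interp_{\vset}, \assn \vDash \cn(b)$ and $\sem{\rho(b)}_{\interp_{\vset}}^{\assn} = \unrd{\v}$, we build the evaluation derivation. For $\emark{a}{b_1}{b_2}$ we take the mark to be $\assn(y)$: the mark equation in $\cn$ guarantees this is a valid answer, and the nondeterminism in \rname{E-Mark} lets us resolve the choice accordingly. For $\ediv{b_1}{b_2}$ the in-interval conjunct of $\cn$ supplies the side condition of \rname{E-Div}. For $\pt{b_1}{b_2}$ the conjunct $\rho(b_1) = \mathsf{true}$ gives that $b_1$ evaluates to $\kwtrue$, licensing the assert rule.

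The main technical ingredient is a substitution lemma for the $\kwsplit$ case, stating $\rho(e\fsub{x}{\v}) = \rho(e)\fsub{x}{\lv}$ and $\cn(e\fsub{x}{\v}) = \cn(e)\fsub{x}{\lv}$ (and likewise for affine and read-only variables), proved by a straightforward side induction on $e$. Combined with the convention $\rho(\rd{\v}) = \lv$, which makes the translation oblivious to read-only annotations and justifies the appearance of $\unrd{\v}$ in the statement, this lets the substitution step of \rname{E-Split} be matched by substitution into the logical terms furnished by the induction hypothesis. The main obstacle is coordinating the two directions in the mark case, since the fresh variable $y$ introduced by $\rho$ behaves as a Skolem witness that must be chosen to track the particular nondeterministic answer produced by (or permitted for) \rname{E-Mark}; once the assignment of $y$ is pinned down by the evaluation, the rest of the inductive step is routine.
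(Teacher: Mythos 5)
Your proposal correctly identifies the essential ingredients: a substitution lemma relating $\rho$ and $\cn$ of substituted paths to substitutions applied to $\rho$ and $\cn$, the role of the variable $y \in \mathcal{Y}$ at each $\kwmark$ as a Skolem witness that records (forward) or dictates (backward) the nondeterministic answer, the convention $\rho(\rd{\v}) = \lv$ that explains why $\unrd{\v}$ appears in the statement, and the direct correspondence between the side conditions of \rname{E-Div}/\rname{E-Mark}/\rname{E-Assert} and the conjuncts of $\cn$. These are the same ideas the paper relies on, and your treatment of the mark case is accurate.

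There is, however, a genuine gap in the induction setup. ``Structural induction on the path $b$'' breaks down at $\kwsplit$: rule \rname{E-Split} evaluates the \emph{substituted} body $e_2\{x_i \mapsto \v_i\}\{w_i \mapsto \v_{n+i}, \rd{w_i}\mapsto\rd{\v_{n+i}}\}$, which is not a subexpression of $b$, so the induction hypothesis is not available for it under a plain structural induction on $b$ or on its subexpressions. Your substitution lemma $\cn(e\fsub{x}{\v})=\cn(e)\fsub{x}{\lv}$ only pushes substitutions across the translation; it does not manufacture an IH for the substituted body. The paper resolves this by proving the two directions as separate propositions with two different induction measures. Soundness (your forward direction) is proved by induction on the evaluation \emph{derivation} $D : b\eval_\vset\v$, so the premise $S(b_2)\eval\v$ is a strict subderivation and the IH applies to it directly; the substitution lemma and a semantic substitution lemma are then used to rewrite $\cn(S(b_2))$ into $S_\rho(\cn(b_2))$ as required. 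Completeness (your backward direction) has no derivation to induct on, so the paper instead generalizes the claim to \emph{open} paths under closing substitutions $\gamma\vDash\Gamma$, $\delta\vDash\Delta$ and inducts on the typing derivation; in the split case the IH is invoked for $b_2$ under a larger closing substitution $\gamma';\delta'$, and $b_2$'s typing judgment is a subderivation of $b$'s, so the measure decreases. To repair your argument, adopt these two measures explicitly, and, for the backward direction, state and prove the stronger claim for open paths under closing substitutions rather than invoking the substitution lemma after the fact. Two lesser points: the translation must assign a \emph{distinct} $y$ to each $\kwmark$ occurrence (the paper enforces this with unique identifiers $\#s$ and defines what it means for $\assn$ to ``agree with'' a derivation) for your bottom-up construction of $\assn$ to be well defined; and $\kwunion$ is not a base Slice expression, so it should not appear in your list of immediate cases.
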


With our constraint translation being sound and complete, we look to use constraints to verify envy-freeness only by checking the validity of certain formulas involving the constraint. For the following, let $\mathcal{Y}_{b}$ be the set of free variables contained
in $\cn(b)$, and let $B(e)$ be the set of paths within $e$. The following theorem forms the basis for automated verification in Slice.
\begin{restatable}[]{theorem}{solvingthm}
    \label{prop:solvingenvy}
    Suppose that $e$ is a well-formed expression and $\cdot \vdash e : \mathsf{Piece}^{\agents} $.
    Then
    \begin{equation}
        \interp_{\vset} \vDash \bigwedge_{b \in B(e)} \forall \mathcal{Y}_{b}.(\cn(b)\Rightarrow E(\rho(b)))
        \label{eq:solvingenvy}
    \end{equation}
    for all $\vset$ if and only if $e \vDash E(x)$.
\end{restatable}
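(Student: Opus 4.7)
The argument reduces the theorem to the previous characterization result, which equates $b \eval_{\vset} \v$ with the existence of an assignment $\assn$ satisfying $\cn(b)$ and $\sem{\rho(b)}_{\interp_{\vset}}^{\assn} = \unrd{\v}$. The bridging fact I will rely on is that every evaluation of $e$ factors through exactly one path $b \in B(e)$, and $b \eval_{\vset} \v$ iff that branch of $e$ evaluates to $\v$ under the same valuation set. Given this bridge, the rest is unpacking definitions on both sides and shuffling quantifiers.

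For the ``only if'' direction (assume the big conjunction, deduce $e \vDash E(x)$), fix any $\vset$ and any $\v$ with $e \eval_{\vset} \v$. The evaluation follows some unique $b \in B(e)$ with $b \eval_{\vset} \v$, so by the characterization theorem there is an assignment $\assn$ with $\interp_{\vset}, \assn \vDash \cn(b)$ and $\sem{\rho(b)}_{\interp_{\vset}}^{\assn} = \unrd{\v}$. The hypothesis specialized to $b$ and $\assn$ then gives $\interp_{\vset}, \assn \vDash E(\rho(b))$, which says exactly that $\unrd{\v}$ is envy-free under $\interp_{\vset}$; since envy-freeness depends only on the piece data (which coincides for $\v$ and $\unrd{\v}$ because the read-only overlines leave $\val_a$ unaffected), we conclude $\interp_{\vset} \vDash E(\lv)$.

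For the converse, assume $e \vDash E(x)$, fix $\vset$, a path $b \in B(e)$, and any $\assn$ with $\interp_{\vset}, \assn \vDash \cn(b)$. The other direction of the characterization theorem yields a value $\v$ with $b \eval_{\vset} \v$ and $\sem{\rho(b)}_{\interp_{\vset}}^{\assn} = \unrd{\v}$; lifting the path evaluation via the bridge gives $e \eval_{\vset} \v$. Applying the hypothesis gives $\interp_{\vset} \vDash E(\lv)$, i.e.\ $\unrd{\v}$ is envy-free, hence $\interp_{\vset}, \assn \vDash E(\rho(b))$, as required.

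The main obstacle is formalizing the ``factoring through a path'' bridge: that for any $e \eval_{\vset} \v$ there is a unique $b \in B(e)$ with $b \eval_{\vset} \v$, and conversely every path evaluation lifts to one of $e$. This is a structural induction on $e$, with the substantive case being $\kwif$-$\kwthen$-$\kwelse$: the two sets of paths obtained by replacing the conditional with $\kwassert\ b_1\ \kwin\ b_2$ and $\kwassert\ \neg b_1\ \kwin\ b_3$ partition the executions of $e$ according to which branch the guard selects. The remaining cases carry over unchanged since every other construct evaluates independently of branching. Once this lemma is in hand, the remainder is bookkeeping: verifying that $\unrd{\cdot}$ does not change the truth value of $E$, and that the outer quantifier $\forall \vset$ matches on both sides of the biconditional.
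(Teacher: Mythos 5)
Your proof follows essentially the same route as the paper's: decompose executions of $e$ into paths via $B(e)$, invoke the soundness/completeness characterization (the paper uses the two directions—its Propositions for soundness and completeness—separately, where you invoke their packaged corollary), and then transfer between $E(\rho(b))$ and $E(\lv)$ using the semantic equality $\sem{\rho(b)}_{\interp}^{\assn} = \unrd{\v} = \sem{\lv}_{\interp}^{\assn}$ (the paper's Lemma on term substitutions preserving interpretation, \cref{prop:semsub}). Two small remarks: the claimed \emph{uniqueness} of the path $b$ with $b \eval_{\vset} \v$ is not true in general (two paths of an $\kwif$ can evaluate to the same value under nondeterministic $\kwmark$) and is also not needed — only existence is used; and the last transfer step is better phrased as ``$\rho(b)$ and $\lv$ have the same denotation under $\assn$, so substituting one for the other in $E$ preserves truth'' rather than as a property of the $\unrd{\cdot}$ operation, but your intent is clearly the same as the paper's.
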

The formal definition for a well-formed expression can be found within \append, though the imposed conditions are mild; any typical cake-cutting protocol is well-formed. This theorem can be generalized from $E(x)$ to general formulas $F(x)$ satisfying mild conditions.

We stress that in order to apply this theorem to conclude $e \vDash E(x)$, Formula (\ref{eq:solvingenvy}) needs to be valid \emph{for all} valuation sets. Our logic is not rich enough to quantify over valuations and their axioms, so for verification, these formulas must be embedded in a richer theory (e.g., from a modern SMT solver).

\section{Piecewise uniform reduction}
\label{sec:pu}

Now that we have seen how the existing constraint translation works in Slice, we
show how to produce a result similar to \Cref{prop:solvingenvy}, but instead with
a formula in the theory of linear real arithmetic. Formula (\ref{eq:solvingenvy}) contains
terms like $\tleft([t_1,t_2])$, $\pi_{k}(t_1,\ldots ,t_{n})$, and $\val_{a}(t)$,
which all need to be reduced to linear sums of real variables. Most terms can be
reduced via syntactic simplifications, but reducing valuation terms
$\val_{a}(t)$ is much more challenging.

The broad approach is to show a protocol execution on any valuation set can be
replicated with a \emph{piecewise uniform valuation} set, then replace terms
$\val_{a}(t)$ with sums of differences of real variables that represent
$\val_{a}(t)$. We discuss conditions under which protocol executions can be
replicated, then show there are always piecewise uniform valuations meeting
these conditions. Then, we describe how to construct the formula reduction,
prove that it preserves validity, and then apply it to obtain an analog to
\Cref{prop:solvingenvy}. Our approach is inspired by Theorem 1 from Kurokawa,
Lai, and Procaccia~\cite{Kurokawa_Lai_Procaccia_2013}.

\textbf{For this section only, we will assume that the operations $\mathcal{O}$
consist only of boolean operators, comparisons, constant multiplication and
addition}. These operations are sufficient for describing cake-cutting
protocols. A more detailed description of $\mathcal{O}$ is shown in \append.

\subsection{Replicating protocol executions}

In this subsection, we give a condition when the same evaluation judgement holds for two possibly different valuation sets. For this, we define the following relationship between valuation sets.
\begin{definition}
    \hypertarget{def:repl}{}
Let $M\supseteq \{0,1\}$ a finite set of points. We say that valuation sets $\uset$ and $\vset$ \emph{agree on $M$} if
       for any piece $P$ with boundary points in $M$, $V_{a}(P) = U_{a}(P)$ for all $a \in \agents$.
       \label{def:repl}
\end{definition}
The following theorem says that we can identically derive an evaluation judgement with a different valuation set, as long as the valuation set agrees with the original on all pieces formed from points in the derivation.
\begin{restatable}[]{theorem}{pured}
    Let $\uset$ and $\vset$ be valuation sets, and suppose $e\eval_{\vset} \v$. If $\uset$ and $\vset$ agree on all points considered in the derivation of $e \eval_{\vset}\v$, then $e \eval_{\uset} \v$.
    \label{thm:red}
\end{restatable}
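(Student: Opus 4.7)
The plan is to proceed by structural induction on the derivation of $e \eval_\vset \v$. Most evaluation rules (E-Tup, E-Div, E-Split, the rules for let-bindings, conditionals, and primitive operations) make no reference to the valuation set, so the inductive hypothesis lifts each subderivation to one under $\uset$ and the same rule immediately gives $e \eval_\uset \v$. The only cases where the hypothesis of agreement on $M$ is actually used are E-EvalPc and E-Mark.

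For E-EvalPc, the premise is $e \eval_\vset \rd{P[r_1,r_1'],\ldots,[r_n,r_n']}$ and the conclusion gives $\eeval{a}{e} \eval_\vset V_a(P[r_1,r_1'],\ldots,[r_n,r_n'])$, where the result is a one-term syntactic valuation value. The inductive hypothesis yields $e \eval_\uset \rd{P[r_1,r_1'],\ldots,[r_n,r_n']}$. The boundary points $r_i, r_i'$ appear inside a value of the derivation, so they lie in $M$; thus the rule E-EvalPc applies unchanged under $\uset$ and produces the same syntactic output.

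For E-Mark, the premises include the semantic side condition $V_a([r_1,r]) = \sum_i r_i \cdot V_{a_i}(P_i)$, in which both sides are real numbers interpreted under $\vset$. By induction, the subderivations of $e_1$ and $e_2$ transfer to $\uset$, yielding the same syntactic values $\rd{[r_1,r_1']}$ and $\sum_i r_i V_{a_i}(P_i)$. The endpoints $r_1, r_1'$, the mark answer $r$, and the boundary points of each $P_i$ all appear in the derivation and hence lie in $M$. Agreement then gives $U_a([r_1,r]) = V_a([r_1,r])$ and $U_{a_i}(P_i) = V_{a_i}(P_i)$ for every $i$, so the side condition transfers to $U_a([r_1,r]) = \sum_i r_i \cdot U_{a_i}(P_i)$, and E-Mark applies under $\uset$.

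The main obstacle is making the informal phrase ``all points considered in the derivation'' mathematically precise. The argument needs $M$ to contain every real number that appears as an endpoint within any value attached to a node of the derivation tree: intermediate interval endpoints, piece boundary points inside valuation values, and, crucially, the nondeterministically chosen outputs of mark queries. Once $M$ is defined by collecting all such constants from every node of the tree, each inductive step only invokes agreement on points already guaranteed to be in $M$, and the induction closes uniformly.
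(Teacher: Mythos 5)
Your proof is correct and takes essentially the same approach as the paper: the paper defines $M(D)$ as the set of reals $r$ such that $r\pnt$ occurs as a subexpression of some value appearing in the derivation, then inducts on the derivation, with the mark and eval rules as the only cases where agreement on $M$ is invoked. Your treatment of \textsc{E-Mark} is actually a touch more explicit than the paper's, since you spell out that the second premise's value $\sum_i r_i \cdot V_{a_i}(P_i)$ must also be interpreted identically under both valuation sets (requiring the boundary points of each $P_i$ to lie in $M$), whereas the paper compresses this into a single line.
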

There is an analog for formulas.
\begin{restatable}[]{theorem}{pufequiv}
If valuation sets $\uset$ and $\vset$ agree on the set of points considered in a formula $\varphi$ under variable assignment $\assn$, then
    $\interp_{\vset},\assn \vDash \varphi \iff \interp_{\uset},\assn\vDash \varphi$.
    \label{thm:pufequiv}
\end{restatable}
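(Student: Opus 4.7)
The plan is to prove the theorem by structural induction on $\varphi$, supported by an analogous lemma for terms: if $\uset$ and $\vset$ agree on the set of points considered in a term $t$ under $\assn$, then $\sem{t}_{\interp_\vset}^{\assn} = \sem{t}_{\interp_\uset}^{\assn}$. This mirrors the role of \Cref{thm:red} on the operational side, with term denotation taking the place of big-step evaluation.

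First I would prove the term lemma by structural induction on $t$. The base cases (variables and constants) and every non-valuation function symbol---namely $[\_,\_]$, $\cup$, $\tleft$, $\tright$, $\pi_i$, and the operations in $\mathtt{O}$---are interpreted identically under $\interp_\uset$ and $\interp_\vset$, so each such case reduces to applying the induction hypothesis to subterms and then applying the common base interpretation. The only interesting case is $t = \val_a(t')$: the induction hypothesis gives $\sem{t'}_{\interp_\vset}^{\assn} = \sem{t'}_{\interp_\uset}^{\assn}$, so both interpretations feed $\val_a$ the same piece $P$. The boundary points of $P$ are precisely those contributed to the set of points considered in $t$ by this occurrence of $\val_a$, so by assumption they lie in the agreement set $M$, and \Cref{def:repl} yields $\vset_a(P) = \uset_a(P)$.

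With the term lemma in hand, I would prove the main statement by induction on the structure of $\varphi$. Atomic formulas are equalities and inequalities between terms, so they reduce directly to the lemma. Boolean connectives $\neg$, $\wedge$, $\vee$, $\Rightarrow$ follow immediately from the induction hypotheses applied to subformulas, since the points considered in $\varphi$ include those considered in each subformula. For quantifiers $\forall y.\psi$ and $\exists y.\psi$, the set of points considered in $\varphi$ under $\assn$ must incorporate the points considered in $\psi$ under every extension of $\assn$ at $y$; hence the induction hypothesis applied to $\psi$ at each extension yields equivalence, and the quantifier case follows by taking conjunction or disjunction over all values of $y$.

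The main obstacle is pinning down the notion of ``points considered'' so it behaves well across the two interpretations. A priori this set is defined relative to a specific interpretation, so one must check that agreement on one version of the set implies agreement on the other. The cleanest resolution is to observe that, by the term lemma, every argument ever passed to a $\val_a$ during the two denotations coincides, so the point sets contributed by these occurrences coincide as well; only the \emph{outputs} of $\val_a$ could in principle differ, but these outputs are real numbers rather than points and therefore do not enter the point set. Once this consistency is established, the rest of the proof is a routine structural induction that essentially reuses the structure of the analogous operational argument for \Cref{thm:red}.
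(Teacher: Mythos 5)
Your overall strategy is sound, and the key insight coincides with the paper's: boundary points of any piece handed to $\val_a$ are exactly the points in the agreement set, so valuation agreement on those points forces the interpretations of $\val_a$ terms to coincide. However, the route you take differs from the paper's, and the difference matters for rigor. The paper first invokes \Cref{prop:simpl} to replace $\varphi$ with an equivalent \emph{simplified} formula, in which every argument to a $\val_a$ symbol is literally of the form $[t_1,t_2]$ or $\cup([t_1,t_1'],\ldots,[t_n,t_n'])$ with the $t_i$ drawn from $R_{\mathtt{Point}}$ (variables in $\mathcal{Y}$ or point constants). At that stage the interpretation of each argument is independent of the valuation set by inspection, and the boundary-point claim is immediate --- no term lemma, no structural induction on terms, and no need to reason about projections. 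Your proposal instead proves a term lemma by structural induction over arbitrary (unsimplified) terms. This can be made to work, but the $\pi_k$ case is nontrivial: a piece term such as $\pi_1(\pi_2((s_1,(\cup[y_1,y_2],s_2))))$ has its boundary points buried behind a chain of projections, and your induction hypothesis for $\val_a(t')$ does not directly tell you the boundary points of $\sem{t'}^\assn$ are interpretations of point-sorted subterms; you need a separate, slightly delicate sublemma to push boundary points through tuples and projections. The paper's simplification step discharges exactly this complication (by \Cref{prop:relationsimpl}, $\simpl$ eliminates all $\pi_k$, $\tleft$, $\tright$ constructors), which is why its case analysis is so short.

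Two smaller remarks. First, your discussion of quantifiers is superfluous: the constraint logic in this paper has no quantifiers inside formulas (the universal closures in \Cref{prop:solvingenvy} and \Cref{prop:simplsolvingenvy} are notation for meta-level quantification over assignments, already baked into the definition of $\interp \vDash \varphi$). Second, your closing paragraph correctly identifies the potential circularity in ``points considered'' and correctly observes that the sort discipline prevents $\val_a$ outputs (sort $\mathtt{Vltn}$) from ever flowing back into a $\mathtt{Point}$-sorted position; this is the right resolution and is worth stating explicitly, but it would be cleaner to cite the absence of any function symbol of arity $\ldots \to \mathtt{Point}$ with a $\mathtt{Vltn}$ argument, rather than appealing to intuition. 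In short: correct insight, workable but more laborious decomposition; the paper's reduction to simplified formulas is the cleaner path and is worth adopting.
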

\subsection{Piecewise uniform valuations}
Now that we have seen what is required for replication, we show that there is always a special piecewise uniform valuation set that meets the requirements.

We first formally define piecewise uniform valuations.
It is easiest to define these valuations in terms of their \emph{density}.
For our purposes, a \emph{density} is a function $w : [0,1] \to \mathbb{R}_{\geq 0}$, and a valuation \emph{$W$ has density $w$} if
$W({P}) = \int_{P}w\quad$ for all $P\in \pieces$.
\begin{definition}
    \hypertarget{def:pu}{}
    We say that a valuation $U$ is \emph{piecewise uniform} if $U$ has density $u$ for which there exists a piece $P \in \pieces$ and a constant $c$ such that
    \[u(x) =
    \begin{cases}
        c &\text{if}\ x\in P \\
        0 &\text{if}\ x\not\in P.
    \end{cases}\]
We let $P(U)$ denote $P$ and $c(U)$ denote $c$.
\label{def:pu}
\end{definition}
Because valuations are normalized, the constant associated with a piecewise uniform valuation $U$ is the reciprocal length of $P(U)$. Therefore, any piece $P$ uniquely determines a piecewise uniform valuation $U_{P}$, where $P(U_{P}) = P$.

Much of the advantage of these valuations lies in how we can represent their values on specific pieces. For intervals built from right endpoints of $P(U)$, the valuation reduces to a simple sum of differences between real numbers. If we write out $P(U)= [l_1, r_1]\cup \cdots \cup [l_{n}, r_{n}]$ where $l_1 \leq r_1 < \cdots < l_{n} \leq r_{n}$, then
\begin{equation}
U[r_{i}, r_{i'}] = c(U)\cdot \sum_{i' \geq j > i} (r_{j} - l_{j}).
\label{eq:pusum}
\end{equation}
 This formula is key for our reduction, as it enables us to convert valuations applied to intervals (left) to sums of differences of real numbers (right).

We call a valuation set a \emph{piecewise uniform valuation} set if all valuations within it are piecewise uniform. Our formula reduction will benefit from the following key conditions on piecewise uniform valuation sets.
\begin{definition}
    \label{def:er}
    \hypertarget{def:er}{}
    Let $\uset$ be a piecewise uniform valuation set and let $M\supseteq \{0,1\}$ be a finite set of points. We say that $\uset$ is \emph{easily replaceable on $M$} if
\begin{enumerate}
    \item For each $a \in \agents$, and for each $m \in \noz{M}\triangleq M \setminus \{0\}$, there exists $l_{a}(m)$ such that if $l_{a}(m) < m'$ for $m' \in \noz{M}$, then $m \leq m'$ and
            \[
                P(\uset_{a}) =  \bigcup_{m \in \noz{M}}[l_{a}(m), m].
        \]
    \item For each $a,a'\in \agents$, $c(\uset_{a}) = c(\uset_{a'})$.
    \end{enumerate}
\end{definition}
The first part is valuable for the reduction as it removes the need to keep track of distinct right endpoints for each agent. The second part means that the coefficient in \Cref{eq:pusum} can be ignored when comparing these valuations with each other, which will be important later for the formulas to be in real linear arithmetic.
\begin{restatable}[]{theorem}{puexists}
    For any valuation set $\vset$ and any finite set of points $M\supseteq \{0,1\}$, there exists a piecewise uniform valuation set that both agrees with $\vset$ on $M$ and is easily replaceable on $M$.
    \label{prop:puexists}
\end{restatable}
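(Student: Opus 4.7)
The plan is to construct the desired valuation set $\uset$ explicitly, gap-by-gap. First I would enumerate $M$ in increasing order as $0 = m_0 < m_1 < \cdots < m_k = 1$ and set $v_{a,i} = \vset_a[m_{i-1},m_i]$ for each agent $a$ and each $i \in \{1,\ldots,k\}$; additivity and normalization give $\sum_{i=1}^k v_{a,i} = 1$. The strategy is then to place a uniform density of some common constant $c$ on a suitable subinterval of each gap $[m_{i-1},m_i]$, chosen so that the resulting piecewise uniform valuation reproduces $\vset$ on every piece with boundary points in $M$.

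Concretely, I would define
\[
c \;=\; \max_{a \in \agents,\, 1 \leq i \leq k}\, \frac{v_{a,i}}{m_i - m_{i-1}},
\]
which is finite and strictly positive since $\sum_i v_{a,i} = 1 > 0$ for every $a$. For each agent $a$ and each $i$, I set $l_a(m_i) = m_i - v_{a,i}/c$, which by the choice of $c$ satisfies $m_{i-1} \leq l_a(m_i) \leq m_i$, and let $\uset_a$ be the piecewise uniform valuation with $c(\uset_a) = c$ and $P(\uset_a) = \bigcup_{i=1}^k [l_a(m_i), m_i]$. Normalization $\uset_a[0,1] = c\cdot |P(\uset_a)| = c\sum_i (m_i - l_a(m_i)) = \sum_i v_{a,i} = 1$ then holds automatically.

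To finish, I would verify the two required properties. Agreement on $M$ follows by additivity: any piece with boundary in $M$ decomposes, up to boundary points, into gap intervals $[m_{i-1},m_i]$, and on each gap $\uset_a[m_{i-1},m_i] = c\cdot (m_i - l_a(m_i)) = v_{a,i} = \vset_a[m_{i-1},m_i]$. For easy replaceability, condition 2 of \cref{def:er} holds because all $c(\uset_a)$ equal $c$; for condition 1, the inequality $l_a(m_i) \geq m_{i-1}$ combined with the fact that no point of $\noz{M}$ lies strictly inside $(m_{i-1},m_i)$ forces any $m' \in \noz{M}$ with $l_a(m_i) < m'$ to satisfy $m' \geq m_i$.

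The main obstacle is choosing the constant $c$: it must be shared by all agents (condition 2), yet large enough that, for every agent and every gap, the left endpoint $l_a(m_i)$ does not overshoot $m_{i-1}$ into the previous gap (condition 1). Taking a single global maximum of $v_{a,i}/(m_i - m_{i-1})$ over all $a$ and $i$ resolves this tension uniformly; everything else --- normalization, agreement on $M$, and the ordering condition on $l_a$ --- then falls out by direct calculation.
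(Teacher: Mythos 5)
Your construction is the same one the paper uses: enumerate $M$, define a common density constant from the maximum of the "gap density" $\vset_a[m_{i-1},m_i]/(m_i - m_{i-1})$ over all agents and gaps, and build each agent's piece by right-aligning a subinterval of length $\vset_a[m_{i-1},m_i]/c$ against $m_i$. The only cosmetic difference is that you fix $c$ at exactly this maximum, whereas the paper's construction $U_{\vset}(M,d)$ permits any $d \geq \maxdens{\vset}{M}$; everything else, including the verification of agreement on $M$ and of both clauses of \cref{def:er}, matches the paper's argument.
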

The proof constructs a specific piecewise uniform valuation set that satisfies these properties. The construction is a slightly more general version of the construction shown in Theorem 1 by Kurokawa, Lai, and Procaccia~\cite{Kurokawa_Lai_Procaccia_2013}.

A consequence of the theorem is that any protocol execution can be replicated by the specific piecewise uniform valuation set, and any formulas that hold for the original valuation set that only consider points from the execution will also hold for this specific valuation set.
\begin{example}
    \label{ex:exists}
    We illustrate the construction for $\agents = \{1,2\}$ with valuation $\vset_{1}$ being the uniform valuation over the cake, and $\vset_2$ having density $x \mapsto 2x$, and the set of points $M = \{0, 1/2, 1\}$. Set $\uset_1 = U_{P_1(d)}$ and $\uset_2 = U_{P_2(d)}$ for pieces
    \[
        \begin{array}{l}
            P_1(d) = [1/2 - 1/2\cdot 1/d  , 1/2]\cup [1- 1/2\cdot 1/d, 1]\\
            P_2(d) = [1/2 - 1/4\cdot 1/d  , 1/2]\cup [1- 3/4\cdot 1/d, 1]
        \end{array}
    \]
    for $d \geq 3/2$. Clearly both pieces have interval right endpoints of $\{1/2,1\} = \noz{M}$. Also, it is easily to calculate that $c(U_{P_1}) = c(U_{P_2}) = d$. Thus, $\uset$ is easily replaceable on $M$. We additionally have
    \iffalse
    \begin{align*}
     &\uset_1[0,1/2] = d\cdot (1/2 - (1/2 - 1/2\cdot 1/d)) &= 1/2 &= \vset_1[0,1/2],\\
     &\uset_1[1/2,1] = d\cdot (1 - (1 - 1/2\cdot 1/d)) &= 1/2 &= \vset_1[1/2,1],\\
     &\uset_2[0,1/2] = d\cdot (1/2 - (1/2 - 1/4\cdot 1/d)) &= 1/4 &= \vset_2[0,1/2],\\
     &\uset_2[1/2,1] = d\cdot (1 - (1 - 3/4\cdot 1/d)) &= 3/4 &= \vset_2[1/2,1],
    \end{align*}
\fi
    \[
        \begin{array}{l l l}
     \uset_1[0,1/2] = & d\cdot (1/2 - (1/2 - 1/2\cdot 1/d)) &= 1/2 = \vset_1[0,1/2],\\
     \uset_1[1/2,1] = & d\cdot (1 - (1 - 1/2\cdot 1/d)) &= 1/2 = \vset_1[1/2,1],\\
     \uset_2[0,1/2] = & d\cdot (1/2 - (1/2 - 1/4\cdot 1/d)) &= 1/4 = \vset_2[0,1/2],\\
     \uset_2[1/2,1] = & d\cdot (1 - (1 - 3/4\cdot 1/d)) &= 3/4 = \vset_2[1/2,1],
     \end{array}
     \]
    so $\uset$ replicates $\vset$ on $M$. \qed
\end{example}

\subsection{Piecewise uniform replacment}
We leverage the above results to produce a reduction on protocol constraints. At a high level, for any formula having only variables for points, we can simplify it to be a disjunction of inequalities in terms of the form $\sum_{i}r_{i}\cdot \val_{a_{i}}(P_{i})$ for real $r_{i}$ and logical pieces and intervals $P_{i}$. We then replace terms of the form $\val_{a}(P)$ with sums of differences of real variables. Using \Cref{thm:pufequiv} and \Cref{prop:puexists} we can show that the original formula holds if and only if the replaced formula holds for a piecewise uniform valuation set.

 \emph{Simplified terms} are the terms within the following sets, indexed by sort:
\begin{align*}
    R_{\mathtt{Point}} &\triangleq \mathcal{Y} \cup \{r\pnt \mid r \in \mathbb{R}\}
                       &
    R_{\mathtt{Piece}} &\triangleq  \{\cup(t_1,\ldots ,t_{i})\mid t_{i} \in R_{\mathtt{Intvl}}\}
    \\
    R_{\mathtt{Intvl}} &\triangleq  \{[t,t']\mid t,t' \in R_{\mathtt{Point}}\}
                       &
    R_{\mathtt{Vltn}} &\triangleq \{ {\textstyle\sum_{i}}r_{i}\cdot \val_{a_{i}}(P_{i})\mid P_{i} \in R_{\mathtt{Intvl}}\cup R_{\mathtt{Piece}}\}
\end{align*}
For any well-sorted term $t$ containing only variables in $\mathcal{Y}$, we can produce an equivalent simplified version of it, which we denote by $\simpl(t)$. The notion of simplified and the simplification operation $\simpl$ easily extends to whole formulas as well. For further details of this step, see \append.
For further use, if $t$ is a simplified term or formula, we let $\pnt(t)$ denote the subset of $R_{\mathtt{Point}}$ contained as subterms of $t$.

Proceeding with the reduction, we introduce a new set of logical variables, $\mathcal{Z}$ and we assume for each $y\in \mathcal{Y}\cup \{1\}$, and $a\in \agents$, there is a unique $z_{a,y} \in \mathcal{Z}$. To understand the purpose of $\mathcal{Z}$, consider a piecewise unifrom valuation set $\uset$ that is easily replaceable on $M$. Then $P(\uset_{a})$, the piece corresponding to agent $a$'s valuation, is the union of intervals of the form $[l_{a}(m),m]$ for $m \in \noz{M}$. If the variable $y$ represents the variable $m$, then the variable $z_{a,y}$ then represents the left endpoint $l_{a}(m)$.

\begin{definition}
    A \emph{piecewise uniform replacement} is a finite totally ordered subset $(S, >_{S})$ of $\mathcal{Y}\cup \{0,1\}$ such that $\{0,1\}\subseteq S$ and $0 \leq_{S} y \leq_{S}1 $ for all $y\in S$.
    For $y,y'\in S$, we define $\sub{y}{y'} \triangleq \{y '' \in S \mid y' \geq_{S} y'' >_{S} y\}$. We let
    $S|_{t}\triangleq \sub{y}{y'}$ if $t = [y, y']$ and $S|_{t} \triangleq \sub{y_1}{y'_{1}}\cup \cdots \cup \sub{y_n}{y'_{n}}$ if
    $t = \cup [y_1 ,y'_1],\ldots ,[y_{n}, y'_{n}]$.
\end{definition}
The piecewise uniform replacement packages neatly all the data needed to replace valuation symbols in formulas. The order on $S$ represents the ordering of real numbers, since variables from $\mathcal{Y}$ represent points. A replacement is applied to terms:
\begin{definition}
    \label{def:rep}
    The application of a piecewise uniform substitution $S$ on a term $t\in R_{\mathtt{S}}$ for which $\pnt(t)\subseteq S$ is as follows:
    \begin{align*}
        S(\val_{a}(t)) &\triangleq  {\sum_{y \in S|_{t}}} (y - z_{a,y})
                       &
        S({ \sum_{i}}r_{i}\cdot t_{i}) &\triangleq {\sum_{i}}r_{i}\cdot S(t_{i})
                                       &
        S(t) &\triangleq t\ \text{otherwise.}
    \end{align*}
    $S$ can be applied to formulas by passing itself down to its terms. When $S|_{t}$ is empty, we replace the term with $0$.
\end{definition}
The piecewise uniform replacement syntactically applies \Cref{eq:pusum} (ignoring the constant) to valuation terms for valuations of the form shown in \Cref{def:er}.
The following example illustrates this concretely.
\begin{example}
    \label{ex:lincnstr}
    Returning to the path $b$ shown in \Cref{ex:cnstr}, consider the piecewise uniform replacement $S = \{0, y ,1\}$ where $0 <_{S} y <_{S} < 1$. Then $\val_{a}([0,y])$ and $\val_{a}([y,1])$ are replaced with $y - z_{a,y}$ and $1 - z_{a,1}$ respectively. The formula $\cn(b)$ simplifies to
    \begin{equation*}
        \begin{array}{l}
            S(c(b)) = (y - z_{1,y} = 1/2 \cdot (y - z_{1,y} + 1 - z_{1,1})) \wedge (y - z_{2,y} \geq 1 - z_{2,1}), \\
        \end{array}
    \end{equation*}
    and the encoding of envy-freeness becomes
    \begin{equation*}
        \begin{array}{l}
            S(E(\rho(b))) = (y - z_{2,y} \geq 1 - z_{2,1}) \wedge (y - z_{1,y} \geq 1 - z_{1,1}).
        \end{array}
    \end{equation*}
    Both are clearly linear inequalities in real variables. \qed
\end{example}

Piecewise uniform replacements are used to reduce simplified formulas to linear real inequalities:
\begin{proposition}
    Let $f$ be a simplified formula. Let $S$ be a piecewise uniform replacement such that $\pnt(f)\subseteq S$. Then $S(f)$ consists only of conjunctions and disjunctions of linear inequalities of real variables.
    \label{cor:linear}
\end{proposition}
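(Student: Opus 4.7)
The plan is to prove this by structural induction on the simplified formula $f$. The first step is to pin down the shape of simplified formulas: because the primitive operations in $\mathcal{O}$ are restricted to Booleans, comparisons, constant multiplication, and addition, every simplified formula must be a Boolean combination of atomic comparisons $t_1 \bowtie t_2$, where $t_1,t_2$ are simplified terms of a common real-valued sort, namely $\mathtt{Point}$ or $\mathtt{Vltn}$. With this in hand, the inductive step is immediate and the whole argument reduces to analyzing atoms.

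For the base case, suppose $f$ is an atomic comparison $t_1 \bowtie t_2$. If the common sort is $\mathtt{Point}$, then each $t_i$ is either a variable in $\mathcal{Y}$ or a real constant $r\pnt$; by \cref{def:rep}, $S$ leaves these fixed, so $S(t_1) \bowtie S(t_2)$ is already a linear inequality in real variables. If the common sort is $\mathtt{Vltn}$, then $t_i = \sum_j r_{i,j} \cdot \val_{a_{i,j}}(P_{i,j})$ with each $P_{i,j} \in R_{\mathtt{Intvl}} \cup R_{\mathtt{Piece}}$. The hypothesis $\pnt(f) \subseteq S$ ensures that every point appearing in each $P_{i,j}$ lies in $S$, so $S|_{P_{i,j}}$ is well-defined, and \cref{def:rep} yields
\[
    S(t_i) \;=\; \sum_j r_{i,j} \cdot \sum_{y \in S|_{P_{i,j}}} (y - z_{a_{i,j},y}),
\]
a linear combination of the real variables $y \in \mathcal{Y}$ and $z_{a,y} \in \mathcal{Z}$. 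Therefore $S(t_1) \bowtie S(t_2)$ is again a linear inequality in real variables.

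For the inductive step, $S$ commutes with Boolean connectives by the convention following \cref{def:rep}, so $S(f_1 \wedge f_2) = S(f_1) \wedge S(f_2)$ and analogously for $\vee$; by the induction hypothesis each $S(f_i)$ is a conjunction or disjunction of linear inequalities, and hence so is the combined formula. To handle any negations I would first push them down to atoms using de Morgan's laws, then observe that the negation of a linear inequality (e.g.\ $\neg(t_1 \geq t_2)$) is itself a linear inequality ($t_1 < t_2$), preserving the class. The main obstacle is the structural claim about the shape of simplified formulas: this depends on the details of the $\simpl$ operation deferred to the appendix, which must be shown to flatten projections, endpoint extractions, and other non-real subterms out from under comparisons so that only $\mathtt{Point}$- and $\mathtt{Vltn}$-sorted comparisons remain. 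Given the restricted $\mathcal{O}$, this normalization is routine but essential for the base case to go through.
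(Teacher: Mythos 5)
Your proposal is essentially the argument the paper intends: it leans on the appendix's characterization of what a simplified formula looks like (in the paper's notation, a Boolean combination in $R_{\mathtt{Bool}}$, made precise by \cref{prop:relationsimpl} and summarized in \cref{prop:lra}), and then observes that \cref{def:rep} turns $\val_a$ atoms into sums of differences of variables from $\mathcal{Y}\cup\mathcal{Z}$ while leaving the other simplified terms fixed, so that every atom becomes a linear inequality and the Boolean structure is preserved. The "main obstacle" you flag is exactly what the appendix discharges; once you cite that characterization the proof goes through as you describe. Two small corrections: per $R_{\mathtt{Bool}}$ the atomic comparisons in a simplified formula are between $R_{\mathtt{Real}}$-sorted or $R_{\mathtt{Vltn}}$-sorted terms, not $\mathtt{Point}$-sorted ones (point terms appear inside reals via the $*$ embedding); and the paper's simplified formulas also admit negations, which you correctly note fold into the class of linear inequalities after pushing them onto atoms.
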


To apply piecewise uniform replacements in a sound way, the variable assignment must properly line it up with the valuation set; the precise conditions for this are given in the following definition.
\begin{definition}
    \label{def:con}
    Let $\uset$ be a piecewise uniform valuation set.
    Let $S$ is a piecewise uniform replacement and $\assn$ a variable assignment. We write $\con{S}{\assn}{\uset}$ if
    \begin{enumerate}
        \item $\uset$ is easily replaceable on $\assn(S)$ (by convention $\assn(0) = 0$ and $\assn(1) = 1$)
        \item $\assn(z_{a,y}) = l_{a}(\assn(y))$ if $y = \min\{ y'\in S \mid \assn(y') = \assn(y) \}$
        \item $\assn(z_{a,y}) = \assn (y)$ if $y \neq \min\{ y'\in S \mid \assn(y') = \assn(y) \}$
        \item If $\assn(y) < \assn(y')$ then $y <_{S} y'$.
    \end{enumerate}
\end{definition}
This definition formalizes how we think of variables in a piecewise uniform
replacement. Condition (1) says that $\assn(S)$ captures the right endpoints of
$P(\uset_{a})$ correctly, (2) and (3) together ensure that we don't repeat
values in our sums, and (4) ensures that the variable ordering is compatible with the real ordering given by $\assn$.

\begin{example}
    \label{ex:con}
    Let $S = \{0, y, 1\}$, and let $\uset$ be the piecewise uniform valuation set described in \Cref{ex:exists}.
    Set $\assn(y) =1/2$, and
    \begin{align*}
        \begin{array}{l l}
            \assn(z_{1,y}) = {1/2 - 1/2\cdot 1/d} \quad & \quad \assn(z_{1,1}) = {1 - 1/2\cdot 1/d} \\
            \assn(z_{2,y}) = {1/2 - 1/4\cdot 1/d} \quad & \quad \assn(z_{2,1}) = {1 - 3/4\cdot 1/d}.
        \end{array}
    \end{align*}
    Then $\assn(S) = \{0, 1/2, 1\}$ and \Cref{ex:exists} illustrates that $\uset$ is easily replaceable on $\assn(S)$.
    Also, $z_{a,y}$ is the left endpoint, $l_{a}(1/2)$, of the left interval for $P_{a}(d)$, and $z_{a,1}$ is the left endpoint, $l_{a}(1)$, of the right interval for $P_{a}(d)$, hence condition (2) is satisfied. Condition (3) is vacuous here. Clearly, $0 < 1/2 < 1$ and $0 <_{S} y <_{S} 1$ so condition (4) is satisfied. Thus we have that $\con{S}{\assn}{\uset}$. \qed
\end{example}

Piecewise uniform replacements preserve validity when the conditions in \Cref{def:con} are met.
\begin{restatable}[]{theorem}{purepsound}
    \label{prop:purepsound}
    Let $f$ be a simplified formula and let $S$ be a piecewise uniform replacement such that $\pnt(f) \subseteq S$. Let $\assn$ be an assignment and $\uset$ a piecewise uniform valuation set. If $\con{S}{\assn}{\uset}$ then $\interp_{\uset},\assn \vDash f \iff \interp_{\uset},\assn\vDash S(f)$.
\end{restatable}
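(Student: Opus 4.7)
The plan is to proceed by structural induction on $f$. Since the piecewise uniform replacement $S$ is defined to pass itself into subformulas (Definition~\ref{def:rep}), the boolean connectives are handled immediately, and the entire argument reduces to the case of atomic formulas, which in turn reduces to a semantic-preservation lemma for simplified terms.

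For the term lemma I would split on sort. On sorts $\mathtt{Point}$, $\mathtt{Intvl}$, and $\mathtt{Piece}$, $S$ acts as the identity, so $\sem{S(t)}^\assn = \sem{t}_{\interp_\uset}^\assn$ trivially. The substantive case is sort $\mathtt{Vltn}$, where $S$ rewrites $\sum_i r_i \cdot \val_{a_i}(P_i)$ into $\sum_i r_i \cdot \sum_{y \in S|_{P_i}}(y - z_{a_i,y})$. Here the key identity to prove is
\[
\sem{\val_a(P)}_{\interp_\uset}^\assn \;=\; c \cdot \sem{\textstyle\sum_{y \in S|_P}(y - z_{a,y})}^\assn,
\]
where $c = c(\uset_a)$ is the common density constant across agents guaranteed by condition (2) of Definition~\ref{def:er}. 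The left-hand side expands via Equation~(\ref{eq:pusum}) and additivity into $c \cdot \sum_j \sum_m (m - l_a(m))$, summed over $m \in \assn(S)^{\setminus 0}$ lying in each component interval $[\assn(l_j), \assn(r_j)]$ of $\sem{P}^\assn$.

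The main obstacle is reconciling these index sets: the valuation sum ranges over \emph{distinct} points $m$, while the replacement sum ranges over \emph{variables} $y \in S|_P$, possibly several of which take the same value under $\assn$. Conditions (2) and (3) of Definition~\ref{def:con} are designed for exactly this: condition (3) forces every non-$<_S$-minimum variable sharing a value to contribute zero (its $z_{a,y}$ matches $y$), while condition (2) makes the unique $<_S$-minimum variable for each distinct value $m$ contribute $m - l_a(m)$. Condition (4), together with totality of $<_S$, then shows that this minimum lies in $S|_{[l_j,r_j]}$ precisely when its value lies in $(\assn(l_j), \assn(r_j)]$, matching the range of $m$. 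Combining components via additivity yields the identity.

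Finally, I would lift the term-level identity to atomic formulas. Every valuation-containing atomic formula in a simplified $f$ is a comparison between two terms in $R_{\mathtt{Vltn}}$, so both sides scale by the same positive constant $c > 0$, preserving the comparison. Atoms involving only $\mathtt{Point}$, $\mathtt{Intvl}$, or $\mathtt{Piece}$ terms are preserved trivially since $S$ is the identity there. This closes the induction and yields the claimed biconditional.
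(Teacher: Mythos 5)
Your proof takes essentially the same route as the paper: reduce to the valuation-bearing atoms, compute $\sem{S(\val_a(P))}^\assn$ by partitioning $S|_P$ into per-value $<_S$-minima (which contribute $m - l_a(m)$ by condition (2)) and the rest (which contribute zero by condition (3)), use condition (4) to align $S|_P$ with the actual interval endpoints of the piece, and then cancel the common scaling constant $c(\uset_a)$ across agents when lifting back to the inequality. The paper packages the central computation as a separate lemma (its $\underline{S}|_t$ bijection argument) and writes $c$ implicitly as $1/d$, but the ideas and the roles played by each clause of $\con{S}{\assn}{\uset}$ are identical to yours.
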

\begin{example}
    We illustrate the theorem by applying it with $S = \{0, y, 1\}$ for the formula $\cn(b)$ \Cref{ex:cnstr} reproduced here:
\begin{equation*}
  \begin{array}{rl}
      \cn(b) &= (\val_{1}([0,y]) = 1/2 \cdot \val_1([0,1])) \wedge ( \val_{2}([0,y]) \geq \val_{2}([y,1])).
  \end{array}
\end{equation*}
Supposing $\con{S}{\assn}{\uset}$, this formula is equivalent to its reduced version from \Cref{ex:lincnstr}:
    \begin{equation*}
        \begin{array}{l}
            S(c(b)) = (y - z_{1,y} = 1/2 \cdot (y - z_{1,y} + 1 - z_{1,1})) \wedge (y - z_{2,y} \geq 1 - z_{2,1}).
        \end{array}
    \end{equation*}
    One can verify this equivalence for the example $\uset$ and $\assn$ shown in \Cref {ex:con}.
\end{example}

Associated with each piecewise uniform replacement $S$, is a formula $\conj(S)$.
\begin{definition}
    Let $S$ be a piecewise uniform replacement, written $S = \{y_1 ,\ldots ,y_{n}\}$ so that $y_1 <_{S} \cdots < _{S} y_{n}$.
    We let $\conj(S)$ denote the conjunction of the following formulas for all agents $a,a'\in\agents$:
    \begin{align*}
    &0 \leq z_{a, y_{1}} \leq y_{1} \leq \cdots \leq z_{a,y_{n}} \leq y_{n} \leq 1,& \sum_{y \in S} y - z_{a, y} &= \sum_{y \in S\setminus\{0\}} y - z_{a', y}.
    \end{align*}
\end{definition}
Whenever the above formula holds for some variable assignment $\assn$, a piecewise uniform valuation set $\uset$ that is easily replaceable on $\assn(S)$ can be constructed:
\[
    P(\uset_{a}) = \bigcup_{y \in S\setminus\{0\}}[\assn(z_{a,y}), \assn(y)], \quad \quad c (\uset_{a}) = \sum_{y \in S\setminus \{0\}} \assn(y) - \assn(z_{a,y}).
\]
This assists us in showing that our constraint reduction procedure is complete.

We now state our main theorem. For a path $b$, let $S_{b}$ be the set of
piecewise uniform replacements on $\mathcal{Y}_{b}\cup \{0,1\}$---note that this
set is \emph{finite}.
\begin{restatable}[]{theorem}{simplsolvingenvy}
    Suppose $e$ is well-formed and $\cdot \vdash e : \mathsf{Piece}^{\agents}$. Then $e \vDash E(x)$ if and only if
    \begin{equation}
        \label{eq:pusolvingenvy}
        \vDash \bigwedge_{b \in B(e)}  \bigwedge_{S \in S_{b}} \forall \mathcal{Y}_{b}.S(\simpl(\cn(b)\wedge \conj(S) \Rightarrow E(\rho(b)))).
    \end{equation}
    \label{prop:simplsolvingenvy}
\end{restatable}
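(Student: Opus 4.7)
The plan is to chain \Cref{prop:solvingenvy}, \Cref{prop:puexists}, \Cref{thm:pufequiv}, and \Cref{prop:purepsound}. By \Cref{prop:solvingenvy}, $e\vDash E(x)$ is equivalent to: for every valuation set $\vset$, every assignment $\assn$, and every path $b\in B(e)$, $\interp_{\vset},\assn\vDash \cn(b)$ implies $\interp_{\vset},\assn\vDash E(\rho(b))$. Our target formula (\ref{eq:pusolvingenvy}) is a purely first-order statement in linear real arithmetic (by \Cref{cor:linear}, since every $\cn(b)$, $\conj(S)$ and $E(\rho(b))$ mentions only point variables once simplified), so its validity quantifies over assignments of both $\mathcal{Y}$ and $\mathcal{Z}$ variables over the reals. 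Both directions reduce to matching such assignments with piecewise uniform valuation sets.

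For the forward direction, assume $e\vDash E(x)$, fix $b\in B(e)$, $S\in S_b$, and an assignment $\assn$; we may assume $\assn\vDash S(\simpl(\cn(b)\wedge \conj(S)))$, else the implication is trivial. The $\conj(S)$ conjunct guarantees $0\le \assn(z_{a,y_1})\le \assn(y_1)\le\cdots\le 1$ and that the total interval lengths $\sum_{y}\assn(y)-\assn(z_{a,y})$ are equal across agents, so defining $P(\uset_a)=\bigcup_{y\in S\setminus\{0\}}[\assn(z_{a,y}),\assn(y)]$ and $c(\uset_a)$ as the common reciprocal yields a piecewise uniform valuation set $\uset$ with $\con{S}{\assn}{\uset}$. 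By \Cref{prop:purepsound} applied to $\simpl(\cn(b))$, $\interp_{\uset},\assn\vDash \cn(b)$. Applying \Cref{prop:solvingenvy} with $\vset:=\uset$ gives $\interp_{\uset},\assn\vDash E(\rho(b))$, and \Cref{prop:purepsound} in the other direction yields $\assn\vDash S(\simpl(E(\rho(b))))$.

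For the backward direction, assume the formula (\ref{eq:pusolvingenvy}) is valid. Fix $\vset$, $\assn$, and $b$ with $\interp_{\vset},\assn\vDash \cn(b)$. Let $M=\assn(\mathcal{Y}_b)\cup\{0,1\}$; by \Cref{prop:puexists} there is a piecewise uniform valuation set $\uset$ that agrees with $\vset$ on $M$ and is easily replaceable on $M$. Order $\mathcal{Y}_b\cup\{0,1\}$ so that $y<_S y'$ whenever $\assn(y)<\assn(y')$ (breaking ties arbitrarily), yielding $S\in S_b$ satisfying condition (4) of \Cref{def:con}. Extend $\assn$ to $\mathcal{Z}$ by setting $\assn(z_{a,y})=l_a(\assn(y))$ for the $<_S$-minimal $y$ of each equivalence class and $\assn(z_{a,y})=\assn(y)$ otherwise; this makes $\con{S}{\assn}{\uset}$ hold and, by direct inspection of the $l_a$ supplied by \Cref{def:er}, forces $\assn\vDash \conj(S)$. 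By \Cref{thm:pufequiv}, $\interp_{\uset},\assn\vDash \cn(b)$; by \Cref{prop:purepsound}, $\assn\vDash S(\simpl(\cn(b)\wedge \conj(S)))$; by assumption, $\assn\vDash S(\simpl(E(\rho(b))))$; by \Cref{prop:purepsound} again, $\interp_{\uset},\assn\vDash E(\rho(b))$; and by \Cref{thm:pufequiv}, $\interp_{\vset},\assn\vDash E(\rho(b))$, finishing via \Cref{prop:solvingenvy}.

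The main obstacle is the bookkeeping in the backward direction: the chosen $S$ must respect the real ordering induced by $\assn$ on $\mathcal{Y}_b$, and the extension of $\assn$ to $\mathcal{Z}$ must simultaneously satisfy all four clauses of \Cref{def:con} and make $\conj(S)$ hold. The subtle case is handling ties, where several $\mathcal{Y}_b$ variables map to the same point under $\assn$; conditions (2) and (3) of \Cref{def:con} were designed precisely so that only the $<_S$-minimal representative is tied to $l_a$, with the others being set equal to the point itself so that the replaced sums do not double-count, ensuring consistency between the constructed $\uset$ and the formula $\conj(S)$.
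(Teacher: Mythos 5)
Your proposal is correct and uses the same essential constructions as the paper's proof (building $\uset$ from the $\conj(S)$ data in one direction, invoking \Cref{prop:puexists} in the other, chaining \Cref{prop:purepsound} with \Cref{thm:pufequiv}, and tracking condition~(4) ordering and the tie-breaking clauses (2)/(3) of \Cref{def:con}). You do reorganize the skeleton in two useful ways. First, you factor both directions through \Cref{prop:solvingenvy} rather than re-deriving soundness and completeness of $\cn(b)$ inline, which keeps the proof more modular. Second, in the direction that goes from the LRA formula to $e \vDash E(x)$, you transfer the \emph{formula} $\cn(b)$ between $\interp_\vset$ and $\interp_\uset$ via \Cref{thm:pufequiv}, whereas the paper transfers the \emph{derivation} via \Cref{thm:red} and then re-applies soundness. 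Your route makes \Cref{thm:red} unnecessary and symmetrizes the two directions (each is a chain of equivalences rather than one direct proof and one contrapositive-with-completeness, which is what the paper does).

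One small spot you should shore up: in your backward direction you set $M = \assn(\mathcal{Y}_b)\cup\{0,1\}$ and immediately apply \Cref{prop:puexists}, then build $S$ with $0 \leq_S y \leq_S 1$. Both of these require $\assn(\mathcal{Y}_b)\subseteq[0,1]$: piecewise uniform valuations are only defined on pieces of the cake, and condition~(4) of \Cref{def:con} together with $0 \leq_S y \leq_S 1$ forces $0 \leq \assn(y) \leq 1$. Starting from ``$\interp_\vset,\assn\vDash\cn(b)$'' alone this is not quite free; you need the observation that, by completeness (\Cref{prop:completeness}), any such $\assn$ agrees with a derivation of $b$, whose mark points lie in $[0,1]$, hence $\assn(\mathcal{Y}_b)\subseteq[0,1]$ (alternatively: if some $\assn(y)\notin[0,1]$, argue $\interp_\vset,\assn\not\vDash\cn(b)$ and discharge the case vacuously). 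The paper sidesteps this by working outward from the derivation $D$ from the start, where the mark points are in $[0,1]$ by construction; your cleaner formula-level route pays for its symmetry with this one extra line.
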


In contrast to \Cref{prop:solvingenvy}, we no longer need to quantify over valuations---a valuation set is baked into the formula through $\conj(S)$. This also gives a valuation set witness whenever Formula (\ref{eq:pusolvingenvy}) does not hold.

Similar to \Cref{prop:solvingenvy}, this theorem can be extended to more general
formulas $F(x)$.

\begin{proof}[sketch]
For the forward direction, we assume that $e \eval_{\vset} \v$ and apply \Cref{prop:puexists,thm:red} to obtain a piecewise uniform valuation set $\uset$ for which $e \eval_{\uset} \v$. Then it is a matter of applying \Cref{prop:purepsound} and \Cref{thm:pufequiv} to obtain that $E(\lv)$ is satisfied. For the backward direction, we suppose that Formula (\ref{eq:pusolvingenvy}) doesn't hold for some $b$ and $S_{b}$, and use $\conj(S)$ to construct a piecewise uniform valuation set that evaluates to $\v$ yet $E(\lv)$ is not satisfied.
\end{proof}

According to \Cref{cor:linear}, Formula (\ref{eq:pusolvingenvy}) consists
entirely of linear inequalities of real variables. Thus, we have the following corollary.
\begin{restatable}[]{corollary}{decidableenvy}
    Let $e$ be a well-formed and well-typed Slice protocol. Checking if $e$ is
    envy-free is decidable.
    \label{cor:decidable}
\end{restatable}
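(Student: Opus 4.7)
The plan is to assemble the pieces already established in this section into a decision procedure. By \Cref{prop:simplsolvingenvy}, envy-freeness of a well-formed, well-typed protocol $e$ is equivalent to the validity of the single formula (\ref{eq:pusolvingenvy}). The proof thus reduces to two sub-tasks: (i) verifying that this formula is effectively constructible and finite, and (ii) invoking decidability of the logic it lives in.

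For (i), I would first argue that the set of paths $B(e)$ is finite. Since Slice has no recursion or unbounded iteration and each if-then-else generates exactly two paths, $B(e)$ is obtained by a straightforward syntactic unfolding, and its size is bounded (exponentially) by the number of conditionals in $e$. For each path $b \in B(e)$, the constraint $\cn(b)$ and the term $\rho(b)$ are built by structural recursion on $b$, so both are effectively computable, and the set of free variables $\mathcal{Y}_{b}$ is finite. The statement of \Cref{prop:simplsolvingenvy} already observes that $S_b$, the set of piecewise uniform replacements over $\mathcal{Y}_b \cup \{0,1\}$, is finite; this amounts to counting finite totally ordered subsets containing $\{0,1\}$, of which there are only finitely many. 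For each such $S \in S_b$, $\conj(S)$ is a finite conjunction of (in)equalities, and both $S(\cdot)$ and $\simpl(\cdot)$ are effective syntactic operations. So the whole conjunction on the right-hand side of (\ref{eq:pusolvingenvy}) is a concrete, computable formula.

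For (ii), I would apply \Cref{cor:linear}: after simplification and piecewise uniform replacement, the quantifier-free matrix of the formula is a boolean combination of linear inequalities over real variables. Universal quantification over the finite set $\mathcal{Y}_b$ then yields a universally quantified sentence in linear real arithmetic. Validity of such sentences is a classical decidable problem (e.g., by Fourier--Motzkin elimination, or as a special case of Tarski's decidability of the theory of real closed fields), and any modern SMT solver for LRA will discharge it. Combining this decision procedure with the computability of the encoding in step (i), we obtain an algorithm that, given $e$, produces the formula (\ref{eq:pusolvingenvy}) and checks its validity, returning the correct answer by \Cref{prop:simplsolvingenvy}.

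The main obstacle is essentially a bookkeeping one: checking that every syntactic transformation used to go from $e$ to the final LRA sentence is effective and terminating. Nothing in the construction requires new mathematical content beyond the theorems already proved, so the proof is essentially a one-paragraph appeal to \Cref{prop:simplsolvingenvy}, \Cref{cor:linear}, and decidability of linear real arithmetic, together with the observation that $B(e)$ and each $S_b$ are finite.
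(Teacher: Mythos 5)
Your proposal is correct and follows essentially the same route as the paper: the paper also obtains decidability directly by combining \Cref{prop:simplsolvingenvy} (envy-freeness is equivalent to validity of Formula~(\ref{eq:pusolvingenvy})) with \Cref{cor:linear} (that formula lies in linear real arithmetic) and the decidability of LRA. Your additional bookkeeping about finiteness of $B(e)$, $\mathcal{Y}_b$, and $S_b$ and effectiveness of $\simpl$ and $S(\cdot)$ is implicit in the paper's one-line argument but does not change the structure of the proof.
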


\section{Implementation \& Evaluation}
\label{sec:eval}

We implemented our type system and formula reduction on top of the Slice
implementation. Protocols are first type-checked following our linear typing
rules, and then compiled to linear real arithmetic constraints encoding envy-freeness, which are
dispatched to Z3~\cite{10.1007/978-3-540-78800-3_24}.

\paragraph*{Benchmark protocols.}
In our benchmarks, we include all original protocols implemented in
Slice~\citep{slice}; we briefly describe them here.  \emph{Cut-choose} is the
classic 2 agent protocol where one agent cuts and the other picks.
\emph{Surplus} is a two agent protocol which leaves a
``surplus'' piece of the cake in the center. \emph{Selfridge-Conway-Full} is the
classic three agent protocol \cite{brams2006better}. \emph{Selfridge-Conway-Surplus} is a variant of
Selfridge-Conway-Full that disposes the trimming, and
\emph{Waste-Makes-Haste-3}~\cite{wmh} effectively is a minor variant on
Selfridge-Conway-Surplus.

We also implement two new, more complicated protocols. The first,
\mbox{\emph{Aziz-Mackenzie-3}}, is the three-agent variant of the
first bounded envy-free four agent protocol with no free disposal \cite{aziz4agents}. Briefly, this protocol obtains an envy-free allocation by first obtaining a partial allocation where one agent does not care how the rest is allocated amongst the others. Cut-Choose is then applied.
The second,
\emph{Waste-Makes-Haste-4}, is the four-agent connected variant of the Waste-Makes-Haste
free disposal protocol~\citep[Section 6]{wmh}.
This protocol relies on \emph{equalize} queries: Equalize$_{a}(n)$ has agent $a$ divide the cake to produce $n$ equally most preferred (according to $a$) pieces of the cake. It can be shown using Hall's marriage theorem that an envy-free allocation can be made from a set of pieces following some sequence of equalize queries among the agents (for 4 agents, $n \leq 5$), although the allocation must be found through exhaustive search. This protocol exhaustively tries certain sequences of equalize queries until an envy-free allocation is obtained. Notably, this is the first four agent envy-free cake-cutting protocol to be implemented and verified.

\paragraph*{Evaluation.}
Table~\ref{tab:experiment1} presents some statistics from verifying
envy-freeness for each of our benchmark protocols. Our experiments were
conducted on an M1 MacBook Pro with 16 GB of RAM.  We measured the time both to
compile protocols to constraints, and the actual time Z3 took to solve. We also
record here the number of paths in each protocol, as well as the number of lines
for the protocol implementation and the constraint formula. Each path
corresponds to a distinct disjunct in the constraint. We measure this against
the solving time for the original Slice constraints (Old), which uses non linear real arithmetic formulas. Our results demonstrate a reduction in solving time compared with the old constraints. The four-agent protocol is significantly more complex than the others, though Z3 can still solve
the constraints efficiently.

\begin{table}[htbp]
    \centering
    \caption{Verifying envy-freeness (averaged over 5 runs).}
    \begin{tabular}{
      @{}lrrrrrr@{}
    }
      \toprule
      \multicolumn{1}{c}{} & \multicolumn{1}{c}{~} & \multicolumn {2}{c}{Size (lines)} & \multicolumn{3}{c}{Time (seconds)}\\
      \cmidrule(r){3-4} \cmidrule(r){5-7}
      {Protocol} & {\#Paths} & {Program} & {Constraints} & {Compile} & {Z3} & { Z3 (Old)}
      \\
      \midrule
      Cut-Choose                & 2         & 6     & 35        & 1.31      & 0.00     & 0.02 \\
      Surplus                   & 2         & 11    & 56        & 1.23      & 0.00    & 0.02 \\
      Waste-Makes-Haste-3       & 24        & 8    & 924       & 0.85      & 0.02     & 0.84  \\
      Selfridge-Conway-Surplus  & 216       & 19    & 7726      & 1.09      & 0.01     & 0.82 \\
      Selfridge-Conway-Full     & 1800      & 21    & 98292     & 9.20      & 0.46      & 19.38    \\
      Aziz-Mackenzie-3          & 93384     & 23   & 8086180   & {2m4}     & 6.82      & n/a \\
      Waste-Makes-Haste-4       & 1953792   & 290   & 157553237 & {37m02}   & {1m22}    & n/a   \\
      \bottomrule
    \end{tabular}
    \label{tab:experiment1}
  \end{table}

\section{Related \& Future Work}
\label{sec:rw}

\paragraph{Cake Cutting Verification.}
In recent work, \citet{Lester} proposes a system called Crumbs to verify and
disprove envy-freeness, using C bounded model checker (CBMC)
instead of SMT. While performance on correct (envy-free) protocols is similar to
the prior version of Slice, \citet{Lester} shows that Crumbs is much more
effective at finding counterexamples for incorrect protocols. By using our new
constraint reduction, our current work significantly outperforms Slice and
Crumbs for correct protocols, and we can efficiently construct counterexamples
for incorrect protocols (details in \append). In terms of expressivity, Crumbs
supports a more restrictive, higher-level query model, enabling constraint
solving over bounded integer arithmetic. In contrast, our work supports all
protocols written in the standard Robertson-Webb model. Our system
also establishes disjointness, which Crumbs does not consider.

\paragraph{Substructural Type Systems.}
Our type system is an example of a \emph{substructural type system}, which
originate from substructural logics.  In brief, substructural logics restrict
the application of assumptions in proofs. Likewise, substructural type systems
restrict the usage of variables, enabling computational resource usage to be
restricted. A classic example is \emph{linear logic}, due
to~\citet{GIRARD19871}, which led to \emph{linear type systems}
\citep{DBLP:journals/tcs/Lafont88,ABRAMSKY,DBLP:conf/ifip2/Wadler90,10.1145/115865.115894}.
\citet{10.5555/1076265} provides a resource for learning about substructural
type systems. Our type system is designed to ensures \emph{physical
disjointness} of parts of the cake; we are not aware of prior work that uses
substructural types for a single divisible good, though there are similar ideas
in separation logic (e.g.,~\citep{DBLP:series/lncs/Boyland13}).

% JH: There seemed to be space, so I put this following paragraph back in.
\paragraph{Formal Methods and Social Choice.}
Cake-cutting protocols belong to a broader literature on social choice theory,
which has had many fruitful interactions with formal methods. In one direction,
formal methods researchers have used interactive theorem provers to verify
classical protocols and impossibility theorems in social choice theory (e.g.,
\citep{DBLP:journals/jar/Nipkow09}). In the other direction, social choice
researchers have used computer-aided solvers to prove novel theorems in social
choice theory (e.g., ~\citep{DBLP:phd/dnb/Geist16}).

\paragraph{Conclusions and future work.}
Our work makes progress in cake-cutting protocol verification, through an affine type system for disjointness and a formula reduction that enables much more efficient envy-freeness checking.
However, there are envy-free cake-cutting protocols even more complex than what we can verify here. The complexity of these protocols makes it difficult to even write them down in Slice, let alone the constraint compiling and solving time involved. New Slice language features may be needed to address transcription effort, while improvements to the Slice implementation and early pruning of unreachable paths could significantly descrease both compile and solving time.
The most notable of these protocols is the four agent version of Aziz-Mackenzie-\emph{3}~\citep{aziz4agents}, which does not discard any cake, unlike Waste-Makes-Haste-4. By making our affine type system instead linear, it could be used to verify no cake is discarded for that protocol, and others already implemented.
\begin{credits}
  \subsubsection{\ackname}
  We thank Cornell's PL discussion group (PLDG) and Martin Lester for discussions about this work.
  We also thank the anonymous reviewers for their close reading and detailed
  feedback. This work is partially supported by NSF grant CCF-2319186.
\end{credits}

\bibliographystyle{splncs04nat}
\bibliography{header,main}

\newgeometry{left=2cm,right=2cm}
\iflong
\section*{Appendix}

\subsection{Values}
\label{app:values}
Recall the definition of \hyperlink{values}{values}.
Here we define the read and unread operations carefully.

\iffalse
\[
    \begin{array}{c c c}
        \mathbb{V}_{\mathsf{Real}} = \mathbb{R} &
        \mathbb{V}_{\mathsf{Bool}} = \mathbb{B} &
        \mathbb{V}_{\tau_1 \times \cdots \times \tau_{n}} = \mathbb{V}_{\tau_1} \times \cdots \times \mathbb{V}_{\tau_{n}}
    \end{array}
\]
\[
    \begin{array}{c c c c c c}
        \mathbb{V}_{\mathsf{Point}} = \mathbb{P}\mathbb{T} &
        \mathbb{V}_{\mathsf{Vltn}} = \mathbb{VAL} &
        \mathbb{V}_{\mathsf{Intvl}} =  \mathbb{I} &
        \mathbb{V}_{\rd{\mathsf{Intvl}}} = \rd{\mathbb{I}} &
        \mathbb{V}_{\mathsf{Piece}} = \mathbb{P}\mathbb{C} &
        \mathbb{V}_{\rd{\mathsf{Piece}}} = \rd{\mathbb{P}\mathbb{C}}
    \end{array}
\]
\fi
\begin{definition}
    Define the ``read'' operation and ``unread'' operation as follows:
    \begin{align*}
        \rd{\v} =
        \begin{cases}
            \rd{[r,r']} &\ \v = [r,r'] \\
            \rd{P_{i}[r_{i},r'_{i}]} &\ \v = P_{i}[r_{i},r'_{i}] \\
            (\rd{\v_1},\ldots ,\rd{\v_{n}}) &\ \v = (\v_1,\ldots ,\v_{n})\\
            \v &\text{otherwise}.
        \end{cases}
                &
        \quad \unrd{\v} =
\begin{cases}
    [r,r'] &\ \v = \rd{[r,r']} \\
    P_{i}[r_{i},r'_{i}] &\ \v = \rd{P_{i}[r_{i},r'_{i}]} \\
    (\unrd{\v_1},\ldots ,\unrd{\v_{n}}) &\ \v = (\v_1,\ldots ,\v_{n})\\
            \v &\text{otherwise}.
        \end{cases}
    \end{align*}
    These extend to sets:
    $\rd{\mathbb{V}} = \{\rd{\v} \mid \v \in \mathbb{V}\}$, $\unrd{\mathbb{V}} = \{\unrd{\v} \mid \v \in \mathbb{V}\}$.
\end{definition}

\subsection{Types}
\label{app:types}
\begin{figure}
\begin{bnf*}
    \bnftd{$\ut$} \bnfpo
    \bnfmore{%
        \mathsf{Bool}
        \bnfor \mathsf{Point}
        \bnfor \mathsf{Vltn}
        \bnfor \rd{\mathsf{Intvl}}
        \bnfor \rd{\mathsf{Piece}}
    }\\
    \bnftd{$\lint$} \bnfpo
    \bnfmore{%
        \mathsf{Intvl}
        \bnfor \mathsf{Piece}
        \bnfor \ut_1 \times \cdots \times \ut_{n} \times \lint_1 \times \cdots \times \lint_{n}
    }
\end{bnf*}
\caption{Types for Slice expressions.}
\label{fig:types}
\end{figure}
Recall our \hyperlink{types}{types} here, or in \Cref{fig:types}. Here we give more details on our type setup.
Given type contexts $\Gamma$ and $\Gamma'$, the concatenation of them, denoted $\Gamma,\Gamma'$ is the type context given as follows:
\begin{align*}
    \Gamma,\Gamma'(x) =
    \begin{cases}
        \Gamma'(x)\ &\text{if}\ x\in \dom(\Gamma') \\
        \Gamma(x)\ &\text{if}\ x\in \dom(\Gamma)\setminus\dom(\Gamma').
    \end{cases}
\end{align*}

When concatenating linear type contexts $\Delta$ and $\Delta'$, we assert that $\dom(\Delta)\cap \dom(\Delta') = \emptyset$.
For each $o \in \mathcal{O}$, there is a unique signature $o : \ut_1,\ldots ,\ut_{n} \to \ut$ (note that this means operations do not apply to the affine types).
Our complete typing rules are shown in \Cref{rules:types}.
\begin{figure}
  \begin{mathpar}
    \inferrule*[right=T-Bool]
    { }
    {\Gamma;\Delta \vdash b : \mathsf{Bool}}
    \and
\inferrule*[right=T-Point]
    { }
    {\Gamma;\Delta \vdash r\pnt  : \mathsf{Point}}
    \and
    \inferrule*[right=T-Intvl]
    { }
    {\Gamma; \Delta \vdash [r, r'] : \mathsf{Intvl}}
    \and
    \inferrule*[right=T-RdIntvl]
    { }
    {\Gamma; \Delta \vdash \rd{[r, r']} : \rd{\mathsf{Intvl}}}
    \and
    \inferrule*[right=T-Piece]
    { }
    {\Gamma; \Delta \vdash P[r_1, r'_{1}],\ldots ,[r_n, r'_{n}] : \mathsf{Piece}}
    \and
    \inferrule*[right=T-RdPiece]
    { }
    {\Gamma; \Delta \vdash \rd{P[r_1, r'_{1}],\ldots ,[r_n, r'_{n}]} : \rd{\mathsf{Piece}}}
    \and
    \inferrule*[right=T-Vltn]
    { }
    {\Gamma; \Delta \vdash r_{1}\cdot V_{a_1}(P_1) + \cdots + r_{n}\cdot V_{a_{n}}(P_{n}) : \mathsf{Vltn}}
   \and
    \inferrule*[right=T-Cake]
    { }
    {\Gamma;\Delta \vdash \kwcake : \mathsf{Intvl}}
    \and
    \inferrule*[right=T-Ops]
    {%
    \Gamma; \Delta_1 \vdash e_{1} : \ut_1 \\
    \cdots \\
    \Gamma;\Delta_n \vdash e_n : \ut_n \\
    o : \ut_{1},\ldots ,\ut_{n} \to \ut
    }
    {\Gamma; \Delta_1 ,\ldots ,\Delta_{n} \vdash o(e_1, \ldots ,e_{n}) : \ut}
    \and
    \inferrule*[right=T-Tup]
    {%
    \Gamma;\Delta_1 \vdash e_1 : \tau_1 \\
    \cdots \\
    \Gamma;\Delta_{n} \vdash e_n : \tau_n
    }
    {\Gamma;\Delta_1,\ldots ,\Delta_{n} \vdash (e_1, \ldots ,e_{n}) : \tau_1 \times \ldots \times \tau_{n}}
    \and
    \inferrule*[right=T-Cond]
    {%
      \Gamma;\Delta_1\vdash e_1 : \mathsf{Bool} \\
      \Gamma;\Delta \vdash e_2 : \tau \\ \Gamma;\Delta \vdash e_3 : \tau
    }
    { \Gamma; \Delta_1,\Delta \vdash \kwif\ e_1\ \kwthen\ e_2\ \kwelse\ e_3 : \tau}
    \and
    \inferrule*[right=T-Var]
    { }
    {
        \Gamma, x : \ut; \Delta \vdash x : \ut
    }
    \and
\inferrule*[right=T-AffVar]
    { }
    {
        \Gamma; w : \lint \vdash w : \lint
    }
    \and
\inferrule*[right=T-Split]
    {%
      \Gamma; \Delta_1  \vdash e_1 : \ut_1\times \cdots \times \ut_{n} \times \lint_{1} \times \cdots \times \lint_{n'} \\
      \Gamma, x_1 : \ut_1,\ldots ,x_{n} : \ut_{n}, \rd{w_1} : \rd{\lint_1} ,\ldots ,\rd{w_{n}} : \rd{\lint_{n'}};\Delta_2,w_1 : \lint_1 ,\ldots ,w_{n} : \lint_{n'} \vdash e_2 : \tau
    }
    {\Gamma;\Delta_1 , \Delta_2 \vdash \kwlet\ x_1,\ldots ,x_{n}, w_1,\ldots ,w_{n} = \kwsplit\ e_1\ \kwin\ e_2 : \tau}
\and
    \inferrule*[right=T-Div]
    {%
      \Gamma;\Delta_1 \vdash e_1 : \mathsf{Intvl} \\
      \Gamma;\Delta_2 \vdash e_2 : \mathsf{Point} \\
    }
    {\Gamma;\Delta_1,\Delta_2 \vdash \ediv{e_1}{e_2} : \mathsf{Intvl} \times \mathsf{Intvl}}
    \and
    \inferrule*[right=T-Mark]
    {%
        \Gamma; \Delta_1 \vdash e_1 : \rd{\mathsf{Intvl}} \\
        \Gamma; \Delta_2 \vdash e_2 : \mathsf{Vltn} \\
    }
    {%
    \Gamma; \Delta_1, \Delta_2 \vdash \emark{a}{e_1}{e_2} : \mathsf{Point}
    }
    \and
    \inferrule*[right=T-EvalIntvl]
    {%
        \Gamma; \Delta \vdash e : \rd{\mathsf{Intvl}}
    }
    {\Gamma; \Delta \vdash \eeval{a}{e} : \mathsf{Vltn}}
    \and
    \inferrule*[right=T-EvalPc]
    {%
        \Gamma; \Delta \vdash e : \rd{\mathsf{Piece}}
    }
    {\Gamma; \Delta \vdash \eeval{a}{e} : \mathsf{Vltn}}
    \and
    \inferrule*[right=T-Piece]
    {%
        \Gamma;\Delta_1 \vdash e_1 : \mathsf{Intvl} \\
    \cdots \\
    \Gamma;\Delta_{n} \vdash e_n : \mathsf{Intvl}
    }
    {
        \Gamma ; \Delta_1,\ldots ,\Delta_{n}\vdash \kwpiece\ e_1, \ldots ,e_n : \mathsf{Piece}
    }
    \end{mathpar}
    \caption{Typing rules}
  \label{rules:types}
\end{figure}
\begin{proposition}
   For a value $\v$, there is a unique type $\tau$ such that $\cdot \vdash v : \tau$.
   \label{prop:typeunicity}
\end{proposition}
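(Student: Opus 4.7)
The plan is to proceed by structural induction on the value $\v$, and observe that for each syntactic form of value there is exactly one typing rule whose conclusion matches, with the output type determined entirely by the syntactic shape of $\v$.

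For the base cases, each atomic value form corresponds to a single axiom in \Cref{rules:types}: $\kwtrue,\kwfalse$ are handled by \rname{T-Bool} producing $\mathsf{Bool}$; $r\pnt$ by \rname{T-Point} producing $\mathsf{Point}$; $[r,r']$ by \rname{T-Intvl} producing $\mathsf{Intvl}$; $\rd{[r,r']}$ by \rname{T-RdIntvl} producing $\rd{\mathsf{Intvl}}$; and similarly for pieces, read-only pieces, and valuations. In each case the axiom has no premises beyond the shape of the value, so the assigned type is uniquely determined. Crucially, the syntactic categories of values (booleans, points, bare intervals, overlined intervals, etc.) are pairwise disjoint, so no two different axioms could both apply to the same $\v$.

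For the inductive case $\v = (\v_1,\ldots,\v_n)$, the only applicable rule is \rname{T-Tup}, which requires $\cdot;\Delta_i \vdash \v_i : \tau_i$ for each $i$ and concludes $\cdot;\Delta_1,\ldots,\Delta_n \vdash (\v_1,\ldots,\v_n) : \tau_1\times\cdots\times\tau_n$. By the induction hypothesis each $\tau_i$ is uniquely determined by $\v_i$, and hence the product $\tau_1\times\cdots\times\tau_n$ is uniquely determined by $\v$. One minor point to check is that the empty linear context $\cdot$ splits only as $\cdot,\ldots,\cdot$, so no nondeterminism arises from context-splitting in the derivation.

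There is no real obstacle here; the only subtlety is making the case analysis exhaustive and confirming that the read-only and non-read-only variants of intervals and pieces are syntactically distinguishable (which is exactly what the overline notation guarantees), so that each value form triggers exactly one rule. A clean way to package this is first to prove an inversion lemma: if $\cdot \vdash \v : \tau$ then $\tau$ must be the type produced by the unique axiom/rule matching $\v$'s shape; uniqueness then follows immediately.
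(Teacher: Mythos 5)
Your proposal is correct and takes essentially the same approach as the paper: structural induction on values, observing that each syntactic form of value matches exactly one typing rule, with the tuple case handled by applying the induction hypothesis to the components under \rname{T-Tup}. Your additional remarks about the trivial splitting of the empty linear context and the syntactic disjointness of read-only versus non-read-only forms make explicit points the paper leaves implicit, but the underlying argument is the same.
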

\begin{proof}
    We proceed by induction on the structure of values. For all values but those of the form $(\v_1,\ldots ,\v_{n})$, examining the typing rules show that each form of the value has only one typing rule that can be applied. Now assume that $\v = (\v_1,\ldots ,\v_{n})$. By induction, there are unique types $\tau_1,\ldots ,\tau_{n}$ such that $\cdot \vdash \tau_{i} : \v_{i}$ for $1 \leq i \leq n$. The only rule that can be applied here is \rname{T-Tup}, for which we obtain $\cdot \vdash (\v_1 ,\ldots ,\v_{n}) : \tau_1 \times \cdots \times \tau_{n}$.
\end{proof}
Let $\sem{\tau} \triangleq \{\v \in \vals \mid \cdot \vdash \v : \tau\}$, that is, the set of values that have type $\tau$. Clearly, if $\tau \neq \tau'$, then $\sem{\tau}\cap \sem{\tau'} = \emptyset$. Because of \Cref{prop:typeunicity}, we can concretely describe $\sem{\tau}$ for each $\tau$:
\[
\begin{array}{c c}
    \sem{\mathsf{Bool}} = \{\kwtrue, \kwfalse\} \quad & \quad
    \sem{\mathsf{Point}} = \{r\pnt \mid r \in \mathbb{R}\}
    \end{array}
\]
\[
        \sem{\mathsf{Vltn}} = \{\sum_{i}r_{i} \cdot V_{a_{i}}(P_{i})\mid r_{i} \in \mathbb{R}, a_{i}\in \agents, P_{i}\in \sem{\mathsf{Piece}}\}
    \]
\[
    \begin{array}{c c}
    \sem{\rd{\mathsf{Intvl}}} = \{\rd{\v} \mid \v \in \sem{\mathsf{Intvl}}\} \quad & \quad
    \sem{\rd{\mathsf{Piece}}} = \{\rd{\v} \mid \v \in \sem{\mathsf{Piece}}\}
    \end{array}
\]
\[
    \begin{array}{c c}
    \sem{\mathsf{Intvl}} = \{[r,r']\mid r, r' \in \mathbb{R}, r \leq r'\} \quad & \quad
    \sem{\mathsf{Piece}} = \{P_{i} \v_i \mid \v_{i} \in \sem{\mathsf{Intvl}}\}
    \end{array}
\]
\[
    \sem{\ut_1\times \cdots \times \ut_{n} \times \tau_1 \times \cdots \times \tau_{n'}} = \sem{\ut_1} \times \cdots \times \sem{\ut_{n}} \times \sem{\tau_1} \times \cdots \times \sem{\tau_{n'}}
\]

\subsection{Semantics}
\label{app:sem}
Before we can state the complete rules, we require a few definitions.
For each $o: \ut_1 ,\ldots ,\ut_{n}\to \ut \in O$, we assume there is a function $\sem{o} : \bigtimes_{i = 1}^{n} \unrd{\sem{\ut_{i}}}\to \unrd{\sem{\ut}}$. In this way, operations are defined not on the read only versions, but on the underlying values, which assists in the compatibility with the logic.
Recall that our semantics is given by a relation $\eval_{\vset}\subseteq \exps \times \vals$, indexed by a set of valuations $\vset$ (which we omit when the set in question is clear).
The big-step rules are given in \Cref{rules:e}.

\begin{figure}
  \begin{mathpar}
    \inferrule*[right=E-Val]
    { }
    {\v \eval \v}
    \and
    \inferrule*[right=E-Cake]
    { }
    {\kwinit \eval \left[ 0,1 \right]}
    \\
    \inferrule*[right=E-Ops]
    {%
    e_1 \eval \v_1 \\
    \cdots \\
    e_n \eval \v_n \\
    \sem{o}(\unrd{\v_1},\ldots ,\unrd{\v_{n}}) = \v
    }
    {o(e_1, \ldots ,e_{n}) \eval \rd{\v}}
    \\
    \inferrule*[right=E-Tup]
    {%
    e_1 \eval \v_1 \\
    \cdots \\
    e_n \eval \v_n
    }
    {(e_1, \ldots ,e_{n}) \eval (\v_1,\ldots ,\v_{n})}
    \and
    \inferrule*[right=E-True]
    {%
      e_1 \eval \kwtrue \\
      e_2 \eval \v_2
    }
    { \kwif\ e_1\ \kwthen\ e_2\ \kwelse\ e_3 \eval \v_2}
    \and
    \inferrule*[right=E-False]
    {%
      e_1 \eval \kwfalse \\
      e_3 \eval \v_3
    }
    { \kwif\ e_1\ \kwthen\ e_2\ \kwelse\ e_3 \eval \v_3}
    \and
    \inferrule*[right=E-Split]
    {%
        e_1 \eval (\v_1,\ldots ,\v_{n + n'}) \\
        e_2\{x_{i} \mapsto \v_i \mid 1\leq i \leq n\}\{\rd{w_{i}} \mapsto \rd{\v_{i + n}}\mid 1\leq i \leq n'\}\{w_{i} \mapsto \v_{i + n}\mid 1\leq i \leq n'\} \eval \v
    }
    {\kwlet\ x_1,\ldots ,x_{n}, w_{1},\ldots ,w_{n'} = \kwsplit\ e_1\ \kwin\ e_2 \eval \v}
\and
    \inferrule*[right=E-Div]
    {%
      e_1 \eval [r_1, r_1'] \\
      e_2 \eval r_2 \\
      r_1 \leq r_2 \leq r_1'\\
    }
    {\ediv{e_1}{e_2} \eval ([r_1, r_2], [r_2, r_1'])}
    \\
    \inferrule*[right=E-Mark]
    {%
        e_1 \eval [r_1,r'_1] \\
        e_2 \eval {\textstyle\sum_{i}r_{i}}V_{a_{i}}(P_{i}) \\
        \vset_a^{}([r_1,r]) = {\textstyle \sum_{i}r_{i}V_{a_{i}}(P_{i})}
    }
    {%
        \emark{a}{e_1}{e_2} \eval r
    }
    \and
    \inferrule*[right=E-EvalIntvl]
    {%
      e \eval [r, r']
    }
    {\eeval{a}{e} \eval \vset_a[r, r']}
    \and
    \inferrule*[right=E-EvalPc]
    {%
      e \eval P\ [r_1, r_1'] , \ldots, [r_n, r_n']
    }
    {\eeval{a}{e} \eval \vset_a(P [r_1, r_{1}'],\ldots ,[r_n, r_{n}'])}
    \and
    \inferrule*[right=E-Piece]
    {%
        e_1 \eval [r_1, r_1'] \\
    \cdots \\
    e_n \eval [r_{n}, r_{n}']
    }
    {
        \kwpiece\ e_1, \ldots ,e_n \eval P\ [r_1, r_1'] , \ldots, [r_n, r_n']
    }
\end{mathpar}
        \caption{Big-step semantics}
  \label{rules:e}
\end{figure}

\hypertarget{dnotation}{}
Here we also introduce the following derivation notation, which is helpful later on. Let $D$ be a derivation of an evaluation judgement. If the conclusion of $D$ is $e \eval_{\vset} \v$, we write $D : e \eval_{\vset} \v$. Given a derivation $D: e \eval_{\vset} \v$, if $e$ has subexpressions $e_{i}$, we let $D_{i}$ be the derivation of the premise of $e \eval_{\vset}\v$ that contains $e_{i}$. Similarly, if $e$ is an expression with one subexpression $e'$, we let $D'$ denote the derivation of the premise containing $e'$. For example, if $D: (e_1 ,\ldots ,e_{n}) \eval_{\vset} (\v_1 ,\ldots, \v_{n})$, then $D_{i} : e_{i} \eval_{\vset} \v_{i}$, and if $D : \eeval{a}{e'}\eval V_{a}([r,r'])$, then $D' : e' \eval_{\vset} [r,r']$.

 \subsection{Type soundness}
 \label{app:typesoundness}
 Here we prove type soundness. As a reminder, the statement is the following:
\begin{restatable}[Type soundness]{proposition}{typesoundness}
\label{prop:typesoundness}
If $\cdot \vdash e : \tau$ then $e \eval \v$ implies $\cdot \vdash \v : \tau$.
\end{restatable}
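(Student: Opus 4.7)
\medskip

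\noindent\textbf{Proof proposal.} The plan is a straightforward induction on the derivation of $e \eval \v$, with case analysis on the last big-step rule applied. In each case I invert the typing derivation $\cdot \vdash e : \tau$ (using the syntax-directed nature of the rules in \Cref{rules:types}) to extract typings for the subexpressions, apply the induction hypothesis to the premises, and then reassemble a typing derivation for the conclusion using the appropriate typing rule. For the value and constant cases (\rname{E-Val}, \rname{E-Cake}) the result is immediate, since $\v \eval \v$ and the typing rule used for $e$ itself already types $\v$. For \rname{E-Tup}, inversion of \rname{T-Tup} gives typings $\cdot \vdash e_i : \tau_i$ which by the IH yield $\cdot \vdash \v_i : \tau_i$, and another application of \rname{T-Tup} finishes the case. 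The cases for \rname{E-Div}, \rname{E-Mark}, \rname{E-EvalIntvl}, \rname{E-EvalPc}, and \rname{E-Piece} are similar: one inverts the corresponding typing rule, applies the IH to the premises (which is not strictly needed for the output type but confirms well-typedness of inputs), and then directly constructs the output value's typing derivation from the rules \rname{T-Intvl}, \rname{T-Point}, \rname{T-Vltn}, and \rname{T-Piece}. The \rname{E-Ops} case follows the same pattern using the assumed signatures for primitives in $\mathcal{O}$. The branching cases \rname{E-True} and \rname{E-False} are immediate from inversion of \rname{T-Cond}, since both branches are typed with $\tau$.

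The main obstacle is the \rname{E-Split} case, which substitutes values for bound variables, mixing affine and read-only variables. To handle this I will first establish two auxiliary lemmas. The first is a \emph{read-only coercion lemma}: if $\cdot \vdash \v : \lint$ for an affine type $\lint$, then $\cdot \vdash \rd{\v} : \rd{\lint}$; this follows by induction on $\v$ using the extended overline notation on types, together with \rname{T-RdIntvl} and \rname{T-RdPiece}. The second is a \emph{substitution lemma}: if
\[\Gamma, x_1{:}\ut_1,\ldots,x_n{:}\ut_n, \rd{w_1}{:}\rd{\lint_1},\ldots,\rd{w_{n'}}{:}\rd{\lint_{n'}}; \Delta, w_1{:}\lint_1,\ldots,w_{n'}{:}\lint_{n'} \vdash e_2 : \tau,\]
and we have closed values $\cdot \vdash \v_i : \ut_i$ for $1\le i \le n$ and $\cdot \vdash \v_{n+j} : \lint_j$ for $1\le j \le n'$, then the simultaneous substitution of $\v_i$ for $x_i$, of $\v_{n+j}$ for $w_j$, and of $\rd{\v_{n+j}}$ for $\rd{w_j}$ yields a well-typed expression of type $\tau$ under $\Gamma;\Delta$. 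This is proved by induction on the typing derivation of $e_2$, using the read-only coercion lemma at the read-only variable occurrences and the fact that the affine context splits allow each linear variable to be routed to exactly one subderivation.

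With these lemmas in hand, the \rname{E-Split} case proceeds as follows. Inverting \rname{T-Split} on $\cdot \vdash \kwlet\ \vec{x},\vec{w} = \kwsplit\ e_1\ \kwin\ e_2 : \tau$ gives typings for $e_1$ (of a product type) and for $e_2$ under the extended contexts. The IH applied to the premise $e_1 \eval (\v_1,\ldots,\v_{n+n'})$ yields $\cdot \vdash (\v_1,\ldots,\v_{n+n'}) : \ut_1 \times \cdots \times \ut_n \times \lint_1 \times \cdots \times \lint_{n'}$, and inverting \rname{T-Tup} gives typings for each component. The substitution lemma then gives a typing derivation for the substituted body, to which the IH applies a second time (on the smaller derivation of its evaluation to $\v$), concluding $\cdot \vdash \v : \tau$. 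All remaining cases are routine applications of the same pattern.
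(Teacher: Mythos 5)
Your proposal is correct but takes a genuinely different route from the paper's proof. You give the classic big-step type-preservation argument: induction on the evaluation derivation, with the difficult \rname{E-Split} case handled by a separately stated substitution lemma (proved by induction on the typing derivation of the body) together with a small read-only coercion lemma turning a typing $\cdot \vdash \v : \lint$ into $\cdot \vdash \rd{\v} : \rd{\lint}$. The paper instead proves the result by induction on the \emph{typing} derivation: it defines a logical predicate $R^{\vdash}_{\tau}$ of closed expressions of type $\tau$ that evaluate only into $\sem{\tau}$, generalises the statement to open terms via closing substitutions $\gamma \vDash \Gamma$, $\delta \vDash \Delta$, and shows $\gamma;\delta(e) \in R^{\vdash}_{\tau}$. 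That choice folds the substitution lemma into the induction hypothesis: in the split case the paper simply extends $\gamma$ and $\delta$ with the newly bound components and re-applies the IH, rather than invoking a standalone lemma. Both routes are valid; your decomposition isolates the substitution lemma as a reusable artifact, while the paper's closing-substitution style is the same scaffolding it later re-uses, with an additional disjointness invariant ($\delta \vDash^{P} \Delta$), to prove disjointness preservation — where an ordinary substitution lemma would have had to be restated with that invariant anyway.
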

 We first provide some definitions.
 For each type $\tau$, we let $R_{\tau}^{\vdash} \triangleq \{e \in \exps \mid \cdot \vdash e : \tau, e \eval \v \Rightarrow \v \in \sem{\tau}\}$, that is, the set of closed expressions that have type $\tau$ and evaluate only to values within $\sem{\tau}$. We now build up some lemmas in order to prove type soundness.
\begin{lemma}
   $\v \in \sem{\tau} \iff \cdot \vdash \v : \tau \iff \v \in R_{\tau}^{\vdash}$.
   \label{prop:valtype}
\end{lemma}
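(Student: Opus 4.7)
The plan is to address the two biconditionals in turn. The first equivalence $\v \in \sem{\tau} \iff \cdot \vdash \v : \tau$ unwinds immediately from the definition of $\sem{\tau}$, so it requires no work. For the second, one direction is also immediate: the definition of $R_{\tau}^{\vdash}$ explicitly lists $\cdot \vdash \v : \tau$ as a conjunct, so $\v \in R_{\tau}^{\vdash} \Rightarrow \cdot \vdash \v : \tau$ is automatic.

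The interesting direction is $\cdot \vdash \v : \tau \Rightarrow \v \in R_{\tau}^{\vdash}$, which demands that every $\v'$ with $\v \eval \v'$ satisfies $\v' \in \sem{\tau}$. I would prove this by structural induction on $\v$. The first step is a syntactic classification of the evaluation rules from \Cref{rules:e} that can derive a judgement with $\v$ on the left. The rule \rname{E-Val} always applies and trivially gives $\v' = \v \in \sem{\tau}$. Inspecting the shape of every other rule's conclusion against the grammar of values (noting, e.g., that $r\pnt$, $[r,r']$, piece literals, and valuation literals are syntactically not operator or query applications, so \rname{E-Ops}, \rname{E-Div}, \rname{E-Mark}, \rname{E-EvalPc}, etc.\ cannot fire on them) rules out every rule except \rname{E-Tup}, which can apply in addition to \rname{E-Val} on tuple-shaped values.

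The tuple case is therefore the only one with real content. Here \Cref{prop:typeunicity} forces $\tau = \tau_1 \times \cdots \times \tau_n$ with $\cdot \vdash \v_i : \tau_i$ for each $i$, and \rname{E-Tup} yields $\v' = (\v'_1, \ldots, \v'_n)$ with $\v_i \eval \v'_i$. The induction hypothesis applied to each $\v_i$ gives $\v'_i \in \sem{\tau_i}$, and \rname{T-Tup} then repackages these into $\cdot \vdash \v' : \tau$, i.e.\ $\v' \in \sem{\tau}$. I expect the main obstacle to be executing the initial case analysis carefully rather than the induction itself: the claim that no evaluation rule other than \rname{E-Val} (plus \rname{E-Tup} on tuples) can conclude $\v \eval \v'$ for a value $\v$ is a statement about the syntax of values versus expressions, and must be verified form by form before the induction proceeds.
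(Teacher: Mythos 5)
Your proposal is correct and follows the same approach as the paper, which simply says ``by inspection of definition of $\sem{\tau}$ and value typing rules.'' You have spelled out the inspection in more detail — in particular, you correctly observe that the only non-trivial point is that for a value $\v$, the judgement $\v \eval \v'$ can be derived only by \rname{E-Val} (and also \rname{E-Tup} on tuple values, which upon recursion again yields the same tuple), so that $\v' = \v$ and the membership $\v' \in \sem{\tau}$ is immediate from $\cdot \vdash \v : \tau$ and the definition of $\sem{\tau}$.
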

\begin{proof}
    By inspection of definition of $\sem{\tau}$ and value typing rules.
\end{proof}
\begin{lemma}
        Suppose $\cdot \vdash e : \tau$. If $e \eval \v \Rightarrow \v \in R^{\vdash}_{\tau}$, then $e \in R^{\vdash}_{\tau}$.
    \label{prop:backreduction}
    \end{lemma}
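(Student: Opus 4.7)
The plan is to observe that this lemma is essentially an immediate consequence of the preceding \Cref{prop:valtype}, which establishes that for values, membership in $R_{\tau}^{\vdash}$ and membership in $\sem{\tau}$ coincide. So the real work is just unfolding definitions.

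First I would recall the definition $R_{\tau}^{\vdash} = \{e \in \exps \mid \cdot \vdash e : \tau,\ e \eval \v \Rightarrow \v \in \sem{\tau}\}$. To show $e \in R_{\tau}^{\vdash}$, I need to verify two clauses: (i) $\cdot \vdash e : \tau$, which is given by hypothesis, and (ii) for every $\v$ with $e \eval \v$, we have $\v \in \sem{\tau}$.

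For clause (ii), I would fix an arbitrary $\v$ with $e \eval \v$. By the hypothesis of the lemma, $\v \in R_{\tau}^{\vdash}$. Applying \Cref{prop:valtype} to the value $\v$ gives $\v \in R_{\tau}^{\vdash} \iff \v \in \sem{\tau}$, so $\v \in \sem{\tau}$, as needed. Combining with (i) yields $e \in R_{\tau}^{\vdash}$.

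There is no real obstacle here: the lemma is a bookkeeping step whose sole purpose is to let subsequent proofs reduce ``$e$ satisfies the type soundness predicate'' to ``all values $e$ evaluates to have the right type.'' The only subtlety worth flagging is that $R_{\tau}^{\vdash}$ is defined on expressions while $\sem{\tau}$ is defined on values, so the appeal to \Cref{prop:valtype} is the bridge that makes the two conditions match. I expect this lemma to be used later in a structural induction establishing \Cref{prop:typesoundness}, where one shows inductively that each subexpression lands in $R_{\tau_i}^{\vdash}$ and then uses this lemma (together with rule-specific reasoning) to conclude that the whole expression does as well.
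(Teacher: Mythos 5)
Your proof is correct and is exactly the argument the paper has in mind; the paper records it as ``follows immediately from \Cref{prop:valtype},'' and you have simply spelled out the definitional unfolding that makes the immediacy evident.
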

    \begin{proof}
        Follows immediately from \Cref{prop:valtype}.
    \end{proof}

    We will aim to prove type soundness by induction on the typing derivation. For the proof to go through, we need to generalize the statement to open expressions.
Let $\gamma$ be an expression substitution replacing variables only with values. We say that $\gamma \psat \Gamma$ if $\dom(\gamma)\subseteq \dom(\Gamma)$ and $\gamma(x) \in R^{\vdash}_{\Gamma(x)}$ for all $x\in \dom(\gamma)$. We say that $\gamma \vDash \Gamma$ and refer to $\gamma$ as a \emph{closing} substitution if $\gamma \psat \Gamma$ and $\dom(\gamma) = \dom(\Gamma)$. For linear type contexts, instead of using $\gamma$ to denote a substitution, we use $\delta$. So given a closing substitution $\gamma$ and a linear substitution $\delta$, we let $\gamma;\delta$ be the substitution that combines them together.
\begin{proposition}
\label{prop:typesoundness}
    Suppose $\Gamma;\Delta \vdash e : \tau$.
    Let $\gamma,\delta$ be such that $\gamma \vDash \Gamma$ and $\delta \vDash \Delta$. Then $\gamma; \delta(e) \in R_{\tau}^{\vdash}$.
\end{proposition}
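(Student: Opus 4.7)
The plan is to proceed by structural induction on the typing derivation $\Gamma;\Delta \vdash e : \tau$. The goal in each case is to show both that $\cdot \vdash \gamma;\delta(e) : \tau$ and that any evaluation $\gamma;\delta(e) \eval \v$ yields $\v \in \sem{\tau}$; by \Cref{prop:backreduction}, once well-typedness is noted, it suffices to establish the second condition.

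For the value rules (\rname{T-Bool} through \rname{T-Vltn}, \rname{T-Cake}), the expression is already a closed value and \Cref{prop:valtype} immediately supplies membership in $\sem{\tau}$. For \rname{T-Var} and \rname{T-AffVar}, the required membership is handed to us by the definitions $\gamma \vDash \Gamma$ and $\delta \vDash \Delta$. For the multi-premise rules \rname{T-Tup}, \rname{T-Piece}, \rname{T-Div}, \rname{T-Mark}, \rname{T-Cond}, \rname{T-Ops}, \rname{T-EvalPc} (and \rname{T-EvalIntvl}), the affine context splits disjointly as $\Delta_1,\ldots,\Delta_n$; we split $\delta$ accordingly into $\delta_1,\ldots,\delta_n$ with $\delta_i \vDash \Delta_i$, apply the induction hypothesis to each subexpression under $(\gamma, \delta_i)$, and observe that each evaluation rule composes values of the premise types into a value of the conclusion type. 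The query and operation rules additionally rely on the fact that $\v \in \sem{\lint}$ implies $\rd{\v} \in \sem{\rd{\lint}}$, which reads off directly from the definition of $\rd{\cdot}$ and $\sem{\cdot}$.

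The main obstacle is the \rname{T-Split} case, which simultaneously introduces non-affine, affine, and newly-minted read-only bindings. Given $\gamma \vDash \Gamma$ and $\delta \vDash \Delta_1,\Delta_2$, we split $\delta = \delta_1 \cup \delta_2$ along the concatenation and apply the induction hypothesis to $e_1$ to learn that $\gamma;\delta_1(e_1)$ evaluates only to tuples $(\v_1,\ldots,\v_{n+n'})$ with $\v_i \in \sem{\ut_i}$ for $i \leq n$ and $\v_{n+i} \in \sem{\lint_i}$ for $i \leq n'$. We then extend to
\[
    \gamma' = \gamma \cup \{x_i \mapsto \v_i\}_{i\leq n} \cup \{\rd{w_i} \mapsto \rd{\v_{n+i}}\}_{i\leq n'}, \quad \delta' = \delta_2 \cup \{w_i \mapsto \v_{n+i}\}_{i\leq n'},
\]
verify (using $\v_{n+i} \in \sem{\lint_i} \Rightarrow \rd{\v_{n+i}} \in \sem{\rd{\lint_i}}$) that $\gamma'$ and $\delta'$ close the extended contexts of the second premise, and apply the induction hypothesis to $e_2$ to conclude. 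Crucially, the affine domain-disjointness in the concatenation $\Delta_2, w_1:\lint_1,\ldots,w_{n'}:\lint_{n'}$ is ensured by standard variable-renaming of the bound $w_i$.

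One technical point underlying several cases is a substitution--evaluation compatibility lemma: applying the combined substitution $\gamma;\delta$ to $e$ and then evaluating produces the same values as performing the corresponding structural substitution step within the derivation (for example, \rname{E-Split} substitutes the evaluated tuple components into $e_2$ before recursing, which must align with our external substitution). With Barendregt-style variable conventions this is routine to verify by case analysis on evaluation rules, but cannot be omitted; together with the case analysis above it completes the induction.
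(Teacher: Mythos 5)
Your proposal is correct and matches the paper's own argument essentially step for step: both proceed by induction on the typing derivation, split the affine substitution along the context concatenation in multi-premise rules, and handle \rname{T-Split} by evaluating $e_1$ to a tuple, extending $\gamma$ and $\delta$ with the tuple components (paired with their read-only shadows), and applying the induction hypothesis to $e_2$. Your remark about a substitution--evaluation compatibility fact is precisely what the paper uses when it notes the equality $\gamma';\delta'(e_2) = \gamma;\delta(e_2)\{x_i \mapsto \v_i\}\{w_i \mapsto \v_{n+i},\,\rd{w_i}\mapsto\rd{\v_{n+i}}\}$, though the paper treats it as an inline syntactic identity rather than promoting it to a named lemma.
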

\begin{proof}
    This proceeds by induction on the typing derivation.
    Suppose that $\gamma \vDash \Gamma$,  $\delta \vDash \Delta$, and $\Gamma;\Delta \vdash e : \tau$.
    We break up our argument in cases based on the last typing rule used.
\begin{description}
    \item[\rname{T-Ops}] This is immediate since we assume $\sem{o} : \unrd{\sem{\ut_1}} \times \cdots \times \unrd{\sem{\ut_{n}}} \to \unrd{\sem{\ut}}$, meaning if $\gamma;\delta(o(e_1,\ldots ,e_{n})) \eval \v$, then $\v = \rd{\sem{o}(\unrd{\v_1},\ldots ,\unrd{\v_{n}})}\in \sem{\ut} = \sem{\tau}$, where $\gamma;\delta(e_{i}) \eval \v_{i}$.
    \item[\rname{T-Tup}] Then $\tau =  \tau_1 \times \cdots \times \tau_{n}$, $\Delta = \Delta_1,\ldots ,\Delta_{n}$, and $\Gamma ;\Delta_{i} \vdash e_{i} : \tau_{i}$ for $1 \leq i \leq n$, where we would have $e = (e_1 ,\ldots ,e_{n})$.
        Suppose that $\gamma;\delta (e) \eval \v$. We need to show that  $\v \in \sem{\tau_1 \times \cdots \times \tau_{n}}$. By our typing rule, we know that $\delta$ can be partitioned as $\delta_1,\ldots ,\delta_{n}$ where
 \begin{equation}
        \gamma;\delta(e) = (\gamma;\delta_{1}(e_1),\ldots ,\gamma;\delta_{n}(e_{n}))
            \label{eq:subpart}
        \end{equation}
        and $\gamma;\delta(e_{i}) \eval \v_{i}$, $\v = (\v_1 ,\ldots , \v_{n})$.
        By induction $\gamma;\delta_{i}(e_{i}) \in R^{\vdash}_{\tau_{i}}$. By defintiion of $R^{\vdash}_{\tau_{i}}$, this means that $\v_{i} \in \sem{\tau_{i}}$. This is enough to conclude that $\v \in \sem{\tau}$ and therefore $\gamma;\delta(e) \in R^{\vdash}_{\tau}$.
    \item[\rname{T-Piece}] The argument for this case is nearly identical to that of $\rname{T-Tup}$.

    \item[\rname{T-Split}] Then $e = \kwlet\ x_1 ,\ldots ,x_{n},w_1,\ldots ,w_{n'} = e_1\ \kwin\ e_2$.
        Set
        \begin{align*}
            \tau_1  &= \ut_1 \times \cdots \times \ut_{n} \times \lint_1 \times \cdots \times \lint_{n'}   \\
            \Gamma' &= \Gamma, x_1: \ut_1,\ldots ,x_{n} : \ut_{n},\rd{w_1} : \rd{\lint_1},\ldots , \rd{w_{n'}} : \rd{\lint_{n'}}\\
            \Delta' &= \Delta_2, w_1 : \lint_1 ,\ldots ,w_{n}:\lint_{n'}.
        \end{align*}
        This means we have $\Gamma;\Delta_1 \vdash e_1 : \tau_1$ and $\Gamma';\Delta'\vdash e_2 : \tau$. If $\gamma;\delta(e) \eval \v$, then we would have
        \begin{align*}
          \gamma;\delta (e_1)\eval (\v_1 ,\ldots , \v_{n}, \ldots ,\v_{n + n'})
      \end{align*}
      and
      \begin{align*}
          \gamma;\delta(e_2)\{x_{i} \mapsto \v_{i}\mid 1 \leq i \leq n\} \{w_{i}\mapsto \v_{i + n},\rd{w_{i}}\mapsto \rd{\v_{i + n}} \mid 1 \leq i \leq n'\}\eval \v
        \end{align*}
        We can partition $\delta$ up into $\delta_1$ and $\delta_2$ such that $\delta_1 \vDash \Delta_1$ and $\delta_2 \vDash \Delta_2$ and such that $\gamma;\delta_1(e_1) = \gamma;\delta(e_1)$ and $\gamma;\delta_2(e_2) = \gamma;\delta(e_2)$.
        By induction, $\gamma;\delta_1 (e_1)\in R_{\tau_1}^{\vdash}$. Then $(\v_1,\ldots ,\v_{n + n'})\in \sem{\tau_1}$, which means that $\v_{i} \in \sem{\tau_{i}}$ if $1 \leq i \leq n$ and $\v_{i + n} \in \sem{\lint_{i}}$, $\rd{\v_{i + n}} \in \sem{\rd{\lint_{i}}}$ if $1 \leq i \leq n'$. So if we set
        \[
            \gamma' = \gamma_2\{x_{i} \mapsto \v_{i}\mid 1 \leq i \leq n\}\{\rd{w_{i}}\mapsto \rd{\v_{i + n}}\mid 1 \leq i \leq n\}
        \]
        and
        \[
            \delta' = \delta_2\{w_{i}\mapsto \v_{n+ i}\mid 1 \leq i \leq n'\},
    \]
        then $\gamma'\vDash \Gamma'$ and $\delta' \vDash \Delta'$. By induction, $\gamma';\delta'(e_2) \in R_{\tau}^{\vdash}$. Moreover,
    \[
        \gamma';\delta'(e_2) = \gamma;\delta(e_2)\{x_{i} \mapsto \v_{i}\mid 1 \leq i \leq n\}\{w_{i}\mapsto \v_{n+ i}, \rd{w_{i}}\mapsto \rd{\v_{n + i}} \mid 1 \leq i \leq n'\}.
    \]
    This gives us $\gamma';\delta'(e_2)\eval \v$.
        Thus we have $\v\in \sem{\tau}$, so by \Cref{prop:valtype}, $\v \in R_{\tau}^{\vdash}$.

        The above argument establishes for any $\v$ such that $\gamma;\delta(e)\eval \v$, we have $\v \in R_{\tau}^{\vdash}$. By \Cref{prop:backreduction}, $\gamma;\delta (e) \in R_{\tau}^{\vdash}$.

    \item[\rname{T-Var}] This follows easily as $\gamma \vDash \Gamma$.
    \item[\rname{T-AffVar}] This also follows easily as $\delta \vDash w : \lint$.

    \item[\rname{T-Div}] Then $e = \ediv{e_1}{e_2}$ and $\Delta = \Delta_1,\Delta_2$, with $\Gamma; \Delta_1 \vdash e_1 : \mathsf{Intvl} $ and $\Gamma; \Delta_2 \vdash e_2 : \mathsf{Point}$.

        Suppose that $\gamma;\delta(e) \eval \v$.

        Then $\gamma;\delta(e_1) \eval [r_1,r'_{1}]$, $\gamma;\delta(e_2) \eval r_2\pnt$ with $r_1 \leq r_2\leq r'_{1}$ and $\v = ([r_1, r_2], [r_2, r'_{1}])$.
        It is immediate that $\v \in R^{\vdash}_{\mathsf{Intvl}\times \mathsf{Intvl}}$.
    \item[\rname{T-Mark}]
        Then $e = \emark{a}{e_1}{e_2}$, $\tau = \mathsf{Point}$, $\Delta = \Delta_1,\Delta_2$, and $\Gamma; \Delta_1 \vdash e_1 : \rd{\mathsf{Intvl}}$, $\Gamma; \Delta_2 \vdash e_2 : \mathsf{Real}$.
        If $\gamma;\delta (e) \eval \v$, then $\v = r'\pnt$ for some real $r'$. We are already done as $\v \in \mathsf{PT}$.
    \item[\rname{T-EvalPc}] Then $e = \eeval{a}{e'}$, $\tau = \mathsf{Vltn}$, and $\Gamma;\Delta \vdash e' : \mathsf{Piece}$. By induction, $\gamma;\delta(e') \in R^{\vdash}_{\mathsf{Piece}}$. This means that if $\gamma;\delta(e) \eval \v$, then  $\gamma;\delta(e') \eval P$ for some piece $P$, so $\v = V_{a}(P)$. Thus $\v \in \mathbb{V}$. This allows us to conclude $\gamma;\delta(e) \in R^{\vdash}_{\mathsf{Vltn}}$.
    \item[\rname{T-EvalIntvl}] Similar to \rname{T-EvalPc}.
\end{description}
\end{proof}
As a corollary, we recieve type soundness.
%\typesoundness*

 \subsection{Disjointness}
 \label{app:disjointness}
 We begin by defining disjointness for values and expressions. To do so, we introduce \emph{interval lists}, which are just finite lists of intervals. For a value $\v$, we let $I(\v)$ denote the \emph{interval list of $\v$}, which is defined by induction on the structure of $\v$:
\begin{mathpar}
    I([r, r']) \triangleq [r, r']
    \and
    I(P_{i} [r_{i}, r_{i}']) \triangleq  [r_1, r'_1] ,\ldots , [r_n, r'_n]
    \and
    I((\v_1 ,\ldots , \v_{n})) \triangleq I(\v_1), \ldots , I(\v_{n})
    \and
    I(\rd{[r, r']}) \triangleq \emptyset
    \and
    I(\rd{P_{i} [r_{i}, r_{i}']}) \triangleq  \emptyset
    \and
    I(\rd{r\pnt}) \triangleq \emptyset
    \and
    I(r\pnt) \triangleq \emptyset
    \and
    I(\kwtrue) \triangleq \emptyset
    \and
    I(\kwfalse) \triangleq \emptyset
\end{mathpar}

Let $e$ be an expression. The \emph{interval list of $e$}, denoted $I(e)$, is given as follows by induction on the structure of $e$:  \begin{mathpar}
        I(x) \triangleq \emptyset
        \and
        I(w) \triangleq \emptyset
        \and
        I(\rd{w}) \triangleq \emptyset
        \and
        I( (e_1 ,\ldots ,e_{n})) \triangleq I(e_1),\ldots ,I(e_{n})
        \and
        I( o(e_1 ,\ldots ,e_{n})) \triangleq I(e_1),\ldots ,I(e_{n})
        \and
        I(\kwif\ e_1 \kwthen\ e_2\ \kwelse\ e_3) \triangleq I(e_1),I(e_2),I(e_3)
        \and
        I(\kwlet\ x_1,\ldots ,x_{n}, w_1,\ldots ,w_{n'} = \kwsplit\ e_1\ \kwin\ e_2) \triangleq I(e_1), I(e_2)
        \and
        I(\kwcake) \triangleq [0,1]
        \and
        I(\ediv{e_1}{e_2}) \triangleq I(e_1),I(e_2)
        \and
        I(\kwpiece(e_1 ,\ldots ,e_{n})) \triangleq I(e_1) ,\ldots ,I(e_{n})
        \and
        I(\eeval{a}{e}) \triangleq I(e)
        \and
        I(\emark{a}{e_1}{e_2}) \triangleq I(e_1), I(e_2)
 \end{mathpar}
We also define interval lists for variable substitutions. Let $S$ be a variable substitution. Then $I(S) \triangleq I(e_1) ,\ldots ,I(e_{n})$ if $S = \{w_{i} \mapsto e_{i} \mid 1 \leq i \leq n\}$. That is, $I(S)$ is the list of intervals in the range of $S$.
We may also write $\abs{L}$ to denote the set of intervals in $L$.

An interval list $L$ is \emph{disjoint} if $L_{i}$ and $L_{j}$ are disjoint for all $i \neq j$. Finally, a value $\v$ is \emph{disjoint} if $I(\v)$ is disjoint and an expression $e$ is \emph{disjoint} if $I(e)$ is disjoint.

We now argue for the following proposition:
\disjointness*

Our technique will be similar to what was used for type soundness, though additional work is involved.
    First we extend $\delta \vDash \Delta$ to also reflect disjointness.
\begin{definition}
    Now let $P$ be a piece and $\Delta$ a linear type context. We write $\delta \psatl{P} \Delta$ if
\begin{enumerate}
    \item $\dom(\delta) \subseteq  \dom(\Delta)$
    \item $\delta(w) \in R_{{\Delta(w)}}$ for all $w\in \dom(\delta)$
    \item $I(\delta)$ is disjoint
    \item $\cup I(\delta)$ is disjoint with $P$ (besides possibly at finitely many points)
\end{enumerate}
If $\dom (\delta)= \dom(\Delta)$, we write $\delta \vDash^{P} \Delta$.
\end{definition}

We introduce some helpful set notation. Given a collection of sets $C$, we let $\cup C$ denote $\cup_{c \in C}c$. In particular, if $I$ is a set of intervals, then $\cup I$ is the piece formed by all intervals within $I$.
\begin{lemma}
    If $e \eval \v$, then $\cup I(\v) \subseteq \cup I(e)$.
    \label{prop:intcont}
\end{lemma}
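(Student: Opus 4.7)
}
The plan is to proceed by induction on the derivation $D : e \eval \v$. Nearly every case falls out immediately from the definitions of $I(\cdot)$ on expressions and values; the only nontrivial case, \rname{E-Split}, is handled by first establishing a substitution lemma stating that substituting values for variables can only introduce intervals already present in those values.

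First I would dispatch the simple cases. For \rname{E-Val}, both sides equal $\cup I(\v)$. For \rname{E-Cake}, both $\cup I(\kwcake)$ and $\cup I([0,1])$ equal $[0,1]$. For \rname{E-Tup} and \rname{E-Piece}, applying the IH componentwise suffices, since the interval list of a tuple or piece value is the concatenation of its components' interval lists, matching the definition of $I$ on tuple and piece expressions. For \rname{E-Ops}, \rname{E-Mark}, \rname{E-EvalIntvl}, and \rname{E-EvalPc}, the value produced is a boolean, point, or valuation, each of which has empty interval list, so the inclusion is trivial. For \rname{E-True} and \rname{E-False}, only one branch $e_i$ evaluates; the IH on that branch gives $\cup I(\v) \subseteq \cup I(e_i) \subseteq \cup I(\kwif\ e_1\ \kwthen\ e_2\ \kwelse\ e_3)$. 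For \rname{E-Div}, the IH applied to $e_1 \eval [r_1, r_1']$ gives $[r_1, r_1'] \subseteq \cup I(e_1)$, and the output intervals $[r_1, r_2]$ and $[r_2, r_1']$ are each contained in $[r_1, r_1']$, hence in $\cup I(\ediv{e_1}{e_2})$.

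The main obstacle is \rname{E-Split}, where the body $e_2$ is modified by a substitution replacing bound variables with components of $e_1$'s value. For this I would first prove an auxiliary substitution lemma by structural induction on $e$: for any value substitution $S$,
\[
    \cup I(S(e)) \;\subseteq\; \cup I(e) \;\cup\; \bigcup_{x \in \dom(S)} \cup I(S(x)).
\]
The base case for a variable $x$ yields either $\cup I(S(x))$ (if $x \in \dom(S)$) or $\emptyset$, both subsumed by the right-hand side; compound expressions just push $S$ to their subexpressions, so the inclusion follows by applying the IH to each subexpression and taking the union. Crucially, the read-only substitutions $\rd{w_i} \mapsto \rd{\v_{i+n}}$ contribute nothing, since $I(\rd{\v}) = \emptyset$ by definition.

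With the substitution lemma in hand, the \rname{E-Split} case proceeds as follows. Let $S$ denote the combined substitution appearing in the premise. By the IH applied to $e_1 \eval (\v_1, \ldots, \v_{n+n'})$, we have $\cup_i \cup I(\v_i) = \cup I((\v_1, \ldots, \v_{n+n'})) \subseteq \cup I(e_1)$. Since $I(\rd{\v_{i+n}}) = \emptyset$, the range of $S$ contributes at most $\bigcup_i \cup I(\v_i) \subseteq \cup I(e_1)$. Applying the IH to the premise $S(e_2) \eval \v$ gives $\cup I(\v) \subseteq \cup I(S(e_2))$, and the substitution lemma bounds the latter by $\cup I(e_2) \cup \cup I(e_1) = \cup I(\kwlet\ \ldots\ \kwin\ e_2)$, closing the case. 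The only subtlety here is bookkeeping between affine and read-only bindings, but this is purely syntactic and does not affect the inclusion.
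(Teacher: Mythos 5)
Your proof takes essentially the same approach as the paper's: induction on the derivation, with all cases immediate from the definitions except \rname{E-Split}, which both you and the paper handle via a substitution lemma (the paper's is \cref{prop:subpiecebreakdown}, stated as an equality rather than your inclusion) combined with the observation that read-only values contribute nothing to interval lists. Your $\subseteq$ form of the substitution lemma is in fact the more careful statement, since a bound variable absent from $e_2$ would make the paper's claimed equality false; the appeal to it is identical in both proofs.
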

\begin{proof}
    This goes by induction on the derivation of $e\eval \v$. We only show the interesting cases.
    \begin{description}
        \item[\rname{E-Cake}] This is clear.
        \item[\rname{E-Ops}] Then $\v = \rd{\sem{o}(\unrd{\v_1},\ldots ,\unrd{\v_{n}})}$, $e = o(e_1,\ldots ,e_{n})$, $e_{i} \eval \v_{i}$. We have $o : \ut_1 ,\ldots ,\ut_{n} \to \ut$ for some non-linear types $\ut_1,\ldots ,\ut_{n}, \ut$. We must have $\v \in \sem{\ut}$. All values contained in $\sem{\ut}$ for any non-linear type $\ut$ have their interval lists empty. Therefore, $I(\v) = \emptyset$.
        \item[\rname{E-Div}] Then $\v = ([r_1, r_2], [r_2, r'_{1}])$ where $e_1 \eval [r_1,r'_{1}]$ and $e_2 \eval r_2$. Then $\cup I(\v)\subseteq [r_1,r'_{1}]\subseteq \cup I(e_1)\subseteq \cup I(e)$, where the second containment follows by induction.

        \item[\rname{E-Piece}] Then $\v = P [r_1, r'_{1}],\ldots ,[r_{n},r'_{n}]$ and $b = \kwpiece\ e_1 ,\ldots ,e_{n}$ where $e_i \eval [r_{i},r'_{i}]$. By induction, $[r_{i},r'_{i}]\subseteq \cup I (e_{i})$. Since $\cup I(e) = \cup I(e_1) \cup \cdots \cup I (e_{n})$ and $\cup I (\v) = \bigcup_{i}[r_{i},r'_{i}]$, we conclude $\cup I (\v) \subseteq \cup I(e)$.

        \item[\rname{E-Split}] Then $e = \kwlet\ x_1,\ldots ,x_{n},w_{1},\ldots ,w_{n'} = \kwsplit\ e_1\ \kwin\ e_2$ and $e_1 \eval (\v_1 ,\ldots ,\v_{n + n'})$, and
            \[
            e_2 \{x_{i} \mapsto \v_{i} \mid 1 \leq i \leq n\}\{w_{i} \mapsto \v_{n + i} \mid 1 \leq i \leq n'\}  \eval \v.
        \]
            Let $e'_2$ denote the expression in the above judgement. By induction, $\cup I(\v_1 ,\ldots ,\v_{n + n'})\subseteq \cup I (e_1)$ and $\cup I(\v) \subseteq \cup I(e'_{2})$. Note that $\cup I(e'_{2}) = \cup (I(\v_1 ,\ldots ,\v_{n+ n'})\cup I(e_2))\subseteq \cup (I(e_1)\cup I (e_2)) = \cup I(e)$. This concludes the current case.
    \end{description}
\end{proof}
\begin{lemma}
    \label{prop:utilistempty}
    Let $\v \in \sem{\ut}$. Then $I(\v) = \emptyset$.
\end{lemma}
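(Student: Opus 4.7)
The plan is to proceed by a straightforward case analysis on the non-affine type $\ut$, since the non-affine types are exactly the five base forms $\mathsf{Bool}$, $\mathsf{Point}$, $\mathsf{Vltn}$, $\rd{\mathsf{Intvl}}$, $\rd{\mathsf{Piece}}$ (and not products, which are all classified as linear in the grammar). In each case, I would inspect the corresponding concrete description of $\sem{\ut}$ from the earlier lemma characterizing value sets, and then read off $I(\v)$ from the defining clauses.

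Specifically, if $\ut = \mathsf{Bool}$ then $\v \in \{\kwtrue,\kwfalse\}$, both of which are base cases in the definition of $I$ with value $\emptyset$. If $\ut = \mathsf{Point}$ then $\v = r\pnt$ and the clause $I(r\pnt) \triangleq \emptyset$ applies directly. The two read-only cases $\ut = \rd{\mathsf{Intvl}}$ and $\ut = \rd{\mathsf{Piece}}$ are handled by the clauses $I(\rd{[r,r']}) \triangleq \emptyset$ and $I(\rd{P_i[r_i,r'_i]}) \triangleq \emptyset$, which is precisely the design principle behind read-only types: their values are invisible to the interval-list construction.

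The only case requiring a word of comment is $\ut = \mathsf{Vltn}$, where $\v$ has the formal-sum shape $\sum_i r_i \cdot V_{a_i}(P_i)$. Since no clause of $I$ targets values of this shape, we adopt the convention that any value not explicitly covered evaluates to $\emptyset$ (this is consistent with the explicit clauses given for the other leaf forms, e.g., booleans and points). Under this reading, $I(\v) = \emptyset$ in this case as well.

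The lemma is thus a routine consequence of the definitions; there is no real obstacle. The only subtle point is ensuring that the valuation case is handled uniformly with the other leaf forms, which could be made completely rigorous by adding the omitted clause $I(\sum_i r_i \cdot V_{a_i}(P_i)) \triangleq \emptyset$ to the defining figure (or by noting that valuation values are not among the interval-bearing forms enumerated in the grammar of values).
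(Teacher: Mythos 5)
Your proof is correct and matches the paper's approach: the paper's entire argument is the one-liner ``by inspection of the definition of $I(\v)$ for values in $\sem{\ut}$,'' which is exactly the case analysis you spell out. You are in fact more careful than the paper in noticing that the definition of $I$ as displayed omits a clause for valuation values $\sum_i r_i \cdot V_{a_i}(P_i)$; your fix (treating it as $\emptyset$, consistent with the other leaf forms) is the intended reading and closes a small gap in the paper's presentation.
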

\begin{proof}
    By inspection of definition of $I(\v)$ for values in $\sem{\ut}$ for some non-linear type $\ut$. \qed
\end{proof}

\begin{lemma}
    Suppose that $\gamma \vDash \Gamma$. Then $I(\gamma) = \emptyset$.
    \label{prop:unrestrictedctxnoint}
\end{lemma}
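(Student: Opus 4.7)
The plan is to unfold the definitions of $\gamma \vDash \Gamma$ and $I(\gamma)$, reduce to a per-variable claim, and then invoke \Cref{prop:utilistempty}. This should be essentially a two-line argument; there is no real obstacle since $\Gamma$ is declared to be a partial map into \emph{non-affine} types $\ut$, which is precisely the hypothesis needed.

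More concretely, first I would write $\gamma$ as $\{x_{1}\mapsto \v_{1},\ldots ,x_{n}\mapsto \v_{n}\}$, so that by definition of the interval list of a substitution,
\[
    I(\gamma) = I(\v_{1}),\ldots ,I(\v_{n}).
\]
The hypothesis $\gamma \vDash \Gamma$ gives $\dom(\gamma) = \dom(\Gamma)$ together with $\v_{i} = \gamma(x_{i}) \in R^{\vdash}_{\Gamma(x_{i})}$ for every $i$. Since $\Gamma$ ranges over non-affine types only, $\Gamma(x_{i})$ is some $\ut_{i}$. By \Cref{prop:valtype}, values in $R^{\vdash}_{\ut_{i}}$ are exactly elements of $\sem{\ut_{i}}$, so $\v_{i}\in \sem{\ut_{i}}$.

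Finally I would invoke \Cref{prop:utilistempty}, which gives $I(\v_{i}) = \emptyset$ for each $i$. Concatenating empty lists yields $I(\gamma) = \emptyset$, completing the proof. The whole argument is purely definitional: the only substantive content is that non-affine types classify values (booleans, points, valuations, and read-only intervals/pieces) whose interval lists are declared empty by the definition of $I$, which was already isolated as \Cref{prop:utilistempty}.
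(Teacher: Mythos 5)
Your proof is correct and follows the same route as the paper's: per variable in $\dom(\gamma)$, apply \Cref{prop:valtype} to place $\gamma(x)$ in $\sem{\Gamma(x)}$, note $\Gamma(x)$ is non-affine, then apply \Cref{prop:utilistempty}. This matches the paper's argument essentially line by line.
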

\begin{proof}
    Take $x \in \dom(\gamma)$. As $\gamma\vDash \Gamma$, $\gamma(x) \in R^{\vdash}_{\Gamma(x)}$. By \Cref{prop:valtype}, $\gamma(x)\in \sem{\Gamma(x)}$. By definition, $\Gamma(x) = \ut$ for some $\ut$. By \Cref{prop:utilistempty}, $I(\gamma(x)) = \emptyset$. Thus we can conclude $I(\gamma) = \emptyset$. \qed
\end{proof}
\begin{lemma}
    For any expression $e$ and substitution $S$, $\cup I(S(e)) = (\cup I(e)) \cup (\cup I(S))$.
    \label{prop:subpiecebreakdown}
\end{lemma}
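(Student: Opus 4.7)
The plan is to prove this by structural induction on the expression $e$. The key observation is that variables and read-only values all have empty interval lists ($I(x) = I(w) = I(\rd{w}) = \emptyset$), while compound expressions have their interval list defined by concatenating the interval lists of their subexpressions (and $\kwcake$ contributes $[0,1]$, while intervals and pieces contribute their constituent intervals). Substitution thus replaces the empty contribution of a bound affine variable $w$ with $I(S(w))$, and leaves all other syntactic occurrences of intervals intact.

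For the base cases, I would handle values and variables separately. For any closed value $\v$, we have $S(\v) = \v$ so $I(S(\v)) = I(\v)$, and under the standard convention that only free variables of $e$ appear in $\dom(S)$, the contribution $\cup I(S)$ is empty so the equality holds trivially. For $e = w$ with $w \in \dom(S)$, we have $I(e) = \emptyset$ and $\cup I(S(e)) = \cup I(S(w))$, which (again under the convention $\dom(S) = \fv(e)$) equals $\cup I(S)$, matching the right-hand side. The other variable cases ($x$, $\rd{w}$) behave similarly because $I$ is empty for them and they are either left alone or contribute no intervals after substitution.

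For the inductive step I would take each compound form ($n$-tuples, primitive operations, $\kwif$, $\ediv{e_1}{e_2}$, $\kwpiece(e_1,\ldots,e_n)$, $\emark{a}{e_1}{e_2}$, $\eeval{a}{e'}$, and $\kwlet$) and unfold both sides using the recursive definitions of $I$ and of substitution. For example, for $e = \kwpiece(e_1,\ldots,e_n)$, we have $\cup I(S(e)) = \bigcup_i \cup I(S(e_i))$, and applying the inductive hypothesis to each $e_i$ yields $\bigcup_i (\cup I(e_i) \cup \cup I(S)) = \cup I(e) \cup \cup I(S)$. The other compound cases are essentially identical.

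The main obstacle is the $\kwlet$ case, where we must ensure the substitution $S$ respects scoping. Concretely, the bound variables $x_1,\ldots,x_n, w_1,\ldots,w_{n'}$ introduced by $\kwsplit$ must not collide with $\dom(S)$ or occur free in the range of $S$. Assuming the standard capture-avoiding convention (alpha-renaming the bound variables as needed), substitution commutes with the $\kwlet$ structure and the inductive hypothesis applied to both $e_1$ and (the renamed) $e_2$ gives the required equality. Once this bookkeeping is in place, every inductive step reduces to the distributivity of $\cup$ over a fixed set, and the result follows.
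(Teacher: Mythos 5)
Your overall strategy---structural induction on $e$---is exactly what the paper intends; its own proof is literally ``straightforward induction on the structure of $e$.'' However, your write-up has a genuine gap: the convention $\dom(S) \subseteq \fv(e)$, which you invoke to make the base cases come out as equalities, does not propagate through the inductive step as you use it. When you unfold $e = \kwpiece(e_1,\ldots,e_n)$ and ``apply the inductive hypothesis to each $e_i$'' with the same $S$, you would need $\dom(S) \subseteq \fv(e_i)$, but in general $\fv(e_i) \subsetneq \fv(e)$. Concretely, take $e = (w_1, w_2)$ and $S = \{w_1 \mapsto [0,1/2],\ w_2 \mapsto [1/2,1]\}$: the convention holds for $e$, but the inductive hypothesis for $e_1 = w_1$ with this same $S$ would assert $\cup I(S(w_1)) = (\cup I(w_1)) \cup (\cup I(S)) = [0,1]$, whereas in fact $\cup I(S(w_1)) = [0,1/2]$. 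To repair your route you would need to carry along restricted substitutions $S_i = S|_{\fv(e_i)}$, observe $S_i(e_i) = S(e_i)$, apply the hypothesis with $S_i$, and then reassemble via $\bigcup_i \cup I(S_i) = \cup I(S)$ using $\dom(S) \subseteq \bigcup_i \fv(e_i)$; none of that bookkeeping is present. A cleaner fix, and all the paper actually uses where it cites this lemma (inside the proof of the disjointness proposition), is the inclusion $\cup I(S(e)) \subseteq (\cup I(e)) \cup (\cup I(S))$: it requires no condition on $\dom(S)$, each variable case follows from $I(S(w)) \subseteq \cup I(S)$, and every inductive step is an immediate union bound.
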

\begin{proof}
    Straightforward induction on the structure of $e$. \qed
\end{proof}

\begin{lemma}
    Let $\Delta = \Delta_1 ,\ldots ,\Delta_{n}$ be a linear type context.  Set $\delta_{i} = \delta|_{\dom(\Delta_{i})}$. The following three statements hold:
    \begin{enumerate}[(a)]
        \item If $\delta \vDash \Delta$ then  $\delta_{i} \vDash \Delta_{i}$
        \item If $\delta \psatl{P} \Delta$, then $\delta_{i} \psatl{P} \Delta_{i}$ for all $1 \leq i \leq n$.
        \item If $\delta \vDash^{P} \Delta$, then $\delta_{i} \vDash^{P} \Delta_{i}$ for all $1 \leq i \leq n$.
    \end{enumerate}
    \label{prop:partitiondelta}
\end{lemma}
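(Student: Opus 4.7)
The plan is to handle each clause in turn, leaning on the fact that the concatenation $\Delta_1,\ldots,\Delta_n$ is only defined when the $\dom(\Delta_i)$ are pairwise disjoint, so the restriction $\delta_i = \delta|_{\dom(\Delta_i)}$ is canonical and inherits everything we need from $\delta$. Throughout, for $w \in \dom(\Delta_i)$, disjointness of the partition gives $\Delta(w) = \Delta_i(w)$ and $\delta_i(w) = \delta(w)$, so any pointwise typing constraint on $\delta$ transfers verbatim to $\delta_i$. Likewise, $I(\delta_i)$ is the sublist of $I(\delta)$ obtained by keeping only those entries produced by variables in $\dom(\Delta_i)$, hence $|I(\delta_i)| \subseteq |I(\delta)|$ and $\cup I(\delta_i) \subseteq \cup I(\delta)$.

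For part (a): $\dom(\delta_i) = \dom(\delta) \cap \dom(\Delta_i) = \dom(\Delta) \cap \dom(\Delta_i) = \dom(\Delta_i)$, using $\dom(\delta) = \dom(\Delta)$. For each $w \in \dom(\Delta_i)$ we have $\delta_i(w) = \delta(w) \in R^{\vdash}_{\Delta(w)} = R^{\vdash}_{\Delta_i(w)}$. These are precisely the two clauses of $\delta_i \vDash \Delta_i$.

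For part (b): The domain containment $\dom(\delta_i) \subseteq \dom(\Delta_i)$ is immediate from the definition of restriction. The pointwise condition $\delta_i(w) \in R_{\Delta_i(w)}$ is handled exactly as in (a). For disjointness of $I(\delta_i)$: since $I(\delta_i)$ is a sublist of the disjoint list $I(\delta)$, any two entries in $I(\delta_i)$ are also two entries in $I(\delta)$, hence disjoint. For disjointness of $\cup I(\delta_i)$ with $P$: from $\cup I(\delta_i) \subseteq \cup I(\delta)$ and the hypothesis that $\cup I(\delta)$ is disjoint with $P$ (up to finitely many boundary points), the same holds for the subset.

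Part (c) is then just the conjunction of (a) and (b), since $\delta \vDash^P \Delta$ means $\delta \psatl{P} \Delta$ together with $\dom(\delta) = \dom(\Delta)$, and we have already established both halves for $\delta_i$ and $\Delta_i$. There is no real obstacle here — the entire argument is routine bookkeeping, and the only subtle point is that one must invoke the assumed disjointness of the $\dom(\Delta_i)$ to identify $\Delta(w)$ with $\Delta_i(w)$ so that the typing memberships actually transfer.
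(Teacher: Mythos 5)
Your proof is correct and follows essentially the same route as the paper: prove the domain and pointwise-membership transfer, then observe the interval-list conditions pass down because restriction preserves sublists. The paper happens to do (b) first and then peel off (a) and (c) as corollaries, while you do (a) then (b) then (c), but the content is the same. One small imprecision: you say part (c) is "just the conjunction of (a) and (b)," but (a) establishes $\delta_i \vDash \Delta_i$ (the $R^{\vdash}$ membership condition), whereas $\delta_i \vDash^P \Delta_i$ needs the $\psatl{P}$ conditions (the $R$ membership, interval-list disjointness, and disjointness from $P$) plus domain equality. What you actually use is (b) together with just the domain-equality portion of the (a) argument, not (a) itself; your next sentence makes this clear, so it reads as a slip of phrasing rather than a gap.
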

\begin{proof}
    We first argue for (b), and then show how to use the arguments for (a) and (c).
    Suppose that $\delta \psatl{P} \Delta$. We must show 1-4 for each $\delta_{i}$.
    \begin{enumerate}
        \item As $\dom(\delta) \subseteq \dom(\Delta)$, we must have that $\dom(\delta_{i}) = \dom(\delta|_{\dom(\Delta_{i})}) \subseteq \dom(\Delta_{i})$ since $\dom(\Delta_{i}) \subseteq \dom(\Delta)$.
        \item For each $w \in \dom(\delta_{i})$, we have
    \[
        \delta_{i}(w) = \delta(w) \in R_{{\Delta(w)}} = R_{{\Delta_{i}(w)}}
    \]
    where the last equality uses that $\dom(\delta_{i}) \subseteq  \dom(\Delta_{i})$.

\item Since $\delta_{i}$ is a restriction of $\delta$, we also have that $I(\delta_{i})$ is disjoint as $I(\delta)$ is disjoint.

\item Since $\delta_{i}$ is a restriction of $\delta$, $\cup I(\delta_{i})\subseteq \cup I (\delta)$. As $I(\delta)$ is disjoint with $P$, it must be that $\cup I (\delta)$ is also disjoint with $P$.
    \end{enumerate}
    These four parts establish that $\delta_{i} \psatl{P} \Delta_{i}$.
    If $\dom(\delta) = \dom(\Delta)$, then $\dom(\delta_{i}) = \dom(\delta|_{\dom(\Delta_{i})}) = \dom(\Delta|_{\dom(\Delta_{i}}) = \dom(\Delta_{i})$. This, along with 2 above establishes (a). The same fact, along with 2,3, and 4 establishes (c).
    \qed
\end{proof}

\begin{lemma}
    Suppose that $I(e)$ is disjoint, $\Gamma;\Delta \vdash e : \tau$,  $\gamma\psat \Gamma$, $\delta \psatl{P}\, \Delta$ and $\cup I(e)\subseteq P$. Then $\gamma;\delta(e)$ is disjoint.
    \label{prop:subdisjoint}
\end{lemma}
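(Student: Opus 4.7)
I would proceed by induction on the typing derivation $\Gamma;\Delta \vdash e : \tau$, using the hypotheses that $I(e)$ and $I(\delta)$ are each pairwise disjoint and that $\cup I(e) \subseteq P$ while $\cup I(\delta)$ is (essentially) disjoint from $P$. The key observation, applied at every case, is that $I(\gamma;\delta(e))$ can contain only intervals appearing in $I(e)$ or in $I(\delta)$, since by \Cref{prop:unrestrictedctxnoint} the substitution $\gamma$ contributes no intervals, and by \Cref{prop:subpiecebreakdown} substitution only combines existing interval lists.

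The base cases are direct. For closed values and $\kwcake$, $\gamma;\delta(e) = e$ and disjointness is immediate. For $e = x$ with $x \in \dom(\gamma)$, $I(\gamma(x)) = \emptyset$ by \Cref{prop:utilistempty}. For $e = w$ affine, either $w \notin \dom(\delta)$ (so $I(w) = \emptyset$) or $w \in \dom(\delta)$ and $I(\delta(w))$ is a sublist of the disjoint list $I(\delta)$.

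For the compound cases (\rname{T-Tup}, \rname{T-Piece}, \rname{T-Ops}, \rname{T-Cond}, \rname{T-Div}, \rname{T-Mark}, \rname{T-EvalPc}, \rname{T-EvalIntvl}), the typing rule decomposes $\Delta = \Delta_1, \ldots, \Delta_n$ with disjoint domains. I would invoke \Cref{prop:partitiondelta} to partition $\delta$ into $\delta_1, \ldots, \delta_n$ with $\delta_i \psatl{P} \Delta_i$. Each sublist $I(e_i) \subseteq I(e)$ is disjoint and $\cup I(e_i) \subseteq P$, so the inductive hypothesis yields that each $\gamma;\delta_i(e_i)$ is disjoint. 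It remains to verify cross-disjointness between $\gamma;\delta_i(e_i)$ and $\gamma;\delta_j(e_j)$ for $i \neq j$. By \Cref{prop:subpiecebreakdown,prop:unrestrictedctxnoint}, any interval in $\gamma;\delta_i(e_i)$ lies in $\cup I(e_i) \cup \cup I(\delta_i)$, and likewise for $j$. A pair of such intervals falls into one of three situations: both in $I(e)$ (disjoint by hypothesis), both in $I(\delta)$ (disjoint by hypothesis), or one in $I(e_\bullet) \subseteq P$ and one in $I(\delta_\bullet) \subseteq I(\delta)$, which is disjoint from $P$.

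The \rname{T-Split} case follows the same pattern once we are careful with binding. Writing $e = \kwlet\ x_1,\ldots,x_n, w_1,\ldots,w_{n'} = \kwsplit\ e_1\ \kwin\ e_2$ and $\Delta = \Delta_1, \Delta_2$, we get $\gamma;\delta(e) = \kwlet\ \ldots = \kwsplit\ \gamma;\delta_1(e_1)\ \kwin\ \gamma;\delta_2(e_2)$, since the bound variables $x_i, w_j, \rd{w_j}$ are outside $\dom(\gamma)\cup \dom(\delta)$. The hypothesis of IH on $e_2$ is typed under the extended context $\Gamma' = \Gamma, x_i : \ut_i, \rd{w_j} : \rd{\lint_j}$ and $\Delta' = \Delta_2, w_j : \lint_j$; the partial-satisfaction condition $\psat$ only requires $\dom(\gamma) \subseteq \dom(\Gamma')$ and $\dom(\delta_2) \subseteq \dom(\Delta')$, both of which follow from the corresponding inclusions into the smaller contexts $\Gamma$ and $\Delta_2$. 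The newly bound variables contribute empty interval lists, so the cross-disjointness argument above applies verbatim.

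The main obstacle is the bookkeeping in the compound cases: correctly identifying the partition of $\delta$ induced by the typing derivation, verifying the IH preconditions for each piece, and then carefully enumerating where each interval of $\gamma;\delta(e)$ could come from in order to rule out overlaps. All four pairwise checks ultimately reduce to the same two facts---$I(e)$ is disjoint with $\cup I(e)\subseteq P$, and $I(\delta)$ is disjoint with $\cup I(\delta)$ disjoint from $P$---so the payoff of \Cref{prop:partitiondelta} is that it preserves exactly these invariants under partition.
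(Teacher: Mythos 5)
Your proposal follows the same route as the paper's own proof: induction on the typing derivation, using \Cref{prop:partitiondelta} to split $\delta$ into pieces $\delta_i \psatl{P} \Delta_i$ matching the typing rule's partition of $\Delta$, invoking \Cref{prop:unrestrictedctxnoint} and \Cref{prop:subpiecebreakdown} to localize where intervals of $\gamma;\delta_i(e_i)$ can come from, and then checking cross-disjointness between siblings. Your enumeration of the three collision situations (both from $I(e)$, both from $I(\delta)$, one from each) is in fact more explicit than the paper's write-up, which only cites disjointness of $I(e)$ and of $I(\delta)$ and leaves the mixed case --- which is precisely where $\cup I(e) \subseteq P$ and $\cup I(\delta)$ disjoint from $P$ are needed --- implicit; spelling that out is a genuine improvement in rigor, not a deviation. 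Your \rname{T-Split} treatment and base cases also align with the paper's. One wording nit: for closed values, disjointness of $I(\gamma;\delta(e)) = I(e)$ is not ``immediate'' from the form of the value (piece values may have overlapping intervals in general); it is immediate from the \emph{hypothesis} that $I(e)$ is disjoint, which is clearly what you mean.
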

\begin{proof}
    This proceeds by induction on the derivation of $\Gamma;\Delta\vdash e : \tau$.
    \begin{description}
        \item[\rname{T-Ops}] Then $e = o(e_1 ,\ldots ,e_{n})$, $\Gamma,\Delta_{i} \vdash e_{i} : \ut_{i}$, $\tau = \ut$ where $o : \ut_1 ,\ldots ,\ut_{n} \to \ut$, $\Delta = \Delta_1 ,\ldots ,\Delta_{n}$.
         According to \Cref{prop:partitiondelta}, we can partition $\delta$ into $\delta_1 ,\ldots ,\delta_{n}$ such that $\delta_{i} \psatl{P} \Delta_{i}$. Morover, we have $\gamma;\delta(e_{i}) = \gamma;\delta_{i}(e_{i})$. Then by induction, $\gamma;\delta(e_{i})$ is disjoint. By \Cref{prop:unrestrictedctxnoint} and \Cref{prop:subpiecebreakdown}, we have that
            \[
                \cup I(\gamma;\delta_{i}(e_{i})) \subseteq (\cup I(\delta_{i})) \cup (\cup I(e_{i})).
            \]
            As $I(\delta)$ is disjoint, we have that $I(\delta_{i})$ is disjoint from $I(\delta_{j})$ if $i\neq j$. As $I(e)$ is disjoint, we have that $I(e_{i})$ is disjoint from $I(e_{j})$ if $i \neq j$. Therefore  $\gamma;\delta_{i}(e_{i})$ is disjoint from $\gamma;\delta_{j}(e_{j})$ if $i \neq j$. These facts allow us to conclude that $\gamma;\delta(e)$ is disjoint.

        \item[\rname{T-Split}]
            Then $e = \kwlet\ x_1,\ldots ,x_{n} , w_1 ,\ldots ,w_{n'} = \kwsplit\ e_1\ \kwin\ e_2$, $\Delta = \Delta_1, \Delta_2$ where
            \begin{align*}
                \Gamma;\Delta_1 \vdash e_1 : \ut_1 \times \cdots \times \ut_{n} \times \lint_1 \times \cdots \times \lint_{n'}
            \end{align*}
            and
            \begin{align*}
                \Gamma, \Gamma'; \Delta_2, \Delta' \vdash e_2 : \tau.
            \end{align*}
            where
            \begin{align*}
                \Gamma ' &= x_1 : \ut_1, \ldots,  x_{n} : \ut_{n}, \rd{w_1} : \rd{\lint_1}, \ldots, \rd{w_{n'}} : \rd{\lint_{n'}}  \\
                \Delta' &= w_1 : \lint_1 ,\ldots ,w_{n} : \lint_{n'}.
            \end{align*}
            Also let $\gamma_2$ be $\gamma$ but with $x_1,\ldots ,x_{n}, \rd{w_{1}},\ldots ,\rd{w_{n'}}$ removed.
            According to \Cref{prop:partitiondelta}, we can partition $\delta$ into $\delta_1 $ and $\delta_2$ such that $\delta_1\psatl{P} \Delta_1$, $\delta_2 \psatl{P} \Delta_2$ and $\gamma;\delta(e_1) = \gamma;\delta_1(e_1)$, $\gamma_2;\delta(e_2) = \gamma_2;\delta_2(e_2)$. As $I(e_1) \subseteq I(e)$, we have $\cup I(e_1) \subseteq P$, so we can apply induction to obtain that $\gamma;\delta_1(e_1)$ is disjoint.

            As $I(e_2) \subseteq I(e)$, we have $\cup I(e_2) \subseteq P$. Therefore we can similarly apply induction to obtain $\gamma_2;\delta_2(e_2)$ is disjoint.

            By \Cref{prop:unrestrictedctxnoint} and \Cref{prop:subpiecebreakdown}, we have
            \begin{align*}
                \cup I(\gamma;\delta_1(e_1)) \subseteq (\cup I(\delta_1))\cup (\cup I(e_1))\quad \text{and} \quad \cup I(\gamma_2;\delta_2(e_2)) \subseteq (\cup I(\delta_2))\cup (\cup I(e_2)).
            \end{align*}
            As $I(\delta)$ is disjoint, $\cup I(\delta_1)$ is disjoint from $\cup I(\delta_2)$. As $I(e)$ is disjoint, $\cup I(e_1)$ is disjoint from $\cup I (e_2)$. This allows us to conclude that $I(\gamma;\delta_1(e_1))$ is disjoint from $I(\gamma_2;\delta_2(e_2))$. Note
            \[
\gamma;\delta(e) = \kwlet\ x_1,\ldots ,x_{n} , w_1 ,\ldots ,w_{n'} = \kwsplit\ \gamma;\delta_1(e_1)\ \kwin\ \gamma_2;\delta_2(e_2)
\]
and because of this, we can conclude that $\gamma;\delta(e)$ is disjoint.
        \item[\rname{T-AffVar}]
            Then $\Gamma;w : \lint \vdash w : \lint$ and $\gamma;\delta(e) = \delta(w)$. By assumption, $I(\delta)$ is disjoint, so this case is concluded.

        \item[\rname{T-Tup}] Then $e = (e_1 ,\ldots ,e_{n})$, $\Delta = \Delta_1 ,\ldots ,\Delta_{n}$, and we have $\Gamma;\Delta_{i} \vdash e_{i} : \tau_{i}$ for $1 \leq i \leq n$. According to \Cref{prop:partitiondelta}, we can partition $\delta$ into $\delta_1 ,\ldots ,\delta_{n}$ such that $\delta_{i} \psatl{P} \Delta_{i}$. Morover, we have $\gamma;\delta(e_{i}) = \gamma;\delta_{i}(e_{i})$. Then by induction, $\gamma;\delta(e_{i})$ is disjoint. By \Cref{prop:unrestrictedctxnoint} and \Cref{prop:subpiecebreakdown}, we have that
            \[
                \cup I(\gamma;\delta_{i}(e_{i})) \subseteq (\cup I(\delta_{i})) \cup (\cup I(e_{i})).
            \]
            As $I(\delta)$ is disjoint, we have that $I(\delta_{i})$ is disjoint from $I(\delta_{j})$ if $i\neq j$. As $I(e)$ is disjoint, we have that $I(e_{i})$ is disjoint from $I(e_{j})$ if $i \neq j$. Therefore  $\gamma;\delta_{i}(e_{i})$ is disjoint from $\gamma;\delta_{j}(e_{j})$ if $i \neq j$. These facts allow us to conclude that $\gamma;\delta(e)$ is disjoint.
    \end{description}
    All other cases are either trivial, or very similar to \rname{Tup}.
    \qed
\end{proof}

\begin{proposition}
    Suppose $\Gamma;\Delta \vdash e : \tau$.
    Let $\gamma,\delta$ be such that $\gamma \vDash \Gamma$ and $\delta \vDash^{P} \Delta$ and $\cup I(e)\subseteq P$. If $I(e)$ is disjoint, then $\gamma;\delta (e) \in R_{\tau}$.
    \label{prop:disjointness}
\end{proposition}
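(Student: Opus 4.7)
The plan is to proceed by induction on the typing derivation $\Gamma;\Delta \vdash e : \tau$, mirroring the strategy used for type soundness in \Cref{prop:typesoundness}. Reading $R_\tau$ as the set of closed, well-typed expressions of type $\tau$ all of whose evaluations yield disjoint values, the goal is to show the reducibility predicate is preserved under every typing rule. Crucially, \Cref{prop:subdisjoint} already establishes that $\gamma;\delta(e)$ is syntactically disjoint as an expression, so the new work per case is just to show that any value $\gamma;\delta(e)$ evaluates to is itself disjoint.

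For the leaf and purely computational cases (\rname{T-Var}, \rname{T-Bool}, \rname{T-Point}, \rname{T-Cake}, \rname{T-Ops}) the conclusion is immediate: either the result lives in some $\sem{\ut}$, so its interval list is empty by \Cref{prop:utilistempty}, or it is a singleton interval whose disjointness is vacuous. For \rname{T-AffVar}, the value $\delta(w)$ belongs to $R_{\lint}$ by definition of $\delta \vDash^{P} \Delta$. For \rname{T-Tup}, \rname{T-Piece}, and \rname{T-Div} the shape of the argument is the same: partition $\delta$ via \Cref{prop:partitiondelta} into $\delta_1,\ldots,\delta_n$, apply the induction hypothesis to each subexpression to obtain disjoint sub-values, and then use \Cref{prop:intcont} together with \Cref{prop:subpiecebreakdown} to conclude the concatenated interval lists remain pairwise disjoint. \rname{T-Cond} follows because exactly one branch is taken and inherits disjointness from the induction hypothesis.

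The delicate case is \rname{T-Split}. Here $\gamma;\delta(e)$ first sends $\gamma;\delta_1(e_1)$ to a tuple $(v_1,\ldots,v_{n+n'})$, and then evaluates $e_2$ under the substitution that binds $x_i \mapsto v_i$, $w_i \mapsto v_{n+i}$, and $\rd{w_i} \mapsto \rd{v_{n+i}}$. By induction on $e_1$ the produced tuple is disjoint. The key move is to form
\[
  \gamma' = \gamma \cup \{x_i \mapsto v_i,\; \rd{w_i}\mapsto \rd{v_{n+i}}\},\qquad
  \delta' = \delta_2 \cup \{w_i \mapsto v_{n+i}\},
\]
and check $\gamma' \vDash \Gamma'$ and $\delta' \vDash^{P} \Delta'$ for the extended contexts used in the premise typing of $e_2$. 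Conditions (1)--(2) of $\vDash^{P}$ follow from type soundness plus the induction hypothesis on $e_1$; conditions (3)--(4) combine the disjointness of $I(\delta_2)$ with the disjointness of $I(v_1,\ldots,v_{n+n'})$, using $\cup I(v_1,\ldots,v_{n+n'}) \subseteq \cup I(e_1)$ from \Cref{prop:intcont} and the fact that $\cup I(\delta_2)$ is disjoint from $\cup I(e_1) \subseteq P$. A final appeal to the induction hypothesis on $e_2$ under $\gamma',\delta'$ closes the case.

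The main obstacle is exactly this bookkeeping for \rname{T-Split}: partitioning $\delta$ correctly, propagating disjointness from the evaluated tuple into the extended affine substitution, and ensuring that intervals produced by $e_1$ cannot collide with those substituted for the freshly bound affine variables when $e_2$ is evaluated. All other cases are direct analogues of the corresponding steps in the type-soundness proof, so once the split case is handled the induction goes through uniformly.
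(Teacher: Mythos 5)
Your overall strategy matches the paper's: generalize to $\gamma \vDash \Gamma$, $\delta \vDash^{P}\Delta$, $\cup I(e)\subseteq P$, and do induction on the typing derivation, using \Cref{prop:subdisjoint}, \Cref{prop:partitiondelta}, \Cref{prop:intcont}, \Cref{prop:subpiecebreakdown}, and \Cref{prop:utilistempty} at the places you name. The computational and tuple/piece cases are fine as sketched.

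But there is a genuine gap in \rname{T-Split}. You form $\delta' = \delta_2 \cup \{w_i \mapsto v_{n+i}\}$ and assert $\delta' \vDash^{P}\Delta'$, i.e.\ for the \emph{same} piece $P$. This cannot hold: condition (4) of $\psatl{P}$ requires $\cup I(\delta')$ to be disjoint from $P$, yet the new affine bindings carry intervals with $\cup I(\delta_1') \subseteq \cup I(e_1) \subseteq P$ (by \Cref{prop:intcont} and the hypothesis $\cup I(e)\subseteq P$), so $\cup I(\delta')$ is \emph{contained in} $P$ whenever $e_1$ returns any interval or piece. The facts you cite --- disjointness of $I(\delta_2)$, disjointness of the evaluated tuple, and disjointness of $\cup I(\delta_2)$ from $\cup I(e_1)$ --- establish condition (3) for $\delta'$ but say nothing about condition (4) against $P$. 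Without both, the inductive hypothesis on $e_2$ is not applicable.

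The missing move is to shrink the carrier: set $P' \triangleq P \setminus \cup I(e_1)$ and show $\delta' \vDash^{P'}\Delta'$. Then $\cup I(\delta_1') \subseteq \cup I(e_1)$ is disjoint from $P'$, $\cup I(\delta_2)$ is disjoint from $P \supseteq P'$, and --- using that $I(e)$ is disjoint so $\cup I(e_2)$ is disjoint from $\cup I(e_1)$ --- one gets $\cup I(e_2) \subseteq P'$. That is exactly the hypothesis package the induction on $e_2$ needs. You gesture at the right intuition (``intervals produced by $e_1$ cannot collide with \dots''), but without explicitly carving out $\cup I(e_1)$ from $P$ the proof does not close.
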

\begin{proof}

    Suppose that $\Gamma;\Delta \vdash e : \tau$, $\gamma \vDash \Gamma$, $\delta \vDash^{P} \Delta$, $\cup I(e)\subseteq P$, and $I(e)$ disjoint. We already immediately have by \Cref{prop:typesoundness} that $\gamma;\delta(e)\in R^{\vdash}_{\tau}$. This means we only have to argue for disjointness of the value evaluated here.
    So suppose that $\gamma;\delta(e) \eval \v$.
    This argument goes by induction on the derivation of $\Gamma;\Delta \vdash e : \tau$ and we break up our argument into cases based on the last typing rule applied.
\begin{description}
    \item[\rname{T-Tup}] Then $\tau =  \tau_1 \times \cdots \times \tau_{n}$, $\Delta = \Delta_1,\ldots ,\Delta_{n}$, and $\Gamma ;\Delta_{i} \vdash e_{i} : \tau_{i}$ for $1 \leq i \leq n$, where $e = (e_1 ,\ldots ,e_{n})$. Also, $\v = (\v_1 ,\ldots ,\v_{n})$ and $\gamma;\delta(e_{i})\eval \v_{i}$ for $1 \leq i \leq n$.

        Since we write $\Delta = \Delta_1,\ldots ,\Delta_{n}$,  by \Cref{prop:partitiondelta}, we can partition $\delta$ into $\delta_{1},\ldots ,\delta_{n}$ such that $\delta_{i} \vDash^{P} \Delta_{i}$. As $\Gamma;\Delta_{i} \vDash^{P} e_{i} : \tau_{i}$, it must be that $\gamma;\delta_{i}(e_{i}) = \gamma;\delta (e_{i})$.

         As $\cup I(e)\subseteq P$, $\cup I(e_{i})\subseteq P$. By \Cref{prop:subdisjoint}, $\gamma;\delta(e)$ is disjoint, so that $\gamma;\delta_{i}(e_{i})$ is disjoint for all $1 \leq i \leq n$.

As $\gamma \vDash \Gamma$, $\delta_{i }\vDash ^{P} \Delta_{i}$, $\cup I(e_{i}) \subseteq P$, and $I(e_{i})$ is disjoint, we can apply induction to obtain $\gamma;\delta_{i}(e_{i}) \in R_{\tau_{i}}$. Therefore $I(\v_{i})$ is disjoint for all $1 \leq i \leq n$. By \Cref{prop:intcont}, $\cup I(\v_{i})\subseteq \cup I(\gamma;\delta_{i}(e_{i}))$. Thus if $i \neq j$ then $\cup I(\v_{i})$ is disjoint from $\cup I(\v_{j})$. This is sufficent to conclude that $I(\v)$ is disjoint, completing this case.

\item[\rname{T-Ops}] Then $e = o(e_1 ,\ldots ,e_{n})$ and $o : \ut_1 ,\ldots ,\ut_{n} \to \ut$. As $\gamma;\delta(e) \in R_{\ut}^{\vdash}$, we have $\v \in \sem{\ut}$. By \Cref{prop:utilistempty}, $I(\v) = \emptyset$. Therefore $I(\v)$ is disjoint.
    \item[\rname{T-Piece}] The argument for this case is nearly identical to that of $\rname{T-Tup}$.

    \item[\rname{T-Split}] Then $e = \kwlet\ x_1 ,\ldots ,x_{n} = e_1\ \kwin\ e_2$. Set
        \begin{align*}
            \tau_1  &= \ut_1 \times \cdots \times \ut_{n} \times \lint_1 \times \cdots \times \lint_{n'}   \\
            \Gamma' &= \Gamma, x_1: \ut_1,\ldots ,x_{n} : \ut_{n},\rd{w_1} : \rd{\lint_1} ,\ldots ,\rd{w_{n}}:\rd{\lint_{n'}} \\
            \Delta_1' &= w_1 : \lint_1 ,\ldots ,w_{n}:\lint_{n'} \\
            \Delta' &= \Delta_2,\Delta_1'
        \end{align*}
        This means $\Gamma;\Delta_1 \vdash e_1 : \tau_1$ and $\Gamma';\Delta'\vdash e_2 : \tau$. Since $\delta\vDash^{P} \Delta$, according to \Cref{prop:partitiondelta} we can partition $\delta$ up into $\delta_1$ and $\delta_2$ such that $\delta_1 \vDash^{P} \Delta_1$ and $\delta_2 \vDash^{P} \Delta_2$. Moreover, we have that $\gamma;\delta_1(e_1) = \gamma;\delta(e_1)$ and $\gamma;\delta_2(e_2) = \gamma;\delta(e_2)$.
        We have $I(e_1) \cup I(e)$ so $\cup I(e_1) \subseteq \cup I(e) \subseteq P$, and $I(e_1)$ is disjoint as $I(e)$ is disjoint.

        As $\gamma;\delta(e) \eval \v$, it must be that $\gamma;\delta_1(e_1) \eval (\v_1 ,\ldots , \v_{n}, \ldots ,\v_{n + n'})$ for some values $\v_{i}$ for $1 \leq i \leq n + n'$, and that
        \[
        \gamma;\delta(e_2)\{x_{i} \mapsto \v_{i} \mid 1 \leq i \leq n\}\{\rd{w_{i}} \mapsto \rd{\v_{i}}, w_{i} \mapsto \v_{i} \mid 1 \leq i \leq n' \} \eval \v.
    \]
        Since $\gamma \vDash \Gamma$, $\delta_{1} \vDash^{P} \Delta_{1}$, $\cup I(e_1) \subseteq P$, and $I(e_1)$ is disjoint, we can apply induction to obtain that
$I((\v_1,\ldots ,\v_{n+ n'}))$ is disjoint.

        Now set
        \[
        \gamma' = \gamma\{x_{i} \mapsto \v_{i}\mid 1 \leq i \leq n\}\{\rd{w_{i}} \mapsto \rd{\v_{n  + i}}\mid 1 \leq i \leq n'\},
        \]
        \[
            \delta_1' = \{w_{i} \mapsto \v_{n + i} \mid 1 \leq i \leq n'\},
        \]
        and
        \[
            \delta' = \delta_{2}\delta_1'.
        \]
    Clearly
\[
\gamma';\delta'(e_2) = \gamma;\delta(e_2)\{x_{i} \mapsto \v_{i} \mid 1 \leq i \leq n\}\{\rd{w_{i}} \mapsto \rd{\v_{i}}, w_{i} \mapsto \v_{i} \mid 1 \leq i \leq n' \}
\]
so we also have $\gamma';\delta'(e_2) \eval \v$.
             By \Cref{prop:intcont},
             \[
                 \cup I( (\v_1 ,\ldots ,\v_{n + n'})) \subseteq \cup I(e_1).
             \]
             This means
             \[
                 \cup I (\delta_1') \subseteq \cup I(e_1) \subseteq P.
             \]
            As $\delta_2 \vDash^{P} \Delta_2$, $I(\delta_2)$ is disjoint from $P$ and this means $I(\delta_2)$ is disjoint from $I(\delta_1')$. So setting $P' = P\setminus \cup I(e_1)$, we have $\delta_1' \vDash^{P'}  \Delta_1'$. This means $\delta' \vDash^{P'} \Delta'$ as $\delta' = \delta_2 \delta_1'$, and $\Delta' = \Delta_2,\Delta_1'$.

As $I(e)$ is disjoint, $I(e_2)$ is disjoint and $\cup I(e_1)$ and $\cup I (e_2)$ are disjoint from each other. Because $\cup I(e) \subseteq P$ and $\cup I(e_2)$ is disjoint from $\cup I(e_1)$, it must be that $\cup I (e_2)\subseteq P'$.

Together we have $\gamma' \vDash \Gamma'$, $\delta'\vDash^{P'} \Delta'$, $\cup I(e) \subseteq P'$, and $I(e_2)$ is disjoint, so by induction, $\gamma';\delta'(e_2) \in R_{\tau}$. In particular, this means that $I(\v)$ is disjoint. This establishes that $\gamma;\delta(e)\in R_{\tau}$.

    \item[\rname{T-Var}] This follows easily as $\gamma \vDash \Gamma$.
    \item[\rname{T-AffVar}] This also follows easily as $\delta \vDash^{P} w : \lint$.
    \item[\rname{T-Div}] It must be the case that $I(\v) = [r_1, r_2], [r_2,r'_1]$ where $r_1 \leq r_2 \leq r'_{1}$. This is already enough to conclude that $I(\v)$ is disjoint.

\end{description}
The remaining cases are straightforward. \qed
\end{proof}
As a corollary, we receive our desired result.
\subsection{Paths}
\label{app:paths}
Recall the \hyperlink{paths}{section} on paths.
Here are the path rules for the assert expression.
\begin{mathpar}\small
    \inferrule*[right=\rname{T-Assert}]
    {\Gamma;\Delta_1 \vdash b_{1} : \mathsf{Bool} \\ \Gamma; \Delta_2 \vdash b_2 : \tau}
    {\Gamma;\Delta_1,\Delta_2 \vdash \kwassert\ b_1\ \kwin\ b_2 :\tau}
    \and
    \inferrule*[right=\rname{E-Assert}]
    {b_1\eval \mathsf{true}\\ b_2\eval \v}
    {\kwassert\ b_1\ \kwin\ b_2 \eval \v}
\end{mathpar}
We define the set of paths $B(e)$ by induction on the structure of $e$:
\[
    B(\v) \triangleq \{\v\}\quad B(x) \triangleq \{x\} \quad B(w) \triangleq \{w\} \quad B(\rd{w}) \triangleq \rd{w}
\]
\begin{align*}
    B(\kwif\ e_1\ \kwthen\ e_2\ \kwelse\ e_3) &\triangleq \{ \pt{b_1}{b_2} \mid b_1 \in B(e_1), b_2 \in B(e_2) \}\\
    &\cup \{ \pf{b_1}{b_3} \mid b_1 \in B(e_1), b_3 \in B(e_3) \} \\
    B(e(e_1 ,\ldots ,e_n)) &\triangleq \{ e(b_1 ,\ldots ,b_n) \mid b_1 \in B(e_1) ,\ldots , b_n \in B(e_n)\}\ \text{otherwise.}
\end{align*}
for all other expressions $e(e_1 ,\ldots ,e_{n})$.
Here we prove the theorems from \Cref{sec:paths}

\begin{restatable}{proposition}{pathtype}
    \label{prop:pathtype}
     $\Gamma; \Delta \vdash e : \tau$ if and only if $\Gamma; \Delta \vdash b : \tau$ for all $b \in B(e)$.
 \end{restatable}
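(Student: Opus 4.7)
The plan is to proceed by structural induction on the expression $e$. For the base cases---when $e$ is a value, a variable, or $\kwcake$---we have $B(e) = \{e\}$ by definition, so both directions are immediate.

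For the inductive cases where $e$ is not a conditional, the set of paths is $B(e(e_1,\ldots,e_n)) = \{e(b_1,\ldots,b_n) \mid b_i \in B(e_i)\}$. In the forward direction I would invert the typing of $e$ to obtain $\Gamma;\Delta_i \vdash e_i : \tau_i$ for each subexpression, apply the induction hypothesis to get $\Gamma;\Delta_i \vdash b_i : \tau_i$ for every $b_i \in B(e_i)$, and then reapply the same typing rule to conclude $\Gamma;\Delta \vdash e(b_1,\ldots,b_n) : \tau$. The backward direction runs symmetrically: pick some $b_i \in B(e_i)$ (which is nonempty, by a one-line induction on $e$), invert the typing of the resulting path---the top typing rule used is uniquely determined by the syntactic form, which is preserved on non-conditional constructors---apply the induction hypothesis to lift $\Gamma;\Delta_i \vdash b_i : \tau_i$ back to $\Gamma;\Delta_i \vdash e_i : \tau_i$, and then reassemble with the matching typing rule for $e$.

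The interesting case is the conditional. Here the path syntax itself changes: $B(\kwif\ e_1\ \kwthen\ e_2\ \kwelse\ e_3)$ consists of paths $\pt{b_1}{b_2}$ and $\pf{b_1}{b_3}$, which are typed by \rname{T-Assert} rather than \rname{T-Cond}. The key observation is that these two rules have essentially the same premises: \rname{T-Cond} requires $\Gamma;\Delta_1 \vdash e_1 : \mathsf{Bool}$ and $\Gamma;\Delta \vdash e_2,e_3 : \tau$, while \rname{T-Assert} requires $\Gamma;\Delta_1 \vdash b_1 : \mathsf{Bool}$ and $\Gamma;\Delta_2 \vdash b_2 : \tau$. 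So if $e$ is well-typed via \rname{T-Cond}, then for every choice of $b_1 \in B(e_1)$ and $b_2 \in B(e_2)$ (resp.\ $b_3 \in B(e_3)$), the induction hypothesis gives typings that plug directly into \rname{T-Assert}. Conversely, if all paths $\pt{b_1}{b_2}$ and $\pf{b_1}{b_3}$ are well-typed with type $\tau$, inversion and the induction hypothesis yield $\Gamma;\Delta_1 \vdash e_1 : \mathsf{Bool}$ and $\Gamma;\Delta \vdash e_2,e_3 : \tau$; note that both branches end up with the \emph{same} type $\tau$ precisely because every path is assumed to have type $\tau$.

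The main obstacle will be keeping the linear context split consistent across all the different paths produced by a single expression. Fortunately, the split is determined by which typing rule is applied to the expression's outermost constructor and not by the particular choice of $b_i \in B(e_i)$, so once inversion fixes a decomposition $\Delta = \Delta_1,\ldots,\Delta_n$, every path reuses that same decomposition. The backward direction requires a dual observation: inversion on any single path fixes the decomposition, and the induction hypothesis then lifts consistently to the full expression. Once this is in place, the rest of the argument is routine inversion-and-reassembly of typing derivations, one syntactic case at a time.
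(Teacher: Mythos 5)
Your proof follows essentially the same route as the paper's. Both directions are handled by induction with the conditional being the one nontrivial case, exploiting the $\rname{T-Cond}$/$\rname{T-Assert}$ premise correspondence; the paper does the forward direction by induction on the typing derivation rather than the expression, but since the rules are syntax‑directed this amounts to the same case split, and the backward direction is structural induction on $e$ as in your sketch. You are slightly more explicit than the paper in flagging the affine‑context‑split consistency concern in the backward direction (the paper just asserts a uniform decomposition and applies the inductive hypothesis), but neither argument actually discharges it---one still needs the small observation that any single path's split works for all paths because paths differ from $e$ only in conditionals, which preserve the free affine variables and hence the split. That is a shared level of informality rather than a gap in your proposal relative to the paper.
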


\begin{proof}
    \begin{flushleft}
        $(\Rightarrow)$
    \end{flushleft}
    We proceed by induction on the typing derivation. We break up our cases based on the last rule applied.
    \begin{description}
        \item[\rname{T-Cond}] Then $e = \kwif\ e_1\ \kwthen\ e_2\ \kwelse\ e_3$, $\Delta = \Delta_1, \Delta'$, $\Gamma;\Delta' \vdash e_1 :  \mathsf{Bool}$, $\Gamma ; \Delta' \vdash e_2:  \tau$, and $\Gamma; \Delta' \vdash e_3 : \tau$. By induction, for all $b_1 \in B(e_1)$, $\Gamma; \Delta_1 \vdash b_1 : \mathsf{Bool}$ and for all $b \in B(e_2)\cup B(e_3)$, $\Gamma; \Delta' \vdash b : \tau$. Then for any $b_1\in B(e_1)$ and $b_2\in B(e_2)$, \rname{T-Assert} nets $\Gamma;\Delta_1,\Delta' \vdash \kwassert\ b_1\ \kwin\ b_2 :\tau$. Since $\Gamma; \Delta_1 \vdash \neg b_1 : \mathsf{Bool}$ when $\Gamma;\Delta_1 \vdash b_1:  \mathsf{Bool}$, \rname{T-Assert} also nets $\Gamma;\Delta_1,\Delta' \vdash\kwassert\ \neg b_1\ \kwin\ b_3 : \tau$. This means that for all $b \in B(e)$, $\Gamma;\Delta_1, \Delta'\vdash  b : \tau$, concluding this case.
    \end{description}
    The remaining cases follow quickly from applying the inductive hypothesis.

    \begin{flushleft}
        $(\Leftarrow)$
    \end{flushleft}
    Suppose $\Gamma; \Delta \vdash b : \tau$ for all $b \in B(e)$.  We proceed by induction on the structure of $e$.
    \begin{description}
        \item[\rname{$e = \kwif\ e_1\ \kwthen\ e_2\ \kwelse\ e_3$}]
            Then $b = \kwassert\ b_1\ \kwin\ b_2$ where $b_1 \in B(e_1)$, $b_2 \in B(e_2)$ or $b = \kwassert\ \neg b_1\ \kwin\ b_3$ where $b_1 \in B(e_1)$, $b_3 \in B(e_3)$. It also must be that $\Delta = \Delta_1,\Delta'$, and $\Gamma;\Delta_1 \vdash b_1 : \mathsf{Bool}$, $\Gamma;\Delta' \vdash b_2 : \tau$, and $\Gamma;\Delta' \vdash b_3 : \tau$ for any $b_1\in B(e_1)$,$b_2\in B(e_2)$, and $b_3\in B(e_3)$.
             By induction, $\Gamma;\Delta_1 \vdash e_1 : \mathsf{Bool}$, $\Gamma;\Delta' \vdash e_2 : \tau$, and $\Gamma;\Delta' \vdash e_3 : \tau$. This means we can apply \rname{T-Cond} to obtain $\Gamma;\Delta_1,\Delta' \vdash e: \tau$.
    \end{description}
    All other cases are routine applications of induction.
    \qed
\end{proof}
    \begin{restatable}[]{proposition}{patheval}
    \label{prop:path}
    $e\eval \v$ if and only if $b\eval \v$ for some $b \in B(e)$.
 \end{restatable}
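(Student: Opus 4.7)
The plan is to prove both directions by induction, with the only delicate cases being those involving if-then-else on the expression side (which corresponds to assert on the path side). All other expression forms have the same outermost syntactic shape in their corresponding paths, so their evaluation rules match up premise-for-premise.

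For the forward direction, I would induct on the derivation of $e \eval \v$, case-splitting on the last rule applied. In the case where the last rule is \rname{E-True}, so $e = \kwif\ e_1\ \kwthen\ e_2\ \kwelse\ e_3$ with premises $e_1 \eval \kwtrue$ and $e_2 \eval \v$, induction yields paths $b_1 \in B(e_1)$ with $b_1 \eval \kwtrue$ and $b_2 \in B(e_2)$ with $b_2 \eval \v$. Then $\kwassert\ b_1\ \kwin\ b_2 \in B(e)$, and \rname{E-Assert} gives $\kwassert\ b_1\ \kwin\ b_2 \eval \v$. The \rname{E-False} case is dual, using negation of the guard. For rules whose conclusion has subexpressions $e_1,\ldots,e_n$, induction yields a path $b_i \in B(e_i)$ for each premise, and these combine into a single path in $B(e)$ by the definition of $B$ on compound expressions; the same evaluation rule then re-derives the conclusion with paths in place of subexpressions.

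For the backward direction, I would induct on the structure of $e$. Base cases (values, variables, $\kwcake$) are immediate since $B(e) = \{e\}$. When $e = \kwif\ e_1\ \kwthen\ e_2\ \kwelse\ e_3$ and some $b \in B(e)$ satisfies $b \eval \v$, then $b$ has the form $\kwassert\ b_1\ \kwin\ b_2$ with $b_1 \in B(e_1)$, $b_2 \in B(e_2)$, or $\kwassert\ \neg b_1\ \kwin\ b_3$ with $b_3 \in B(e_3)$. In the first subcase, \rname{E-Assert} forces $b_1 \eval \kwtrue$ and $b_2 \eval \v$; induction then yields $e_1 \eval \kwtrue$ and $e_2 \eval \v$, and \rname{E-True} completes the argument. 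The other subcase is dual. For compound expressions $e(e_1,\ldots,e_n)$ (such as tuples, operations, $\kwdiv$, $\kwmark$, $\kweval$, $\kwpiece$), a path $b = e(b_1,\ldots,b_n) \in B(e)$ evaluates under exactly the same big-step rule as $e$, so the premises are in direct correspondence and induction on each $b_i \eval \v_i$ yields $e_i \eval \v_i$.

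The main subtlety lies in the \rname{E-Split} case, because it evaluates $e_2$ under a substitution $\sigma$ replacing the bound variables with values. To apply the inductive hypothesis we need paths of $e_2$ to correspond to paths of $e_2\sigma$. Since values contain no if-then-else expressions, substitution of values for variables preserves the branching structure of $e_2$, so $B(e_2\sigma) = \{b\sigma \mid b \in B(e_2)\}$. I would establish this as a small substitution lemma, after which the split case reduces to the same pattern as the other compound cases; the same lemma is useful in the forward direction to go from a path of the substituted body back to a path of $e_2$ itself.
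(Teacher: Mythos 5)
Your proposal is essentially correct and in fact more careful than the paper's own proof, which dispatches the non-conditional cases with ``the remaining cases follow quickly from applying the inductive hypothesis.'' You are right that the $\kwsplit$ case is the one that actually hides content: both directions require the substitution lemma $B(e_2\sigma) = \{b\sigma \mid b \in B(e_2)\}$, and this is not stated in the paper. Your observation that it follows because values contribute no conditional branching is exactly the justification needed.

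The one place your formulation genuinely diverges from the paper is the backward direction. The paper inducts on the \emph{derivation} of $b \eval \v$; you propose structural induction on $e$. Under your setup, in the $\kwsplit$ case the key premise is $b_2\sigma \eval \v$, and to invoke the IH you would need to relate this to $e_2\sigma$ --- but $e_2\sigma$ is not a subexpression of $e$, so ordinary structural induction doesn't license the appeal. This is fixable --- substitution of values preserves any reasonable ``branching size'' measure (e.g.\ the number of if-then-else nodes), so you can recast the induction on that measure, with $e_2\sigma$ and $e_2$ having the same measure --- but as written the structural induction claim does not quite go through. The paper's choice of inducting on the evaluation derivation avoids this issue automatically, since $b_2\sigma \eval \v$ is a strict subderivation. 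You should either adopt that induction or explicitly fix the measure when inducting on the expression.

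Everything else --- the forward direction on the evaluation derivation, the handling of \rname{E-True}/\rname{E-False} via \rname{E-Assert} with and without a negated guard, and the premise-for-premise correspondence for the remaining rule forms --- matches the paper's proof.
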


\begin{proof}
    \begin{flushleft}
    ($\Rightarrow$)
    \end{flushleft}
    We proceed by induction on the derivation of $e \eval \v$. Cases are broken up based on the last rule applied.
    \begin{description}
        \item[\rname{E-True}] Then $e = \kwif\ e_1\ \kwthen\ e_2\ \kwelse\ e_3$, $e_1\eval \kwtrue$ and $e_2\eval \v$. By induction, there exists $b_1 \in B(e_1)$, $b_2 \in B(e_2)$ such that $b_1\eval \kwtrue$ and $b_2\eval \v$. Then \rname{E-Assert} nets $\pt{b_1}{b_2}\eval \v$. Since $\pt{b_1}{b_2}\in B(e)$, which means we are done with this case.
        \item[\rname{E-False}] This case is similar to that of \rname{E-True}.
    \end{description}
    The remaining cases follow quickly from applying the inductive hypothesis.
    \begin{flushleft}
    ($\Leftarrow$)
    \end{flushleft}
    We proceed by induction on the derivation of $b \eval \v$. We break up cases by the last rule applied.
    \begin{description}
        \item[\rname{E-Assert}]
            Then $b = \pt{b_1}{b_2}$, $b_1\eval \kwtrue$, $b_2\eval \v$ as well as $e = \kwif\ e_1\ \kwthen\ e_2\ \kwelse\ e_3$. Since $b\in B(e)$, $b_1 \in B(e_1)$ and  and $b_2 \in B(e_2)$ by definition of $B(e)$. By induction, $e_1\eval \kwtrue$ and $e_2\eval \v$. Applying \rname{E-True} nets $e\eval \v$, which is the desired result.
        \item[\rname{E-False}]
            This is similar to \rname{E-True}.
    \end{description}
    The remaining cases follow quickly from applying the inductive hypothesis.
    \qed
\end{proof}

\subsection{Logical syntax}
\label{app:logicalsyntax}
We use a multi-sorted first order logic, given by the signature $(\sort,
\mathtt{F}, \mathtt{C}, \mathtt{Q})$, with:
\begin{itemize}
    \item Sorts $\sort$:
$\mathtt{Real},\mathtt{Bool},\mathtt{Vltn},\mathtt{Point}, \mathtt{Intvl}, \mathtt{s}_{1}\times \cdots \times \mathtt{s}_{n}$
\item Predicate symbols $\mathtt{Q}$: $\{\neg, \geq, =\}$
\item Constants $\mathtt{C}$: $\{r\pnt \mid r \in \mathbb{R}\} \cup \{\mathtt{cake}\}$
\item Function symbols $\mathtt{F}$:
$\mathtt{O}\cup \{(\_ ,\ldots ,\_)_{k}, \pi_{k} \mid k \in \mathbb{N}\}\cup \{\tleft, \tright, [\_, \_],\cup \}\cup\{\val_{a}\mid a \in \mathcal{A}\}\cup \{+, \cdot, *\}$
\end{itemize}
For each type $\tau$, there is an associated sort $\mathtt{s}_{\tau}$ given in \Cref{fig:sorts}.
Above, the set $\mathtt{O} = \{\mathtt{o} : \mathtt{s}_{\ut_1},\ldots ,\mathtt{s}_{\ut_{n}} \to \mathtt{s}_{\ut}\mid o: \ut_1,\ldots ,\ut_{n}\to \ut \in \mathcal{O} \}$, consists of logical function symbols corresponding to program operations.
 The function symbols $\tleft$ and $\tright$ give the left and right endpoints of intervals respectively, and $[ \_, \_]$ is an interval constructor. The unary operator $*$ takes a point to its corresponding real, and the binary symbols $+$ and $\cdot$ give arithmetic on real numbers.
Using the function symbols and constants, we can encode any program value $\v$ as a
logical term $\lv$.

\begin{figure}
\begin{mathpar}
    \mathtt{s}_{\mathsf{Real}} = \mathtt{Real}
    \and
    \mathtt{s}_{\mathsf{Bool}} = \mathtt{Bool}
    \and
    \mathtt{s}_{\mathsf{Point}} = \mathtt{Point}
    \and
    \mathtt{s}_{\mathsf{Intvl}} = \mathtt{Intvl}
    \and
    \mathtt{s}_{\mathsf{Piece}} = \mathtt{Piece}
    \and
    \mathtt{s}_{\mathsf{Vltn}} = \mathtt{Vltn}
    \and
    \mathtt{s}_{\rd{\mathsf{Intvl}}} = \mathtt{Intvl}
    \and
    \mathtt{s}_{\rd{\mathsf{Piece}}} = \mathtt{Piece}
    \and
    s_{\tau_1 \times \cdots \times \tau_{n}} = s_{\tau_1} \times \cdots \times s_{\tau_{n}}
\end{mathpar}
\caption{Sort associated with each type}
\label{fig:sorts}
\end{figure}

All logical variables are drawn from $\var$, $\lvar$, $\rd{\lvar}$, and
$\mathcal{Y}$. The first three sets are the same sets used for program
variables, and $\mathcal{Y}$ is a new countably infinite set of logical variables.

We use a standard sorting judgment for well-formed terms ($\pctx;\yctx \vdash t
: \mathtt{s}$) and well-formed
formulas ($\pctx;\yctx \vdash \psi : \text{Formula}$). In the sorting judgments,
$\pctx$ is a partial map from $\var\cup \lvar\cup \rd{\lvar}$ to $\sort$, and $\yctx$ is a finite subset of $\mathcal{Y}$. $\pctx$ handles variables that arise in intermediate stages of translating expressions to their constraints, and $\yctx$ handles variables created in the constraint generation procedure. Every variable in $\mathcal{Y}$ is assumed to be a point, so $\yctx$ is a set and not a partial map. The sorting rules for terms and predicate formulas are given in \Cref{fig:sortingrules}.
\begin{figure}
    \label{fig:sortingrules}
\begin{mathpar}
    \inferrule*{ }{\pctx, x : \mathtt{s};\yctx \vdash x : \mathtt{s}}
    \and
    \inferrule*{ }{\pctx;\yctx, y  \vdash y : \mathtt{Point}}
    \and
    \inferrule*{ }{\pctx;\yctx \vdash r : \mathtt{Real}}
    \and
    \inferrule*{ }{\pctx;\yctx \vdash r\pnt : \mathtt{Point}}
    \and
    \inferrule*{ }{\pctx;\yctx \vdash \mathtt{cake} : \mathtt{Piece}}
    \and
    \inferrule*{\pctx; \yctx \vdash t_1 : \mathtt{s}_{1}\  \cdots\  \pctx; \yctx \vdash t_n : \mathtt{s}_{n} \\ \mathtt{o} : \mathtt{s}_{1},\ldots ,\mathtt{s}_{n} \to \mathtt{s}}
    {\pctx; \yctx \vdash \mathtt{o}(t_1 ,\ldots ,t_{n}) : \mathtt{s}}
    \and
    \inferrule*{\pctx;\yctx \vdash t : \mathtt{Intvl}}{\pctx;\yctx \vdash \tleft(t) : \mathtt{Point}}
    \and
    \inferrule*{\pctx;\yctx \vdash t : \mathtt{Intvl}}{\pctx;\yctx \vdash \tright(t) : \mathtt{Point}}
    \\
    %
    % Val
    %
    \inferrule*{\pctx;\yctx \vdash t : \mathtt{Intvl}}{\pctx;\yctx \vdash \val_{a}(t) : \mathtt{Vltn}}
    \and
    \inferrule*{\pctx;\yctx \vdash t : \mathtt{Piece}}{\pctx;\yctx \vdash \val_{a}(t) : \mathtt{Vltn}}
    %
    % Real
    %
    \and
    \inferrule*{\pctx;\yctx \vdash t_1 : \mathtt{Point}\\ \pctx;\yctx \vdash t_2 : \mathtt{Point}}{\pctx;\yctx \vdash [t_1,t_2] : \mathtt{Intvl}}
    \and
    \inferrule*{\pctx;\yctx \vdash t_1 : \mathtt{Intvl}\\ \cdots \\ \pctx;\yctx \vdash t_{n} : \mathtt{Intvl}}{\pctx;\yctx \vdash \cup t_1,\ldots ,t_{n} : \mathtt{Piece}}
    \\
    \inferrule*{\pctx;\yctx \vdash t_1 : \mathtt{s}_1 \\ \cdots \\ \pctx;\yctx \vdash t_{k} : \mathtt{s}_{k}}{\pctx;\yctx \vdash (t_1,\ldots ,t_{k}) : \mathtt{s}_1 \times \cdots \times \mathtt{s}_{k}}
    \and
    \inferrule*[right={($1 \leq n \leq k$)}]{\pctx;\yctx \vdash t : \mathtt{s}_1 \times \cdots \times \mathtt{s}_{k}}{\pctx;\yctx \vdash \pi_{n}t : \mathtt{s}_{n}}
\end{mathpar}
\\
\begin{mathpar}
    \inferrule*{\pctx;\yctx \vdash t_{1}  : \mathtt{Bool}}{\pctx;\yctx \vdash t_1 = \mathtt{true} : \text{Formula}}
    \and
    \inferrule*{\pctx;\yctx \vdash t_{1}  : \mathtt{Real}\\ \pctx;\yctx \vdash t_{2}  : \mathtt{Real}}{\pctx;\yctx \vdash t_1 \geqq t_2 : \text{Formula}}
    \and
    \inferrule*{\pctx;\yctx \vdash t_{1}  : \mathtt{Point}\\ \pctx;\yctx \vdash t_{2}  : \mathtt{Point}}{\pctx;\yctx \vdash t_1 \geqq t_2 : \text{Formula}}
    \and
    \inferrule*{\pctx;\yctx \vdash t_{1}  : \mathtt{Vltn}\\ \pctx;\yctx \vdash t_{2}  : \mathtt{Vltn}}{\pctx;\yctx \vdash t_1 \geqq t_2 : \text{Formula}}
\end{mathpar}
\caption{Rules for sorting terms and well-formed formulas.}
\end{figure}

\Cref{prop:simpl} is proven under the assumption of the following operations.
\begin{figure}
    \label{fig:ops}
\begin{align*}
    \begin{array}{l l l l}
        \land :\mathsf{Bool}, \mathsf{Bool} \to \mathsf{Bool}  & \lor : \mathsf{Bool}, \mathsf{Bool} \to \mathsf{Bool} & \neg : \mathsf{Bool} \to \mathsf{Bool} & +:  \mathsf{Vltn}, \mathsf{Vltn} \to \mathsf{Vltn}\\ \\
        \geq : \mathsf{Real}, \mathsf{Real}\to \mathsf{Bool} & \geq : \mathsf{Vltn}, \mathsf{Vltn} \to \mathsf{Bool} &  = : \mathsf{Vltn}, \mathsf{Vltn} \to \mathsf{Bool} & \cdot_{r} :\mathsf{Vltn} \to \mathsf{Vltn}
\end{array}
\end{align*}
\caption{The operations contained in $\mathcal{O}$.}
\end{figure}

\subsubsection{Basic simplification}
Here we discuss basic formula simplification.
Given a term $t$, we define the simplification $\simpl(t)$ inductively as follows:
    \begin{align*}
        \simpl(\pi_{k}(t)) &\triangleq
        \begin{cases}
            t_{k} &\text{if}\ \simpl(t) = (t_1,\ldots ,t_{k},\ldots ,t_{n}) \\
            \pi_{k}(\simpl(t)) &\text{otherwise}
        \end{cases}\\
        \simpl(\tleft (t)) &\triangleq
        \begin{cases}
            t_{1} &\text{if}\ \simpl(t) = [t_1,t_{2}] \\
            \tleft(\simpl(t)) &\text{otherwise}.
        \end{cases}\\
        \simpl(\tright (t)) &\triangleq
        \begin{cases}
            t_{2} &\text{if}\ \simpl(t) = [t_1,t_{2}] \\
            \tright(\simpl(t)) &\text{otherwise}.
        \end{cases}\\
        \\
        \simpl(f(t_1 ,\ldots ,t_{n})) &\triangleq f(\simpl(t_1),\ldots ,\simpl(t_{n}))\ \text{for $t = f(t_1 ,\ldots ,t_{n})$ otherwise}.
    \end{align*}
        The goal of this simplification is to reduce formulas to real linear inequalities and equalities in $\val_{a}$ evaluated on pieces or intervals.
    To argue that this is the case, we define the following logical predicate on terms indexed by sort:
        \begin{align*}
        R_{\mathtt{Bool}} &= \{t \mid \cdot; \yctx\vdash t : \mathtt{Bool}, t = t_1 \wedge t_2\ \text{and}\ t_1, t_2\in \mathtt{Bool}\ \text{or}\ t = t_1 \vee t_2\ \text{and}\ t_1, t_2\in \mathtt{Bool}\ \text{or}\ \\
                          &t = t_1 \geqq t_2\ \text{and}\ t_1,t_2\in R_{\tau}, \tau = \mathtt{Real}, \mathtt{Vltn} \} \\
        R_{\mathtt{Real}} &= \{t \mid \cdot; \yctx \vdash t : \mathtt{Real}, t = t_1 + t_2\ \text{and}\ t_1,t_2\in R_{\mathtt{Real}}\ \text{or}\\
        & t = r \cdot t'\ \text{and}\ t'\in R_{\mathtt{Real}}\ \text{or}\ t = t'^{*}, t'\in R_{\mathtt{Point}}\ \text{or}\ t = r\}\\
 R_{\mathtt{Point}} &= \{t \mid \cdot ; \yctx \vdash t : \mathtt{Point}, t = r\pnt\ \text{or}\ t \in \mathcal{Y}\} \\
       R_{\mathtt{Vltn}} &= \{t \mid \cdot; \yctx \vdash t : \mathtt{Vltn}, t = \sum_{i}t_{i}\ \text{and}\ t_{i}\in R_{\mathtt{Vltn}},\ \text{or}\ t = r\cdot t'\  \text{and}\ t' \in R_{\mathtt{Vltn}}, \\
                        &\text{or}\ t = \val_{a}(t')\ \text{and}\ t' \in R_{\mathtt{Intvl}}\cup R_{\mathtt{Piece}}\}\\
       R_{\mathtt{Intvl}} &= \{t \mid \cdot; \yctx \vdash t : \mathtt{Intvl}, t = [t_1,t_2], t_1, t_2\in R_{\mathtt{Real}}\} \\
       R_{\mathtt{Piece}} &= \{t \mid  \cdot ; \yctx \vdash t : \mathtt{Piece}, t = \cup t_1\cdots  t_{n}, t_{i}\in R_{\mathtt{Intvl}}\} \\
        R_{\mathtt{s}_1 \times \cdots \times \mathtt{s}_{k}} &= \{t \mid \cdot; \yctx \vdash t : \mathtt{s}_{1} \times \cdots \times \mathtt{s}_{k}, t = (t_1,\ldots ,t_{k}), t_{i}\in R_{\mathtt{s}_{i}} \}
        \end{align*}

        \begin{proposition}
        If $ \cdot ;\yctx \vdash t: \mathtt{s} $ then $\simpl(t) \in R_{\mathtt{s}}$.
        \label{prop:relationsimpl}
        \end{proposition}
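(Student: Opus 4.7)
The plan is to proceed by induction on the derivation of $\cdot;\yctx \vdash t : \mathtt{s}$ (equivalently, on the structure of $t$). The base cases are immediate: for $y \in \yctx$ and constants $r$, $r\pnt$, simplification is the identity and the resulting term lies in the corresponding $R_{\mathtt{s}}$ by definition.

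For the inductive step, the key observation is that $\simpl$ is essentially a congruence except in three cases -- $\pi_k(t')$, $\tleft(t')$, and $\tright(t')$ -- which pattern-match on the simplified subterm. These cases are where the induction hypothesis does real work. For $\pi_k(t')$ with $\cdot;\yctx \vdash t' : \mathtt{s}_1 \times \cdots \times \mathtt{s}_n$, the IH gives $\simpl(t') \in R_{\mathtt{s}_1 \times \cdots \times \mathtt{s}_n}$, and the definition of this set forces $\simpl(t') = (t_1, \ldots, t_n)$ with $t_i \in R_{\mathtt{s}_i}$. Hence the ``tuple'' branch of $\simpl$ must fire and returns $t_k \in R_{\mathtt{s}_k}$; the ``otherwise'' branch is in fact unreachable. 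The arguments for $\tleft$ and $\tright$ are analogous: by IH on the interval subterm, its simplification must have the form $[t_1, t_2]$ with $t_1, t_2 \in R_{\mathtt{Point}}$, so the extraction branch always fires and returns an element of $R_{\mathtt{Point}}$.

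For every other compound term $f(t_1, \ldots, t_n)$, we have $\simpl(f(t_1,\ldots,t_n)) = f(\simpl(t_1), \ldots, \simpl(t_n))$, and by IH each $\simpl(t_i)$ lies in the $R_{\mathtt{s}_i}$ determined by its sort. A case analysis on $f$ then verifies closure: $[\_,\_]$ takes two points to an element of $R_{\mathtt{Intvl}}$; $\cup$ takes intervals to an element of $R_{\mathtt{Piece}}$; $\val_a$ applied to an element of $R_{\mathtt{Intvl}} \cup R_{\mathtt{Piece}}$ is in $R_{\mathtt{Vltn}}$ by the $\val_a$ clause; $+$ and constant multiplication on valuations stay in $R_{\mathtt{Vltn}}$ by its $\sum_i t_i$ and $r \cdot t'$ clauses; $+$, $\cdot$, and $*$ on reals stay in $R_{\mathtt{Real}}$; and the Boolean/comparison operators ($\wedge$, $\vee$, $\geqq$ at each relevant sort) land in $R_{\mathtt{Bool}}$ by its respective clauses. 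Tupling lands in the product case of $R$ by definition.

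The main obstacle is purely bookkeeping: there are many function symbols and operations in $\mathtt{F}$ and $\mathtt{O}$, and for each one we must verify that its output clause appears in the inductive definition of the relevant $R_{\mathtt{s}}$. The only cases requiring non-trivial thought are the projection-like ones, because there $\simpl$ is not simply a congruence and one must justify that the ``otherwise'' branch is unreachable; this follows, as noted above, from the fact that $R_{\mathtt{s}_1 \times \cdots \times \mathtt{s}_n}$ and $R_{\mathtt{Intvl}}$ are defined to consist only of explicit constructor applications, so the IH forces the simplified subterm into the matching shape.
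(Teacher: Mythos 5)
Your proposal is correct and follows essentially the same route as the paper's proof: induction on the sorting derivation (equivalently, the structure of $t$), with routine congruence reasoning for most cases and the key observation that for $\pi_k$, $\tleft$, and $\tright$ the inductive hypothesis forces the simplified subterm into explicit constructor form, so the pattern-matching branch of $\simpl$ always fires. Your remark that the ``otherwise'' branch is therefore unreachable on well-sorted input makes this explicit where the paper leaves it implicit, but the argument is the same.
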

        \begin{proof}
        This proceeds by induction on the sorting derivation. We break up cases based on the last rule used to sort $t$. We expand out the ops rule now that we've fixed the possible operations. Here we write $\geqq$ as short-hand for both $\geq$ and $=$ symbols.
\begin{mathpar}
    \inferrule*{ }{\cdot; \yctx \vdash r : \mathtt{Real}}
    \and
    \inferrule*{ }{\cdot; \yctx \vdash b : \mathtt{Bool}}
    \and
    \inferrule*{ }{\cdot; \yctx \vdash r \pnt : \mathtt{Point}}
    \and
    \inferrule*{ }{\cdot ; \yctx, y : \mathtt{Point} \vdash y : \mathtt{Point}}
\end{mathpar}
These are obvious.
\begin{mathpar}
    \inferrule*{\cdot ; \yctx \vdash t_1 : \mathtt{Real}\\ \cdot ; \yctx \vdash t_2 : \mathtt{Real} }{t_1 \geqq t_2: \mathtt{Bool}}
    \and
    \inferrule*{\cdot ; \yctx \vdash t_1 : \mathtt{Vltn}\\ \cdot ; \yctx \vdash t_2 : \mathtt{Vltn} }{\cdot ; \yctx \vdash t_1 \geqq t_2 : \mathtt{Bool}}
    \and
    \inferrule*{\cdot ; \yctx \vdash t_1 : \mathtt{Bool}\\ \cdot ; \yctx \vdash t_2 : \mathtt{Bool} }{\cdot ; \yctx \vdash t_1 \wedge t_2: \mathtt{Bool}}
    \and
    \inferrule*{\cdot ; \yctx \vdash t_1 : \mathtt{Bool}\\ \cdot ; \yctx \vdash t_2 : \mathtt{Bool} }{\cdot ; \yctx \vdash t_1 \vee t_2: \mathtt{Bool}}
    \and
    \inferrule*{\cdot ; \yctx \vdash t : \mathtt{Bool}}{\cdot ; \yctx \vdash \neg t: \mathtt{Bool}}
\end{mathpar}
These all readily follow by induction.
\begin{mathpar}
    \inferrule*{\cdot ; \yctx \vdash t_1 : \mathtt{Point}\\ \cdot ; \yctx \vdash t_2 : \mathtt{Point}}{\cdot ; \yctx \vdash [t_1,t_2] : \mathtt{Intvl}}
    \and
    \inferrule*{\cdot ; \yctx \vdash t_1 : \mathtt{Intvl}\\ \cdots \\ \cdot ; \yctx \vdash t_{n} : \mathtt{Intvl}}{\cdot ; \yctx \vdash \cup t_1,\ldots ,t_{n} : \mathtt{Piece}}
    \and
    \inferrule*{\cdot ; \yctx \vdash t_1 : \mathtt{Real}\\ \cdot ; \yctx \vdash t_2 : \mathtt{Real}}{\cdot ; \yctx \vdash t_1 + t_2 : \mathtt{Real}}
    \and
    \inferrule*{\cdot ; \yctx \vdash t_1 : \mathtt{Vltn}\\ \cdot ; \yctx \vdash t_2 : \mathtt{Vltn}}{\cdot ; \yctx \vdash t_1 + t_2 : \mathtt{Vltn}}
    \and
    \inferrule*{\cdot ; \yctx \vdash t : \mathtt{Real}}{\cdot ; \yctx \vdash r \cdot t : \mathtt{Real}}
    \and
    \inferrule*{\cdot ; \yctx \vdash t : \mathtt{Vltn}}{\cdot ; \yctx \vdash r\cdot t : \mathtt{Vltn}}
    \and
    \inferrule*{\cdot ; \yctx \vdash t : \mathtt{Point}}{\cdot ; \yctx \vdash t^{*} : \mathtt{Real}}
\end{mathpar}
These arguments are straightforward applications of the inductive hypothesis.
\begin{mathpar}
    \inferrule*{\cdot ; \yctx \vdash t' : \mathtt{Intvl}}{\cdot ; \yctx \vdash \tleft(t') : \mathtt{Point}}
    \and
    \inferrule*{\cdot ; \yctx \vdash t' : \mathtt{Intvl}}{\cdot ; \yctx \vdash \tright(t') : \mathtt{Point}}
    \end{mathpar}
    By induction $\mathtt{s}(t) \in R_{\mathtt{Intvl}}$ so $\mathtt{s}(t) = [t_1, t_2]$ where $t_1, t_2 \in R_{\mathtt{Point}}$. This means $\mathtt{s}(\ell(t)) = t_1$. Similarly, $\mathtt{s}(r(t)) = t_2$. As $t_1, t_2 \in R_{\mathtt{Point}}$, we are done with these cases.
    \begin{mathpar}
    \inferrule*{\cdot ; \yctx \vdash t' : \mathtt{Intvl}}{\cdot ; \yctx \vdash \val_{a}(t') : \mathtt{Vltn}}
    \end{mathpar}
    Then $\simpl(t) = \val_{a}(\simpl(t'))$. By induction, $\simpl(t') \in R_{\mathtt{Intvl}}$, completing this case.
    \begin{mathpar}
    \inferrule*{\cdot ; \yctx \vdash t' : \mathtt{Piece}}{\cdot ; \yctx \vdash \val_{a}(t') : \mathtt{Vltn}}
    \end{mathpar}
    Almost identical to the previous case.

\begin{mathpar}
    \inferrule*{\cdot ; \yctx \vdash t_1 : \mathtt{s}_1 \\ \cdots \\ \cdot ; \yctx \vdash t_{n} : \mathtt{s}_{n}}{\cdot ; \yctx \vdash (t_1,\ldots ,t_{n}) : \mathtt{s}_1 \times \cdots \times \mathtt{s}_{n}}
\end{mathpar}
    By induction, $\simpl(t_{i})\in R_{\mathtt{s}_{i}}$. Since $\simpl(t_1,\ldots ,t_{n}) = (\simpl(t_1),\ldots ,\simpl(t_{n}))$, we are done with this case.
\begin{mathpar}
    \inferrule*[right={($1 \leq k \leq n$)}]{\cdot ; \yctx \vdash t' : \mathtt{s}_1 \times \cdots \times \mathtt{s}_{n}}{\cdot ; \yctx \vdash \pi_{k}t' : \mathtt{s}_{k}}
\end{mathpar}
By induction, $\simpl(t')\in R_{\mathtt{s}_{1}\times \cdots \times \mathtt{s}_{k}}$ which means $\simpl(t') = (t_1 ,\ldots ,t_{k})$ where $t_{i} \in R_{\mathtt{s}_{i}}$. Therefore $\simpl(t) = t_{k} \in R_{\mathtt{s}_{k}}$. \qed
\end{proof}

If $t \in R_{\mathsf{Bool}}$, there is a clear formula it can be converted to, which we denote $F(t)$. From this, we can extend $\simpl$ to formulas.
Given a well-sorted formula $f$, we inductively define $\simpl(f)$ as follows
    \begin{align*}
        \simpl(t = \mathsf{true}) &\triangleq F(\simpl(t)) \\
        \simpl(t_1 \geq t_2) &\triangleq \simpl(t_1) \geq \simpl(t_2)\\
        \simpl(t_1 = t_2) &\triangleq \simpl(t_1) = \simpl(t_2)\ \text{if}\ t_2 \neq \mathsf{true}\\
        \simpl(\neg f) &\triangleq \neg (\simpl (f)) \\
        \simpl(f_1 \land f_2) & \triangleq \simpl(f_1)\land \simpl(f_2) \\
        \simpl(f_1 \lor f_2) & \triangleq \simpl(f_1)\lor \simpl(f_2) \\
        \simpl(f_1 \Rightarrow f_2) & \triangleq \simpl(f_1)\Rightarrow \simpl(f_2)
    \end{align*}
    We say that a formula is \emph{simplified} if for each term contained in it, it is contained within $R_{\mathtt{s}}$ for some sort $\mathtt{s}$. By \Cref{prop:relationsimpl}, $\simpl(f)$ is simplified for any well-sorted formula $f$.
    Based on our assumptions on predicate symbols, we have the following corollary.
    \begin{corollary}
        Suppose that $\cdot ; \yctx \vdash f : \text{Formula}$. Then $\simpl(f)$ consists entirely of conjunctions, disjunctions, and negations of the forms:
        \[
            \sum_{i}r_{i}\val_{a}(P_{i}) \geqq \sum_{j}r_{j}\val_{a'}(P_{j})
    \]
    where $P_{i},P_{j}\in R_{\mathtt{Intvl}}\cup R_{\mathtt{Point}}$
    and

    \[
            \sum_{i}r_{i}t_{i}\geqq \sum_{j}r_{j}t_{j}
    \]
    where $t_{i},t_{j} \in \{t \mid t = t'^{*}, t' \in R_{\mathtt{Point}}\}\cup \mathbb{R}$ and $\geqq$ stands for $\geq$ and $=$.
    \label{prop:lra}
    \end{corollary}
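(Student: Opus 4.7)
The plan is to derive this corollary directly from \Cref{prop:relationsimpl} together with a straightforward structural analysis of the formula-level simplification. First I would note that $\simpl$ on formulas is defined purely by pushing simplification into subterms and preserving the Boolean skeleton, so $\simpl(f)$ is a conjunction/disjunction/negation of atomic formulas of only a few shapes: (i) $F(\simpl(t))$ for some $t$ of sort $\mathtt{Bool}$, (ii) comparisons $\simpl(t_1) \geqq \simpl(t_2)$ where both sides have sort $\mathtt{Real}$, $\mathtt{Vltn}$, or $\mathtt{Point}$. Comparisons at sort $\mathtt{Point}$ can be rewritten as real inequalities using the $*$ coercion, so it suffices to handle $\mathtt{Real}$ and $\mathtt{Vltn}$ atoms.

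Next I would invoke \Cref{prop:relationsimpl} to guarantee that every simplified subterm of sort $\mathtt{s}$ lies in $R_{\mathtt{s}}$. The corollary then reduces to showing that any $t \in R_{\mathtt{Vltn}}$ can be written as $\sum_i r_i \cdot \val_{a_i}(P_i)$ with $P_i \in R_{\mathtt{Intvl}} \cup R_{\mathtt{Piece}}$, and that any $t \in R_{\mathtt{Real}}$ can be written as $\sum_i r_i \cdot t_i$ with $t_i$ drawn from $\{(t')^{*} \mid t' \in R_{\mathtt{Point}}\} \cup \mathbb{R}$. Both facts are proved by a simple induction on the grammar of $R_{\mathtt{Vltn}}$ (respectively $R_{\mathtt{Real}}$): the base cases $\val_a(P)$ and $r$ are already of the required shape, the sum case concatenates two such sums, and the scalar case $r \cdot t'$ uses distributivity to push $r$ into each coefficient. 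Finally, to turn $F(\simpl(t))$ at sort $\mathtt{Bool}$ back into the claimed atomic forms, I would peel off the $\wedge, \vee, \neg$ at the top of $R_{\mathtt{Bool}}$ into the Boolean skeleton and reduce to atoms $t_1 \geqq t_2$ at sort $\mathtt{Real}$ or $\mathtt{Vltn}$, which are precisely the two displayed forms in the corollary.

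The main obstacle is essentially bookkeeping: carefully tracking how the inductive unfolding of $R_{\mathtt{Vltn}}$ and $R_{\mathtt{Real}}$ flattens nested sums and scalar products into a single linear combination of the designated leaves. Once that normalisation lemma is in place the corollary follows immediately.
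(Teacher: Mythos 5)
Your proposal is correct and takes essentially the route the paper intends: this corollary is stated without a proof, as an immediate consequence of \cref{prop:relationsimpl} combined with the purely structural definition of $\simpl$ on formulas, and you correctly identify that the only real work is the normalisation lemma flattening the nested sums and scalar products in $R_{\mathtt{Vltn}}$ and $R_{\mathtt{Real}}$ into a single linear combination of the designated leaves (plus folding $\mathtt{Point}$-level atoms into $\mathtt{Real}$ ones via the $*$ coercion). As a side note, you also silently fix what appears to be a typo in the statement: the displayed condition should read $P_i, P_j \in R_{\mathtt{Intvl}} \cup R_{\mathtt{Piece}}$, matching the grammar of $R_{\mathtt{Vltn}}$, rather than $R_{\mathtt{Intvl}} \cup R_{\mathtt{Point}}$.
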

    In particular, $\cn(b)$ for any branch $\cdot \vdash b : \tau$ satisfies the corollary's conditions.
\subsection{Logical semantics}
\label{app:logicalsem}
We give semantics to formulas through means of an interpretation, $\interp$, and
variable assignment $\assn$. An interpretation associates with each sort $\mathtt{s}$, a set $\sem{\mathtt{s}}_{\interp}$, and with each function symbol $f$ a function $\sem{f}_{\interp}$. All sorts are interpreted as shown in \Cref{fig:sortinterp} regardless of $\interp$ and all symbols are interpreted as shown in \Cref{fig:symbolinterp} \emph{besides} $\val_{a}$ for each $a$--these are interpreted as valuations determined by the specific interpretation. As such, interpretations are completely determined by the valuations they use to interpret. Thus for a valuation set $\vset$, we let $\interp_{\vset}$ be the interpretation such that $\sem{\val_{a}}_{\interp_{\vset}} = \vset_{a}$ for all $a \in \agents$.
Before moving on we describe the function symbol interpretations in words. The symbol $\caketerm$ is the whole cake, $\tleft$ and $\tright$ provide the left and right endpoints of an interval respectively, $\cup$ forms a piece from intervals. Recall that for each $o: \ut_1,\ldots ,\ut_{n} \to \ut \in O$, there is $\sem{o} : \unrd{\sem{\ut_{1}}} \times \cdots \times \unrd{\sem{\ut_{n}}} \to \unrd{\sem{\ut}}$.

Now a variable assignment is a map from variables to elements of the interpreted sorts.
Given a term $t$, we let $\sem{t}_{\interp}^{\assn}$ denote the interpretation
of $t$ according to $\interp$,  with variable values determined by $\assn$,
defined in the usual way. Likewise, given a formula $f$, we write $\interp,\assn
\vDash f$ if $f$ is true when interpreted through $\interp$ with variable values
determined by $\assn$, also defined in the usual way. We write $\interp \vDash
f$ if for all assignments $\assn$ we have $\interp, \assn \vDash f$. If $t$ is a
term containing no $\val_{a}$ symbols, then for a fixed assignment $\assn$, the
term $t$ is always interpreted the same way and we write just $\sem{t}^{\assn}$.

\begin{figure}
\begin{mathpar}
    \sem{\mathtt{Bool}}_{\interp} \triangleq \{\mathtt{true}, \mathtt{false}\}
    \and
    \sem{\mathtt{Num}}_{\interp} \triangleq \mathbb{N}
    \and
    \sem{\mathtt{Real}}_{\interp} \triangleq \mathbb{R}
    \and
    \sem{\mathtt{Vltn}}_{\interp} \triangleq \mathbb{V}
    \and
    \sem{\mathtt{Point}}_{\interp} \triangleq \mathtt{P}\mathtt{T}
    \and
    \sem{\mathtt{Intvl}}_{\interp} \triangleq \mathbb{I}
    \and
    \sem{\mathtt{Piece}}_{\interp} \triangleq \mathbb{P}\mathbb{C}
\end{mathpar}
\caption{Interpretations of each sort.}
\label{fig:sortinterp}
\end{figure}
\begin{figure}
\begin{mathpar}
    \sem{\mathtt{o}}_{\interp} = \sem{o}
    \and
    \sem{\pi_{i}}_{\interp}(\v_1 ,\ldots, \v_{i},\ldots ,\v_{n}) = \v_{i}
    \and
    \sem{(\_ ,\ldots ,\_)_{k}}_{\interp}(\v_1,\ldots ,\v_{k}) = (\v_1,\ldots ,\v_{k})
    \and
\sem{\tleft}_{\interp}([r, r']) = r
    \and
\sem{\tright}_{\interp}([r, r']) = r'
    \and
\sem{\caketerm}_{\interp} = [0,1]
    \and
\sem{\cup}_{\interp}([r_{1}, r'_{1}],\ldots ,[r_{n},r'_{n}]) = P_{i= 1}^{n}[r_{i},r'_{i}]
\end{mathpar}
\caption{Interpretations of each function symbol.}
\label{fig:symbolinterp}
\end{figure}

We close off this section with the soundess of the previous section's formula simplification with respect to the semantics given here.
\begin{proposition}
    \label{prop:simpl}
    Let $\interp$ be an interpretation and $\assn$ be a variable assignment. Let $\varphi$ be any formula. Then
    $\interp,\assn \vDash \varphi \iff \interp,\assn \vDash \simpl(\varphi)$.
\end{proposition}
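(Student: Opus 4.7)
The plan is to prove the result in two stages: first, a term-level semantic preservation lemma, and then lift it to formulas by induction on formula structure.

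First, I would show by structural induction on $t$ that for any well-sorted $\cdot;\yctx \vdash t : \mathtt{s}$ and any $\interp, \assn$, we have $\sem{t}_{\interp}^{\assn} = \sem{\simpl(t)}_{\interp}^{\assn}$. All the ``outer-preserving'' cases (constants, variables, function symbols in $\mathtt{O}$, $\val_a$, $[\_,\_]$, $\cup$, $+$, $\cdot$, tupling) follow directly from the inductive hypothesis on subterms, because $\simpl$ pushes through the constructor without changing its interpretation. The interesting cases are $\pi_k(t')$, $\tleft(t')$, and $\tright(t')$, whose definition is by case analysis on $\simpl(t')$. For instance, if $\simpl(t') = (t_1, \ldots, t_n)$ then by induction $\sem{t'}_{\interp}^{\assn} = \sem{(t_1, \ldots, t_n)}_{\interp}^{\assn} = (\sem{t_1}_{\interp}^{\assn}, \ldots, \sem{t_n}_{\interp}^{\assn})$, and then $\sem{\pi_k(t')}_{\interp}^{\assn} = \sem{t_k}_{\interp}^{\assn} = \sem{\simpl(\pi_k(t'))}_{\interp}^{\assn}$, using the fixed interpretation of $\pi_k$. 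The ``otherwise'' branch is just pushing $\simpl$ inside, which matches semantics by the induction hypothesis. The analogous argument works for $\tleft$ and $\tright$ using the fixed interpretations $\sem{\tleft}_{\interp}([r,r'])=r$ and $\sem{\tright}_{\interp}([r,r'])=r'$.

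Next, I would handle formulas by structural induction on $\varphi$. The cases $t_1 \geq t_2$ and $t_1 = t_2$ (with $t_2 \ne \mathsf{true}$) follow immediately from the term lemma, since $\simpl$ simply distributes over the two sides and semantics of $\geq$ and $=$ are fixed. The Boolean connective cases $\neg, \wedge, \vee, \Rightarrow$ reduce straightforwardly to the inductive hypothesis applied to the subformulas. The only subtle case is $\varphi = (t = \mathsf{true})$, which requires a small sublemma: for any $t \in R_{\mathtt{Bool}}$, $\interp, \assn \vDash t = \mathsf{true}$ iff $\interp, \assn \vDash F(t)$. This sublemma is proved by induction on the structure of $t \in R_{\mathtt{Bool}}$, using that the logical counterparts of $\wedge, \vee, \neg, \geqq$ appearing in $\mathtt{O}$ are interpreted via their standard Boolean meanings, so that the term-level connective agrees with the formula-level connective on truth values. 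Combining this with the term lemma applied to the sides of each $\geqq$ inside $t$ yields the atomic case.

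The main obstacle is the $F$ sublemma: the paper only says ``there is a clear formula it can be converted to,'' so I would fix the definition $F(t_1 \wedge t_2) \triangleq F(t_1) \wedge F(t_2)$, $F(\neg t') \triangleq \neg F(t')$, $F(t_1 \geqq t_2) \triangleq t_1 \geqq t_2$, etc., and then the induction is routine case analysis matching each term-level operation in $\mathtt{O}$ with its formula-level counterpart. No quantifier alternation or variable-binding subtleties arise because $\simpl$ does not introduce or remove variables, so $\assn$ is held fixed throughout.
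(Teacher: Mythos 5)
The paper states Proposition \ref{prop:simpl} but omits its proof entirely, so there is no paper argument to compare against. Your two-stage plan—prove a term-level semantic preservation lemma $\sem{t}_{\interp}^{\assn} = \sem{\simpl(t)}_{\interp}^{\assn}$ by structural induction, then lift it to formulas by induction on formula structure—is the standard and correct way to fill this gap. The case analysis you isolate ($\pi_k$, $\tleft$, $\tright$, each with a simplifying branch and an ``otherwise'' branch) matches exactly where the definition of $\simpl$ departs from the trivial homomorphic recursion, and your reasoning there is sound because the interpretations of $\pi_k$, $\tleft$, $\tright$ are fixed by the base interpretation. Your identification of the $F$ sublemma as the one genuinely nonroutine step is also right: the paper only says ``there is a clear formula it can be converted to'' without pinning $F$ down, so supplying the homomorphic definition and checking it against the standard Boolean interpretations of the symbols in $\mathtt{O}$ is exactly what is needed. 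One small point worth being explicit about: the proposition quietly assumes $\cdot;\Upsilon \vdash \varphi : \text{Formula}$ (only $\mathcal{Y}$ variables free), since otherwise $\simpl(t)$ need not land in $R_{\mathtt{Bool}}$ and $F(\simpl(t))$ would be undefined; you already build this assumption into your term lemma, so the argument goes through, but stating the hypothesis would tighten the write-up.
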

\subsection{Constraints}
\label{app:constraints}
This section argues for soundness and completeness of the \hyperlink{constraints}{constraints}.

We require each mark to have a unique identifier, $\# s$, to prevent variable clashes when translating expressions to their constraints:
\begin{equation*}
    \emark{a}{e_1}{e_2}\# s.
\end{equation*}
    For each identifier in the program, $\# s$, we assume there is a unique member of $\mathcal{Y}$, denoted $y_{\# s}$, corresponding to it. We also let $\id(b)$ denote the set of indentifiers in $b$. The complete definition for $\rho(b)$ is given here:
      \begin{mathpar} \small
          \rho(\rd{\v}) = \lv
          \and
          \rho_{}(\v)_{\ilist{i}}  ={\lv}
          \and
          \rho_{}(x)_{\ilist{i}}  ={x}
          \and
          \rho_{}(w)_{\ilist{i}} ={w}
          \and
          \rho_{}(\rd{w})_{\ilist{i}}  = \rd{w}
          \and
          \rho(\kwcake) \triangleq [0,1]
    \and
\rho_{ }(\ediv{b_1}{b_{2}})_{\ilist{i}} =
    \left(\left[\tleft(\rho_{ }(b_1)_{\ilist{1:i}}), \rho_{ }(b_2)_{\ilist{2:i}}  \right]_{},
    \left[\rho_{ }(b_2)_{\ilist{2:i}}, \tright(\rho_{ }(b_1)_{\ilist{1:i}})  \right]_{}\right)
    \and
    {\rho_{ }}(\emark{a}{b_1}{b_2}\# s)_{\ilist{i}}  = y_{\# s}
    \and
    {\rho_{ }}(\eeval{a}{b})_{\ilist{i}} = \val_{a}(\rho_{ }(b)_{\ilist{1:i}})
    \and
      \rho_{ } (\pt{b_1}{b_2})_{\ilist{i}}  =  \rho(b_2)_{\ilist{2:i}}
      \and
      \rho_{ }(o(b_1 , \ldots , b_n))_{\ilist{i}}  = o(\rho_{ }(b_1)_{\ilist{1:i}}, \ldots , \rho_{ }(b_n)_{\ilist{n:i}})
      \and
      \rho_{ }(\kwlet\ x_1 ,\ldots ,x_{n}, w_{1},\ldots ,w_{n'} = \kwsplit\ b_1\ \kwin\ b_2)_{\ilist{i}}            \triangleq \\ \rho_{ }(b_2)_{\ilist{2:i}}\{\pi_{i}\rho_{ }(b_1)_{\ilist{1:i}}\mapsto x_{i}\mid 1 \leq i \leq n\}
                                                                                                                    \{\pi_{i}\rho_{ }(b_1)_{\ilist{1:i}}\mapsto w_{i}, \pi_{i}\rho_{ }(b_1)_{\ilist{1:i}}\mapsto \rd{w_{i}}\mid 1 \leq i \leq n'\}
                                                                                                                    \and
    \rho (\epiece{b_1 ,\ldots ,b_{n}})_{\ilist{i}}= \cup\rho(b_{1})_{\ilist{1:i}}, \ldots , \rho(b_{n})_{\ilist{n :i}}\\
    \and
    \rho((b_1 ,\ldots ,b_{n})) \triangleq (\rho(b_1),\ldots ,\rho(b_{n}))
      \end{mathpar}

Here is the complete definition for $\cn(b)$:
\begin{mathpar} \small
\cn(\ediv{b_1}{b_2})_{\ilist{i}}
    \triangleq \cn(b_1)_{\ilist{1:i}}\wedge \cn(b_2)_{\ilist{2:i}} \wedge \tleft(\rho(b_1)_{\ilist{1:i}})\leq \rho(b_2)_{\ilist{ 2 : i}}\leq \tright(\rho(b_1)_{\ilist{ 1 : i}})
    \and
    \cn(\emark{a}{b_1}{b_2} \# s)_{\ilist{i}} = \cn(b_1)_{\ilist{1:i}}\wedge \cn(b_2)_{\ilist{2 : i}}\wedge (\val_{a}([\tleft(\rho(b_1)), \rho(\emark{a}{b_1}{b_2}\# s)_{\ilist{i}}])= \rho(b_2)_{\ilist{2:i}})
    \and
    \cn(\eeval{a}{b})_{\ilist{i}} = \cn(b)
    \and
    \cn(\pt{b_1}{b_2}) = (\rho(b_1) = \mathsf{true})\wedge \cn(b_1)\wedge \cn(b_2)
    \and
    \cn(\kwlet\ x_1,\ldots ,x_{n}, w_1,\ldots ,w_{n'}= \kwsplit\ b_1\ \kwin\ b_2) \triangleq \cn(b_1)\land\\
    \cn(b_2)\{x_{i} \mapsto \pi_{i}\rho(b_1)_{\ilist{1:i}})\mid 1 \leq i \leq n\}
    \{w_{i}\mapsto \pi_{i + n}\rho(b_1), \rd{w_{i}}\mapsto \pi_{i + n}\rho(b_1)\mid 1 \leq i \leq n'\}
    \and
    \cn( (b_1,\ldots ,b_{n})) \triangleq \cn(b_1) \land \cdots \land \cn(b_{n})
    \and
    \cn(\kwpiece(b_1,\ldots ,b_{n})) \triangleq \cn(b_1) \land \cdots \land \cn(b_{n})
\end{mathpar}

    Given a substitution from program variables variables to values $S$, let $\unrd{S}$ be the logical substitution $\{x \mapsto \rho(\v) \mid  S(x) = \v\}$. Recall that $\rho$ converts values to their logical counterparts, and makes a read only value a regular value.

\begin{lemma}
    Let $b$ be a branch and $S$ a substitution of values for program variables. Then $\rho(S(b)) = \unrd{S}(\rho(b))$, and $\cn(S(b)) = \unrd{S}(\cn(b))$.
\label{prop:subswapf}
\end{lemma}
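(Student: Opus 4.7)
The plan is to prove both equalities simultaneously by structural induction on the branch $b$. Recall that $\unrd{S}$ maps each program variable (in $\var \cup \lvar \cup \rd{\lvar}$) to the logical encoding of its image under $S$, so intuitively the claim says that the translation commutes with substitution.

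For the base cases, when $b$ is a value $\v$ we have $S(\v) = \v$ since $S$ only acts on variables, and $\rho(\v) = \lv$ contains no free program variables, so $\unrd{S}(\rho(\v)) = \lv = \rho(S(\v))$; the constraint $\cn(\v)$ is vacuous. When $b$ is a variable $x$ (or $w$, or $\rd{w}$), both sides unfold to the logical encoding of $S(x)$ directly by the definition of $\unrd{S}$. The $\kwcake$ case is immediate.

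For most inductive cases---tuples, primitive operations, $\kwdiv$, $\kweval$, $\kwmark$, $\kwpiece$, and $\kwassert$---the program-level substitution $S$ commutes with the surrounding constructor, so the definitions of $\rho$ and $\cn$ for $S(b)$ unfold in parallel with those for $b$, but with each subexpression $b_i$ replaced by $S(b_i)$. I would apply the induction hypothesis to each subexpression and then push $\unrd{S}$ through the surrounding logical term constructors (which is valid because $\unrd{S}$ is a plain variable substitution on logical terms). The $\kwmark$ case is worth noting briefly: the fresh variable $y_{\# s} \in \mathcal{Y}$ is untouched by $\unrd{S}$, which only targets counterparts of program variables.

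The main obstacle is the $\kwsplit$ case, because the definitions of $\rho$ and $\cn$ themselves perform an inner substitution of the form $\{\pi_i \rho(b_1)_{\ilist{1:i}} \mapsto x_i\}$ (and analogously for $w_i$ and $\rd{w_i}$) on the translation of $b_2$. Here I would argue that $\unrd{S}$ commutes with this inner substitution: by $\alpha$-renaming, the bound variables $x_i$, $w_i$, $\rd{w_i}$ may be assumed fresh with respect to $\dom(S)$, and since $\unrd{S}$ replaces variables only with logical encodings of closed values, it cannot reintroduce these fresh names. Thus the two substitutions act on disjoint variables and commute. Combining this observation with the induction hypothesis on $b_1$ and $b_2$ completes the split case for $\rho$, and an entirely analogous argument handles $\cn$.
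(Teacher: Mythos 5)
Your proof is correct and follows essentially the same structural induction on $b$ as the paper's, with the variable cases, the congruence cases, and the split case being the only nontrivial ones. The one tactical difference is in the $\kwsplit$ case: you $\alpha$-rename the bound names to be fresh from $\dom(S)$ and argue that the inner projection substitution commutes with $\unrd{S}$ (using that $\unrd{S}$ has closed range), whereas the paper works directly with $S' = S$ restricted away from the bound variables and equates the two compositions by explicit computation; both routes are valid and rest on the same observations.
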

\begin{proof}
    Once the first statement is established the second follows trivially. The first is argued by induction on the structure of $b$. We exhibit the interesting cases, and the rest follow easily.
     \begin{description}
         \item[\rname{$b = x$}]
             If $x$ is not a variable in $S$ then this is trivial, so suppose it is.
             Then we have
             \begin{align*}
                 \rho(S(b)) = \rho(S(x)) = \rho(\v) = \unrd{S}(x) = \unrd{S}(\rho(x)) = \unrd{S}(\rho(b)).
             \end{align*}
             $\cn(x) = \cn(\v)$ for any variable and value so clearly $\unrd{S}(\cn(b)) = \cn(S(b))$.
         \item[\rname{$b = w$}]
            This is identical to the above case.
        \item[\rname{$b = \rd{w}$}]
            If $w$ is not a variable in $S$, then this is trivial, so suppose it is.
            Then we have
            \begin{align*}
                \rho(S(b)) = \rho(S(\rd{w})) = \rho(\rd{\v}) = \v = \unrd{S}(\rd{w}) = \unrd{S}(\rho(\rd{w})) =\unrd{S} (\rho(b)).
             \end{align*}
             $\cn(\rd{w}) = \cn(\rd{\v})$ for any variable and value so clearly $S(\cn(b)) = \cn(S(b))$.
         \item[\rname{$b = \kwlet\ x_1,\ldots ,x_{n}, w_1,\ldots ,w_{n'} = \kwsplit\ b_1\ \kwin\ b_2$}]
         Let
         \[
             S' = S\setminus \{x_1,\ldots ,x_{n},w_{1},\ldots ,w_{n'},\rd{w_1} ,\ldots ,\rd{w_{n}}\}.
     \]
         We have that

         \[
             S(b) = \kwlet\ x_1 ,\ldots ,x_{n}, w_1,\ldots ,w_{n'} = \kwsplit\ S(b_1)\ \kwin\ S'(b_2),
         \]
         and also have
         \[
             \rho(b) = \rho(b_2)\{x_{i } \mapsto \pi_{i}\rho(b_1)  \mid 1 \leq i \leq n\}\{w_{i} \mapsto \pi_{i + n}\rho(b_1),\rd{w_{i}} \mapsto \pi_{i + n}\rho(b_1)  \mid 1 \leq i \leq n'\}.
         \]
         By induction,
         \[
             \rho(S(b_1)) = \unrd{S}(\rho(b_1)) \quad \text{and} \quad \rho(S'(b_2)) = \unrd{S'}(\rho(b_2)).
         \]
         We can then calculate:
         \begin{align*}
             \rho(S(b)) &= \rho(S'(b_2))\{x_{i}\mapsto \pi_{i}\rho(S(b_1))\mid 1 \leq i \leq n\}\{w_{i}\mapsto \pi_{i + n}\rho(S(b_1)), \rd{w_{i}}\mapsto \pi_{i + n}\rho(S(b_1))\mid 1 \leq i \leq n'\} \\
                        &= \unrd{S'}(\rho(b_2)) \{x_{i}\mapsto \unrd{S}(\pi_{i}\rho(b_1))\mid 1 \leq i \leq n\}\\
                        &\quad \{w_{i}\mapsto \unrd{S}(\pi_{i + n}\rho(b_1)), \rd{w_{i}}\mapsto \unrd{S}(\pi_{i + n}\rho(b_1))\mid 1 \leq i \leq n'\}
         \end{align*}
         and
         \begin{align*}
             \unrd{S}(\rho(b)) &= \unrd{S}(\rho(b_2)\{x_{i}\mapsto \pi_{i}\rho(b_1)\mid 1 \leq i \leq n\}\{w_{i}\mapsto \pi_{i + n}\rho(b_1), \rd{w_{i}}\mapsto \pi_{i + n}\rho(b_1)\mid 1 \leq i \leq n'\})  \\
                               &= \unrd{S'}(\rho(b_2))\{x_{i}\mapsto \unrd{S}(\pi_{i}\rho(b_1))\mid 1 \leq i \leq n\}\\
             &\quad \{w_{i}\mapsto \unrd{S}(\pi_{i + n}\rho(b_1)), \rd{w_{i}}\mapsto \unrd{S}(\pi_{i + n}\rho(b_1))\mid 1 \leq i \leq n'\}.
         \end{align*}
         Therefore $\rho(S(b)) = \unrd{S}(\rho(b))$.
         The argument for $\cn(S(b)) = \unrd{S}(\cn(b))$ in this case is very similar.
    \qed
    \end{description}
\end{proof}

Given a term $t$, let $\fv_{\mathcal{Y}}(t)$ be the set of free variables contained in $t$ from $\mathcal{Y}$. Analogously for a formula $f$, let $\fv_{\mathcal{Y}}(f)$ be the set of free variables contained in $f$ from  $\mathcal{Y}$. Also let $\mathcal{Y}_b = \fv_{\mathcal{Y}}(\cn(b))\cup \fv_{\mathcal{Y}}(\rho(b))$. We write $\pctx \vdash \Gamma;\Delta$ if for all $x \in \dom(\Gamma)$,
$\pctx(x) = \mathtt{s}_{\Gamma(x)}$, and for all $w \in \dom(\Delta)$, $\pctx(w) = \mathtt{s}_{\Delta(w)}$.
\begin{proposition}
   Suppose $\Gamma;\Delta \vdash b : \tau$. Then for any $\pctx$ such that $\pctx \vdash \Gamma;\Delta$ and any $\yctx$ such that $\fv_{\mathcal{Y}}(b) \subseteq \yctx$,
\[
   \pctx; \yctx \vdash \rho(b) : \tau\quad \text{and}\quad \pctx; \yctx \vdash \cn(b) : \text{Formula}.
\]
  \label{prop:wellsorted}
\end{proposition}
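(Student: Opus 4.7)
The plan is to prove both conjuncts simultaneously by induction on the typing derivation $\Gamma;\Delta \vdash b : \tau$, at each step unfolding the definitions of $\rho$ and $\cn$ on the relevant syntactic form and matching the result against the sorting rules in \Cref{fig:sortingrules}. The hypothesis $\pctx \vdash \Gamma;\Delta$ ensures program variables appearing in $\rho(b)$ have the right sort, and the hypothesis $\fv_{\mathcal{Y}}(b)\subseteq \yctx$ (together with the fact that $\rho$ and $\cn$ only introduce fresh variables $y_{\#s}$ for mark identifiers of $b$) ensures that every logical variable used is in $\yctx$.

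The base cases---values $\v$, program variables $x$, $w$, $\rd{w}$, and $\kwcake$---are direct: each is translated to the corresponding constant or variable, and the sorting rules match the associated program type via the map $\tau \mapsto \mathtt{s}_\tau$ from \Cref{fig:sorts}. For the standard structural cases ($o(b_1,\ldots,b_n)$, tuples, $\epiece{\cdot}$, $\ediv{b_1}{b_2}$, $\eeval{a}{b}$), one applies the inductive hypothesis to each subexpression under the same $\pctx$ (possibly with $\yctx$ enlarged to accommodate nested marks) and then checks that the sorts line up with the corresponding term-forming rule (e.g.\ $\ediv{b_1}{b_2}$ yields a pair of intervals built from $\tleft$, $\tright$, and $[\_,\_]$). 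The $\kwmark$ case additionally uses that $y_{\#s}\in \yctx$ by assumption to sort $\rho(\emark{a}{b_1}{b_2}\#s)$ at $\mathtt{Point}$; the extra conjunct in $\cn$ relating the valuation of the new interval to $\rho(b_2)$ is well-sorted since $\rho(b_2) : \mathtt{Vltn}$ by the inductive hypothesis.

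The main obstacle is the $\kwsplit$ case (rule \rname{T-Split}), because $\rho$ and $\cn$ substitute projections of $\rho(b_1)$ for the pattern variables inside the translation of $b_2$. The key observation is that if $\Gamma;\Delta_1 \vdash b_1 : \ut_1 \times \cdots \times \ut_n \times \lint_1 \times \cdots \times \lint_{n'}$, then by induction $\rho(b_1)$ has sort $\mathtt{s}_{\ut_1}\times \cdots \times \mathtt{s}_{\lint_{n'}}$, so each projection $\pi_i\rho(b_1)$ has exactly the sort $\mathtt{s}_{\ut_i}$ or $\mathtt{s}_{\lint_{i-n}}$ required to substitute for the corresponding pattern variable. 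Extending $\pctx$ to $\pctx' = \pctx, x_1:\mathtt{s}_{\ut_1},\ldots,\rd{w_{n'}}:\mathtt{s}_{\rd{\lint_{n'}}}, w_1:\mathtt{s}_{\lint_1},\ldots,w_{n'}:\mathtt{s}_{\lint_{n'}}$, the inductive hypothesis gives $\pctx';\yctx \vdash \rho(b_2) : \mathtt{s}_\tau$ and $\pctx';\yctx \vdash \cn(b_2) : \text{Formula}$. A straightforward substitution lemma (standard for many-sorted first-order logic: substituting a term of the right sort for a variable preserves well-sortedness) then yields that the substituted versions are well-sorted under $\pctx;\yctx$, as required.

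The remaining case is the assert rule \rname{T-Assert} for paths: here $\cn(\pt{b_1}{b_2})$ conjoins $\rho(b_1) = \mathsf{true}$ with $\cn(b_1)$ and $\cn(b_2)$, which is well-sorted since $\rho(b_1)$ has sort $\mathtt{Bool}$ by induction; and $\rho(\pt{b_1}{b_2}) = \rho(b_2)$ inherits its sort directly. All other conjuncts introduced inside $\cn$ (such as the containment inequality in $\cn(\ediv{b_1}{b_2})$) are predicate formulas over terms whose well-sortedness follows from the inductive hypothesis applied to subexpressions, completing the induction.
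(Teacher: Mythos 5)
Your proposal is correct and follows essentially the same strategy as the paper's own proof: induction on the typing derivation, using the hypotheses $\pctx \vdash \Gamma;\Delta$ and $\fv_{\mathcal{Y}}(b)\subseteq \yctx$ to discharge the variable cases, and handling \rname{T-Split} by extending $\pctx$ to bind the pattern variables, applying induction to $b_2$, and then substituting the projections $\pi_i\rho(b_1)$ (which have the matching sorts) back in. One minor imprecision worth flagging: you say the inductive hypothesis may require $\yctx$ to be "enlarged to accommodate nested marks," but this is unnecessary—since $\fv_{\mathcal{Y}}(b_i)\subseteq \fv_{\mathcal{Y}}(b)\subseteq \yctx$ for every subexpression $b_i$, the same $\yctx$ already suffices throughout the induction, and indeed the paper relies on this fact explicitly in the \rname{T-Split} case to justify that the substituted $\mathcal{Y}$-variables of $b_2$ are still covered.
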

\begin{proof}
    This proceeds by induction on the typing derivation.
    Let $\pctx$ be such that $\pctx \vdash \Gamma;\Delta$ and let $\yctx$ be such that $\fv_{\mathcal{Y}}(b)\subseteq \yctx$.
\begin{description}
      \item[\rname{T-Vltn}] This is immediate.
      \item[\rname{T-Split}]
          Then $b = (\kwlet\ x_1,\ldots ,x_{n},w_1 ,\ldots ,w_{n'} = \kwsplit\ b_1\ \kwin\ b_2)$, and necessarily that $\Gamma;\Delta \vdash b_1 : \ut_{1}\times \cdots \times \ut_{n}\times \lint_{1}\times \cdots \times \lint_{n'}$, and
          \[
              \Gamma, x_1 : \tau_1,\ldots ,x_{n}, \rd{w_{1}} : \rd{\lint_1} ,\ldots ,\rd{w_{n'}} : \rd{\lint_{n'}};\Delta, w_{1} : \lint_1 ,\ldots ,w_{n} : \lint_{n'} \vdash b_2 : \tau.
          \]
          We also have
          \[
              \rho(b) = \rho(b_2)\{x_{i}\mapsto \pi_{i} \rho(b_1)\mid 1 \leq i \leq n\} \{w_{i} \mapsto \pi_{n + i}\rho(b_1), \rd{w_{i}} \mapsto \pi_{n + i}\rho(b_1)\mid 1 \leq i \leq n'\}.
      \]
      Set
      \begin{align*}
          S_{\var} &= \{x_{i }\mapsto  \pi_{i} \rho(b_1)\mid 1 \leq i \leq n\}\\
          S_{\lvar} &= \{w_{i}\mapsto \pi_{n + i}\rho(b_1)\mid 1 \leq i \leq n'\}\\
          S_{\rd{\lvar}} &= \{\rd{w_{i}}\mapsto \pi_{n + i}\rho(b_1)\mid 1 \leq i \leq n'\}.
      \end{align*}
      In this case, $\rho(b) = S_{\var}S_{\lvar}S_{\rd{\lvar}}(\rho(b_2))$.
          By induction, $\pctx;\yctx \vdash \rho(b_1) : \ut_1 \times \cdots \times \ut_{n}\times \lint_{1} \times \cdots \times \lint_{n'}$. By the projection sorting rule, $\pctx;\yctx \vdash \pi_{i} \rho(b_1) : \ut_{i} $ for $1 \leq i \leq n$ and $\pctx;\yctx \vdash \pi_{n + i} \rho(b_1) : \lint_{i}$ for $1 \leq i \leq n'$.

          Let
          \[
              \pctx' = \pctx, x_1 : \ut_1 ,\ldots , x_{n} : \ut_{n}, \rd{w_1} : \lint_1 ,\ldots ,\rd{w_{n'}} : \lint_{n'}, w_{1} : \lint_{1} ,\ldots ,w_{n} : \lint_{n'}.
          \]
          Then
          \[
              \pctx' \vdash \Gamma, x_1 : \ut_1,\ldots ,x_{n} : \ut_{n},\rd{w_{1}} : \rd{\lint_1} ,\ldots ,\rd{w_{n'}} : \rd{\lint_{n'}} ;\Delta, w_1 : \lint_1 ,\ldots ,w_{n} : \lint_{n'}.
          \]
          As $\dom(S_{x})\cup \dom(S_{w})$ does not contain any variables from $\mathcal{Y}$, we also have $\fv_{\mathcal{Y}}(b_2)$. By induction, $\pctx' ; \yctx \vdash \rho(b_2) : \tau$.
          Because
          \begin{align*}
              \dom(S_{\var}) &= \{x_1 ,\ldots x_{n}\} \\
              \dom(S_{\lvar}) &= \{w_1 ,\ldots ,w_{n'}\}\\
              \dom(S_{\rd{\lvar}}) &= \{\rd{w_{1}},\ldots ,\rd{w_{n'}}\},
          \end{align*}
          we have $\pctx; \yctx \vdash S_{\var}S_{\lvar}S_{\rd{\lvar}}(\rho(b_2)) : \tau$.
          This means $\pctx; \yctx \vdash \rho(b) : \tau$. Using this fact and in a similar manner, we can argue $\pctx;\yctx \vdash \cn(b) : \text{Formula}$.
      \item[\rname{T-Ops}]
          Then $b = o(b_1,\ldots ,b_{n})$, and necessarily that $\Gamma ; \Delta\vdash  b_{i} : \ut_{i}$ where $o : \ut_{1},\ldots ,\ut_{n}\to \ut$. By induction, $\pctx; \yctx \vdash  \rho(b_{i}) : \ut_{i}$, and $\pctx; \yctx \vdash \cn(b_{i}) : \text{Formula}$. Then by the ops sorting rule, $\pctx; \yctx  \vdash \rho(b) : \tau$ and by the conjunction sorting rule, $\pctx; \yctx \vdash \cn(b) : \text{Formula}$.
      \item[\rname{T-Assert}]
        Then $b = \pt{b_1}{b_2}$, and necessarily $\Gamma;\Delta \vdash b_{1} : \mathsf{Bool}$, $\Gamma ; \Delta \vdash b_{2} : \tau$.
        By induction, $\pctx;\yctx \vdash \rho(b_1) : \mathsf{Bool}$, $\pctx;\yctx \vdash \rho(b_2) : \tau$, and $\pctx;\yctx \vdash \cn(b_1)$, $\pctx; \yctx \vdash \cn(b_{2}) : \text{Formula}$. From the first statement, we have $\pctx; \yctx \vdash \rho(b_1) = \mathsf{true} : \text{Formula}$. As $\rho(b) = \rho(b_2)$, we have $\pctx; \yctx \vdash \rho(b) : \tau$. From the conjuction sorting rule, $\pctx ;\yctx \vdash \cn(b) : \text{Formula}$.
      \item[\rname{T-Mark}]
          Then $b = \emark{a}{b_1}{b_2}\# s$, and necessarily that $\Gamma; \Delta \vdash b_1 : \mathsf{Intvl}$, $\Gamma; \Delta \vdash b_2 : \mathsf{Vltn}$.
          As $\rho(b) \in \fv_{\mathcal{Y}}(b)$, and we assume that $\fv_{\mathcal{Y}}(b)\subseteq \yctx$, then we have $\pctx;\yctx \vdash \rho(b) : \mathsf{Point}$.
          By induction, $\pctx; \yctx \vdash \rho(b_1) : \mathsf{Intvl}$, $\pctx;\yctx \vdash \rho(b_2) : \mathsf{Real}$, and $\pctx;\yctx \vdash \cn(b_1) : \text{Formula}$, $\pctx; \yctx \vdash \cn(b_2): \text{Formula}$. By the former two statements we can obtain through a straightforward series of sorting rules
          \[
              \pctx; \yctx \vdash \val_{a}([\tleft(\rho(b_1)), \rho(b)]) = \rho(b_2) : \text{Formula}.
          \]
          We can apply the sorting conjuction rule with this and the latter two statements to net
          $\pctx;\yctx \vdash \cn(b) : \text{Formula}$.

      \item[\rname{T-EvalPc}]
          Then $b = \eeval{a}{b'}$, and necessarily that $\Gamma; \Delta \vdash b' : \mathsf{Piece}$. By induction, $\pctx; \yctx \vdash \rho(b') : \mathsf{Piece}$ and $\pctx; \yctx \vdash \cn(b') : \text{Formula}$, which means by the Val sorting rule $\pctx; \yctx \vdash \rho(b) : \mathsf{Vltn}$ and since $\cn(b') = \cn(b)$, $\pctx; \yctx \vdash \cn(b) : \text{Formula}$.

      \item[\rname{T-EvalIntvl}] Almost identical to the argument for \rname{T-EvalPc}.

      \item[\rname{T-Div}]
          Then $b = \ediv{b_1}{b_2}$, and necessarily that $\Gamma;\Delta \vdash b_1 : \mathsf{Intvl}$, $\Gamma;\Delta \vdash b_2 : \mathsf{Point}$. By induction, $\pctx;\yctx \vdash \rho(b_1): \mathsf{Intvl}$, $\pctx;\yctx \vdash \rho(b_2): \mathsf{Point}$, and $\pctx; \yctx \vdash \vdash \cn(b_1)$, $\pctx;\yctx \vdash \cn(b_2) : \text{Formula}$. The former statements, through a straightforward set of sorting rules can be used to show both
          \[
              \pctx; \yctx \vdash ([\tleft(\rho(b_1)), \rho(b_2)], [\rho(b_2), \tright(\rho(b_1))]) : \mathsf{Intvl}\times \mathsf{Intvl}
          \]
          and
          \[
              \pctx; \yctx \vdash\tleft(\rho(b_1)) \leq \rho(b_2) \leq \tright(\rho(b_1)) : \text{Formula}.
          \]
We can apply the sorting conjuction rule with this and the latter two statements from the induction hypothesis to net
\[
    \pctx; \yctx \vdash \cn(b) : \text{Formula}.
\]
As $\rho(b) = ([\tleft(\rho(b_1)), \rho(b_2)], [\rho(b_2), \tright(\rho(b_1))])$, we have
$\pctx;\yctx \vdash \rho(b) : \mathsf{Intvl}\times \mathsf{Intvl}$.

      \item[\rname{T-Piece}] This is very similar to that of \rname{T-Ops}. \qed
    \end{description}
\end{proof}

We now move toward semantic results about constraints.
Fix an interpretation and variable assignment. We say that two substitutions $S_1$ and $S_2$ are interpreted the same $\dom(S_1) =\dom(S_2)$ and for all $x\in \dom(S_1)$, $\interp,\assn \vDash S_1(x) = S_2(x)$. We have the following related lemma.
\begin{lemma}
    Let $S_1$ and and $S_2$ be two logical substitutions. If $S_1$ and $S_2$ are interpreted the same, then $\interp, \assn \vDash S_1(t) = S_2(t)$ for any term $t$, and  $\interp, \assn\vDash S_1(f) \iff S_2(f)$.
    \label{prop:semsub}
\end{lemma}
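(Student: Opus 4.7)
The plan is to prove both parts simultaneously by structural induction: first on the structure of the term $t$ (for the term part), and then on the structure of the formula $f$ (for the formula part), invoking the term part at the atomic formula cases. The whole argument is a standard compositionality result: once the substitutions agree under the fixed $\interp, \assn$, this agreement propagates up through term formation and through the Boolean/quantifier-free structure of formulas, because every function and predicate symbol is interpreted as a genuine function/relation on the denotations.

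For the term part, I would proceed by induction on the structure of $t$. If $t$ is a variable $x$, either $x \in \dom(S_1) = \dom(S_2)$, in which case the hypothesis gives $\interp,\assn \vDash S_1(x) = S_2(x)$ directly, or $x$ is outside both domains, in which case $S_1(t) = t = S_2(t)$ and the equality is trivial. If $t$ is a constant, $S_i$ leaves $t$ unchanged, so the claim is immediate. For the inductive step, the term has the form $f(t_1,\ldots,t_n)$ (including the special cases $\tleft$, $\tright$, $[\_,\_]$, $\cup$, $\pi_k$, $\val_a$, and tuple-former); by induction $\interp,\assn \vDash S_1(t_i) = S_2(t_i)$ for each $i$, i.e.\ $\sem{S_1(t_i)}_{\interp}^{\assn} = \sem{S_2(t_i)}_{\interp}^{\assn}$. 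Since $\sem{f}_{\interp}$ is a function, applying it to equal tuples yields equal results, so $\sem{S_1(t)}_{\interp}^{\assn} = \sem{S_2(t)}_{\interp}^{\assn}$, which is exactly $\interp,\assn \vDash S_1(t) = S_2(t)$.

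For the formula part, I would induct on the structure of $f$. The atomic cases are $t_1 \geqq t_2$ (with $\geqq$ standing for $\geq$ or $=$) and $t = \mathsf{true}$; in each such case, apply the term lemma to the subterms, and use that the interpretation of a predicate symbol depends only on the denotations of its arguments. The Boolean cases---$\neg f'$, $f_1 \wedge f_2$, $f_1 \vee f_2$, $f_1 \Rightarrow f_2$---are routine: $S_i$ distributes over these connectives, and the inductive hypothesis on each subformula gives the biconditional for the whole.

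The only real subtlety, which I would note briefly, is that the lemma's hypothesis is relative to the fixed $\interp$ and $\assn$: the two substitutions need not be syntactically equal, only semantically equal under this specific interpretation and assignment, so the conclusion is likewise relative to the same $\interp,\assn$. There is no genuine obstacle here; the proof is essentially a textbook compositionality argument, and its value lies purely in licensing the substitution-swaps used in the soundness arguments for $\cn(b)$ and $\rho(b)$ elsewhere in the appendix.
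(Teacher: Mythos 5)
The paper states this lemma without providing a proof, treating it as a standard compositionality fact. Your proposal supplies exactly the routine structural-induction argument one would expect: for terms, use that $\dom(S_1)=\dom(S_2)$ and the hypothesis at the variable case, then the functionality of $\sem{f}_{\interp}$ in the inductive step; for formulas, use the term lemma at atoms and the fact that $S_i$ commutes with the Boolean connectives. This is correct, matches the setting (the constraints $\cn(b)$ and $\rho(b)$ to which the lemma is applied are quantifier-free, so no capture issues arise), and is precisely the argument the authors evidently intended but omitted.
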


Since our constraints contain variables intended to represent points marked within the protocol, it will be important to ensure that our variable assignment maps the variables to the proper value for a given derivation.
\begin{definition}
    Let $\assn$ be a variable assignment and $D$ a derivation. We say that $\assn$ \emph{agrees with} $D$ if for any $\#s$, occuring such that $\emark{a}{e_1}{e_2}\#s \eval r\pnt$ occurs in $D$, we have that $\assn(y_{\#s}) = r\pnt$.
\end{definition}
If we assume that each mark within a protocol has a unique identifier, any derivation involving that protocol will always have an assignment that agrees with it.

    \begin{proposition}[Soundness]
  Let $\vset$ be a valuation set and suppose $b$ is disjoint and  $\cdot \vdash b : \tau$.
  If $D : b\eval_{\vset} \v$, then $\interp_{\vset}, \assn\vDash \cn(b)$ and $\sem{\rho(b)}_{\interp}^{\assn} = \unrd{\v}$ for any variable assignment $\assn$ that agrees with $D$.
  \label{prop:soundness}
\end{proposition}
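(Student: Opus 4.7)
The plan is to prove this by induction on the derivation $D : b \eval_\vset \v$, doing case analysis on the last evaluation rule used. Type preservation (\Cref{prop:typesoundness}) and \Cref{prop:pathtype} ensure we can apply the induction hypothesis to each subderivation with a well-typed subpath.

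For the base cases (\rname{E-Val}, \rname{E-Cake}, and variables that appear as paths), the constraint $\cn(b)$ is vacuously $\top$, and $\rho(b)$ is by definition the logical encoding of $\v$ (or of $\unrd{\v}$ for read-only values), so $\sem{\rho(b)}^{\assn}_{\interp_\vset} = \unrd{\v}$ follows directly from the interpretation of constants. For the simple inductive cases (\rname{E-Tup}, \rname{E-Ops}, \rname{E-Piece}, \rname{E-True}, \rname{E-False}, \rname{E-Assert}, \rname{E-EvalIntvl}, \rname{E-EvalPc}), we apply the IH to each subderivation $D_i : b_i \eval_\vset \v_i$ to obtain $\interp_\vset,\assn \vDash \cn(b_i)$ and $\sem{\rho(b_i)}^{\assn}_{\interp_\vset} = \unrd{\v_i}$, then verify each extra conjunct of $\cn(b)$ from the side conditions, and compute $\sem{\rho(b)}^{\assn}_{\interp_\vset}$ using the interpretations of $\pi_k$, $\cup$, $\val_a$, etc. For instance, \rname{E-EvalPc} requires $\sem{\val_a}_{\interp_\vset} = \vset_a$, which holds by definition of $\interp_\vset$; and \rname{E-Assert} uses that $\rho(b_1)$ is interpreted as $\mathsf{true}$ to satisfy the guard conjunct.

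Case \rname{E-Div} uses the interpretations $\sem{\tleft}([r_1,r_1']) = r_1$ and $\sem{\tright}([r_1,r_1']) = r_1'$, together with the side condition $r_1 \leq r_2 \leq r_1'$ from the premise, to discharge the inequality conjunct of $\cn(\ediv{b_1}{b_2})$. Case \rname{E-Mark} is where the hypothesis that $\assn$ agrees with $D$ is used: the rule produces $\emark{a}{b_1}{b_2}\#s \eval r$ because $\vset_a[r_1,r] = \sum_i r_i \vset_{a_i}(P_i)$, and agreement forces $\assn(y_{\#s}) = r$. Combined with the IH on $b_1, b_2$, this yields $\sem{\val_a([\tleft(\rho(b_1)),\rho(\emark{a}{b_1}{b_2}\#s)])}^{\assn}_{\interp_\vset} = \vset_a[r_1,r]$, which equals $\sem{\rho(b_2)}^{\assn}_{\interp_\vset}$, discharging the mark conjunct.

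The main obstacle is the \rname{E-Split} case, which requires bridging program-level substitution on subexpressions with logical substitution inside $\cn$ and $\rho$. Let $S$ be the program substitution $\{x_i \mapsto \v_i\}\{w_i \mapsto \v_{n+i},\, \rd{w_i} \mapsto \rd{\v_{n+i}}\}$, so the premise is $S(b_2) \eval_\vset \v$. Applying the IH to this subderivation gives $\interp_\vset,\assn \vDash \cn(S(b_2))$ and $\sem{\rho(S(b_2))}^{\assn}_{\interp_\vset} = \unrd{\v}$. By \Cref{prop:subswapf}, $\cn(S(b_2)) = \unrd{S}(\cn(b_2))$ and $\rho(S(b_2)) = \unrd{S}(\rho(b_2))$. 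It remains to show that $\unrd{S}$ agrees semantically with the logical substitution $T = \{x_i \mapsto \pi_i \rho(b_1)\}\{w_i \mapsto \pi_{i+n}\rho(b_1),\, \rd{w_i} \mapsto \pi_{i+n}\rho(b_1)\}$ appearing in the definitions of $\cn(b)$ and $\rho(b)$. Applying the IH to $b_1$ gives $\sem{\rho(b_1)}^{\assn}_{\interp_\vset} = (\unrd{\v_1},\ldots,\unrd{\v_{n+n'}})$, so $\sem{\pi_i \rho(b_1)}^{\assn}_{\interp_\vset} = \unrd{\v_i} = \sem{\rho(\v_i)}^{\assn}_{\interp_\vset}$ (and analogously for the affine and read-only bindings, using that $\rho(\rd{\v}) = \rho(\v)$). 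Thus $\unrd{S}$ and $T$ are interpreted the same, and \Cref{prop:semsub} transfers satisfaction of $\unrd{S}(\cn(b_2))$ to $T(\cn(b_2))$ and similarly for $\rho$. Together with the IH on $b_1$, this discharges $\cn(b) = \cn(b_1) \wedge T(\cn(b_2))$ and $\sem{\rho(b)}^{\assn}_{\interp_\vset} = \sem{T(\rho(b_2))}^{\assn}_{\interp_\vset} = \unrd{\v}$, completing the case.
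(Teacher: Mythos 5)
Your proposal is correct and matches the paper's proof essentially step for step: both proceed by induction on the derivation $D$, handle \rname{E-Mark} by invoking the hypothesis that $\assn$ agrees with $D$ to fix $\assn(y_{\#s})=r$, and discharge the key \rname{E-Split} case by applying the IH to the premise derivations and then chaining \cref{prop:subswapf} (to rewrite $\cn(S(b_2))$ and $\rho(S(b_2))$ as applications of $\unrd{S}$) with \cref{prop:semsub} (to replace $\unrd{S}$ by the logical substitution $T$ built from $\pi_i\rho(b_1)$, justified by the IH on $b_1$). The only cosmetic difference is your listing of \rname{E-True}/\rname{E-False} and variable base cases, which do not arise for closed paths under $\cdot\vdash b:\tau$; the paper's proof drops them accordingly, but omitting them is immaterial to correctness.
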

\begin{proof}
    This proceeds by induction on $D$.
    Suppose that $D : b \eval_{\vset} \v$ and suppose $\assn$ agrees with $D$.
    We omit $\vset$ as a subscript from here on. We break up our cases based on the last rule applied.

\begin{description}
      \item[\rname{Split}]
          Then $b = (\kwlet\ x_1,\ldots ,x_{n}, w_1,\ldots ,w_{n'} = \kwsplit\ b_1\ \kwin\ b_2)$. It must be that $b_1 \eval (\v_1 ,\ldots ,\v_{n+ n'})$ and $S(b_2) \eval \v$ where we set
          \[
              S = \{x_{i} \mapsto \v_{i} \mid 1 \leq i \leq n\}\{w_{i} \mapsto \v_{n  +i}, \rd{w_{i}} \mapsto \rd{\v_{n  +i}} \mid 1 \leq i \leq n'\}.
      \]
      Immediately by induction, $\interp, \assn \vDash \cn(b_1) $ and $\sem{\rho(b_1)}_{\interp}^{\assn} = (\unrd{\v_1} ,\ldots ,\unrd{\v_{n+ n'}})$. Also immediately by induction, $\interp,\assn \vDash \cn(S(b_2))$ and $\sem{\rho(S(b))}_{\interp}^{\assn} = \unrd{\v}$. By \Cref{prop:subswapf}, $\cn(S(b_2)) = \unrd{S}(\cn(b_2))$ and $\rho(S(b_2)) = \unrd{S}(\rho(b_2))$. Therefore $\interp,\assn \vDash\unrd{S}(\cn(b_2))$ and $\sem{\unrd{S}(\rho(b_2))}_{\interp}^{\assn} = \unrd{\v}$.

          Set
          \[
              S_{\rho} = \{x_{i} \mapsto \pi_{i}\rho(b_1) \mid 1 \leq i \leq n\}\{w_{i} \mapsto \pi_{n + i}\rho(b_1), \rd{w_{i}} \mapsto \pi_{n + i}\rho(b_1) \mid 1 \leq i \leq n'\}.
          \]
          As $\sem{\rho(b_1)}_{\interp}^{\assn} = (\unrd{\v_1} ,\ldots, \unrd{\v_{n + n'}})$, we have $\sem{\pi_{i}\rho(b_1)}_{\interp}^{\assn} = \unrd{\v_{i}}$. This means $S_{\rho}$ and $\unrd{S}$ are interpreted the same. Then by
          \Cref{prop:semsub}, $\interp,\assn \vDash_{\vset} S_{\rho}(\cn(b_2))$, and $\sem{S_{\rho}(\rho(b_2))}_{\interp}^{\assn} = \unrd{\v}$.
          Because $\cn(b) = \cn(b_1)\wedge S_{\rho}(\cn(b_2))$ and $\rho(b) = S_{\rho}(\rho(b_2))$, we have established this case.
      \item[\rname{Ops}]
          Then $b = o(b_1,\ldots ,b_{n})$, $b_{i}\eval \v_{i}$, $\rd{\sem{o}(\unrd{\v_1} ,\ldots ,\unrd{\v_{n}})} = \v$, $\rho(b) = \mathtt{o}(\rho(b_1),\ldots ,\rho(b_n))$, and $\cn(b) = \cn(b_1)\land \cdots \land \cn(b_{n})$. By induction, $\sem{\rho(b_{i})}_{\interp}^{\assn} = \unrd{\v_{i}}$ and $\interp,\assn \vDash \cn(b_{i})$.  Therefore $\interp,\assn\vDash \cn(b)$. Also,
          \[
              \sem{\rho(b)}_{\interp}^{\assn} = \sem{o}(\sem{\rho(b_1)}_{\interp}^{\assn} ,\ldots , \sem{\rho(b_n)}_{\interp}^{\assn} ) = \sem{o} (\unrd{\v_1},\ldots ,\unrd{\v_{n}}) = \unrd{\v}.
      \]
      \item[\rname{Assert}]
        Then $b = \pt{b_1}{b_2}$, and $b_1 \eval \mathsf{true}$, $b_2\eval \v$.
        By induction, $\sem{\rho(b_{1})}_{\interp}^{\assn} = \mathsf{true}$, $\interp,\assn \vDash \cn(b_1)$, $\sem{\rho(b_2)}_{\interp}^{\assn} = \unrd{\v}$, and $\interp,\assn \vDash \cn(b_2)$. This immediately gives us $\sem{\rho(b)}_{\interp}^{\assn} = \unrd{\v}$ and $\interp,\assn \vDash \cn(b)$.
      \item[\rname{Mark}]
          Then $b = \emark{a}{b_1}{b_2}$, and $b_1 \eval [r_1, r'_1]$ for some $r_1,r'_1\in [0,1]$, $b_2\eval \v_{a'}(P)$, with $V_{a}[r_1,r] = V_{a'}(P)$. Let $y =\rho(b)$.
          Since $\assn$ agrees with $D$, $\assn(y) = r\pnt$. This already establishes $\sem{\rho(b)}_{\interp}^{\assn} = \unrd{\v}$.

          By induction, $\sem{\rho(b_1)}_{\interp}^{\assn} = [r_1,r'_{1}]$, $\sem{\rho(b_2)}_{\interp}^{\assn} = V_{a'}(P)$, and both $\interp,\assn \vDash\cn(b_1)$ and $\interp,\assn\vDash\cn(b_2)$.
      As $V_{a}[r_1, r] = V_{a'}(P)$ and $\sem{\rho(b)}_{\interp}^{\assn} = r\pnt$, we have $\interp,\assn\vDash (\val_{a}[\tleft(\rho(b_1)), \rho(b)] = \rho(b_2))$.
          In culmination, $\interp,\assn\vDash\cn(b)$.

      \item[\rname{EvalPc}]
          Then $b = \eeval{a}{b'}$, and $b' \eval P_{i}[r_{i}, r'_{i}]$ and $\v = V_{a}(P_{i}[r_{i},r'_{i}])$. By induction, $\sem{\rho(b')}_{\interp}^{\assn} = P_{i}[r_{i},r'_{i}]$ and $\interp,\assn \vDash \cn(b')$. The former gives us $\sem{\rho(b)}_{\interp}^{\assn} = \unrd{\v}$. The latter gives us $\interp,\assn \vDash \cn(b)$ since $\cn(b) = \cn(b')$.

      \item[\rname{EvalIntvl}] The argument here is just a more simple verion of the argument for \rname{EvalPc}.

      \item[\rname{Div}]
          Then $b = \ediv{b_1}{b_2}$, and $b_1 \eval [r_1,r'_{1}]$, $b_2 \eval r_2$ with $r_1 \leq r_2 \leq r'_{1}$. By induction, $\sem{\rho(b_1)}_{\interp}^{\assn} = [r_1, r'_{1}]$, $\sem{\rho(b_2)}_{\interp}^{\assn} = r_2\pnt$, and both $\interp,\assn \vDash\cn(b_1)$ and $\interp,\assn\vDash \cn(b_2)$.  We can conclude with the former inductive statements that $\sem{\rho(b)}_{\interp}^{\assn} = ([r_1, r_{2}], [r_{2}, r'_{1}]) = \unrd{\v}$. Since $r_1 \leq r_2 \leq r'_{1}$, we also have $\interp,\assn \vDash \tleft(\rho(b_1))\leq \rho(b_2) \leq \tright(\rho(b_1))$ which we can use to conclude $\interp,\assn \vDash \cn(b)$.

      \item[\rname{Tup}] Then $b = (b_1,\ldots ,b_{n})$, $b_{i} \eval \v_{i}$, $\v = (\v_1 ,\ldots ,\v_{n})$, $\rho(b) = (\rho(b_1) ,\ldots ,\rho(b_{n}))$, and $\cn(b) = \cn(b_1) \land \cdots \land \cn(b_{n})$. By induction, $\sem{\rho(b_{i})}_{\interp}^{\assn} = \unrd{\v_{i}}$, and $\interp,\assn \vDash \cn(b_{i})$. Clearly $\interp,\assn \vDash \cn(b)$. Also,
          \[
              \sem{\rho(b)}_{\interp}^{\assn} = (\sem{\rho(b_{1})}_{\interp}^{\assn}, \sem{\rho(b_{n})}_{\interp}^{\assn}) = (\unrd{\v_1} ,\ldots ,\unrd{\v_{n}}) = \unrd{\v}.
          \]

      \item[\rname{Piece}] This is very similar to the case for \rname{Tup}.

    \end{description}
    \qed
\end{proof}

\begin{proposition}[Completeness]
    \label{prop:completeness}
    Let $\vset$ be a valuation set, and $\Gamma; \Delta \vdash b : \tau$. Let $\gamma\vDash \Gamma$ and $\delta \vDash \Delta$. Suppose $\interp_{\vset}, \assn \vDash \cn(\gamma;\delta (b))$. Then there is a derivation $D$ such that $\assn$ agrees with $D$ and $D : b \eval_{\vset} \v$ and $\unrd{\v} = \sem{\rho(\gamma;\delta (b))}_{\interp_{\vset}}^{\assn}$.
\end{proposition}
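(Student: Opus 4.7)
The plan is to proceed by induction on the typing derivation $\Gamma;\Delta \vdash b : \tau$, mirroring the structure of the soundness proof (\Cref{prop:soundness}) but running the implication in reverse. The induction hypothesis will give us, for each subexpression, an evaluation derivation whose structure we can assemble into a derivation $D$ of $b \eval_{\vset} \v$. At each step, we will read off the value $\v$ from $\sem{\rho(\gamma;\delta(b))}_{\interp_{\vset}}^{\assn}$ and use the fact that $\interp_{\vset},\assn \vDash \cn(\gamma;\delta(b))$ to verify the side conditions of the applicable evaluation rule.

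Most cases are routine. For value and variable cases, evaluation is immediate from $\gamma \vDash \Gamma$ and $\delta \vDash \Delta$, and the fact that $\cn$ of such a term is trivially satisfied. For tuples, operations, and piece formation, we decompose the conjunction $\cn(\gamma;\delta(b)) = \cn(\gamma;\delta(b_1))\wedge \cdots \wedge \cn(\gamma;\delta(b_n))$, apply the induction hypothesis to each $b_i$ to obtain derivations $D_i : b_i \eval \v_i$ with $\unrd{\v_i} = \sem{\rho(\gamma;\delta(b_i))}_{\interp_{\vset}}^{\assn}$, and assemble them using the appropriate evaluation rule. For $\kwdiv$, the extra conjunct $\tleft(\rho(\gamma;\delta(b_1)))\leq \rho(\gamma;\delta(b_2))\leq \tright(\rho(\gamma;\delta(b_1)))$ precisely supplies the side condition $r_1 \leq r_2 \leq r_1'$ for \rname{E-Div}. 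For $\kweval$, the constraint passes through unchanged, and we just apply the corresponding evaluation rule to the interpreted value. For $\kwassert$, the conjunct $\rho(\gamma;\delta(b_1)) = \mathsf{true}$ gives the required evaluation of the guard.

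The two cases requiring care are $\kwmark$ and $\kwsplit$. For $\kwmark_{a}(b_1, b_2)\#s$, after applying the induction hypothesis to $b_1$ and $b_2$ we use the conjunct $\val_{a}([\tleft(\rho(\gamma;\delta(b_1))), \rho(\gamma;\delta(\emark{a}{b_1}{b_2}\#s))]) = \rho(\gamma;\delta(b_2))$ together with the definition $\rho(\gamma;\delta(\emark{a}{b_1}{b_2}\#s)) = y_{\#s}$ to conclude that $\vset_a[r_1,\assn(y_{\#s})]$ equals the value on the right. This lets us instantiate \rname{E-Mark} with $r = \assn(y_{\#s})$, and by construction the resulting derivation $D$ has $\assn$ agreeing with it at the unique identifier $\#s$. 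For $\kwsplit$, we apply the induction hypothesis to $b_1$ to get $b_1 \eval (\v_1,\ldots,\v_{n+n'})$, then form the substitution $S = \{x_i \mapsto \v_i\}\{w_i \mapsto \v_{n+i}, \rd{w_i} \mapsto \rd{\v_{n+i}}\}$ and extend $\gamma,\delta$ to $\gamma',\delta'$ accordingly. We must then evaluate $\gamma';\delta'(b_2)$; to invoke induction we need $\cn(\gamma';\delta'(b_2))$ to be satisfied, which follows from the second conjunct of $\cn(\gamma;\delta(b))$ after applying \Cref{prop:subswapf} (to convert between $\cn$ of a substituted term and substituted $\cn$) and \Cref{prop:semsub} (to replace the logical substitution $S_{\rho}$ from the $\rho$-definition with $\unrd{S}$, using that the two are interpreted identically once we know $\sem{\rho(\gamma;\delta(b_1))}^{\assn}_{\interp_{\vset}} = (\unrd{\v_1},\ldots,\unrd{\v_{n+n'}})$).

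The main obstacle is precisely this bookkeeping for $\kwsplit$: keeping straight the different substitutions (the source-level $S$ binding program variables to values, the logical substitution $\unrd{S}$, and the projection-based substitution $S_{\rho}$ that appears in the definition of $\rho$ and $\cn$ for split), and ensuring their semantic equivalence so that the induction hypothesis can be applied to the body $b_2$ under $\gamma',\delta'$. Once that identification is made, the value $\v$ witnessing the conclusion is obtained from the inductive call on $b_2$, and $\unrd{\v} = \sem{\rho(\gamma;\delta(b))}^{\assn}_{\interp_{\vset}}$ follows by unfolding the definition of $\rho$ for $\kwsplit$ and reusing \Cref{prop:semsub}. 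Because each $y_{\#s}$ is unique across the program, the agreement condition for $\assn$ composes across subderivations without conflict.
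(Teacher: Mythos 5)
Your proposal is correct and follows essentially the same route as the paper: induction on the typing derivation, reading off the target value from $\sem{\rho(\gamma;\delta(b))}_{\interp_{\vset}}^{\assn}$, invoking \Cref{prop:subswapf} and \Cref{prop:semsub} to bridge the projection-based substitution in $\rho(\kwsplit\ \ldots)$ with the value substitution $\unrd{S}$, and using uniqueness of mark identifiers to get agreement of $\assn$ with the assembled derivation. The one detail left implicit in your sketch that the paper makes explicit is the use of \Cref{prop:partitiondelta} to split $\delta$ into sub-contexts before applying the induction hypothesis (and \Cref{prop:typesoundness} to know the intermediate values have the right shape), but these are routine and do not change the argument.
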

\begin{proof}
    We prove for any $\assn$ and any $\gamma \vDash \Gamma$, $\delta \vDash \Delta$ such that $\interp_{\vset},\assn \vDash \cn(\gamma;\delta (b))$, there is $D : b \eval \v$, $\unrd{\v} = \sem{\rho(\gamma;\delta b))}_{\interp}^{\assn}$ and $\assn$ agrees with $D$.
    This proceeds by induction on $\Gamma;\Delta \vdash b : \tau$. Suppose that $\interp, \assn\vDash_{\vset} \cn(\gamma;\delta(b))$. We break up our cases based on the structure of $b$.
    The cases for $\rname{Bool}$, $\rname{Real}$, $\rname{Point}$, $\rname{Piece}$, and $\rname{Cake}$ are obvious.
     \begin{description}
         \item[\rname{Var}] This is also obvious.
             Then $b = x$ and $x\in \var$, or $b = \rd{w}$ and $w \in \lvar$. The former is obvious so suppose the latter.
         Then $\Gamma(\rd{w}) = \rd{\v}$ and $\gamma;\delta(\rd{w}) \eval \rd{\v}$. Also, $\sem{\gamma;\delta(\rd{w})}_{\interp}^{\assn} = \sem{\rd{\v}}_{\interp}^{\assn} = \v = \unrd{\rd{\v}}$, which completes this case.
         \item[\rname{AffVar}]  This is obvious.
         \item[\rname{Split}] Then
             \begin{align*}
             \cn(\gamma;\delta (b)) = \cn(\gamma; \delta (b_1))\land S_{b}(\cn(\gamma ;\delta (b_2))) \quad \text{and}\quad \rho(\gamma;\delta (b)) = S_{b}(\rho(\gamma;\delta (b_2)))
             \end{align*}
             where we set
          \[
              S_{\rho} = \{x_{i} \mapsto \pi_{i}\rho(b_1) \mid 1 \leq i \leq n\}\{w_{i} \mapsto \pi_{n + i}\rho(b_1), \rd{w_{i}} \mapsto \pi_{n + i}\rho(b_1) \mid 1 \leq i \leq n'\}.
          \]
          We also have
          \begin{align*}
              &\Gamma;\Delta_1 \vdash b_1 : \ut_1 \times \cdots \times \ut_{n} \times \lint_{1} \times \cdots \times \lint_{n'} \\
              &\Gamma; x_1 : \ut_1 ,\ldots ,x_{n} : \ut_{n}, \rd{w_1} : \rd{\lint_{1}} ,\ldots ,\rd{w_{n'}} ; \Delta_2, w_1 : \lint_1 ,\ldots ,w_{n'} : \lint_{n'}\vdash b_2 : \tau.
          \end{align*}
          According to \Cref{prop:partitiondelta}, we can partition $\delta$ into $\delta_1$ and $\delta_2$ such that $\gamma;\delta_1 (b_1) = \gamma;\delta(b_1)$, $\gamma;\delta_2 (b_2) = \gamma;\delta(b_2)$ and $\delta_1 \vDash \Delta_1$, $\delta_2 \vDash \Delta_2$.
             By induction, there is $D_1$ agreeing with $\assn$ for which
             \[
                 D_1 : \gamma;\delta_1 (b_1) \eval \v',
             \]
             and $\unrd{\v'} = \sem{\rho(\gamma;\delta_1 (b_1))}_{\interp}^{\assn}$.
             As $\Gamma;\Delta_1 \vdash b_1 : \ut_1 \times \cdots \times \ut_{n}\times \lint_{1} \times \cdots \times \lint_{n}$,
             we have $\cdot \vdash \gamma;\delta_1 (b_1) : \ut_1 \times \cdots \times \ut_{n}\times \lint_{1} \times \cdots \times \lint_{n}$.
             by \Cref{prop:typesoundness} we have $\v' = (\v_1 ,\ldots ,\v_{n+ n'})$, where $\v_{i} \in \sem{\ut_{i}}$ for $1 \leq i \leq n$ and $\v_{n + i}\in \sem{\lint_{n'}}$ for $1 \leq i \leq n'$. This means $\sem{\pi_{i} \rho(\gamma;\delta_1 (b_1))}_{\assn}^{\interp} = \unrd{\v_{i}}$ for all $i$. By \Cref{prop:semsub},
             \begin{equation}
                 \interp,\assn \vDash S_{b'}(\cn(\gamma;\delta_2 (b_2))) \quad \text{and} \quad \sem{S_{b}(\rho(\gamma;\delta_2 (b_2)))}_{\interp}^{\assn} = \sem{S_{b'}(\rho(\gamma;\delta_2 (b_2)))}_{\interp}^{\assn}
                 \label{eq:complsplit1}
             \end{equation}
             where $S_{b'} = \{x_{i} \mapsto \v_{i} \mid 1 \leq i \leq n + n'\}$. If we set
             \[
                 S = \{x_{i} \mapsto \v_{i} \mid 1 \leq i \leq n\}\{w_{i} \mapsto \v_{n + i}\mid 1 \leq i \leq n'\} \{\rd{w_{i}} \mapsto \rd{\v_{n + i}} \mid 1 \leq i \leq n'\}
\]
             by \Cref{prop:subswapf},
             \begin{equation}
                 \interp,\assn \vDash \cn(S(\gamma;\delta_2 (b_2)) ) \quad \text{and}\quad \sem{\rho(S(\gamma;\delta_2 (b_2)))}_{\interp}^{\assn} =\sem{S_{b'}(\rho(\gamma;\delta_2 (b_2)))}_{\interp}^{\assn}.
                 \label{eq:complsplit2}
             \end{equation}
             If we let $\gamma' = \gamma \{x_{i} \mapsto \v_{i} \mid 1 \leq i\leq n\}\{\rd{w_{i}} \mapsto \rd{\v_{n + i}} \mid 1 \leq i\leq n'\}$ and $\delta' = \delta_2  \{w_{i} \mapsto \v_{n + i}\mid 1 \leq i \leq n'\}$, then
             \begin{equation}
               S(\gamma;\delta_2 (b_2)) = \gamma';\delta' (b_2).
               \label{eq:complsplit3}
             \end{equation}
             Evidently $\gamma' \vDash \Gamma, x_1 : \ut_{1},\ldots , x_{n} : \ut_{n}, \rd{w_{1}} : \rd{\lint_1},\ldots ,\rd{w_{n'}} : \rd{\lint_{n'}}$ and $\delta' \vDash \Delta_2, w_1 : \lint_1 ,\ldots ,w_{n'} : \lint_{n'}$.

             By induction, there is $D_2$ agreeing with $\assn$ such that $D_{2} : \gamma';\delta'(b_2)\eval \v $ and $\unrd{\v} = \sem{\rho(\gamma';\delta'( b_2))}_{\interp}^{\assn}$. By \Cref{eq:complsplit3}, we can conclude $D : \gamma;\delta (b) \eval \v$, with $D_1$ and $D_2$ being the premises of $D$. As $\assn$ agrees with both $D_1$ and $D_2$, $\assn$ also agrees with $D$.

     By combining \Cref{eq:complsplit1}, \Cref{eq:complsplit2}, and \Cref{eq:complsplit3}, we obtain that $\sem{\rho(\gamma;\delta (b))}_{\interp}^{\assn} = \sem{\rho(\gamma'; \delta' (b_2))}_{\interp}^{\assn}$. This means $\unrd{\v} = \sem{\rho(\gamma;\delta (b))}_{\interp}^{\assn}$, concluding this case.

 \item[\rname{Ops}] Then $\rho(b) = o(\rho(b_1),\ldots ,\rho(b_{n}))$, $\cn(b) = \cn(b_1) \land \cdots \land \cn(b_{n})$ and $o : \ut_1 ,\ldots ,\ut_{n} \to \tau$ so $\Gamma;\Delta_{i} \vdash b_{i} : \ut_{i}$, where $\Delta = \Delta_1,\ldots ,\Delta_{n}$. According to $\Cref{prop:partitiondelta}$, $\gamma;\delta_{i} (b_{i}) = \gamma;\delta(b_{i})$ and $\delta_{i} \vDash \Delta$. Then by induction, $D_{i} : b_{i} \eval \v_{i}$ and $\unrd{\v_{i}} = \sem{\gamma;\delta_{i}(b_{i})}_{\interp}^{\assn}$ and $\assn$ agrees with $D_{i}$. By \Cref{prop:typesoundness}, $\v_{i} \in \sem{\ut_{i}}$. This enables us to apply \rname{Ops} so $\gamma;\delta(b) \eval \sem{o}(\v_1 ,\ldots ,\v_{n})$.
     We also have $\sem{\gamma;\delta(\rho(b))}_{\interp}^{\assn} = \sem{o}(\sem{\gamma;\delta_{1}(b_1)}_{\interp}^{\assn},\ldots ,\sem{\gamma;\delta_{n}(b_n)}_{\interp}^{\assn})$.

 \item[\rname{Mark}] Then $\rho(b) = y$ for some $y\in \mathcal{Y}$, $\cn(b) = \cn(b_1) \land \cn(b_2) \land (\val_{a}([\tleft(\rho(b_1)), \rho(b)]) = \rho(b_2))$, and $\tau = \mathsf{Point}$, $\Gamma;\Delta_1 \vdash b_1 : \rd{\mathsf{Intvl}}$, $\Gamma;\Delta_2 \vdash b_2 : \mathsf{Vltn}$, where $\Delta = \Delta_1, \Delta_2$.
     According to \Cref{prop:partitiondelta}, $\gamma;\delta_1 (b_1) = \gamma;\delta(b_1)$, $\gamma;\delta_2(b_2) = \gamma;\delta(b_2)$, and $\delta_1 \vDash \Delta_1$, $\delta_2 \vDash \Delta_2$. By induction, $D_1 : \gamma;\delta_1 (b_1 ) \eval \v_1$, where $\unrd{\v_1} = \sem{\rho(\gamma;\delta_1(b_1))}_{\interp}^{\assn}$ and $\assn$ agrees with $D_1$. By \Cref{prop:typesoundness} $\v_1 = \rd{[r, r']}$ for some $r, r' \in \mathbb{R}$.
     Also by induction, $D_2 : \gamma;\delta_2(b_2) \eval \v_2$, where $\unrd{\v_2} = \sem{\gamma;\delta_2(\rho(b_2))}_{\interp}^{\assn}$ and $\assn$ agrees with $D_2$. By \Cref{prop:typesoundness}, $\v_2 = \val_{a'}(P)$ for some piece or interval $P$ and some agent $a'$.

     Now $\sem{\tleft(\rho(\gamma;\delta_1(b)))}_{\interp}^{\assn} = r\pnt$.
     As $\interp,\assn \vDash \cn(b)$, we have $V_{a}[r, \assn(y)] = V_{a'}(P)$. So using $D_1$, $D_2$, and $V_{a}[r, \assn(y)] = V_{a'}(P)$ as premises for \rname{Mark}, we can apply it to obtain $D: \gamma;\delta(b) \eval \assn(y)$ for which $\assn$ evidently agrees with $D$. Now $\unrd{\assn(y)} = \assn(y) = \sem{y}_{\interp}^{\assn}$, completing this case.

 \item[\rname{EvalPc}] Then $\Gamma;\Delta \vdash b : \mathsf{Vltn}$, and $\Gamma;\Delta \vdash b' : \mathsf{Piece}$. By induction, there is $D' : b' \eval \v'$ and $\unrd{\v'} = \sem{\rho(\gamma;\delta(b'))}_{\interp}^{\assn}$, and $\assn$ agrees with $D'$. By \Cref{prop:typesoundness}, $\v = \rd{P}$ for some piece $P$. We can apply \rname{EvalPc} to obtain $\gamma;\delta(b)\eval V_{a}(P)$.
     Now
     \begin{align*}
         \sem{\rho(\gamma;\delta(b))}_{\interp}^{\assn} = \sem{\val_{a}(\rho(\gamma;\delta(b')))}_{\interp}^{\assn} = V_{a}(\sem{\rho(\gamma;\delta(b'))}_{\interp}^{\assn}) = V_{a}(P) = \unrd{V_{a}(P)}.
     \end{align*}
     The argument for when $\Gamma;\Delta \vdash b' : \mathsf{Intvl}$ is nearly identical.

 \item[\rname{EvalIntvl}] The argument here is nearly identical to the argument for \rname{EvalPc}.
 \item[\rname{Div}] Then $\rho(b) = ([\tleft(\rho(b_1)), \rho(b_2)], [\rho(b_2), \tright(\rho(b_1))])$, $\cn(b) = \cn(b_1)\land \cn(b_2)\land \tleft(\rho(b_1))\leq  \rho(b_2) \leq \tright(\rho(b_1))$, $\Gamma;\Delta_1 \vdash b_1 : \mathsf{Intvl}$, $\Gamma;\Delta_2 \vdash b_2 : \mathsf{Point}$, and $\Delta = \Delta_1, \Delta_2$. By \Cref{prop:partitiondelta}, $\gamma;\delta_1(b_1) = \gamma;\delta(b_1)$ and $\gamma;\delta_2(b_2) = \gamma;\delta(b_2)$, and $\delta_1 \vDash \Delta_1$, $\delta_2 \vDash \Delta_2$. So by induction, $D_1 : \gamma;\delta_1 (b_1)\eval [r,r']$, $[r,r'] = \sem{\rho(\gamma;\delta_1(b_1))}_{\interp}^{\assn}$ and $\assn$ agrees with $D_1$. Also by induction, $D_2 :\gamma;\delta_2 (b_2) \eval r_2\pnt$, $r_2\pnt = \sem{\rho(\gamma;\delta_2(b_2))}_{\interp}^{\assn}$, and $\assn$ agrees with $D_2$.
     Thus, $\sem{\tleft(\rho(\gamma;\delta_1(b_1)))}_{\interp}^{\assn} = r_1$ and $\sem{\tright(\rho(\gamma;\delta_1(b_1)))}_{\interp}^{\assn} = r'_1$.
     Since $\interp,\assn \vDash \cn(b)$, we have $r_1 \leq r_2 \leq r'_{1}$. Therefore we can apply \rname{Div} with $D_1$ and $D_2$ as premises to obtain
     $D : \gamma;\delta(b) \eval ([r_1, r_2], [r_2, r'_{1}])$, and $\assn$ agrees with $D$ as it agreed with $D_1$ and $D_2$. We are done as $\sem{([\tleft(\rho(b_1)), \rho(b_2)], [\rho(b_2), \tright(\rho(b_1))])}_{\interp}^{\assn} = ([r_1, r_2], [r_2, r'_{1}]) =\unrd{([r_1, r_2], [r_2, r'_{1}])}$.

\item[\rname{Assert}] Then $\rho(b) = \rho(b_2)$, $\cn(b) = (\rho(b_1) = \kwtrue)\land \cn(b_1)\land \cn(b_2)$, and $\Gamma;\Delta_1 \vdash b_1 : \mathsf{Bool}$, $\Gamma;\Delta_2 \vdash b_2 : \tau$, where $\Delta = \Delta_1, \Delta_2$. According to \Cref{prop:partitiondelta}, $\gamma;\delta_1 (b_1) = \gamma;\delta(b_1)$, $\gamma;\delta_2 (b_2)=\gamma;\delta(b_2) $, and $\delta_1 \vDash \Delta_1$, $\delta_2 \vDash \Delta_2$. By induction, $D_1 : \gamma;\delta_1(b_1)\eval \kwtrue$, where $\assn$ agrees with $D_1$. Also by induction, $D_2 : \gamma;\delta_2(b_2) \eval \v$ where $\unrd{\v} = \sem{\rho(\gamma;\delta_2(b_2))}_{\interp}^{\assn}$, and $\assn$ agrees with $D_2$. This allows us to use \rname{Assert} to obtain $D : \gamma;\delta(b)\eval \v$, with premises $D_1$ and $D_2$. As $D$ is composed in this way, $\assn$ agrees with $D$. Since $\rho(b) = \rho(b_2)$, we are done with this case.

\item[\rname{Tup}] Then $b = (b_1 ,\ldots ,b_{n + n'})$, $\rho(b) = (\rho(b_1),\ldots ,\rho(b_{n+ n'}))$, $\cn(b) = \cn(b_1) \land \cdots \land \cn(b_{n + n'})$, and  $\Gamma ;\Delta_{i} \vdash b :  \ut_i$ for $1 \leq i \leq n$ and $\Gamma; \Delta_{n + i} \vdash b_{n + i} : \lint_{i}$ for $1 \leq i \leq n'$, and $\Delta = \Delta_1 ,\ldots ,\Delta_{n + n'}$. According to \Cref{prop:partitiondelta}, $\gamma;\delta_{i}(b_{i}) = \gamma;\delta(b_{i})$, and $\delta_{i} \vDash \Delta_{i}$ for $1 \leq i \leq n + n'$.
     By induction, $D_{i} : \gamma;\delta_{i}(b_{i})\eval \v_{i}$, where $\unrd{\v_{i}} = \sem{\rho(\gamma;\delta_{i}(b_{i}))}_{\interp}^{\assn}$, and $\assn$ agrees with $D_{i}$ for $1 \leq i \leq n + n'$. Using $D_1 ,\ldots ,D_{n + n'}$ as premises, we can apply \rname{Tup} to obtain $D : \gamma;\delta(b) \eval (\v_1 ,\ldots ,\v_{n + n'})$ where $\assn$ agrees with $D$. We have
     \begin{align*}
         \unrd{(\v_1 ,\ldots ,\v_{n + n'})} = (\unrd{\v_1} ,\ldots ,\unrd{\v_{n + n'}}) &= (\sem{\rho(\gamma;\delta_1(b_1))}_{\interp}^{\assn} ,\ldots ,\sem{\rho(\gamma;\delta_{n+ n'}(b_{n + n'}))}_{\interp}^{\assn}) \\
 &= \sem{(\rho(\gamma;\delta_1(b_1)) ,\ldots ,\rho(\gamma;\delta_{n+ n'}(b_{n + n'})))}_{\interp}^{\assn}\\
 &= \sem{\rho(\gamma;\delta((b_1 ,\ldots ,b_{n + n'})))}_{\interp}^{\assn},
     \end{align*}
     concluding this case.

 \item[\rname{Piece}] This is nearly identical to the tuple case.
\end{description}
\end{proof}

Soundness and completeness results lead to the following corollary.
\ccharacterization*

We now argue for the following theorem:
\solvingthm*
We first generalize to the envy-freeness formula to formulae that have mild restrictions.
\begin{definition}
    We say that a formula $F$ is a \emph{well-formed property} if $x : \mathtt{S}_{\tau} \vdash F : \text{Formula}$, and $F$ contains no point values.
\end{definition}
Here, $x$ stands in for what a protocol might possibly output. The point value restriction is not overly restrive, as most interesting properties do not reference constant points, however, our theory could be modified slightly to relax this restriction. \Cref{eq:envyfree} is easily seen to be a well-formed property.
We also require some mild assumptions on our expressions.
\begin{definition}
    We say that an expression $e$ is \emph{well-formed} if $e$ is closed, well-typed, disjoint, $e$ does not contain any point values (besides 0 and 1), and each mark subexpression contains a unique identifier.
\end{definition}
Similar to protocol properties, very few interesting protocols consider constant points (besides 0 and 1) within the cake.

Here we also generalise $e \vDash E(x)$ to arbitrary formulae.
Given an expression $\cdot \vdash e : \tau$ and a formula $F$ such that $x : \mathtt{s}_{\tau} \vdash F : \text{Formula}$, we say that $e \vDash F$ if for all valuation sets $\vset$, $e\eval_{\vset} \v$ implies $\interp_{\vset} \vDash F\fsub{x}{\lv}$. Now we state and prove our generalized theorem.
\begin{proposition}
    \label{prop:solvingthm}
    Suppose that $e$ is a well-formed expression and $\cdot \vdash e : \tau$. Let $F$ be a formula such that $x : \mathtt{S}_{\tau} \vdash F : \text{Formula}$.
    Then
    \begin{equation}
        \interp_{\vset} \vDash \bigwedge_{b \in B(e)} \forall \mathcal{Y}_{b}.(\cn(b)\Rightarrow F\fsub{x}{\rho(b)})
        \label{eq:solving}
    \end{equation}
    for all $\vset$ if and only if $e \vDash F$.
\end{proposition}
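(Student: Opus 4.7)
The plan is to derive the biconditional directly from the soundness and completeness of the constraint translation (\Cref{prop:soundness} and \Cref{prop:completeness}) together with the path decomposition equivalence (\Cref{prop:path}). Because the quantifier $\forall \mathcal{Y}_b$ in Formula~(\ref{eq:solving}) ranges exactly over the free $\mathcal{Y}$-variables of $\cn(b)$ and $\rho(b)$, the condition $\interp_\vset \vDash \forall \mathcal{Y}_b.(\cn(b)\Rightarrow F\fsub{x}{\rho(b)})$ is equivalent to: for every assignment $\assn$ with $\interp_\vset,\assn \vDash \cn(b)$, we also have $\interp_\vset,\assn \vDash F\fsub{x}{\rho(b)}$. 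I will work with this equivalent form throughout.

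For the forward direction, I fix $\vset$ and assume $e \eval_\vset \v$. By \Cref{prop:path} there is a path $b \in B(e)$ with a derivation $D : b \eval_\vset \v$, and \Cref{prop:pathtype} gives $\cdot \vdash b : \tau$. Well-formedness of $e$ ensures disjointness of $b$ and uniqueness of mark identifiers, so I can choose an assignment $\assn$ that agrees with $D$. Applying \Cref{prop:soundness} yields $\interp_\vset,\assn \vDash \cn(b)$ and $\sem{\rho(b)}_{\interp_\vset}^{\assn} = \unrd{\v} = \v$, where the last equality holds because the output type $\tau$ contains no read-only constructors. The hypothesis applied at this $\vset$ and $\assn$ then gives $\interp_\vset,\assn \vDash F\fsub{x}{\rho(b)}$, which, since $F$ has only $x$ free, is equivalent to $\interp_\vset \vDash F\fsub{x}{\lv}$, establishing $e \vDash F$. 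For the backward direction, I fix $\vset$, $b \in B(e)$, and $\assn$ with $\interp_\vset,\assn \vDash \cn(b)$. Applying \Cref{prop:completeness} (with empty $\gamma,\delta$ and using $\cdot \vdash b : \tau$) produces a derivation $D : b \eval_\vset \v$ such that $\assn$ agrees with $D$ and $\unrd{\v} = \sem{\rho(b)}_{\interp_\vset}^{\assn}$. \Cref{prop:path} lifts this to $e \eval_\vset \v$, whence $e \vDash F$ gives $\interp_\vset \vDash F\fsub{x}{\lv}$. Since $\sem{\lv}_{\interp_\vset}^{\assn} = \v = \unrd{\v} = \sem{\rho(b)}_{\interp_\vset}^{\assn}$ and $F$ has only $x$ free, we conclude $\interp_\vset,\assn \vDash F\fsub{x}{\rho(b)}$.

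The main obstacle is purely bookkeeping: aligning $\unrd{\v}$ with $\v$ (trivial because an allocation type contains no read-only constructors), aligning $\sem{\lv}^{\assn}_{\interp_\vset}$ with $\sem{\rho(b)}^{\assn}_{\interp_\vset}$ through the substitution $\fsub{x}{\cdot}$, and verifying that the free $\mathcal{Y}$-variables of $\cn(b)$ and $\rho(b)$ are captured by $\mathcal{Y}_b$ so that the universal quantifier can be unfolded into an 'all assignments' statement without ambiguity. All of the heavy lifting has already been done in \Cref{prop:soundness} and \Cref{prop:completeness}; the well-formedness of $e$ (no point constants beyond $0,1$, unique mark identifiers, disjointness) is precisely what makes these alignments routine.
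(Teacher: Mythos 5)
Your proof is correct and follows essentially the same route as the paper: forward direction via path decomposition plus Soundness (Proposition~\ref{prop:soundness}), backward direction via Completeness (Proposition~\ref{prop:completeness}), with the backward direction framed directly rather than by the paper's contrapositive. One small cosmetic inaccuracy: you claim $\unrd{\v} = \v$ \emph{because} the output type has no read-only constructors, but the proposition holds for arbitrary $\tau$ (including read-only sorts) and the equality is not needed anyway --- what the paper actually uses is that $\sem{\lv}_{\interp_{\vset}}^{\assn} = \unrd{\v} = \sem{\rho(b)}_{\interp_{\vset}}^{\assn}$ together with \Cref{prop:semsub}, which you do gesture at in your final remark.
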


\begin{proof}
    ($\Rightarrow$)
    Let $\interp = \interp_{\vset}$.
    Suppose that $D: e \eval_{\vset} \v$. By \Cref{prop:pathtype} and \Cref{prop:path}, there is $b \in B(e)$ such that $\vdash b : \tau$ and $D: b\eval_{\vset} \v$. By \Cref{prop:soundness}, for any $\assn$ that agrees with $D$, $\interp,\assn \vDash c(b)$ and $\sem{\rho(b)}_{\interp}^{\assn} = \unrd{\v}$. By \Cref{eq:solving},  $\interp_{\vset},\assn \vDash F\fsub{x}{\rho(b)}$.
    By \Cref{prop:semsub}, $\interp_{\vset},\assn \vDash F\fsub{x}{\lv}$. $F\fsub{x}{\v}$ is a closed formula so $\interp_{\vset} \vDash F\fsub{x}{\lv}$.

    ($\Leftarrow$)
    Suppose \Cref{eq:solving} does not hold. Then there is some interpretation $\interp$ and variable assignment $\assn$ such that $\interp,\assn \vDash \cn(b)$ yet $\interp,\assn \not\vDash F\fsub{x}{\rho(b)}$. Then by \Cref{prop:completeness}, $b \eval_{\vset} \v$, where $\unrd{\v} = \sem{\rho(b)}_{\interp}^{\assn}$ for some $b \in B(e)$. By \Cref{prop:path}, $e \eval_{\vset} \v$. By \Cref{prop:semsub}, we have that $\interp,\assn \not\vDash F\fsub{x}{\lv}$, and since $F\fsub{x}{\lv}$ is closed, $\interp \not\vDash F\fsub{x}{\lv}$, which completes the proof.
\end{proof}

\subsection{Protocol execution replication}
\label{app:repl}
The goal of this subsection is to prove the following theorems:
\pured*
\pufequiv*
We first become more precise about derivations and points considered within them.
First recall our \hyperlink{dnotation}{derivation notation}. Briefly, given a derivation of an evaluation judgement $D$, we write $D : e \eval_{\vset} \v$ if the conclusion of $D$ is $e \eval_{\vset}\v$.
Define \[M(D) \triangleq \{r \mid r\pnt\ \text{is a subexpression of}\ \v, \v\ \text{appears in}\ D\}.\]
\Cref{thm:red} can now be expressed as:
\begin{theorem}
    Let $\uset$ and $\vset$ be a valuation set, and suppose $D: e\eval_{\vset} \v$. If $\uset$ and $\vset$ agree on $M(D)$, then $D: e \eval_{\uset} \v$.
\end{theorem}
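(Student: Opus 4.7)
The plan is to induct on the derivation $D$ of $e \eval_{\vset} \v$. First I would establish the elementary monotonicity lemma that for any subderivation $D'$ appearing as a premise of $D$, $M(D') \subseteq M(D)$, simply because every value occurring in $D'$ also occurs in $D$. Hence agreement of $\uset$ and $\vset$ on $M(D)$ immediately implies agreement on $M(D')$, so the inductive hypothesis applies uniformly to every premise.

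The rules whose semantic side conditions do not reference the valuations---\rname{E-Val}, \rname{E-Cake}, \rname{E-Tup}, \rname{E-True}, \rname{E-False}, \rname{E-Split}, \rname{E-Div}, \rname{E-Piece}, together with the instances of \rname{E-Ops} whose primitives (such as $+$) do not consult $\vset$---are dispatched by applying induction to each premise and then reapplying the same inference rule. Their non-valuation side conditions (like the endpoint containment in \rname{E-Div}) are purely arithmetic and transfer verbatim.

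The interesting cases are \rname{E-Mark}, \rname{E-EvalPc}, \rname{E-EvalIntvl}, and any instance of \rname{E-Ops} whose primitive compares valuations. For \rname{E-Mark}, the third premise $\vset_a([r_1, r]) = \sum_i r_i \vset_{a_i}(P_i)$ must be re-derived under $\uset$. The points $r_1$ and $r$ lie in $M(D)$ because $\rd{[r_1, r_1']}$ appears as a value in $D$ and $r$ itself is the value at the conclusion; every boundary point of every $P_i$ lies in $M(D)$ because the value $\sum_i r_i V_{a_i}(P_i)$ appears in $D$ and syntactically contains the intervals composing each $P_i$. \Cref{def:repl} then yields $\uset_a([r_1, r]) = \vset_a([r_1, r])$ and $\uset_{a_i}(P_i) = \vset_{a_i}(P_i)$ for all $i$, so the equation holds under $\uset$ and \rname{E-Mark} can be reapplied. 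The cases \rname{E-EvalPc} and \rname{E-EvalIntvl} are analogous but simpler: the conclusion evaluates to $\vset_a(P)$ for a piece or interval $P$ whose boundary points appear in $D$, and agreement on those points forces $\uset_a(P) = \vset_a(P)$. For the valuation-comparison instances of \rname{E-Ops}, the arguments have the form $\sum_i r_i V_{a_i}(P_i)$ with each $P_i$ having boundary points in $M(D)$, so $\sem{o}$ returns the same result under either valuation set.

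The main obstacle is purely bookkeeping: verifying that every point implicitly used in a semantic side condition is actually captured by $M(D)$. In particular, in \rname{E-Mark} one must track not just the chosen point $r$ but also the endpoints of every piece hiding inside the linear-combination value produced by $e_2$; once that invariant is pinned down the induction is routine. The companion statement \Cref{thm:pufequiv} should then follow by a structurally analogous and easier induction on formula syntax, since the interpretation of any term, and the satisfaction of any predicate, decompose into finitely many applications of $\val_a$ to pieces whose boundary points already appear in the formula.
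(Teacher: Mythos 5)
Your proof takes essentially the same approach as the paper: induction on the derivation $D$, with the query rules as the only nontrivial cases, discharged by observing that every boundary point relevant to the semantic side condition lies in $M(D)$ and then invoking agreement. You are in fact somewhat more explicit than the paper's argument, which in the \rname{E-Mark} case remarks only that $r_1, r \in M(D)$ and does not spell out that the boundary points of the pieces $P_i$ hiding inside the valuation value produced by $e_2$ also lie in $M(D)$ (nor does it display the valuation-comparison operations, though it counts four interesting cases while showing only three).
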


\begin{proof}
    This goes by induction on $D$. We break it up into cases based on the last rule applied, of which there are 4 interesting cases.
     \begin{description}
         \item[\rname{E-EvalPc}]

             Here $e = \kweval(e')$ for some $e'$, $D': e'\eval_{\vset}P_{i = 1 }^{n}[r_{i},r'_{i}]$, and $\v = \vset_{a}(P_{i=1}^{n}[r_i,r'_i])$. Now $r_{i}, r'_{i}\in M(D)$ for all $i$ so $\cup_{i = 1}^{n}[r_{i},r'_{i}]$ has all boundary points in $M(D)$. Therefore, since $\uset$ and $\vset$ agree on $M(D)$, ${\uset}_{a}(\cup_{i =1}^{n}[r_{i},r'_{i}]) = \vset_{a}(\cup_{i=1}^{n}[r_{i},r'_{i}])$.
             By induction $D' : e'\eval_{\uset}P_{i= 1}^{n}[r_{i},r'_{i}]$, so
             we can apply \rname{E-EvalPc} to obtain $D : e\eval_{U_{\vset}}\v$.
         \item[\rname{E-EvalIntvl}] This argument is nearly identical to the argument for \rname{E-EvalPc}.
         \item[\rname{E-Mark}]
             Here $e = \kwmark_{a}(e_1,e_2)$ for some $e_1$ and $e_2$, $D_1 : e_1\eval_{\vset}[r_1,r_2]$, $D_2 : e_2\eval_{\vset}\v_{2}$, $\vset_{a}[r_1,r] = \v_{2}$ with $\v = r\pnt$. Now $r_1,r \in M(D)$. Since $\uset$ and $\vset$ agree on $M(D)$, ${{\uset}}_{a} [r_1,r] = \v_2$. By induction, $D_1 : e_1\eval_{{\uset}}[r_1,r_2]$ and $D_2: e_2\eval_{{\uset}}\v_2$, so applying \rname{E-Mark}, we obtain $D : e\eval_{{\uset}}\v$. \qed
    \end{description}
\end{proof}

We restate \Cref{thm:pufequiv} in the following way:
\begin{theorem}
    Suppose that $\yctx \vdash \varphi : \text{Formula}$. Let $\vset$ and $\uset$ be valuation sets and $\assn$ be a variable assignment. Suppose $\uset$ and $\vset$ agree on $\{ \sem{t}_{\interp_{\vset}}^{\assn}\mid \yctx \vdash t : \mathtt{Point}, t\ \text{occurs in}\ \varphi\}$. Then
    $\interp_{\vset},\assn \vDash \varphi \iff \interp_{\uset},\assn\vDash \varphi$.
    \label{thm:pufequivfull}
\end{theorem}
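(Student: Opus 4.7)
The plan is to prove the statement by structural induction on $\varphi$, with a supporting lemma about terms. Specifically, I first prove:

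\smallskip
\noindent\textbf{Term lemma.} If $\yctx \vdash t : \mathtt{s}$ and $\uset$, $\vset$ agree on $\{\sem{t'}_{\interp_\vset}^{\assn} \mid \yctx \vdash t' : \mathtt{Point},\ t'\ \text{occurs in}\ t\}$, then $\sem{t}_{\interp_\vset}^{\assn} = \sem{t}_{\interp_\uset}^{\assn}$.

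\smallskip
I would prove this by induction on the sorting derivation of $t$. Every constructor except $\val_a$ is interpreted identically in $\interp_\vset$ and $\interp_\uset$, so those cases follow by applying the inductive hypothesis to subterms (which have fewer point-sorted subterms, all of which are also subterms of $t$). The only nontrivial case is $t = \val_a(t')$ where $t'$ is of sort $\mathtt{Intvl}$ or $\mathtt{Piece}$. By induction, $\sem{t'}_{\interp_\vset}^{\assn} = \sem{t'}_{\interp_\uset}^{\assn}$; call this piece $P$. The main observation is that every boundary point of $P$ arises as the interpretation of some point-sorted subterm of $t'$: indeed, by the sorting rules, an interval term is either a variable/constant of interval sort or is constructed as $[t_1,t_2]$ with $t_1,t_2$ of point sort, and a piece term is built as $\cup(s_1,\ldots,s_n)$ from interval subterms. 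A straightforward sub-induction shows $\partial P \subseteq \{\sem{t''}_{\interp_\vset}^{\assn} \mid \yctx \vdash t'' : \mathtt{Point},\ t''\ \text{occurs in}\ t'\}$, so $\partial P$ is contained in the agreement set. Since $\uset$ and $\vset$ agree on this set, $\uset_a(P) = \vset_a(P)$ by \Cref{def:repl}, giving $\sem{t}_{\interp_\vset}^{\assn} = \sem{t}_{\interp_\uset}^{\assn}$.

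\smallskip
With the term lemma in hand, I would prove the formula statement by induction on the structure of $\varphi$. Logical connectives and quantifiers over $\mathcal{Y}$ (if present) follow immediately from the inductive hypothesis, noting for the quantifier case that shifting $\assn$ does not affect agreement of $\uset$ and $\vset$ on a fixed set of real points. The atomic cases are of the form $t_1 \geqq t_2$ or $t_1 = \mathtt{true}$; each point-sorted subterm of $t_1$ or $t_2$ is by definition a point-sorted subterm of $\varphi$, so the hypothesis on $\varphi$ supplies the hypothesis needed to invoke the term lemma on $t_1$ and $t_2$. The predicate symbols $=$, $\geq$ are interpreted identically in both interpretations, so the semantic value of the atomic formula agrees.

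\smallskip
The main obstacle is the $\val_a$ term case: it requires linking syntactic boundary points (endpoints appearing as subterms) to the semantic boundary of the interpreted piece, so that agreement on point-valued subterms suffices to invoke \Cref{def:repl}. This is purely a syntactic bookkeeping argument over the interval/piece term constructors, but it must be stated carefully because $\val_a(t')$ itself is not of point sort, so the hypothesis on $\val_a(t')$ only controls the interpretations of point subterms of $t'$, not $\val_a(t')$ directly. Everything else is routine structural induction.
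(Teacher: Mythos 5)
Your proof is correct in substance but takes a noticeably different route from the paper's. The paper does not do a direct structural induction on raw terms; instead, its first move is to invoke \Cref{prop:simpl} and replace $\varphi$ by its simplification $\simpl(\varphi)$. After simplification, \Cref{prop:relationsimpl} guarantees that every argument of a $\val_a$ symbol is literally of the form $[t_1,t_2]$ or $\cup[t_1,t_1'],\ldots,[t_n,t_n']$ with each $t_i \in R_{\mathtt{Point}}$ a point variable or point constant; the claim that the interpreted piece has boundary points in $M$ is then immediate by inspection, and the projection, $\tleft/\tright$, and tuple cases never appear. You instead prove a term-level agreement lemma directly on unsimplified terms, which means your ``straightforward sub-induction'' genuinely has to thread through tuple formation and $\pi_n$ (where the piece interpretation of $\pi_n t$ is a component of the interpretation of the tuple $t$, so the relevant point subterms live deeper inside $t$) and through the constant $\kwcake$ (whose interpretation $[0,1]$ has boundary points $\{0,1\}$ that are \emph{not} the interpretation of any point-sorted subterm; this only goes through because the definition of ``agree on $M$'' presupposes $\{0,1\}\subseteq M$). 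You flagged the $\val_a$ case as the crux, which is right, but the assertion ``every boundary point of $P$ arises as the interpretation of some point-sorted subterm of $t'$'' is not literally true for $\kwcake$ and needs the $\{0,1\}\subseteq M$ convention to be stated. The trade-off between the two approaches: the paper's is shorter because all this syntactic bookkeeping is amortized into the simplification machinery already developed for the reduction to linear arithmetic, while yours is more self-contained and does not depend on the correctness of $\simpl$ or on $\simpl$ being a semantics-preserving rewrite.
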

\begin{proof}
    By \Cref{prop:simpl}, it suffices to check this on simplified formulas. So assume that $\varphi$ is simplified. This argument proceeds by induction on the structure of $\varphi$.
    Set $M = \{\sem{t}_{\interp_{\vset}}^{\assn}\mid \yctx \vdash t : \mathtt{Point}, t\ \text{occurs in}\ \varphi\}$.
    The whole argument boils down to showing $\sem{\val_{a}(t)}_{\interp_{\vset}}^{\assn} = \sem{\val_{a}(t)}_{\interp_{U_{\vset}}}^{\assn}$.

    Let $\val_{a}(t)$ be a term contained in $\varphi$. Then either $t = [t_1, t'_1]$ for some $t_1,t'_1 \in R_{\mathtt{Point}}$ or $t = \cup ([t_1,t_1'] ,\ldots ,[t_{n},t'_{n}])$ for some $t_{i},t'_{i} \in R_{\mathtt{Point}}$.
    We note that $\sem{t_{i}}_{\interp_{\vset}}^{\assn} = \sem{t_{i}}^{\assn} = \sem{t_{i}}_{\interp_{\uset}}^{\assn}$ for all $i$, which also gives $\sem{t}_{\interp_{\vset}}^{\assn} = \sem{t}^{\assn} = \sem{t}_{\interp_{\uset}}^{\assn}$.
    Since $\uset$ and $\vset$ agree on $M$, we have that $\uset_{a}(\sem{t}^{\assn}) = \vset_{a}(\sem{t}^{\assn})$ as $\sem{t}^{\assn}$ has boundary points in $M$. This means $\sem{\val_{a}(t)}_{\interp_{\vset}}^{\assn} = \sem{\val_{a}(t)}_{\interp_{U_{\vset}}}^{\assn}$.
    \qed
\end{proof}

\subsection{Piecewise uniform valuations}
\label{app:pu}
Here we discuss the arguments for \hyperlink{def:pu}{piecewise uniform valuations}.
In this section we prove the following theorem.
\puexists*

We first start by introducing a form of a valuation and argue some facts about it. Following this, we consider a whole valuation set of this form.

Given an arbitrary valuation $V$ and a finite set of points $\{0,1\} \subseteq M$, we can have a piecewise uniform valuation replicate $V$ on pieces whose endpoints are in $M$.

Let $M = \{m_1,\ldots ,m_{k}\}$ be given such that $ 0= m_0 \leq m_1 < \cdots < m_{k}= 1$. Now let $V$ be any valuation on $[l,r]$, and define
\[
    \maxdens{V}{M} = \max_{0 \leq i < k} V[m_{i},m_{i+1}]/(m_{i + 1} - m_{i}).
\]
This quantity is referred to as the \emph{max density of $V$ on $M$}. It represents the largest density of $V$ on adjacent points in $M$. Recall that a piece uniquely determines a piecewise uniform valuation on that piece.
\begin{definition}
    Let $d \geq \maxdens{V}{M}$. We let $U_{V}(M,d)$ be the piecewise uniform valuation determined by the piece
    \[
        P = \bigcup _{i = 1}^{k} [m_{i} - V[m_{i - 1}, m_{i}]/d, m_{i} ] = [m_{1} - V[0, m_{1}]/d, m_{1} ] \cup \cdots \cup [m_{k} - V[m_{k - 1}, m_{k}]/d, m_{k} ].
    \]
    That is, $U_{V}(M,d) = U_{P}$, in the notation below \Cref{def:pu}.
\end{definition}
$U_{V}(M,d)$ is not well-defined for $ d< \maxdens{V}{M}$. For if this is the case, some intervals in the piece above would overlap further than their endpoints.

We first aim to show this valuation \hyperlink{def:repl}{agrees with $V$ on $M$}. To do so, we state a basic fact about $U_{V}(M,d)$.
\begin{lemma}
    $c(U_{V}(M,d)) = d$.
    \label{prop:uconst}
\end{lemma}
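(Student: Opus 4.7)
The plan is to compute $c(U_V(M,d))$ directly from the definition of piecewise uniform valuations. Recall from \Cref{def:pu} that for any piecewise uniform valuation $U = U_P$, the constant $c(U)$ is the reciprocal of the length of $P(U) = P$ (this was noted just after \Cref{def:pu} as a consequence of normalization). So it suffices to show that the total length of
\[
    P = \bigcup_{i=1}^{k}\bigl[m_i - V[m_{i-1},m_i]/d,\ m_i\bigr]
\]
equals $1/d$.

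First I would check that the intervals in this union are well-defined and (essentially) disjoint: the condition $d \geq \maxdens{V}{M}$ ensures $V[m_{i-1},m_i]/d \leq m_i - m_{i-1}$ for every $i$, so the left endpoint $m_i - V[m_{i-1},m_i]/d$ lies in $[m_{i-1}, m_i]$. Hence the intervals only possibly meet at isolated points and their lengths may simply be summed. The total length is then
\[
    \sum_{i=1}^{k} \bigl(m_i - (m_i - V[m_{i-1},m_i]/d)\bigr)
    = \frac{1}{d}\sum_{i=1}^{k} V[m_{i-1},m_i].
\]

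Second, using additivity of $V$ on the disjoint union $[0,1] = \bigcup_{i=1}^{k}[m_{i-1},m_i]$ together with normalization $V[0,1] = 1$, the sum telescopes to $V[0,1] = 1$. Therefore the length of $P$ equals $1/d$, and so $c(U_V(M,d)) = d$.

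There is no real obstacle here; the argument is a short direct computation, with the only subtle point being the appeal to $d \geq \maxdens{V}{M}$ to guarantee the pieces of $P$ are disjoint (up to boundary) so that their lengths add without overcounting.
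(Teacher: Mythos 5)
Your proof is correct and follows essentially the same route as the paper: both compute the total length of the defining piece $P$ as $\frac{1}{d}\sum_{i} V[m_{i-1},m_i] = 1/d$ via additivity and normalization of $V$. You package this through the reciprocal-of-length characterization of $c(U_P)$ (stated just after Definition~\ref{def:pu}), whereas the paper unfolds the valuation formula on $[m_0,m_k]$ and solves $V[l,r] = cV[l,r]/d$, but the underlying calculation is the same.
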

\begin{proof}
 Let $c$ be the constant associated with $U_{V}(M,d)$. Then
\begin{align*}
    V[l,r] &= U_{V}(M,d)[l, r] \\
                        &= U_{V}(M,d)[m_{0}, m_{k}]\\
                        &= c\sum_{k \geq i > 0}m_{i} - l_{i}\\
                        &= c\sum_{k \geq i  > 0}(m_{i} - (m_{i} - V[m_{i-1}, m_{i}]/d)\\
                        &= c\sum_{k \geq i > 0}V[m_{i-1},m_{i}] /d \\
                        &= cV[m_{0},m_{k}] /d \\
                        &= cV[l,r] /d
\end{align*}
which implies that $c = d$.
\qed
\end{proof}

\begin{restatable}[]{lemma}{pup}
     Let $P$ be any piece with boundary points entirely within $M$ then
\begin{equation}
    U_{V}(M,d)(P) = V(P).
\end{equation}
    \label{prop:puequiv}
\end{restatable}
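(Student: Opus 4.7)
The plan is to reduce the claim to the case of intervals of the form $[m_{i-1}, m_i]$ between consecutive points of $M$, and then apply the explicit formula for piecewise uniform valuations given by Equation~(\ref{eq:pusum}).

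First, I would observe that by additivity of both $U_V(M,d)$ and $V$, any piece $P$ with boundary points entirely in $M$ can be decomposed as a finite disjoint union of intervals of the form $[m_{j-1}, m_j]$ with consecutive $j$, together with empty pieces. Hence it suffices to show $U_V(M,d)[m_i, m_{i'}] = V[m_i, m_{i'}]$ for every $0 \leq i \leq i' \leq k$; when $i = i'$ both sides are zero.

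Next, I would unfold the definition. By construction, $P(U_V(M,d)) = \bigcup_{j=1}^{k} [l_j, r_j]$ where $r_j = m_j$ and $l_j = m_j - V[m_{j-1}, m_j]/d$, and by \Cref{prop:uconst} we have $c(U_V(M,d)) = d$. Because $d \geq \maxdens{V}{M}$, we have $l_j \geq m_{j-1}$, so the intervals $[l_j, r_j]$ are nested inside the successive $[m_{j-1}, m_j]$ and in particular are pairwise disjoint; this ensures the right endpoints $m_1, \ldots, m_k$ are exactly the right endpoints required by the hypothesis of \Cref{eq:pusum}. Applying that formula yields
\[
U_V(M,d)[m_i, m_{i'}]
= d \cdot \sum_{i' \geq j > i} (r_j - l_j)
= d \cdot \sum_{i' \geq j > i} \frac{V[m_{j-1}, m_j]}{d}
= \sum_{i' \geq j > i} V[m_{j-1}, m_j].
\]
By additivity of $V$, the right-hand side equals $V[m_i, m_{i'}]$, which closes the case and hence the whole proof after summing over the intervals of the decomposition of $P$.

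The argument is essentially a direct calculation, and there is no real obstacle; the only subtlety to call out is the need for $d \geq \maxdens{V}{M}$ to guarantee that $l_j \geq m_{j-1}$ so that the intervals defining $P(U_V(M,d))$ really are the disjoint intervals with right endpoints $m_j$ assumed by \Cref{eq:pusum}. Without that condition the piece would be ill-formed and the formula would not apply.
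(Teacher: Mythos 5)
Your proof is correct and follows essentially the same route as the paper's: reduce to intervals with endpoints in $M$ by additivity and then evaluate the piecewise-uniform valuation explicitly from its density. The only cosmetic difference is that you invoke \Cref{eq:pusum} as a black box, while the paper verifies the single-interval case $U[m_{j-1},m_j]=V[m_{j-1},m_j]$ by direct substitution and then sums; both amount to the same computation.
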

\begin{proof}
    First, write $M = \{m_0, m_1,\ldots ,m_{k}\}$ where $0 = m_0 \leq m_1 < \cdots < m_{k} = 1$.
    We have for any $1 \leq i \leq k$,
\begin{align*}
    U_{W}(M,d)[m_{i-1}, m_{i}] &= d(m_{i} - (m_{i} - W[m_{i-1}, m_{i}]/d)) \\
                         &= d(W[m_{i-1}, m_{i}]/d) \\
                         &= W[m_{i-1}, m_{i+1}]
\end{align*}
and since $[m_{i}, m_{i'}]$ is the disjoint union of $[m_{j-1}, m_{j}]$ for $i < j \leq i'$, we have by disjointness,
\[
    U_{W}(M,d)[m_{i}, m_{i'}] = W[m_{i},m_{i'}]
\]
for any $0 \leq i \leq i' \leq k$. Since $P$ has boundary points entirely within $M$, we can write $P = [m_{i_1}, m'_{i_1}] \cup \cdots \cup [m_{i_{n}}, m'_{i_{n}}]$, for $m_{i_1}\leq m'_{i_{1}}\leq \cdots \leq m_{i_{n}} \leq m'_{i_{n}}$. This means
\begin{align*}
    U_{W}(M,d)(P) &= U_{W}(M,d)([m_{i_1}, m'_{i_1}] \cup \cdots \cup [m_{i_{n}}, m'_{i_{n}}]) \\
                  &= U_{W}(M,d)[m_{i_1}, m'_{i_1}] + \cdots + U_{W}(M,d)[m_{i_{n}}, m'_{i_{n}}] \\
                  &= W[m_{i_1}, m'_{i_1}] + \cdots + W[m_{i_{n}}, m'_{i_{n}}]\\
                  &= W([m_{i_1}, m'_{i_1}] \cup \cdots \cup [m_{i_{n}}, m'_{i_{n}}])\\
                  &= W(P).
\end{align*}
\qed
\end{proof}

We now extend both the above definition and lemma to valuation sets. We define for $\vset$ a valuation set, $\maxdens{\vset}{M} \triangleq \max_{a \in \agents}\maxdens{\vset_{a}}{M}$.
\begin{definition}
    Let $\vset$ be a valuation set, $M \supseteq \{0,1\}$ a set of points, and $d \geq \maxdens{\vset}{M}$. Then $U_{\vset}(M,d)$ is the valuation set with $U_{\vset}(M,d)_{a} \triangleq U_{\vset_{a}}(M,d)$ for all $a \in \agents$.
\end{definition}
Let $d \geq \maxdens{\vset}{M}$. In particular, this means for each $a \in \agents$, $d \geq \maxdens{\vset_{a}}{M}$. Thus,
by \Cref{prop:puequiv}, $U_{\vset}(M,d)$ and $\vset$ agree on $M$.
We now show that $U_{\vset}(M,d)$ is \hyperlink{def:er}{easily replaceable} on $M$.
If for each $m \in \noz{M}$, we set $l_{a}(m_{i}) \triangleq m_{i} - V[m_{i - 1}, m_{i}]/d$, then
\[P(U_{\vset}(M,d)_{a}) = \bigcup_{m \in \noz{M}}[l_{a}(m), m], \]
which satisfies condition (1). To satisfy condition (2), we note that by \Cref{prop:uconst}, $c(U_{\vset}(M,d)_{a}) = d$ for all $a$. This shows that $U_{\vset}(M,d)$ is easily replaceable on $M$.
\subsection{Formula simplification}
\label{app:simpl}
Before proving any theorems about simplification, we write the definition for $\pnt(t)$ and $\pnt(f)$ more carefully.
    \begin{definition}
        Let $t \in R_{\mathtt{Intvl}}$. Then $\pnt(t) = \{t_1, t_2\}$, where $t = [t_1, t_2]$ and $t_1, t_2 \in R_{\mathtt{Intvl}}$.
        Let $t \in R_{\mathtt{Piece}}$. Then
         \begin{align*}
            \pnt(t) \triangleq \pnt(t_1)\cup \cdots \cup \pnt(t_{n})
        \end{align*}
        where $t = \cup t_1 ,\ldots ,t_{n}$ for $t_{i} \in R_{\mathtt{Intvl}}$.
        Let $t\in R_{\mathtt{s}}$. We define $\pnt(t)$ as the union of all $\pnt(t')$ for $t'$ an interval or point in $t$. For a simplified formula $f$, we let $\pnt(f)$ denote the union of $\pnt(t)$ for all terms $t$ in $f$. We refer to the elements of $\pnt(f)$ as \emph{point atoms}.
    \end{definition}

We first aim to prove the following theorem:
\purepsound*
This requires the following lemma.
\begin{lemma}
    \label{prop:techsum}
    Suppose we have $\con{S}{\assn}{\uset}$.
    If $t \in R_{\mathtt{Piece}}$ and $\pnt(t)\subseteq S$, let $\underline{S}|_{t} = \{y \in S|_{t} \mid \forall y' \in S.\assn(y) = \assn(y') \Rightarrow y \leq_{S} y' \}$. Then
    \[
        \uset_{a}(\sem{t}^{\assn}) = c(\uset_{a})\sum_{y \in \underline{S}|_{t}} \assn(y) - \assn(z_{a,y}).
    \]
\end{lemma}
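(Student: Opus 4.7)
My plan is to compute both sides directly, leveraging the density characterization of piecewise uniform valuations together with the rigid structure that easy replaceability enforces on $P(\uset_a)$. The first step is to write $t = \cup[y_1, y'_1], \ldots, [y_n, y'_n]$ and enumerate $\assn(S) = \{0 = m_0 < m_1 < \cdots < m_k = 1\}$; by condition (1) of easy replaceability, $P(\uset_a) = \bigcup_{j \geq 1}[l_a(m_j), m_j]$, and a quick consequence of condition (1) is the nesting $[l_a(m_j), m_j] \subseteq [m_{j-1}, m_j]$ (otherwise $l_a(m_j) < m_{j-1}$ would force $m_j \leq m_{j-1}$, a contradiction), which in particular forces these intervals to be pairwise disjoint up to endpoints.

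For the left-hand side I would compute $\uset_a(\sem{t}^{\assn})$ as $c(\uset_a)$ times $\sum_j \mathrm{length}(\sem{t}^{\assn} \cap [l_a(m_j), m_j])$, using that the density of $\uset_a$ is $c(\uset_a)$ on $P(\uset_a)$ and zero elsewhere. The key observation is that, thanks to the nesting, the $j$th intersection contributes the full length $m_j - l_a(m_j)$ exactly when some $i$ satisfies $\assn(y_i) < m_j \leq \assn(y'_i)$ (in that case $\assn(y_i) \leq m_{j-1}$ since $\assn(y_i) \in \assn(S)$, so $[\assn(y_i),\assn(y'_i)]$ already covers $[l_a(m_j), m_j]$), and contributes at most a single point otherwise. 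Letting $T$ denote this set of ``good'' $m_j$'s gives $\uset_a(\sem{t}^{\assn}) = c(\uset_a) \sum_{m \in T}(m - l_a(m))$.

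For the right-hand side I would use condition (3) to throw out terms from $S|_t \setminus \underline{S}|_t$ (which contribute zero since $\assn(z_{a,y}) = \assn(y)$ there), condition (2) to rewrite $\assn(z_{a,y}) = l_a(\assn(y))$ on $\underline{S}|_t$, and the injectivity of $\assn$ on $\underline{S}|_t$ built into its definition, yielding $c(\uset_a)\sum_{m \in \assn(\underline{S}|_t)}(m - l_a(m))$. The proof then reduces to the set-theoretic identity $T = \assn(\underline{S}|_t)$. For $y \in \underline{S}|_t$ with $y \in \sub{y_i}{y'_i}$, condition (4) (contrapositively) forces $\assn(y_i) \leq \assn(y) \leq \assn(y'_i)$, and the equality $\assn(y_i) = \assn(y)$ is ruled out because $y_i <_S y$ would make $y_i$ a smaller witness than $y$ in its $\assn$-class. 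Conversely, given $m_j \in T$ with witness $i$, the $<_S$-minimum $y^* \in S$ mapped to $m_j$ satisfies $y^* >_S y_i$ (by condition (4), since $\assn(y^*) > \assn(y_i)$) and $y^* \leq_S y'_i$ (by condition (4) if $\assn(y^*) < \assn(y'_i)$, else by the minimality of $y^*$), so $y^* \in \sub{y_i}{y'_i} \cap \underline{S}|_t$ with $\assn(y^*) = m_j$.

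The main obstacle is this last bijection $T = \assn(\underline{S}|_t)$. The subtlety is that $\underline{S}|_t$ uses minimality in all of $S$, not just in $S|_t$, so one must argue that the $<_S$-minimum of a given $\assn$-value never escapes $S|_t$ when it should be counted, and this separation into the strict and equal cases for condition (4) is where the bookkeeping lives. Everything else is routine once the nesting $[l_a(m_j), m_j] \subseteq [m_{j-1}, m_j]$ is in hand.
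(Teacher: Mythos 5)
Your proposal is correct and follows essentially the same path as the paper: both compute the left-hand side as $c(\uset_a)\sum_{m \in T}(m - l_a(m))$ over the set $T$ of right endpoints of $P(\uset_a)$-subintervals lying in $\sem{t}^{\assn}$ (the paper calls this set $\assn(S)|_t$ and invokes equation (4) directly rather than re-deriving it from the density and the nesting, but that is the same content), and both then establish the bijection $\assn : \underline{S}|_t \to T$ using conditions (2) and (4). The only slight divergence is your mention of condition (3) discarding $S|_t\setminus \underline{S}|_t$ terms, which is not needed for this lemma (the sum in the statement is already over $\underline{S}|_t$; that cancellation step belongs to the subsequent theorem's proof).
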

\begin{proof}
    Before proceeding, we give a formula for $\uset_{a}(\sem{t}^{\assn})$. Write $t = \cup [y_1, y'_{1}],\ldots ,[y_{n},y'_{n}]$ and let $\assn(S)|_{t} = \{m \in \assn(S) \mid \assn(y_{i}') \geq m > \assn(y_{i})\ \text{for some}\ 1\leq i \leq n\}$. Since $\sem{t}^{\assn} = \bigcup_{i = 1}^{n} [\assn(y_{i}),\assn(y_{i}')]$ and $\pnt(t)\subseteq S$, we have that $\partial\sem{t}^{\assn}\subseteq \assn(S)$. Therefore by (1) and by \Cref{eq:pusum}, we have
    \[
        \uset_{a} (\sem{t}^{\assn}) =  c(\uset_{a})\cdot \sum_{m \in \assn(S)|_{t}} (m - l_{a}(m)).
    \]
    To complete this argument, it suffices to show that $\assn$ gives a bijection between $\assn(S)|_{t}$ and $\underline{S}|_{t}$, for if $\assn(y) = m\in M$ and $y = \min\{y' \in S \mid \assn(y') = m\}$ then $\assn(y) - \assn(z_{a,y}) = m - l_{a}(m)$ according to (2), agreeing with the summand above.

    Let $y \in \underline{S}|_{t}$. As $y \in S|_{t}$, we have $y_{i}' \geq_{S} y >_{S} y_{i}$ for some $1 \leq i \leq n$. By (4), $\assn(y'_{i}) \geq \assn(y) \geq \assn(y_{i})$. It also must be that $\assn(y) > \assn(y_{i})$, as $y = \min\{y' \in S\mid \assn(y') = m\}$. Therefore $\assn(y) \in \assn(S)|_{t}$.
     Thus, $\assn|_{\underline{S}|_{t}} : \underline{S}|_{t} \to \assn(S)|_{t}$. It remains to show that $\assn|_{\underline{S}|_{t}}$ is injective and surjective.

    Injectivity is almost immediate.  Since $>_{S}$ is a total order, $\min \{y' \in S\mid \assn(y') = m\}$ is unique for each $m \in \assn(S)$. Therefore $\underline{S}|_{t}$ contains at most one $y$ such that $\assn(y) = m$ for each $m \in \assn(S)$.

    Let $m \in \assn(S)|_{t}$. Let $y_{m} = \min \assn|_{S}^{-1}(m)$. If we show that $y_{m}\in \underline{S}|_{t}$, surjectivity is established. As $m \in \assn(S)|_{t}$, $\assn(y_{i}') \geq m > \assn(y_{i})$ for some $1 \leq i \leq n$. Hence $\assn(y_{i}') \geq \assn(y_{m}) > \assn(y_{i})$, so by (4), $y' \geq_{S} y_{m} >_{S} y$. This means $y_{m} \in \sub{y}{y'}\subseteq S|_{t}$. As $y_{m} = \min \assn|_{S}^{-1}(m)$, $y_{m} \in \underline{S}|_{t}$. This completes the proof.
    \qed
\end{proof}

\begin{proof}[\Cref{prop:purepsound}]
    Let $\interp = \interp_{\uset}$.
    As $f$ is simplified, it only consists of predicates of the form given in \Cref{prop:lra}. $S$ does nothing to predicates of the form $\sum_{i}r_{i}d_{i} \geq \sum_{j}r_{j}d_{j}$, so we focus on those of the form
    \[
        \sum_{i}r_{i}\val_{a_{i}}(t_{i}) \geq \sum_{j} r_{j}\val_{a_{j}}(t_{j}),
    \]
    for $t_{i}$,$t_{j}\in R_{\mathtt{Intvl}}\cup R_{\mathtt{Piece}}$. Without loss of generality, we can assume that $t_{i}, t_{j} \in R_{\mathtt{Piece}}$, for if $t \in R_{\mathtt{Intvl}}$, then $\sem{\val_{a}(t_{i})}_{\interp}^{\assn} = \sem{\val_{a}(\cup t_{i})}_{\interp}^{\assn}$.
    We know that $\pnt(f) \subseteq S$ so $\pnt(t_{i})\subseteq S$ and $\pnt(t_{j})\subseteq S$ for all $i$ and $j$. Let $P_{i} = \sem{t_{i}}^{\assn}$ and $P_{j} = \sem{t_{j}}^{\assn}$.
Let $\underline{S}|_{t} = \{y \in S|_{t} \mid \forall y' \in S. \assn(y) = \assn(y') \Rightarrow y \leq_{S} y'\}$.
We have
\begin{align*}
    \sem{S(\val_{a_{i}}(t_{i}))}_{\interp}^{\assn}
                                                            &= \sum_{y\in S|_{t_{i}}} \assn(y) - \assn(z_{a_{i},y}) \\
                                                            &= \sum_{y\in \underline{S}|_{t}} \assn(y) - \assn(z_{a_{i},y}) + \sum_{y\in S|_{t_{i}}\setminus \underline{S}|_{t}} \assn(y) - \assn(z_{a_{i},y}) \\
                                                            &= \sum_{y \in \underline{S}|_{t}} \assn(y) - l_{a_{i}}(\assn(y)) + \sum_{y \in S|_{t} \setminus \underline{S}|_{t}} \assn(y) - \assn(y) \\
                                                            &= {{\uset}}_{a_{i}}(P_{i})/ d
\end{align*}
where we use $\con{S}{\assn}{\uset}$ (2) and (3) for the third equality, and \Cref{prop:techsum} for the fourth equality.
Similarly,
\begin{align*}
\sem{S(\val_{a_{j}}(t_{j}))}_{\interp}^{\assn} = {{\uset}}_{a_{j}}(P_{j})/ d.
\end{align*}

Now
\begin{align*}
        &\interp,\assn \vDash \sum_{i}r_{i}\val_{a_{i}}(t_{i}) \geq   \sum_{j}r_{j}\val_{a_{j}}(t_{j}) \\
        &\iff \sem{\sum_{i}r_{i}\val_{a_{i}}(t_{i})}_{\interp}^{\assn} \geq \sem{\sum_{j}r_{j}\val_{a_{j}}(t_{j})}_{\interp}^{\assn} \\
        &\iff \sum_{i} r_{i} {\uset}_{a_{i}}(P_{i}) \geq \sum_{j} r_{j} {{\uset}}_{a_{j}}(P_{j}) \\
        &\iff \sum_{i} r_{i} \sem{S(\val_{a_{i}}(t_{i}))}_{\interp}^{\assn}\geq \sum_{j} r_{j} \sem{S(\val_{a_{j}}(t_{j}))}_{\interp}^{\assn}\\
        &\iff \interp,\assn \vDash S\left(\sum_{i}r_{i}\val_{a_{i}}(t_{i}) \geq   \sum_{j}r_{j}\val_{a_{j}}(t_{j})\right).
    \end{align*}
    This completes our argument.
    \qed
\end{proof}

We now aim to prove the following theorem:
\simplsolvingenvy*

Like with \Cref{prop:solvingenvy}, we generalize to well-formed properties:
\begin{restatable}[]{theorem}{simplsolvingthm}
    Suppose $e$ is well-formed and $\cdot \vdash e : \tau$. Let $F$ be a well-formed property such that $x : \mathtt{s}_{\tau}; \cdot \vdash F : \text{Formula}$. Then
    \begin{equation}
        \label{eq:pusolving}
        \vDash \bigwedge_{b \in B(e)}  \bigwedge_{S \in S_{b}} \forall \mathcal{Y}_{b}.S(\simpl(\cn(b)\wedge \conj(S) \Rightarrow F\fsub{x}{\rho(b)}))
    \end{equation}
if and only if $e \vDash F$.
\label{prop:simplsolving}
\end{restatable}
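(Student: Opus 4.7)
The strategy is to prove both directions by reducing to the original constraint characterization (\Cref{prop:solvingthm}), using the piecewise uniform machinery of the current section to move between full valuations and their linear-arithmetic image under $S$ and $\simpl$. The key leverage is that once $S$ has been applied every $\val_a$ symbol is gone, so validity of the resulting formula is determined entirely by the real-arithmetic assignment; combined with \Cref{prop:purepsound} and \Cref{prop:simpl}, we can freely translate between $\interp_\uset,\assn\vDash \cn(b)\wedge\conj(S)\Rightarrow F\fsub{x}{\rho(b)}$ and $\assn\vDash S(\simpl(\cdots))$ whenever $\con{S}{\assn}{\uset}$ holds.

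For the forward direction I would assume the big formula is valid and show $e \vDash F$. Suppose $e \eval_\vset \v$; by \Cref{prop:path} this factors through a path derivation $D : b\eval_\vset\v$ for some $b \in B(e)$. Let $M$ be the set of points arising in $D$; because well-formed expressions contain no point values besides $0$ and $1$, we have $M = \assn(\mathcal{Y}_b) \cup \{0, 1\}$ for any assignment $\assn$ agreeing with $D$. Apply \Cref{prop:puexists} to obtain a piecewise uniform $\uset$ that agrees with $\vset$ on $M$ and is easily replaceable on $M$, then use \Cref{thm:red} to transport the derivation to $b \eval_\uset \v$. Pick $S \in S_b$ with underlying set $\mathcal{Y}_b\cup\{0,1\}$ and choose $<_S$ to be any total refinement of the order $\assn$ induces on $M$; extend $\assn$ to the $z_{a,y}$ variables following clauses (2)--(3) of \Cref{def:con} using the $l_a(\cdot)$ witnesses supplied by easy replaceability. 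Then $\con{S}{\assn}{\uset}$ holds by construction, so $\interp_\uset,\assn\vDash\conj(S)$, and \Cref{prop:soundness} gives $\interp_\uset,\assn\vDash\cn(b)$. Applying \Cref{prop:simpl} and then \Cref{prop:purepsound} transfers this to $\interp_\uset,\assn\vDash S(\simpl(\cn(b)\wedge\conj(S)))$, and instantiating the assumed validity yields $\interp_\uset,\assn\vDash S(\simpl(F\fsub{x}{\rho(b)}))$. Running the two lemmas in reverse recovers $\interp_\uset,\assn\vDash F\fsub{x}{\rho(b)}$; because $\sem{\rho(b)}^\assn = \unrd{\v}$, \Cref{prop:semsub} gives $\interp_\uset\vDash F\fsub{x}{\lv}$, and finally \Cref{thm:pufequiv} transfers this to $\interp_\vset\vDash F\fsub{x}{\lv}$ using the fact that every point appearing in $\lv$ already lies in $M$.

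For the backward direction I would argue the contrapositive: if the big formula fails, some $b$, $S \in S_b$, and assignment $\assn$ witness the failure. Since the formula has no $\val_a$ symbols after $S$, the choice of interpretation is irrelevant, and $\assn$ satisfies $S(\simpl(\cn(b)\wedge\conj(S)))$ while failing $S(\simpl(F\fsub{x}{\rho(b)}))$. From $\assn\vDash\conj(S)$ manufacture a piecewise uniform valuation set via the explicit recipe $P(\uset_a) = \bigcup_{y \in S\setminus\{0\}}[\assn(z_{a,y}),\assn(y)]$ and $c(\uset_a) = \sum_{y\in S\setminus\{0\}}\assn(y)-\assn(z_{a,y})$; the chain-inequality conjuncts of $\conj(S)$ ensure the intervals are properly nested inside $[0,1]$ and the sum equalities ensure the $c(\uset_a)$ coincide across agents, so $\uset$ is easily replaceable on $\assn(S)$ with $\con{S}{\assn}{\uset}$. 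Running the $\simpl$/$S$ sandwich in reverse lifts $\assn$'s satisfaction of the reduced antecedent back to $\interp_\uset,\assn\vDash\cn(b)$, whereupon \Cref{prop:completeness} produces an evaluation $b\eval_\uset\v$ with $\unrd{\v} = \sem{\rho(b)}^\assn$, and \Cref{prop:path} promotes this to $e \eval_\uset \v$. The same round-trip on the consequent yields $\interp_\uset\not\vDash F\fsub{x}{\lv}$, so $\vset = \uset$ witnesses $e \not\vDash F$.

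The main obstacle will be the bookkeeping in the forward direction: carefully choosing $<_S$ so that clause (4) of \Cref{def:con} holds, extending $\assn$ onto the fresh $z_{a,y}$ variables so that clauses (2) and (3) are simultaneously satisfied when $\assn$ takes repeated values, and verifying that every point inside $\lv$ lies in $M$ so that \Cref{thm:pufequiv} applies at the last step. Once that alignment is in place, the two directions become nearly symmetric applications of \Cref{prop:purepsound,prop:simpl} wrapped around \Cref{prop:soundness} and \Cref{prop:completeness}.
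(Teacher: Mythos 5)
Your proposal matches the paper's own proof essentially line for line: both directions use the same chain of lemmas (path decomposition, \Cref{prop:puexists} and \Cref{thm:red} to transport to a piecewise uniform valuation set, $\con{S}{\assn}{\uset}$ to invoke \Cref{prop:purepsound} and \Cref{prop:simpl}, soundness/completeness of the constraint translation, and finally \Cref{thm:pufequiv}), and the backward direction constructs the witness valuation set from $\assn$ and $\conj(S)$ exactly as the paper does. The only cosmetic difference is your opening claim that you ``reduce to \Cref{prop:solvingthm}''; in fact you (like the paper) invoke the underlying soundness and completeness results directly rather than that proposition itself.
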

Here we state and prove some results which help connect points in programs to points in their constraints.
\begin{lemma}
    For any interpretation $\interp$, if $\con{S}{\assn}{{\uset}}$, then
    \[ \interp , \assn \vDash \conj(S).\]
    \label{prop:orderformula}
\end{lemma}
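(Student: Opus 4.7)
The plan is to split $\conj(S)$ into its two families of conjuncts---the ordering chain on the variables of $S$ and $\mathcal{Z}$, and the sum equality across agents---and verify each directly from the four clauses of \Cref{def:con} together with the structure of $P(\uset_a)$ from \Cref{def:er} and the normalization $\uset_a([0,1]) = 1$. Since $\conj(S)$ contains no $\val_a$ symbols, the interpretation $\interp$ is irrelevant, and the entire argument proceeds at the level of the variable assignment $\assn$.

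For the chain $0 \leq z_{a,y_1} \leq y_1 \leq \cdots \leq z_{a,y_n} \leq y_n \leq 1$, I would fix $a$ and $i$ and case on whether $y_i$ is the $<_S$-minimum element of $\{y' \in S \mid \assn(y') = \assn(y_i)\}$. If so, condition (2) gives $\assn(z_{a,y_i}) = l_a(\assn(y_i))$; the upper bound $z_{a,y_i} \leq y_i$ follows because $[l_a(m), m] \subseteq P(\uset_a)$ is a valid interval, and the global bound $0 \leq z_{a,y_i}$ follows from $P(\uset_a) \subseteq [0,1]$. The delicate step is $y_{i-1} \leq z_{a,y_i}$: assuming $\assn(y_{i-1}) > l_a(\assn(y_i))$ for contradiction, the gap clause in \Cref{def:er}(1) forces $\assn(y_i) \leq \assn(y_{i-1})$, while condition (4) contrapositively gives $\assn(y_{i-1}) \leq \assn(y_i)$, so $\assn(y_{i-1}) = \assn(y_i)$, contradicting the $<_S$-minimality of $y_i$. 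If $y_i$ is not the minimum, condition (3) makes $z_{a,y_i} = y_i$, and the chain reduces directly to condition (4).

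For the sum equality, I would evaluate each side under $\assn$ by splitting along the same minimality condition: non-minimal $y$ contribute $0$ by (3), the $y = 0$ term on the left vanishes under the convention $\assn(0) = 0$, and each $m \in \noz{\assn(S)}$ contributes $m - l_a(m)$ through its unique $<_S$-minimal preimage. Both sides therefore collapse to $\sum_{m \in \noz{\assn(S)}} (m - l_a(m))$, which by \Cref{def:er}(1) is precisely the total length of $P(\uset_a)$ and hence $1/c(\uset_a)$ by normalization. Easy replaceability (2), which asserts $c(\uset_a) = c(\uset_{a'})$, then equates both sides, closing the argument. The main bookkeeping obstacle is repeated values of $\assn$ on $S$, but this is exactly what the $<_S$-minimum conventions in conditions (2) and (3) were designed to handle cleanly.
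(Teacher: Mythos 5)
Your proof is correct and is exactly the fleshing-out of the paper's one-line ``Straightforward from the construction of $U_{\vset}(M,d)$ and the definition of $\con{S}{\assn}{\uset}$.'' The casing on $<_S$-minimality within $\assn$-equivalence classes is the right way to handle repeated $\assn$-values; your treatment of the chain via the gap clause in \Cref{def:er}(1) together with clause (4) of \Cref{def:con}, and of the sum via the total length of $P(\uset_a)$ and normalization, is sound. If anything you argue in slightly more generality than the paper's stated justification suggests: the lemma is quantified over any piecewise uniform $\uset$ satisfying $\con{S}{\assn}{\uset}$, not just the specific construction $U_{\vset}(M,d)$, and your argument uses only \Cref{def:er} and \Cref{def:con}, as it should.

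One small imprecision worth tightening: dismissing the $y=0$ term of the left-hand sum ``under the convention $\assn(0)=0$'' is not by itself enough, since the term is $0 - z_{a,0}$ and you also need $\assn(z_{a,0}) = 0$. That fact is certainly true (it is forced both by the chain $0 \leq z_{a,y_1} \leq y_1 = 0$ and by the $a'=a$ instance of the sum equality), but the way \Cref{def:con}(2) produces it is via $\assn(z_{a,0}) = l_a(0)$, and $l_a$ is only defined on $\noz{M} = M\setminus\{0\}$ in \Cref{def:er}. So an implicit extension $l_a(0)=0$ is being relied on. This is a latent ambiguity in the paper's definitions rather than a flaw in your approach, but it deserves an explicit sentence rather than a wave at the $\assn(0)=0$ convention; the same caveat silently underlies your chain case for $y_i = y_1 = 0$, where you invoke $l_a(\assn(y_i))$ with $\assn(y_i) \notin \noz{M}$.
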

\begin{proof}
    Straightforward from the construction of $U_{\vset}(M,d)$ and the definition of $\con{S}{\assn}{\uset}$.
    \qed
\end{proof}

\begin{lemma}
    \label{prop:dmarks}
    Suppose $D : b \eval_{\vset} \v$. Then $M(D) = \{r\pnt \mid \emark{a}{b_1}{b_2}\# s \eval r\pnt\ \text{occurs in}\ D\}\cup M(b)$.
\end{lemma}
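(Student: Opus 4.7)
The plan is to prove this by straightforward induction on the derivation $D$. Let $\mathtt{Mk}(D) \triangleq \{r\pnt \mid \emark{a}{b_1}{b_2}\#s \eval r\pnt \text{ occurs in } D\}$ denote the set of mark results appearing in $D$. We aim to show $M(D) = \mathtt{Mk}(D) \cup M(b)$, interpreting $M(b)$ as the set of real numbers $r$ such that $r\pnt$ appears syntactically in $b$, with the convention that $\{0,1\}\subseteq M(\kwcake)$ so that the \rname{E-Cake} case lines up.

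First I would handle the base cases. For \rname{E-Val} where $b = \v$, we have $M(D) = M(\v) = M(b)$ and no mark subexpressions occur, so $\mathtt{Mk}(D) = \emptyset$, giving the equality. For \rname{E-Cake}, the derivation produces value $[0,1]$ so $M(D) = \{0,1\}$, which by convention matches $M(\kwcake)$. The interesting case is \rname{E-Mark}: here $D$ derives $\emark{a}{b_1}{b_2}\#s \eval r\pnt$ from sub-derivations $D_1$ and $D_2$, so $M(D) = \{r\} \cup M(D_1) \cup M(D_2)$ (the value $r\pnt$ at the root, plus all values in the sub-derivations), while $\mathtt{Mk}(D) = \{r\pnt\}\cup \mathtt{Mk}(D_1)\cup \mathtt{Mk}(D_2)$ and $M(b) = M(b_1)\cup M(b_2)$. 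Applying the inductive hypothesis to $D_1$ and $D_2$ and taking unions yields the result.

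For the remaining structural rules (\rname{E-Tup}, \rname{E-Ops}, \rname{E-Div}, \rname{E-EvalPc}, \rname{E-EvalIntvl}, \rname{E-Split}, \rname{E-Piece}, \rname{E-True}, \rname{E-False}, \rname{E-Assert}), I would verify each case by decomposing $M(D)$ into points appearing in the conclusion value $\v$ and points appearing in the premises $M(D_i)$, and similarly decomposing the right-hand side. The inductive hypothesis on each sub-derivation reduces to $M(D_i) = \mathtt{Mk}(D_i)\cup M(b_i)$. The critical observation is that for these rules, every point atom in the conclusion value $\v$ either propagates from a point in some $\v_i$ (the sub-derivation conclusion), and hence by the IH is in $\mathtt{Mk}(D_i)\cup M(b_i)$, or is directly derivable from syntactic points in $b$. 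For \rname{E-Split}, one must be slightly careful because the premise $e_2\{x_i\mapsto \v_i\}\{\rd{w_i}\mapsto \rd{\v_{i+n}}, w_i\mapsto \v_{i+n}\}\eval \v$ substitutes values into $b_2$, so points in the substituted expression come from either $M(b_2)$ or $M(D_1)$; applying the IH twice yields the claim.

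The main obstacle, if any, is simply bookkeeping: ensuring the convention about $\kwcake$ contributing $\{0,1\}$ to $M(b)$ is stated consistently, and ensuring that in the \rname{E-Split} case, substituted values' points are properly attributed to $M(D_1)$ via the first IH rather than being miscounted as new points. Since no rule other than \rname{E-Mark} and \rname{E-Cake} introduces point atoms that are not already present in a sub-derivation's value or in $b$ itself, the induction goes through cleanly.
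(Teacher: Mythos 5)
Your proof takes essentially the same approach as the paper's: induction on the derivation $D$ with case analysis on the last evaluation rule, the key case being \rname{E-Mark} (where the freshly produced point $r\pnt$ is added to both sides), and the observation that the remaining structural rules do not introduce point atoms beyond those already present in sub-derivation conclusions or in $b$ itself. You additionally flag the $\kwcake$ case explicitly and posit the convention $\{0,1\}\subseteq M(\kwcake)$, a case the paper's case analysis silently omits.
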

\begin{proof}
    Let $M(D)' \triangleq \{r\pnt \mid \emark{a}{e_1}{e_2}\# s\eval r\pnt\ \text{occurs in}\ D\}\cup M(e)$.
    This argument proceeds by induction on $D$.
    \begin{description}
        \item[\rname{E-Val}]
            First note that $\{ r \mid r\pnt\ \text{appears in}\ \v\} = M(\v)$. And then observe that if $D : \v \eval \v$, then
            $\{ r \mid r\pnt\ \text{appears in}\ \v\} = \{ r \mid r\pnt\ \text{appears in}\ \v, \v\ \text{appears in}\ D\}$. Therefore, $M(D) = M(\v) = M(e)$ in this case.

        \item[\rname{E-Ops}]
            Then $e  = o(e_1 ,\ldots ,e_{n})$, $D_{i} : e_{i} \eval \v_{i}$, and $\v = \sem{o}(\v_1,\ldots ,\v_{n})$ for some $o \in O$.
            Inspection of the allowed operations enables us to claim that $r\pnt$ only occurs in $\v$ if it occurs also in $\v_{i}$ for some $i$.
            Therefore, $M(D) = M(D_1)\cup \cdots M(D_{n})$. Also, $M(D)' =M(D_1)'\cup \cdots \cup M(D_{n})'$. By induction, $M(D_{i}) = M(D_{i})'$, concluding this case.

        \item[\rname{E-Mark}] Then $e = \emark{a}{e_1}{e_2} \# s $,  $\v = r\pnt$, $D_1 : e_1 \eval \rd{[r_1,r'_{1}]}$, $D_2 : e_2 \eval \vset_{a'}(P)$, and $\vset_{a}[r_1,r] = \vset_{a'}(P)$. Let $\pnt(P)$ be the set of points within $P$.
            Now $M(D) = M(D_1)\cup M(D_2)\cup \{ r_1\pnt, r'_1\pnt, r\pnt\}\cup \pnt(P)$. Since $D_2 : e_2 \eval \vset_{a'}(P)$, $\pnt(P) \subseteq M(D)'$. Since $D_1 : e_1 \eval \rd{[r_1, r'_{1}]}$, we have $r_1, r'_{1}\in M(D_1)$. Therefore, $M(D)=  M(D_1)\cup M(D_2)\cup \{r\pnt\}$.
            We also have $M(D)' = M(D_1)'\cup M(D_2)'\cup \{r\pnt\}$.
            By induction, $M(D_1) = M(D_1)'$ and $M(D_2) = M(D_2)'$. This completes the case.

                \item[\rname{E-EvalPc}]
            Then $e = \eeval{a}{e'}$, $\v = \vset_{a}(P)$, and $D' :e' \eval P$. Let $\pnt(P)$ be the set of points in $P$. Then $M(D) = M(D')\cup \pnt(P)$. Since $D' : e' \eval P$, $\pnt(P) \subseteq M(D')$. Therefore $M(D) = M(D')$. Now $M(D)' = M(D')'$. By induction, $M(D')' = M(D')$, so we can conclude this case.
        \item[\rname{E-EvalIntvl}] The argument is very similar to the argument for \rname{EvalPc}.

        \item[\rname{E-Div}]
            Then $e = \ediv{e_1}{e_2}$, $\v = ([r_1, r_2], [r_2, r_1'])$, $D_1 : e_1 \eval [r_1, r'_{1}]$, and $D_2 : e_2 \eval r_2\pnt$. Then $M(D) = M(D_1)\cup M(D_2)\cup \{r_1, r_1', r_2\}$. Since $D_1 : e_1 : \eval [r_1, r_1']$, $\{r_1, r_1'\}\subseteq M(D_1)$, and as $D_2 : e_2 \eval r_2\pnt$, we have $r_2\pnt \in M(D_2)$. Therefore $M(D) = M(D_1) \cup M(D_2)$.
            Now $M(D)' = M(D_1)' \cup M(D_2)'$. By induction, $M(D_1) = M(D_1)'$ and $M(D_2) = M(D_2)'$. This case is then complete.
        \item[\rname{E-Split}]
            Then $e = \kwlet\ x_1 ,\ldots ,x_{n}, w_1,\ldots ,w_{n'} = \kwsplit\ e_1\ \kwin\ e_2$, $D_1 : e_1 \eval (\v_1 ,\ldots ,\v_{n + n'})$, $D_2 : S_{\var}S_{\rd{\lvar}}S_{\lvar}(e_2) \eval \v$, where
            \begin{align*}
                S_{\var} &\triangleq \{x_{i} \mapsto \v_{i}\mid 1 \leq i \leq n\} \\
                S_{\rd{\lvar}} &\triangleq \{\rd{w_{i}} \mapsto \rd{\v_{n +i}}\mid 1 \leq i \leq n'\} \\
                S_{\var} &\triangleq \{w_{i} \mapsto \v_{n + i}\mid 1 \leq i \leq n'\}.
            \end{align*}
            Now $M(D) = M(D_1)\cup M(D_2)$, and $M(D)' = M(D_1)'\cup M(D_2)'\cup M(e_2)$. Since $M(D_2)'\supseteq M(S_{\var}S_{\rd{\lvar}}S_{\lvar}(e_2))$ and $M(S_{\var}S_{\rd{\lvar}}S_{\lvar}(e_2))\supseteq M(e_2)$, we have $M(D)' = M(D_1)' \cup M(D_2)'$. By induction, $M(D_1) = M(D_1)'$ and $M(D_2) = M(D_2)'$, completing this case.
    \qed
    \end{description}
\end{proof}

\begin{lemma}
    Suppose that $D : b \eval \v$. Suppose that $\assn$ is a variable assignment that agrees with $D$. Then $M(D) \subseteq M(b)\cup \assn( \{y_{\#s}\in \mathcal{Y}\mid \# s \in \id(b)\})$.
    \label{prop:mpoints}
\end{lemma}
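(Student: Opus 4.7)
The plan is to reduce the claim immediately to the previous lemma and then tighten it. By \Cref{prop:dmarks}, we already have
\[
M(D) = \{r\pnt \mid \emark{a}{b_1}{b_2}\# s \eval r\pnt\ \text{occurs in}\ D\} \cup M(b),
\]
so it suffices to show that the first set on the right is contained in $\assn(\{y_{\# s} \mid \# s \in \id(b)\})$. This factors into two claims: (i) every mark subexpression $\emark{a}{b_1}{b_2}\# s$ appearing in $D$ has identifier $\# s \in \id(b)$, and (ii) for each such mark, the corresponding evaluated point $r\pnt$ equals $\assn(y_{\# s})$.

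For (ii), the agreement condition is exactly what is needed: since $\assn$ agrees with $D$, any evaluation judgement $\emark{a}{b_1}{b_2}\# s \eval r\pnt$ inside $D$ forces $\assn(y_{\# s}) = r\pnt$, so $r\pnt \in \assn(\{y_{\# s} \mid \# s \in \id(b)\})$ provided (i) holds. So the real content is (i).

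For (i), I would proceed by induction on $D$, proving the stronger statement that every expression appearing as the left-hand side of a judgement in $D$ is either a value (and hence contains no mark subexpressions) or its mark identifiers lie in $\id(b)$. All rules other than \rname{E-Split} propagate identifiers trivially from subexpressions to the enclosing expression, and no rule introduces a fresh mark identifier. The only step that changes the syntactic form of expressions is \rname{E-Split}, which substitutes \emph{values} $\v_i$ (and their read-only counterparts $\rd{\v_i}$) into $e_2$; since values contain no mark subexpressions, this substitution cannot introduce identifiers that were not already in $e_2$, and hence not in $b$. Thus every mark appearing in $D$ has its identifier drawn from $\id(b)$.

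The main obstacle, such as it is, lies in being careful about the \rname{E-Split} case: one must verify the bookkeeping that substituting values preserves the identifier set, and that the identifiers in the body $e_2$ are already contained in $\id(b)$ by the definition of $\id(\cdot)$ on compound expressions. Combining (i) and (ii) with \Cref{prop:dmarks} then yields $M(D) \subseteq M(b) \cup \assn(\{y_{\# s} \mid \# s \in \id(b)\})$, as required.
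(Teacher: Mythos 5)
Your proposal follows the paper's argument exactly: invoke \Cref{prop:dmarks} to reduce $M(D)$ to $M(b)$ plus the set of evaluated mark points, then use the agreement of $\assn$ with $D$ to land each such point in $\assn(\{y_{\#s}\mid \#s\in\id(b)\})$. The paper simply asserts $\#s\in\id(b)$ without comment, whereas you correctly flag this as the one non-obvious step and supply the (sound) observation that \rname{E-Split} only substitutes values, which contain no mark subexpressions, so no fresh identifiers can appear during evaluation.
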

\begin{proof}
    According to \Cref{prop:dmarks}, $M(D) = M(b) \cup \{r \mid \emark{a}{b_1}{b_2}\#s \eval r\pnt\ \text{occurs in}\ D \}$. Therefore we just have to show
    \[
\{r \mid \emark{a}{b_1}{b_2}\#s \eval r\pnt\ \text{occurs in}\ D \} \subseteq  \assn( \{y_{\#s}\in \mathcal{Y}\mid \# s \in \id(b)\}).
    \]
    So let $r$ be in the left hand side. Then $\emark{a}{b_1}{b_2}\# s \eval r\pnt$ occurs in $D$. We have $\#s \in \id(b)$ and since $\assn$ agrees with $D$, we have that $\assn(y_{\#s}) = r\pnt$. Therefore $r\pnt$ is in the right hand side. This completes the argument.
    \qed
\end{proof}

\begin{lemma}
    \label{prop:yvarids}
    For any path $b$, $\fv(\rho(b))\cap \mathcal{Y} \subseteq \fv(\cn(b))\cap \mathcal{Y}$, and $\fv(\cn(b))\cap \mathcal{Y} = \{ y_{\# s} \in \mathcal{Y} \mid \# s \in \id(b)\}$.
\end{lemma}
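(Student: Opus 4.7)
The plan is to proceed by structural induction on the path $b$, handling both claims simultaneously. The key observation is that a variable $y_{\#s} \in \mathcal{Y}$ can only be introduced into $\rho(b)$ or $\cn(b)$ through the clause for $\emark{a}{b_1}{b_2}\# s$, and in that clause the variable appears in \emph{both} $\rho$ and $\cn$. All other clauses merely propagate subterm-level $\mathcal{Y}$ variables through conjunctions (in $\cn$) or through logical substitutions that plug $\rho$ of a subexpression into $\cn$ of another (as in \textsc{Split}).

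For the base cases ($b = \v$, $b = x$, $b = w$, $b = \rd{w}$, $b = \kwcake$), one checks directly that $\rho(b)$ is a value-term with no $\mathcal{Y}$ variables, $\cn(b) = \mathsf{true}$, and $\id(b) = \emptyset$, so both statements hold trivially. For the compound cases, I would proceed as follows. In the mark case $b = \emark{a}{b_1}{b_2}\#s$, we have $\rho(b) = y_{\#s}$ and
\[
\cn(b) \;=\; \cn(b_1) \wedge \cn(b_2) \wedge \bigl(\val_{a}([\tleft(\rho(b_1)), y_{\#s}]) = \rho(b_2)\bigr).
\]
By the induction hypothesis, $\fv(\rho(b_i)) \cap \mathcal{Y} \subseteq \fv(\cn(b_i)) \cap \mathcal{Y}$, so inlining $\rho(b_i)$ into $\cn(b)$ does not introduce any $\mathcal{Y}$ variables not already present via $\cn(b_i)$. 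Using $\id(b) = \{\#s\} \cup \id(b_1) \cup \id(b_2)$ and the induction hypothesis for each subterm, we conclude both claims. The division, eval, assert, tuple, piece, and ops cases are all similar but simpler: $\rho$ and $\cn$ are built from $\rho(b_i)$ and $\cn(b_i)$, and since induction gives $\fv(\rho(b_i)) \cap \mathcal{Y} \subseteq \fv(\cn(b_i)) \cap \mathcal{Y}$, no ``extra'' $\mathcal{Y}$ variables leak into $\cn(b)$ or $\rho(b)$ beyond those supplied by subterms, which by induction are exactly $\bigcup_i \{y_{\#s} \mid \#s \in \id(b_i)\} = \{y_{\#s} \mid \#s \in \id(b)\}$.

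The one case requiring care is the split case,
\[
b \;=\; \kwlet\ x_1,\ldots,x_n, w_1,\ldots,w_{n'} = \kwsplit\ b_1\ \kwin\ b_2,
\]
where $\rho(b)$ and $\cn(b)$ are obtained by substituting projections $\pi_i \rho(b_1)$ for the bound program variables $x_i, w_i, \rd{w_i}$ inside $\rho(b_2)$ and $\cn(b_2)$. Here I would argue that substituting a $\mathcal{Y}$-variable-containing term for a program variable only enlarges the $\mathcal{Y}$ free-variable set by (a subset of) $\fv(\rho(b_1)) \cap \mathcal{Y}$, which by the induction hypothesis is a subset of $\fv(\cn(b_1)) \cap \mathcal{Y}$. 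Since $\cn(b_1)$ is itself conjoined into $\cn(b)$, every such variable appears in $\fv(\cn(b)) \cap \mathcal{Y}$, giving the inclusion for $\rho(b)$. The identifier equality then follows from $\id(b) = \id(b_1) \cup \id(b_2)$ together with the induction hypotheses for $b_1$ and $b_2$.

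The main (mild) obstacle is just bookkeeping through the split case substitution: one must verify that the substitution cannot create spurious $\mathcal{Y}$ variables and cannot erase any, which follows because the bound variables $x_i, w_i, \rd{w_i}$ are from $\var \cup \lvar \cup \rd{\lvar}$, disjoint from $\mathcal{Y}$, so the substitution neither captures nor eliminates any $\mathcal{Y}$ variable. With that noted, both conclusions fall out of a routine case analysis.
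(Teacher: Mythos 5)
Your proposal is correct, and it matches the natural route: the paper itself records the proof only as ``Straightforward,'' and the structural induction on $b$ you describe is exactly what that phrase is eliding. Your handling of the two non-trivial cases is right where the care is needed: in the $\kwmark$ clause, $y_{\#s}$ is introduced into $\rho(b)$ and simultaneously into $\cn(b)$ (via the equation $\val_a([\tleft(\rho(b_1)), y_{\#s}]) = \rho(b_2)$), so the inclusion is maintained and the identifier $\#s$ accounts for the new variable; and in the $\kwsplit$ clause you correctly note that the substitution targets program variables from $\var \cup \lvar \cup \rd{\lvar}$ (disjoint from $\mathcal{Y}$), so it can neither capture nor delete any $y$-variable, and any new $y$-variables it injects come from $\rho(b_1)$, which by the inductive hypothesis are already covered by $\cn(b_1)$, itself a conjunct of $\cn(b)$. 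The one small extra check worth stating explicitly (implicit in your argument) is that in the $\kwmark$ and $\kwdiv$ clauses $\cn(b)$ also inlines $\rho(b_1)$ and $\rho(b_2)$, so the first inclusion from the inductive hypothesis is what guarantees those inlinings do not enlarge $\fv(\cn(b)) \cap \mathcal{Y}$ beyond $\{y_{\#s} \mid \#s \in \id(b)\}$. With that noted, the argument is complete.
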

\begin{proof}
    Straightforward. \qed
\end{proof}

\begin{lemma}
    Let $\pnt(b)$ be the set of point values contained in $\cn(b)$ and $\rho(b)$. Then $\pnt(b) \subseteq M(b)$.
\end{lemma}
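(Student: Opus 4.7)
The approach is structural induction on the path $b$, showing that every constant point $r\pnt$ appearing as a subterm of $\cn(b)$ or $\rho(b)$ also occurs as a subexpression of some value occurring in $b$. The key observation is that constant points are introduced into $\rho$ and $\cn$ in only two fundamental places. First, when $b$ is itself a value $\v$, the rule $\rho(\v) = \lv$ faithfully encodes $\v$ as a logical term (and $\cn(\v)$ is trivial), so every $r\pnt$ appearing in $\rho(\v)$ already appears as a subexpression of $\v$; hence $\pnt(\v) \subseteq M(\v) = M(b)$. Second, when $b = \kwcake$, we have $\rho(\kwcake) = [0,1]$ and so $\pnt(b) = \{0,1\}$, which matches the convention (used throughout the piecewise-uniform machinery, e.g.\ in \Cref{def:con}) that $0$ and $1$ are among the points associated with $\kwcake$.

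For every other syntactic form of $b$, the definitions of $\rho(b)$ and $\cn(b)$ only compose logical machinery that contributes no constant points on its own: the fresh variables $y_{\#s}$ introduced by $\kwmark$, the function and predicate symbols $\tleft, \tright, \cup, [\_,\_], \pi_i, \val_a, \geq, =$, and the boolean connectives. Any point atoms that do appear therefore come from a subexpression, either directly by recursion on some $b_i$, or indirectly through substitution of a subterm of $\rho(b_i)$ into $\rho(b_j)$ or $\cn(b_j)$, as in the split rule where $\pi_i\rho(b_1)$ is substituted for the program variables bound by $\kwsplit$. In each case the induction hypothesis gives $\pnt(b_i) \subseteq M(b_i) \subseteq M(b)$, so the contributed points remain in $M(b)$.

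The steps that deserve a second look are $\ediv{b_1}{b_2}$, $\emark{a}{b_1}{b_2}$, the assertion $\pt{b_1}{b_2}$, and the split. In all of these, unwinding the definitions of $\rho$ and $\cn$ from \Cref{fig:selectrho} shows that the new term structure introduced is built only from $\tleft(\rho(b_1))$, $\tright(\rho(b_1))$, $\rho(b_2)$, fresh variables, and $\val_a$, together with boolean combinations of $\cn(b_1)$ and $\cn(b_2)$. No new constants are added, so the induction step reduces to the inductive hypotheses on $b_1$ and $b_2$, plus the fact that $M(b) \supseteq M(b_1) \cup M(b_2)$ for each of these constructs.

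The statement has no real obstacle: it is a direct structural induction, the only base cases contributing constants are values and $\kwcake$, and the inductive step is pure bookkeeping — the most care-requiring piece being the substitution in the split case, which only propagates sub-terms of $\rho(b_1)$ and therefore contributes only points already accounted for by $\pnt(b_1) \subseteq M(b_1)$.
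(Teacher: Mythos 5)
Your proof is correct and takes the same approach the paper does: the paper's own proof is literally the one sentence ``This argument goes by induction on the structure of $b$ and is straightforward,'' and your structural induction fills in exactly the cases that sentence glosses over. You have correctly identified both places where constant points can enter $\rho(b)$ or $\cn(b)$ --- value literals and $\kwcake$ --- and correctly observed that all other constructors (including the split case, whose substitution only propagates subterms of $\rho(b_1)$) introduce no fresh constants, so the inductive step is pure bookkeeping. Your flag on the $\kwcake$ base case is a genuine observation: taken literally, $\kwcake$ syntactically contains no point values, so $M(\kwcake) = \emptyset$ while $\pnt(\kwcake) = \{0,1\}$, and the inclusion only holds under the paper's implicit convention that $0$ and $1$ belong to every set of points under consideration (the $M \supseteq \{0,1\}$ assumption and the $\assn(0)=0$, $\assn(1)=1$ convention that recur throughout Section~5). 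The paper never states this for $M(b)$, so your remark is a fair and accurate correction that the paper's ``straightforward'' claim elides.
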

\begin{proof}
    This argument goes by induction on the structure of $b$ and is straightforward.
    \qed
\end{proof}

\begin{proof}[\Cref{prop:simplsolving}]
    ($\Rightarrow$)
    Suppose that $e \eval_{\vset} \v$. By \Cref{prop:path}, there is $b \in B(e)$ such that $D : b \eval_{\vset} \v$.

    Let $\assn'$ be a variable assignment that agrees with $D$ and let $S$ be any piecewise uniform substitution on $\mathcal{Y}_{b}$ whose order is such that if $\assn'(y) < \assn'(y')$, then $y <_{S} y'$.
    By \Cref{prop:mpoints}, $\mathcal{Y}_{b} = \{y_{\# s}\mid \#s \in \id(b)\}$, so by \Cref{prop:yvarids}, $M(D) \subseteq \assn'(\mathcal{Y}_{b})$. Let $M = \assn'(\mathcal{Y}_{b})$.
    By \Cref{prop:puexists}, there is a valuation set $\uset$ that agrees with $\vset$ on $M$ and is easily replaceable on $M$.
    By \Cref{thm:red}, $D : b \eval_{\uset} \v$.
    Now let $\assn$ be any assignment that agrees with $\assn'$ on all of $\mathcal{Y}_{b}$, but assigns $z_{a,y}$ for all $a \in \agents$ and $y\in S$ such that $\con{S}{\assn}{\uset}$. This is possible since $\mathcal{Z}$ is disjoint from $\mathcal{Y}$ and for each $a$ and $y$, there is a unique $z_{a,y}\in \mathcal{Z}$.
Let $\interp = \interp_{\uset}$.
    By \Cref{prop:soundness}, $\sem{\rho(b)}_{\interp}^{\assn} = \unrd{\v}$ and $\interp, \assn \vDash \cn(b)$.
    By \Cref{prop:simpl}, $\sem{\simpl(\rho(b))}_{\interp}^{\assn} = \unrd{\v}$ and $\interp,\assn\vDash \simpl(\cn(b))$. By \Cref{prop:orderformula}, $\interp, \assn \vDash \conj(S)$. Then we have
    \begin{align*}
        &\interp,\assn \vDash S(\simpl(\cn(b)\land \conj(S))) \tag{\Cref {prop:purepsound}} \\
        &\interp, \assn \vDash S(\simpl (F\fsub{x}{\rho(b)})) \tag{\Cref{eq:pusolving}} \\
        &\interp \vDash \simpl (F\fsub{x}{\rho(b)}) \tag{\Cref {prop:purepsound}} \\
        &\interp \vDash F\fsub{x}{\rho(b)} \tag{\Cref{prop:simpl}}
    \end{align*}
    and by \Cref{prop:semsub} as $\sem{\rho(b)}_{\interp}^{\assn} = \unrd{\v}$,
    \[
        \interp \vDash F\fsub{x}{\lv}.
    \]
    Now $\cdot \vdash F\fsub{x}{\lv} : \text{Formula}$ so that $\cdot \vdash F : \text{Formula}$. Since $F$ is well-formed, $F$ contains no point values. Therefore all point values are contained within $\lv$. This allows us to apply \Cref{thm:pufequiv} to obtain
    \[
        \interp_{\vset} \vDash F\fsub{x}{\lv}.
\]

($\Leftarrow$)
We approach this by assuming \Cref{eq:pusolving} does not hold and constructing a piecewise uniform valuation set witness to $e \not\vDash F$.
So suppose that \Cref{eq:pusolving} does not hold. Observe that \Cref{eq:pusolving} does not contain any $\val$ function symbols. Then there is some $\assn$, $b \in B(e)$, and $S\in S_{b}$ such that $\assn \vDash S(\simpl(\cn(b)))\land \conj(S)$ yet $ \assn \not\vDash F\fsub{x}{\rho(b)}$.
Write out $S = \{y_1 ,\ldots ,y_{n}\}$ such that $y_1 <_{S} \cdots <_{S} y_{n}$.
Consider the piece
\[
    [\assn(z_{a,y_{1}}), \assn(y_{1})]\cup \cdots \cup [\assn(z_{a,y_{n}}), \assn(y_{n})].
\]
As $\assn\vDash \conj(S)$,
\begin{equation}
\assn(z_{a,y_{1}}) \leq \assn(y_{1}) \leq \cdots \leq \assn(z_{a,y_{n}}) \leq \assn(y_{n}).
\label{eq:witnessineq}
\end{equation}
Then let $U_{a}$ be the piecewise uniform valuation on the above piece.
Also observe by $\assn \vDash \conj(S)$, for any $a, a'$,
\[
    \sum_{y \in S} \assn(y) - \assn(z_{a,y}) = \sum_{y \in S} \assn(y) - \assn(z_{a',y}).
\]
Therefore we can set $d$ to be the reciprocal of the above sum, and obtain that $d = c(U_{a})$ for all $a$.
Then $\uset \triangleq (a \mapsto U_{a}\mid a \in \agents)$ is a piecewise uniform valuation set which is easily replaceable on $\assn(S)$. Set $\interp = \interp_{U}$.

We can verify that $\con{S}{\assn}{U}$:
(1) is satisfied directly above.
(2) is also satisfied by definition.
(3) is mildly more difficult. Let $y \in \mathcal{Y}_{b}$ be such that there is $y' <_{S}y$ and $\assn(y') = \assn(y)$. Then $\assn(y') \leq \assn(z_{a,y})\leq \assn(y)$ hence $\assn(z_{a,y}) = \assn(y)$ for all $a \in \agents$.
(4) is verified by \Cref{eq:witnessineq} recalling that $y_1 <_{S} \cdots <_{S} y_{n}$.

Then by \Cref{prop:purepsound},
\[
    \interp,\assn\vDash \simpl(\cn(b))\quad \text{and}\quad \interp,\assn \not\vDash \simpl(F\fsub{x}{\rho(b)}).
\]
By \Cref{prop:simpl},
\[
    \interp,\assn\vDash \cn(b)\quad \text{and}\quad \interp,\assn \not\vDash F\fsub{x}{\rho(b)}.
\]
Therefore by \Cref{prop:completeness}, $b \eval_{\uset} \v$ where $\unrd{\v} = \sem{\rho(b)}_{\interp}^{\assn}$. By \Cref{prop:path}, $e \eval_{\uset} \v$. By \Cref{prop:semsub}, $\interp,\assn \not\vDash F\fsub{x}{\lv}$. Since $F\fsub{x}{\lv}$ has no unbound variables, $\assn$ is redundant. Therefore $\interp \not\vDash F\fsub{x}{\lv}$. This shows that $e \not\vDash F$.
\qed
\end{proof}

We also have the following generalization of \Cref{cor:decidable}
\begin{restatable}[]{corollary}{decidable}
    Let $e$ be a well-formed protocol, only using standard operations, and let $F$ a well-formed property. Checking if $e$ satisfies $F$ is decidable.
\end{restatable}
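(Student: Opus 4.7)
The plan is to reduce the question to the decidability of linear real arithmetic via the generalized version of our main theorem, namely \Cref{prop:simplsolving}. By that theorem, $e \vDash F$ holds if and only if Formula~(\ref{eq:pusolving}) is valid, so it suffices to exhibit an effective procedure to decide validity of that formula.

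First I would verify that Formula~(\ref{eq:pusolving}) is a \emph{finite} conjunction over which we can mechanically iterate. Since $e$ is a finite, well-formed expression, each if-then-else contributes only two branches, so $B(e)$ is finite. For each $b \in B(e)$, the set of identifiers $\id(b)$ is finite, hence by \Cref{prop:yvarids} the free variable set $\mathcal{Y}_b$ is finite, and consequently $S_b$---the set of piecewise uniform replacements, which are finite totally ordered subsets of $\mathcal{Y}_b \cup \{0,1\}$ with $0$ and $1$ as extrema---is finite as well. Thus the outer conjunction can be explicitly constructed from $e$ and $F$.

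Next I would argue that each conjunct, after applying the piecewise uniform replacement $S$ and the simplification operator $\simpl$, lies in linear real arithmetic. Because $F$ is a well-formed property and $e$ uses only the standard operations from $\mathcal{O}$, the unsimplified formula $\cn(b)\wedge \conj(S) \Rightarrow F\fsub{x}{\rho(b)}$ is well-sorted and contains no application-specific symbols beyond those handled by $\simpl$ and $S$. By \Cref{cor:linear} (equivalently \Cref{prop:lra}), $\simpl$ reduces this to a boolean combination of linear (in)equalities over $\val_{a}(\cdot)$-atoms and real arithmetic terms; the subsequent application of $S$ then replaces every remaining $\val_{a}(\cdot)$-atom with a linear expression in the fresh real variables $z_{a,y}$, as spelled out in \Cref{def:rep}. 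The result is a quantifier-free boolean combination of linear inequalities over $\mathbb{R}$, universally quantified over $\mathcal{Y}_b$.

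Finally, validity of a finite conjunction of universally quantified formulas in linear real arithmetic is decidable (e.g.\ by quantifier elimination, or by dispatching to any complete LRA decision procedure such as the one implemented in Z3). Composing the (clearly computable) construction of Formula~(\ref{eq:pusolving}) with this decision procedure yields the desired algorithm. The only subtlety, and the one potential obstacle, is ensuring that \Cref{prop:simplsolving} and \Cref{cor:linear} genuinely cover every well-formed property $F$ using only the standard operations $\mathcal{O}$ fixed for this section; this is handled by the well-formedness restriction on $F$ (no point values, correct sort), which guarantees that $F\fsub{x}{\rho(b)}$ stays within the fragment $\simpl$ is proved sound for.
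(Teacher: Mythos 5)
Your proposal is correct and follows essentially the same route as the paper, which justifies this corollary directly from \Cref{prop:simplsolving} together with \Cref{cor:linear}: the verification condition is a finite, effectively constructible conjunction (finitely many paths $b$, finitely many replacements $S \in S_b$) of formulas in linear real arithmetic, whose validity is decidable. Your additional remarks on the finiteness of $B(e)$ and $S_b$ and on the role of well-formedness of $F$ simply make explicit what the paper leaves implicit.
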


\subsection{Evaluation of protocols with envy}
\label{app:exp}

\subsubsection{Non-Envy Free Protocols.}
To demonstrate the proficiency of our approach for finding envy, we deliberately incorporated errors into select protocols and time how long it takes to find envy, following methodology established by Lester~\cite{Lester}. The results, contained in \Cref{tab:experiment2}, demonstrate that our approach can also efficiently find envy in all protocols tested.

\begin{table}[htbp]
    \centering
    \caption{Time to find counterexamples to non-envy-free protocols.}
    \begin{tabular}{@{}lrl@{}}
      \toprule
      {Protocol} & {Time} & {Description} \\
      \midrule
      Cut-Choose & 0.020 & Agent 1 cuts and chooses. \\
      Cut-Choose & 0.021 & Slices allocated wrong way round in one branch. \\
      Surplus    & 0.035 & Unsafe trim of slice smaller than reference. \\
      Selfridge-Conway-Surplus        & 0.034 & Allocates trimmings of slice instead of trimmed slice. \\
      Selfridge-Conway-Surplus        & 0.033 & Agent 2 not forced to take trimmed slice if available. \\
      Selfridge-Conway-Full        & 0.161 & Trimmings cut by agent who took trimmed slice. \\
      Aziz-Mackenzie-3 & 2.594 & Did not check if Agent 1 and 2 has the same favorite piece. \\
      \bottomrule
    \end{tabular}
    \label{tab:experiment2}
  \end{table}

\fi
\end{document}